
\documentclass{article}
\usepackage{caption}
\usepackage{longtable}
\usepackage{cellspace}
\usepackage{booktabs}
\usepackage{arydshln}
\usepackage{amsfonts}
\usepackage{amssymb}
\usepackage{graphicx}
\usepackage{amsmath}
\usepackage{amsthm}
\usepackage{amsmath}
\usepackage{amsfonts}
\usepackage{amsthm}
\usepackage{mathabx}
\usepackage[colorlinks=true,linkcolor=blue,citecolor= red,bookmarksopen=true]{hyperref}
\setcounter{MaxMatrixCols}{10}

\setlength\dashlinedash{0.2pt}
\setlength\dashlinegap{1.5pt}
\setlength\arrayrulewidth{0.3pt}
\voffset=-2cm \textheight=23cm \hoffset=-1.5cm \textwidth=15cm

\nonstopmode
\newtheorem{theorem}{Theorem}[section]
\newtheorem{proposition}[theorem]{Proposition}
\newtheorem{lemma}[theorem]{Lemma}

\newtheorem{corollary}[theorem]{Corollary}

\newtheorem{assumption}[theorem]{Assumption}
\newtheorem{standassumption}[theorem]{Standing Assumption}
\newtheorem{setup}[theorem]{Setup}
\newtheorem{definition}[theorem]{Definition}
\newtheorem{notation}[theorem]{Notation}

\theoremstyle{remark}
\newtheorem{example}[theorem]{Example}
\newtheorem{remark}[theorem]{Remark}
\newcommand{\probp}{P}
\newcommand{\probq}{Q}
\newcommand{\R}{{\mathbb R}}

\newcommand\abs[1]{\left|#1\right|}

\newcommand\Eq[1]{E_{Q} \left[#1\right]}

\begin{document}

\title{Entropy Martingale Optimal Transport and \\
Nonlinear Pricing-Hedging Duality}
\date{\today}
\author{Alessandro Doldi\thanks{%
Dipartimento di Matematica, Universit\`{a} degli Studi di Milano, Via
Saldini 50, 20133 Milano, Italy, $\,\,$\emph{alessandro.doldi@unimi.it}. }
\and Marco Frittelli\thanks{%
Dipartimento di Matematica, Universit\`a degli Studi di Milano, Via Saldini
50, 20133 Milano, Italy, \emph{marco.frittelli@unimi.it}. }}
\maketitle

\begin{abstract}
\noindent The objective of this paper is to develop a duality between a
novel Entropy Martingale Optimal Transport problem $(A)$ and an associated
optimization problem $(B)$. In $(A$) we follow the approach taken in the
Entropy Optimal Transport (EOT) primal problem by Liero et al.
\textquotedblleft Optimal entropy-transport problems and a new
Hellinger-Kantorovic distance between positive measures\textquotedblright ,
Invent. math. 2018, but we add the constraint, typical of Martingale Optimal
Transport (MOT) theory, that the infimum of the cost functional is taken
over martingale probability measures, instead of finite positive measures,
as in Liero et al. The Problem $(A)$ differs from the corresponding problem
in Liero et al. not only by the martingale constraint, but also because we
admit less restrictive penalization terms $\mathcal{D}_{U}$, which may not
have a divergence formulation. In Problem (B) the objective functional,
associated via Fenchel conjugacy to the terms $\mathcal{D}_{U}$, is not any
more linear, as in OT or in MOT. This leads to a novel optimization problem
which also has a clear financial interpretation as a non linear subhedging
value. Our theory allows us to establish a nonlinear robust pricing-hedging
duality, which covers a wide range of known robust
results. We also focus on Wasserstein-induced penalizations and we study how
the duality is affected by variations in the penalty terms, with a special
focus on the convergence of EMOT to the extreme case of MOT.
\end{abstract}

\parindent=0em \noindent

\noindent \textbf{Keywords}: Martingale Optimal Transport problem, Entropy
Optimal Transport problem, Pricing-hedging duality, Robust finance, Pathwise
finance.\newline
\noindent \textbf{Mathematics Subject Classification (2020)}: 49Q25, 49J45,
60G46, 91G80, 90C46. \newline

\section{Introduction}
\label{secintrod}
\noindent In this research we exploit Optimal Transport (OT) theory to develop
the duality\footnote{%
The notations are explained in this Introduction and the duality is further
discussed in Section \ref{introEMOTdiscussion}.} 
\begin{equation}
A:=\inf_{Q\in \mathrm{Mart}(\Omega )}\left( E_{Q}\left[ c\right] +\mathcal{D}%
_{U}(Q)\right) =\sup_{\Delta \in \mathcal{H}}\sup_{\varphi \in \mathbf{\Phi }%
_{\Delta }(c)}S^{U}\left( \varphi \right) =:B.  \label{eq1}
\end{equation}%
In $(A)$ we recognize the approach taken in the Entropy Optimal Transport
primal problem (Liero et al. \cite{Sava}) with the additional constraints,
typical of Martingale Optimal Transport (MOT), that the infimum of the cost
functional $c$ is taken over \textit{martingale probability} measures,
instead of finite positive \textit{measures}, as in \cite{Sava}. This is a
consequence of the additional supremum over the integrands $\Delta \in 
\mathcal{H}$ in problem $(B)$, and of the cash additivity of the functional $%
S^{U}$. The functional $S^{U}$ is associated to a, typically non linear,
utility functional $U$ and represents the pricing rule over suitable
continuous functions $\varphi $ defined on $\Omega $. We observe that the
marginal constraints, typical of OT problems, in $(A)$
are relaxed by introducing the functional $\mathcal{D}_{U}$, also associated
to the map $U$, which may have a divergence formulation. The counterpart of
this in Problem $(B)$ is that the functional $S^{U}$, associated via Fenchel
conjugacy to the penalization functional $\mathcal{D}_{U}$ is not\textit{\
necessarily linear}, as in OT or in MOT. Both $S^{U}$ and $\mathcal{D}_{U}$
may also depend on the marginals of some martingale measure $\widehat{Q}$. The
duality (\ref{eq1}) generalizes the well known robust pricing hedging
duality in financial mathematics.

We provide a clear financial interpretation of both problems and observe
that the novel concept of a non linear subhedging value expressed by ($B$)
was not previously considered in the literature.

\subsection{Pricing-hedging duality in financial mathematics}

The notion of subhedging price is one of the most analyzed concepts in
financial mathematics. In this Introduction we will take the point of view
of the subhedging price, but obviously an analogous theory for the
superhedging price can be developed as well. We are considering a discrete time
market model with zero interest rate. It may be convenient for the reader to
have at hand the summary described in Table \ref{tab1} on page \pageref{tab1}%
.

\paragraph{The classical setup}

In the classical setup of stochastic securities market models, one considers
an adapted stochastic process $X=(X_{t})_{t}$, $t=0,...,T,$ defined on a
filtered probability space $(\Omega ,\mathcal{F},(\mathcal{F}_{t})_{t},P),$
representing the price of some underlying asset. Let $\mathcal{P}(P)$ be the
set of all probability measures on $\Omega $ that are absolutely continuous
with respect to $P$, $\mathrm{Mart}(\Omega )$ be the set of all probability
measures on $\Omega $ under which $X$ is a martingale and $\mathcal{M}(P)=$ $%
\mathcal{P}(P)\cap \mathrm{Mart}(\Omega ).$ We also let $\mathcal{H}$ be the
class of admissible integrands and $I^{\Delta }:=I^{\Delta }(X)$ be the
stochastic integral of $X$ with respect to $\Delta \in \mathcal{H}$. Under
reasonable assumptions on $\mathcal{H}$, the equality 
\begin{equation}
E_{Q}\left[ I^{\Delta }(X)\right] =0  \label{EE}
\end{equation}
holds for all $Q\in \mathcal{M}(P)$ and, as well known, all linear pricing
functionals compatible with no arbitrage are expectations $E_{Q}[\cdot ]$
under some probability $Q\in \mathcal{M}(P)$ such that $Q\sim P$.

We denote with $p$ the \textbf{subhedging price} of a contingent claim $c:%
\mathbb{R\rightarrow R}$ written on the payoff $X_{T}$ of the underlying
asset. Let $\mathcal{L}(P)\subseteq L^{0}((\Omega ,\mathcal{F}_{T},P))$ be
the space of random payoffs satisfying appropriate integrability or
boundedness assumptions and let $Z:=c(X_{T})\in \mathcal{L}(P)$. Then $p:%
\mathcal{L}(P)\rightarrow \mathbb{R}$ is defined by 
\begin{equation}
p(Z):=\sup \left\{ m\in \mathbb{R}\mid \exists \Delta \in \mathcal{H}\text{
s.t. }m+I^{\Delta }(X)\leq Z\text{, }P-\text{a.s.}\right\} .  \label{subrep}
\end{equation}%
%
%
%
%
%
%
%
%
%
%
%
%
%
%
%
The subhedging price is independent from the preferences of the agents, but
it depends on the reference probability measure via the class of $P$-null
events. It satisfies the following two key properties:

\begin{enumerate}
\item[(CA)] Cash Additivity on $\mathcal{L}(P)$: $p(Z+k)=p(Z)+k,$ for all $%
k\in \mathbb{R}$, $Z\in \mathcal{L}(P).$

\item[(IA)] Integral Additivity on $\mathcal{L}(P)$: $p(Z+I^{\Delta })=p(Z),$
for all $\Delta \in \mathcal{H}$, $Z\in \mathcal{L}(P).$
\end{enumerate}

When a functional $p$ satisfies (CA), then $Z, k$ and $p(Z)$ must be
expressed in the same monetary unit and this allows for the \textit{monetary}
interpretation of $p$, as the price of the contingent claim. This will be
one of the key features that we will require also in the novel definition of
the nonlinear subhedging value. The (IA) property and $p(0)=0$ imply that
the $p$ price of any stochastic integral $I^{\Delta }(X)$ is equal to zero,
as in (\ref{EE}).

Since the seminal works of El Karoui and Quenez \cite{ElKarouiQuenez95},
Karatzas \cite{Karatzas95}, Delbaen and Schachermayer \cite%
{DelbaenSchachermayer94}, it was discovered that, under the no arbitrage
assumption, the dual representation of the subhedging price $p$ is 
\begin{equation}
p(Z)=\inf_{Q\in \mathcal{M}(P)}E_{Q}\left[ Z\right] .  \label{dualp}
\end{equation}%
More or less in the same period, the concept of \textbf{coherent risk measure%
} was introduced in the pioneering work by Artzner et al. \cite{ADEH99}. A
Coherent Risk Measure $\rho :\mathcal{L}(P)\rightarrow \mathbb{R}$
determines the minimal capital required to make acceptable a financial
position and its dual formulation is assigned by%
\begin{equation}
-\rho (Y)=\inf_{Q\in \mathcal{Q}\subseteq \mathcal{P}(P)}E_{Q}\left[ Y\right]
,  \label{dualrho}
\end{equation}%
where $Y$ is a random variable representing future profit-and-loss and $%
\mathcal{Q}\subseteq \mathcal{P}(P)$. Coherent Risk Measures $\rho $ are
convex, cash additive, monotone and positively homogeneous. We take the
liberty to label both the representations in (\ref{dualp}) and in (\ref%
{dualrho}) as the \textquotedblleft \textit{sublinear case}%
\textquotedblright .

\bigskip

In the study of incomplete markets the concept of the (buyer) \textbf{%
indifference price} $p^{b}$, originally introduced by Hodges and Neuberger\ 
\cite{Hodges89}, received, in the early 2000, increasing consideration (see
Frittelli \cite{Frittelli00}, Rouge and El Karoui \cite{RougeElkaroui00},
Delbaen et al. \cite{fiveauthors02}, Bellini and Frittelli \cite%
{BelliniFrittelli02}) as a tool to assess, \textit{consistently with the no
arbitrage principle}, the value of non replicable contingent claims, and not
just to determine an upper bound (the superhedging price) or a lower bound
(the subhedging price) for the price of the claim. Differently from the
notion of subhedging, $p^{b}$ is based on some concave increasing utility
function $u:\mathbb{R}\rightarrow \lbrack -\infty ,+\infty )$ of the agent.
By defining the indirect utility function 
\begin{equation*}
U(w_{0}):=\sup_{\Delta \in \mathcal{H}}E_{P}[u(w_{0}+I^{\Delta }(X))],
\end{equation*}%
where $w_{0}\in \mathbb{R}$ is the initial wealth, the indifference price $%
p^{b}$ is defined as%
\begin{equation*}
p^{b}(Z):=\sup \left\{ m\in \mathbb{R}\mid U(Z-m)\geq U(0)\right\} .
\end{equation*}%
Under suitable assumptions, the dual formulation of $p^{b}$ is%
\begin{equation}
p^{b}(Z)=\inf_{Q\in \mathcal{M}(P)}\left\{ E_{Q}\left[ Z\right] +\alpha
_{u}(Q)\right\} ,  \label{dualbuyer}
\end{equation}%
and the penalty term $\alpha _{u}:\mathcal{M}(P)\rightarrow \lbrack
0,+\infty ]$ is associated to the particular utility function $u$ appearing
in the definition of $p^{b}$ via the Fenchel conjugate of $u$. We observe
that in case of the exponential utility function $u(x)=1-\exp (-x),$ the
penalty is $\alpha _{\exp }(Q):=H(Q,P)-\min_{Q\in \mathcal{M}(P)}H(Q,P),$
where%
\begin{equation*}
H(Q,P):=\int_{\Omega }F\left( \frac{\mathrm{d}Q}{\mathrm{d}P}\right) \,%
\mathrm{d}P\text{,\quad if }Q\ll P\text{ and }F(y)=y\ln (y),
\end{equation*}%
is the relative entropy. In this case, the penalty $\alpha _{\exp }$ is a
divergence functional, as those that will be considered below in Section \ref%
{SecDivergence}. Observe that the functional $p^{b}$ is concave, monotone
increasing and satisfies both properties (CA) and (IA), but it is not
necessarily linear on the space of all contingent claims. As recalled in the
conclusion of Frittelli \cite{Frittelli00}, \textquotedblleft there is no
reason why a price functional defined on the whole space of bundles and
consistent with no arbitrage should be linear also outside the space of
marketed bundles\textquotedblright .

It was exactly the particular form \eqref{dualbuyer} of the indifference
price that suggested to Frittelli and Rosazza Gianin \cite%
{FrittelliRosazza02} the introduction of the concept of \textbf{Convex Risk Measure}
(also independently introduced by Follmer and Schied \cite{FollmerSchied02a}%
), as a map $\rho :\mathcal{L}(P)\rightarrow \mathbb{R}$ that is convex,
cash additive and monotone decreasing. Under good continuity properties, the
Fenchel-Moreau Theorem shows that any convex risk measure admits the
following representation%
\begin{equation}
-\rho (Y)=\inf_{Q\in \mathcal{P}(P)}\left\{ E_{Q}\left[ Y\right] +\alpha
(Q)\right\}  \label{dualconvex}
\end{equation}%
for some penalty $\alpha :\mathcal{P}(P)\rightarrow \lbrack 0,+\infty ]$. We
will then label functional in the form (\ref{dualbuyer}) or (\ref{dualconvex}%
) as the \textquotedblleft \textit{convex case}\textquotedblright . As a
consequence of the cash additivity property, in the dual representations (%
\ref{dualbuyer}) or (\ref{dualconvex}) the infimum is taken with respect to 
\textit{probability measures}, namely with respect to normalized non
negative elements in the dual space, which in this case can be taken as $%
L^{1}(P)$. Differently from the indifference price $p^{b}$, convex risk
measures do not necessarily take into account the presence of the stochastic
security market, as reflected by the absence of any reference to martingale
measures in the dual formulation (\ref{dualconvex}) and (\ref{dualrho}), in
contrast to (\ref{dualbuyer}) and (\ref{dualp}).

\paragraph{Pathwise finance}

As a consequence of the financial crisis in 2008, the uncertainty in the
selection of a reference probability $P$ gained increasing attention and led
to the investigation of the notions of arbitrage and of the pricing hedging
duality in different settings. On the one hand, the single reference
probability $P$ was replaced with a family of - a priori non dominated -
probability measures, leading to the theory of Quasi-Sure Stochastic
Analysis. 
On the other hand, taking an even more radical approach, a probability free,
pathwise, theory of financial markets was developed, as in Acciaio et al. 
\cite{AcciaioBeiglbockSchachermayerPenkner16}, Burzoni et al. \cite%
{BurzoniFrittelliMaggis16}, Burzoni et al. \cite{BurzoniFrittelliMaggis17},
Burzoni et al. \cite{BurzoniFrittelliHouMaggisObloj19}, Riedel \cite%
{Riedel15}. In such a framework, Optimal Transport theory became a very
powerful tool to prove pathwise pricing hedging duality results with very
relevant contributions by many authors (Beiglb{\"{o}}ck et al. \cite{Beig},
Davis et al. \cite{DavisOblojRaval14}, Dolinksi and Soner \cite%
{DolinksiSoner14}, Dolinsky and Soner \cite{DolinksiSoner15}; Galichon et
al. \cite{GalichonLabordereTouzi14}, Henry-Labord\`{e}re \cite{Labordere13},
Henry-Labord\`{e}re et al. \cite{LabordereOblojSpoidaTouzi16}; Hou and Ob\l 
\'{o}j \cite{HouObloj18}, Tan and Touzi \cite{TanTouzi13}). Recent works on
the topic include also Bartl et al.  \cite{BartlKupperPromelTangpi19},
Cheridito et al.  \cite{Cheridito2020ER}, Guo and Ob\l \'{o}j  
\cite{GuoObloj19}, Wiesel  \cite{Wiesel2019ContinuityOT}. Stability
issues have been studied in Backhoff-Veraguas and Pammer  \cite%
{BackhoffVeraguasPammer20} and Neufeld and Sester  \cite%
{NeufeldSester21}. These contributions mainly deal with what we labeled
above as the sublinear case, while our main interest in this paper is to
develop the convex case theory, as explained below. Pennanen and Perkki\"{o} 
\cite{PennanenPerkkio19}  also developed a generalized Optimal
Transport duality, which can be applied to study the pricing-hedging duality
in a context similar to our additive setup of Section \ref{secadditive}.

The addition of an entropic term to optimal transport problems was
popularized by Cuturi  \cite{Cuturi13}, with several applications
especially from the computational point of view (see for examples the
survey/monograph Peyré and Cuturi  \cite{PeyreCuturi19}). Sinkhorn’s algorithm can be applied with the entropic regularization procedure described in these works (see Benamou et al.  \cite{BenamouCarlierCuturi15} for some advantages). Convergence for this algorithm is studied e.g. in Ireland and Kullback  
\cite{IrelandKullback68} and R\"{u}schendorf  \cite{Ruschendorf95}. After the present paper was posted on ArXiv, several relevant advances were made
regarding such topic. We mention here Nutz and Wiesel (2021) \cite%
{NutzWiesel21}, Berntonet al. (2021) \cite{BerntonGhosalNutz21}, Ghosal et al. (2021) \cite%
{BerntonGhosalNutz21b}. We stress that these papers address a \textit{%
different} problem: the addition of the entropic term in Cuturi and
subsequent works is made without smoothing the strict marginal constraints,
which are kept, unlike in our problems where we add uncertainty regarding
the marginals themselves. The works \cite{BerntonGhosalNutz21} and \cite%
{BerntonGhosalNutz21b} also study geometric properties of minimizers of the entropic OT, by means of the concept of cyclical invariance. This is a counterpart to the characterization, using $c$-cyclical monotonicity,  of the geometry of optimal transport plans in the classical framework of OT.  Even though a similar study of geometric properties for optimizers of EMOT would be of great interest,  such a  topic  is beyond the scope of this paper and is left for future research.
We also mention that a Sinkhorn algorithm approach was adopted in De March
and Henry-Labord\`{e}re  \cite{DeMarchHenryLabordere20} for building
an arbitrage-free implied volatility surface from bid-ask quotes, while
Henry-Labord\`{e}re  \cite{HenryLabordere19} studies a problem related
to the entropic relaxation of an optimal transportation problem and Blanchet et al. \cite{BlanchetJambulapatiKentSidford20}  studies the number of operations needed for approximation of the transport cost with given accuracy, in the case of entropic regularization.

In the framework of \cite{Sava} (i.e., with penalizations of the marginals induced by divergence functions) and after the first version of the present work was posted on ArXiv, duality results were obtained in the context of Weak Martingale Optimal Entropy Transport Problems by Chung and Trinh (2021) \cite{ChungTrinh21}.

\bigskip

To introduce the pathwise robust approach to the pricing-hedging duality,
from now on we will work without a reference probability measure. We
consider $T\in \mathbb{N}$, $T\geq 1$, and 
\begin{equation*}
\label{prodcpts}
\Omega :=K_{0}\times \dots \times K_{T}
\end{equation*}%
for closed (possibly non compact) subsets $K_{0},\dots ,K_{T}$ of $\mathbb{R}
$ and denote with $X_{0},\dots ,X_{T}$ the canonical projections $%
X_{t}:\Omega \rightarrow K_{t}$, for $t=0,1,...,T$. We will work later in a
multidimensional setup for the price process, but we stick to the one
dimensional case here for notational simplicity. We write
\begin{equation*}
\mathrm{Mart}(\Omega ):=\{\text{Martingale probability measures for the
canonical process of }\Omega \}\,,
\end{equation*}%
and, when $\mu $ is a measure defined on the Borel $\sigma $-algebra of $%
(K_{0}\times \dots \times K_{T})$, its marginals will be denoted with $\mu
_{0},\dots ,\mu _{T}$. We consider a contingent claim $c:\Omega \rightarrow
(-\infty ,+\infty ]$ which is now allowed to depend on the whole path and we
admit semistatic trading strategies for hedging. This means that in
addition to dynamic trading in $X$ via the admissible integrands $\Delta \in 
\mathcal{H}$, we may invest in \textquotedblleft vanilla\textquotedblright\
options $\varphi _{t}:K_{t}\rightarrow \mathbb{R}$. For modeling purposes we
take vector subspaces $\mathcal{E}_{t}\subseteq \mathcal{C}(K_{t})$ for $%
t=0,\dots ,T$ , where $\mathcal{C}(K_{t})$ is the space of real-valued
continuous functions on $K_{t}$. Some requirements on the spaces $\mathcal{E}%
_{t}$ will be added later on when describing the technical setup. For each $%
t,$ $\mathcal{E}_{t}$ is the set of static options that can be used for
hedging, say affine combinations of vanilla options with different strikes
and same maturity $t$. The key assumption in the robust, Optimal Transport
based formulation is that the marginals $\mathcal{(}\widehat{Q}_{0},\widehat{%
Q}_{1},...,\widehat{Q}_{T})$ of the underlying price process $X$ are known,
see the seminal papers by Breeden and Litzenberger \cite%
{BreedenLitzenberger78} and Hobson \cite{Hobson98}, as well as the many
contributions by Hobson \cite{Hobson11}, Cox and Ob\l \'{o}j \cite%
{CoxObloj11a}, \cite{CoxObloj11b}, Cox and Wang \cite{CoxWang13}, Labord\`{e}%
re et al. \cite{LabordereOblojSpoidaTouzi16}, Brown et al. \cite%
{BrownHobsonRogers01}, Hobson and Klimmerk \cite{HobsonKlimmerk13}. Such
marginals can be identified knowing a (very) large number of prices of plain
vanilla options maturing at each intermediate date, for example knowing the
prices of \textit{all the call options} with such intermediate maturities
and ranging strikes. In this case, the class of arbitrage-free pricing
measures that are compatible with the observed prices of the options is
given by%
\begin{equation*}
\mathcal{M}(\widehat{Q}_{0},\widehat{Q}_{1},...\widehat{Q}_{T}):=\left\{
Q\in \mathrm{Mart}(\Omega )\mid X_{t}\sim _{Q}\mathcal{\widehat{Q}}_{t}\text{
for each }t=0,\dots ,T\right\} .
\end{equation*}%
Let $\mathcal{C}_{b}(K_{0}\times \dots \times K_{t})$ consist of real valued
continuous and bounded functions defined on $K_{0}\times \dots \times K_{t}$
and set 
\begin{align}
\mathcal{H}& :=\left\{ \Delta =[\Delta _{0},\dots ,\Delta _{T-1}]\mid \Delta
_{t}\in \mathcal{C}_{b}(K_{0}\times \dots \times K_{t})\right\} ,
\label{strategies} \\
\mathcal{I}& :=\left\{ I^{\Delta }(x)=\sum_{t=0}^{T-1}\Delta
_{t}(x_{0},\dots ,x_{t})(x_{t+1}-x_{t})\mid \Delta \in \mathcal{H}\right\} .
\label{StoIntegra}
\end{align}%
In this framework, the sub-hedging duality, obtained in \cite{Beig} Theorem
1.1, takes the form
\begin{equation}
\inf_{Q\in \mathcal{M}(\widehat{Q}_{0},\widehat{Q}_{1},...\widehat{Q}%
_{T})}E_{Q}\left[ c\right] =\sup \left\{ \sum_{t=0}^{T}E_{\widehat{Q}%
_{t}}[\varphi _{t}]\mid \exists \Delta \in \mathcal{H}\text{ s.t. }%
\sum_{t=0}^{T}\varphi _{t}(x_{t})+I^{\Delta }(x)\leq c(x)\text{ }\forall
x\in \Omega \right\} ,  \label{robustDuality}
\end{equation}%
where the RHS of (\ref{robustDuality}) is known as the \textbf{robust
subhedging price} of $c$. Comparing (\ref{robustDuality}) with the duality
between (\ref{subrep}) and (\ref{dualp}), we observe that: (i) the $P-$a.s.
inequality in (\ref{subrep}) has been replaced by an inequality that holds
for all $x\in \Omega $; (ii) in (\ref{robustDuality}) the infimum of the
price of the contingent claim $c$ is taken under all martingale measure
compatible with the option prices, with no reference to the probability $P$;
(iii) static hedging with options is allowed.

As can be seen from the LHS of (\ref{robustDuality}), this case falls into
the category labeled above as the \textit{sublinear case}, and the purpose
of this paper is to investigate the \textit{convex case}, in the robust
setting, using the tools from Entropy Optimal Transport (EOT) recently
developed in Liero et al. \cite{Sava}.

Let us first describe the financial interpretation of the problems that we
are going to study.

\paragraph{The dual problem}

The LHS of (\ref{robustDuality}), namely $\inf_{Q\in \mathcal{M(}\widehat{Q}%
_{0},\widehat{Q}_{1},...\widehat{Q}_{T})}E_{Q}\left[ c\right] $, represents
the dual problem in the financial application, but is typically the primal
problem in \textbf{Martingale Optimal Transport} (MOT). 
We label this case as the \textit{sublinear case} of MOT. In \cite{Sava},
the primal \textbf{Entropy Optimal Transport} (EOT) problem takes the form%
\begin{equation}
\inf_{\mu \in \mathrm{Meas}\mathcal{(}\Omega )}\left( \int_{\Omega }c\mathrm{%
d}\mu +\sum_{t=0}^{T}\mathcal{D}_{F_{t},\widehat{Q}_{t}}(\mu _{t})\right) ,
\label{EOT}
\end{equation}%
where $\mathrm{Meas}(\Omega )$ is the set of all positive finite measures $%
\mu $ on $\Omega ,$ and $\mathcal{D}_{F_{t},\widehat{Q}_{t}}(\mu _{t})$ is a
divergence in the form: 
\begin{equation}
\mathcal{D}_{F_{t},\widehat{Q}_{t}}(\mu _{t}):=\int_{K_{t}}F_{t}\left( \frac{%
\mathrm{d}\mu _{t}}{\mathrm{d}\widehat{Q}_{t}}\right) \,\mathrm{d}\widehat{Q}%
_{t}\text{, if }\mu _{t}\ll \widehat{Q}_{t}\text{,\quad }t=0,...,T,
\label{Divergence}
\end{equation}%
otherwise $\mathcal{D}_{F_{t},\widehat{Q}_{t}}(\mu _{t}):=+\infty $. We
label with $F:=(F_{t})_{_{t=0,...,T}}$ the family of divergence functions $%
F_{t}:\mathbb{R}\rightarrow \mathbb{R\cup }\left\{ +\infty \right\} $
appearing in (\ref{Divergence}). Problem \eqref{EOT} represents the \textit{%
convex case} of OT theory. Notice that in the EOT primal problem \eqref{EOT}
the typical constraint that $\mu $ has prescribed marginals $(\widehat{Q}%
_{0},\widehat{Q}_{1},...\widehat{Q}_{T})$ is relaxed (as the infimum is
taken with respect to all positive finite measures) by introducing the
divergence functional $\mathcal{D}_{F_{t},\widehat{Q}_{t}}(\mu _{t})$, which
penalizes those measures $\mu $ that are \textquotedblleft
far\textquotedblright\ from some reference marginals $(\widehat{Q}_{0},%
\widehat{Q}_{1},...\widehat{Q}_{T}).$ We are then naturally led to the study
of the convex case of MOT, i.e. to the Entropy Martingale Optimal Transport
(EMOT) problem%
\begin{equation}
\mathfrak{D}_{F,\widehat{Q}}(c):=\inf_{Q\in \mathrm{Mart}(\Omega )}\left(
E_{Q}\left[ c\right] +\sum_{t=0}^{T}\mathcal{D}_{F_{t},\widehat{Q}%
_{t}}(Q_{t})\right)  \label{D1div}
\end{equation}%
having also a clear financial interpretation. The marginals are not any more
fixed a priori, as in (\ref{robustDuality}), because we may not have
sufficient information to detect them with enough accuracy, for example, in
case there are not sufficiently many traded call and put options on the
underlying assets in the market and it is not possible to extract precisely
the marginals via the Breeden and Litzenberger \cite{BreedenLitzenberger78}
approach. Alternatively, the exact prices of the options might be unknown,
say by market impact effects. 

Again, the infimum is taken over \textit{all
martingale probability measures}, but those that are far from some estimate $%
(\widehat{Q}_{0},\widehat{Q}_{1},...\widehat{Q}_{T})$ are appropriately
penalized through $\mathcal{D}_{F_{t},\widehat{Q}_{t}}$. This is a key
difference with classical MOT problem and also with the approach taken in the above mentioned works
\cite{BerntonGhosalNutz21}, \cite{BerntonGhosalNutz21b}, \cite{Cuturi13}, 
\cite{NutzWiesel21} and \cite{PeyreCuturi19}. 
 When $\mathcal{D}_{F_{t},%
\widehat{Q}_{t}}(\cdot )=\delta _{\widehat{Q}_{t}}(\cdot )$ , the EMOT reduces to the classical MOT problem, where only martingale probability
measures with fixed marginals are allowed. Here $\delta_{A}$ is the characteristic function of  a set $A$, as customarily defined in convex analysis. 
 Our framework also allows for the use of the penalization  $\probq\mapsto\sum_{t=0}^{T}\delta _{\widehat{Q}_{t}}(Q_t )+\widetilde{\mathcal{}D}(Q) $, for some entropic term $\widetilde{\mathcal{}D}$, so that the EMOT
reduces to the MOT problem with an additional entropic regularization term, as analyzed in the above mentioned literature.

Observe that in addition to the
martingale property, the elements $Q\in \mathrm{Mart}(\Omega )$ in (\ref%
{D1div}) are required to be probability measures, while in the EOT theory in
(\ref{EOT}) all positive finite measures are allowed. As it was recalled
after equation (\ref{dualconvex}), this normalization feature of the dual
elements ($\mu (\Omega )=1$) is not surprising when one deals with dual
problems of primal problems with a cash additive objective functional as,
for example, in the theory of coherent and convex risk measures.

Potentially, we could push our smoothing argument above even further: in
place of the functionals $\mathcal{D}_{F_{t},\widehat{Q}_{t}}(\mu _{t})$, $%
t=0,...,T$, we might as well consider more general marginal penalizations,
not necessarily in the divergence form \eqref{Divergence}, yielding the
problem 
\begin{equation}
\mathfrak{D}(c):=\inf_{Q\in \mathrm{Mart}(\Omega )}\left( E_{Q}\left[ c%
\right] +\sum_{t=0}^{T}\mathcal{D}_{t}(Q_{t})\right) \,.  \label{D1}
\end{equation}%
These penalizations $\mathcal{D}_{0},\dots ,\mathcal{D}_{T}$ will be better
specified later. Here we only observe that such penalization terms could be induced by market prices or by a Wassaerstein distance.

\paragraph{The primal problem: the Nonlinear Subhedging Value}

We provide the financial interpretation of the primal problem which will
yield the EMOT problem $\mathfrak{D}_{F,\widehat{Q}}$ as its dual. It is
convenient to reformulate the robust subhedging price in the RHS of (\ref%
{robustDuality}) in a more general setting.

\begin{definition}
\label{def11}Consider a measurable function $c:\Omega \rightarrow {\mathbb{R}%
}$ representing a (possibly path dependent) option, the set $\mathcal{V}$ of
hedging instruments and a suitable pricing functional $\pi :\mathcal{V}%
\rightarrow {\mathbb{R}}$. Then the robust Subhedging Value of $c$ is
defined by 
\begin{equation*}
\Pi _{\pi ,\mathcal{V}}(c)=\sup \left\{ \pi (v)\mid v\in \mathcal{V}\text{
s.t. }v\leq c\right\} .
\end{equation*}
\end{definition}

In the classical setting, functionals of this form (and even with a more
general formulation) are known as general capital requirement, see for
example Frittelli and Scandolo \cite{FrittelliScandolo06}. We stress however
that in Definition \ref{def11} the inequality $v\leq c$ holds for all
elements in $\Omega $ with no reference to a probability measure whatsoever.
The novelty in this definition is that a priori $\pi $ may not be linear and
it is crucial to understand which evaluating functional $\pi $ we may use.
For our discussion, we assume that the vector subspaces $\mathcal{E}%
_{t}\subseteq \mathcal{C}(K_{t})$ satisfies $\mathcal{E}_{t}+\mathbb{R=}%
\mathcal{E}_{t}$, for $t=0,\dots ,T$. We$\ $let $\mathcal{E}:=\mathcal{E}%
_{0}\times \dots \times \mathcal{E}_{T},$ and $\mathcal{V}:=\mathcal{E}%
_{0}+\dots +\mathcal{E}_{T}+\mathcal{I}$. Suppose we took a linear pricing
rule $\pi :\mathcal{V}\rightarrow {\mathbb{R}}$ defined via a $\widehat{Q}%
\in \mathrm{Mart}(\Omega )$ by 
\begin{equation}
\pi (v):=E_{\widehat{Q}}\left[ \sum_{t=0}^{T}\varphi _{t}+I^{\Delta }\right] 
\overset{(i)}{=}E_{\widehat{Q}}\left[ \sum_{t=0}^{T}\varphi _{t}\right] 
\overset{(ii)}{=}\sum_{t=0}^{T}E_{\widehat{Q}_{t}}[\varphi _{t}],  \label{01}
\end{equation}%
where we used (\ref{EE}) and the fact that $\widehat{Q}_{t}$ is the marginal
of $\widehat{Q}$. In this case, we would trivially obtain for the robust
subhedging value of $c$ 
\begin{align}
\Pi _{\pi ,\mathcal{V}}(c)& =\sup \left\{ \pi (v)\mid v\in \mathcal{V}\text{
s.t. }v\leq c\right\}  \label{11} \\
& =\sup \left\{ \sum_{t=0}^{T}E_{\widehat{Q}_{t}}[\varphi _{t}]\mid \varphi
\in \mathcal{E}\text{ is s.t. }\exists \Delta \in \mathcal{H}\text{ s.t. }%
\sum_{t=0}^{T}\varphi _{t}(x_{t})+I^{\Delta }(x)\leq c(x)\text{ }\forall
x\in \Omega \right\}  \notag \\
& =\sup \left\{ m\in {\mathbb{R}}\mid \exists \Delta \in \mathcal{H}\text{, }%
\varphi \in \mathcal{E},\text{ s.t. }m-\sum_{t=0}^{T}E_{\widehat{Q}%
_{t}}[\varphi _{t}]+\sum_{t=0}^{T}\varphi _{t}+I^{\Delta }\leq c\right\} 
\notag \\
& =\sup \left\{ m\in {\mathbb{R}}\mid \exists \Delta \in \mathcal{H}\text{, }%
\varphi \in \mathcal{E},\text{ with }E_{\widehat{Q}_{t}}[\varphi _{t}]=0%
\text{ s.t. }m+\sum_{t=0}^{T}\varphi _{t}+I^{\Delta }\leq c\right\} ,
\label{14}
\end{align}%
where in the last equality we replaced $\varphi _{t}$ with $(E_{\widehat{Q}%
_{t}}[\varphi _{t}]-\varphi _{t})\in \mathcal{E}_{t}$, which satisfies:%
\begin{equation}
E_{\widehat{Q}_{t}}\left[ E_{\widehat{Q}_{t}}[\varphi _{t}]-\varphi _{t}%
\right] =0.  \label{15}
\end{equation}%
\textit{Interpretation}: $\Pi _{\pi ,\mathcal{V}}(c)$ is the supremum amount 
$m\in {\mathbb{R}}$ for which we may buy options $\varphi _{t}$ and dynamic
strategies $\Delta \in \mathcal{H}$ such that $m+\sum_{t=0}^{T}\varphi
_{t}+I^{\Delta }\leq c$ , where the value of both the options and the
stochastic integrals are computed as the expectation under the same
martingale measure ($\widehat{Q}$ for the integral $I^{\Delta }$; its
marginals $\widehat{Q}_{t}$ for each option $\varphi _{t}$).

\bigskip

However, as mentioned above when presenting the indifference price $p^{b}$,
there is a priori no reason why one has to allow only linear functional in
the evaluation of $v\in \mathcal{V}$.

\textit{We thus generalize the expression for} $\Pi _{\pi ,\mathcal{V}}(c)$ 
\textit{by considering valuation functionals} $S:\mathcal{V}\rightarrow {%
\mathbb{R}}$ \textit{and} $S_{t}:\mathcal{E}_{t}\rightarrow {\mathbb{R}}$ \textit{%
more general than} $E_{\widehat{Q}}[\cdot ]$ \textit{ and }  $E_{\widehat{Q}%
_{t}}[\cdot ].$

Nonetheless, in order to be able to repeat the same key steps we used in (%
\ref{11})-(\ref{14}) and therefore to keep the same interpretation, we shall
impose that such functionals $S$ and $S_{t}$ satisfy the property in (\ref%
{15}) and the two properties (i) and (ii) in equation (\ref{01}), that is:

\begin{enumerate}
\item[(a)] $S_{t}[\varphi _{t}+k]=S_{t}[\varphi _{t}]+k$ and $S_{t}[0]=0,$
for all $\varphi _{t}\in \mathcal{E}_{t}$, $k\in \mathbb{R}$, $t=0,\dots ,T$.

\item[(b)] $S\left[ \left( \sum\limits_{t=0}^{T}\varphi _{t}\right)
+I^{\Delta }(x)\right] =S\left[ \sum\limits_{t=0}^{T}\varphi _{t}\right] $
for all $\Delta \in \mathcal{H}$ and $\varphi \in \mathcal{E}.$

\item[(c)] $S\left[ \sum\limits_{t=0}^{T}\varphi _{t}\right]
=\sum\limits_{t=0}^{T}S_{t}[\varphi _{t}]$ for all $\varphi \in \mathcal{E}$.
\end{enumerate}

We immediately recognize that (a) is the Cash Additivity (CA) property on $%
\mathcal{E}_{t}$ of the functional $S_{t}$ and (b) implies the Integral
Additivity (IA) property on $\mathcal{V}$. As a consequence, repeating the
same steps in (\ref{11})-(\ref{14}), we will obtain as primal problem the 
\textbf{nonlinear subhedging value} of $c:$ 
\begin{align}
\mathfrak{P}(c)& =\sup \left\{ S(v)\mid v\in \mathcal{V}:v\leq c\right\} 
\notag \\
& =\sup \left\{ \sum_{t=0}^{T}S_{t}(\varphi _{t})\mid \varphi \in \mathcal{E}%
\text{ is s.t. }\exists \Delta \in \mathcal{H}\text{ s.t. }%
\sum_{t=0}^{T}\varphi _{t}(x_{t})+I^{\Delta }(x)\leq c(x)\text{ }\forall
x\in \Omega \right\}  \label{PI} \\
& =\sup \left\{ m\in {\mathbb{R}}\mid \exists \Delta \in \mathcal{H}\text{, }%
\varphi \in \mathcal{E},\text{ with }S_{t}(\varphi _{t})=0\text{ s.t. }%
m+\sum_{t=0}^{T}\varphi _{t}+I^{\Delta }\leq c\right\} ,  \notag
\end{align}%
to be compared with (\ref{14}).

\textit{Interpretation}: $\mathfrak{P}(c)$ is the supremum amount $m\in {%
\mathbb{R}}$ for which we may buy \textit{zero\ value} options $\varphi _{t}$
and dynamic strategies $\Delta \in \mathcal{H}$ such that $%
m+\sum_{t=0}^{T}\varphi _{t}+I^{\Delta }\leq c$, where the value of both the
options and the stochastic integrals are computed with the same functional $%
S.$

It is easy to check that \eqref{PI} can be rewritten as: 
\begin{equation}
\mathfrak{P}(c)=\sup_{\Delta \in \mathcal{H}}\sup_{\varphi \in \mathbf{\Phi }%
_{\Delta }(c)}\sum_{t=0}^{T}S_{t}(\varphi _{t})  \label{dualitywithSt}
\end{equation}%
where 
\begin{equation*}
\mathbf{\Phi }_{\Delta }(c):=\left\{ \varphi \in \mathcal{E},\text{ }%
\sum_{t=0}^{T}\varphi _{t}(x_{t})+I^{\Delta }(x)\leq c(x)\,\,\,\forall
\,x\in \Omega \right\} \,.
\end{equation*}%

It is quite a natural question to ask whether one can induce valuations  satisfying (a),(b),(c) above, especially starting from given functionals $S_t,t=0,\dots,T$. This can be obtained with the concept of Stock Additivity which we now discuss, before further elaborating on the formulation in (\ref{dualitywithSt}). Stock Additivity is the natural
counterpart of properties (IA) and (CA) when we are evaluating hedging
instruments depending solely on the value of the underlying stock $X$ at
some fixed date $t\in \{0,\dots ,T\}$. Let $X_{t}$ be the identity function $%
x_{t}\mapsto x_{t}$ on $K_{t}$, which can also be thought as the projection
on the $t$-th component $X:\Omega \rightarrow {\mathbb{R}}$. As before, the
set of hedging instruments is denoted by $\mathcal{E}_{t}\subseteq \mathcal{C%
}(K_{t})$ and we will suppose that $X_{t}\in \mathcal{E}_{t}$ (that is, we
can use units of stock at time $t$ for hedging) and that $\mathcal{E}_{t}+{%
\mathbb{R}}=\mathcal{E}_{t}$ (that is, deterministic amounts of cash can be
used for hedging as well). Finally, we assume for the following Definition
that the value of $X_{0}$ is known, namely $K_{0}=\{x_{0}\}$ for some $%
x_{0}\in {\mathbb{R}}$.

\begin{definition}
A functional $\ p_{t}:\mathcal{E}_{t}\rightarrow {\mathbb{R}}$ is stock
additive on $\mathcal{E}_{t}$ if $p_{t}(0)=0$ and%
\begin{equation*}
p_{t}(\varphi _{t}+\alpha _{t}X_{t}+\beta _{t})=p_{t}(\varphi _{t})+\alpha
_{t}x_{0}+\beta _{t}\,\,\,\,\,\,\,\forall \varphi _{t}\in \mathcal{E}%
_{t},\beta _{t}\in {\mathbb{R}},\alpha _{t}\in {\mathbb{R}}\,,
\end{equation*}
\end{definition}

We now clarify the role of stock additive functionals in our setup. Suppose
that $S_{t}:\mathcal{E}_{t}\rightarrow {\mathbb{R}}$ are stock additive on $%
\mathcal{E}_{t}$, $t=0,\dots ,T$. It can be shown (see Lemma \ref%
{lemmaintrostockaddfunctions}) that if there exist $\varphi ,\psi \in 
\mathcal{E}_{0}\times ...\times \mathcal{E}_{T}$ and $\Delta \in \mathcal{H}$
such that $\sum_{t=0}^{T}\varphi _{t}=\sum_{t=0}^{T}\psi _{t}+I^{\Delta }$ \
then 
\begin{equation*}
\sum_{t=0}^{T}S_{t}(\varphi _{t})=\sum_{t=0}^{T}S_{t}(\psi _{t}).
\end{equation*}%
This allows us to define a functional $S:\mathcal{V}=\mathcal{E}_{0}+\dots +%
\mathcal{E}_{T}+\mathcal{I}\rightarrow {\mathbb{R}}$ by 
\begin{equation}
S(\upsilon ):=\sum_{t=0}^{T}S_{t}(\varphi _{t}),\text{\quad for }\upsilon
=\sum_{t=0}^{T}\varphi _{t}+I^{\Delta }.  \label{SI}
\end{equation}%
Then $S$ is a well defined, integral additive functional on $\mathcal{V}$,
and $S,S_{0},\dots ,S_{T}$ satisfy the properties (a), (b), (c). 

In conclusion, when we consider stock additive functionals $S_{0},\dots ,S_{T}$
that induce the functional $S$ as explained in (\ref{SI}), we can focus our
attention to the optimization problem \eqref{dualitywithSt}, that will be
referred to as our primal problem. 

As explained in Example \ref{exutstockadd} below, there is a natural way to produce a variety of stock additive functionals,

\begin{example}
\label{exutstockadd} Consider a martingale measure $\widehat{Q}\in \mathrm{%
Mart}(\Omega )$ and a concave non decreasing utility function $u_{t}:{%
\mathbb{R}}\rightarrow \lbrack -\infty ,+\infty ),$ satisfying $u(0)=0$ and $%
u_{t}(x_{t})\leq x_{t}\,\ \forall x_{t}\in {\mathbb{R}}$. We can then take 
\begin{equation*}
S_{t}(\varphi _{t})=U_{\widehat{Q}_{t}}(\varphi _{t}):=\sup_{\alpha \in {%
\mathbb{R}},\text{ }\beta \in {\mathbb{R}}}\left( \int_{\Omega }u_{t}\left(
\varphi _{t}(x_{t})+\alpha x_{t}+\beta \right) \,\mathrm{d}\widehat{Q}%
_{t}(x_{t})-(\alpha x_{0}+\beta )\right) .
\end{equation*}%
As shown in Lemma \ref{remfrominftytofiniteness} the stock additivity
property is then satisfied for these functionals.
\end{example}

\paragraph{The Duality}

As a consequence of our main results we prove the following duality (see
Theorem \ref{mainthm3noncpt}). If 
\begin{equation*}
\mathcal{D}_{t}(Q_{t}):=\sup_{\varphi _{t}\in \mathcal{E}_{t}}\left(
S_{t}(\varphi _{t})-\int_{K_{t}}\varphi _{t}\,\mathrm{d}Q_{t}\right) \quad 
\text{for }Q_{t}\in \mathrm{Prob}(K_{t})\text{,}\quad t=0,\dots ,T,
\end{equation*}%
and $\mathfrak{D}(c)$ and $\mathfrak{P}(c)$ are defined respectively in %
\eqref{D1} and (\ref{dualitywithSt}), then 
\begin{equation}
\mathfrak{D}(c)=\mathfrak{P}(c).\label{dualityPD}
\end{equation}%
In the particular case of $S_{0},\dots ,S_{T}$ induced by utility functions,
as explained in Example \ref{exutstockadd}, the problem corresponding to %
\eqref{PI} or to \eqref{dualitywithSt} becomes
\begin{equation}
\mathfrak{P}_{U,\widehat{Q}}(c)=\sup \left\{ \sum_{t=0}^{T}U_{\widehat{Q}%
_{t}}(\varphi _{t})\mid \varphi \in \mathcal{E}\text{ is s.t. }\exists
\Delta \in \mathcal{H}\text{ s.t. }\sum_{t=0}^{T}\varphi
_{t}(x_{t})+I^{\Delta }(x)\leq c(x)\text{ }\forall x\in \Omega \right\} .
\label{PIdiv}
\end{equation}%
We also show the duality between (\ref{D1div}) and (\ref{PIdiv}), 
namely we prove in Section \ref{secdivinducedbyutils} 
\begin{equation}
\mathfrak{D}_{F,\widehat{Q}}(c):=\inf_{Q\in \mathrm{Mart}(\Omega )}\left(
E_{Q}\left[ c\right] +\sum_{t=0}^{T}\mathcal{D}_{F_{t},\widehat{Q}%
_{t}}(Q_{t})\right) =\mathfrak{P}_{U,\widehat{Q}}(c).  \label{key}
\end{equation}


The divergence functions $F_{t}$ appearing in $\mathfrak{D}_{F,\widehat{Q}}$
(via $\mathcal{D}_{F_{t},\widehat{Q}_{t}}$) are associated to the utility
functions $u_{t}$ appearing in $U_{\widehat{Q}_{t}}$ and in $\mathfrak{P}_{U,%
\widehat{Q}}$ via the conjugacy relation: 
\begin{equation*}
F_{t}(y):=v_{t}^{\ast }(y)=\sup_{x_{t}\in \mathbb{R}}\left\{
x_{t}y-v(y)\right\} =\sup_{x_{t}\in \mathbb{R}}\left\{
u_{t}(x_{t})-x_{t}y\right\} ,
\end{equation*}%
where $v(y):=-u(-y)$. Thus, depending on which utility function $u$ is
selected in the primal problem $\mathfrak{P}_{U,\widehat{Q}}(c)$ to evaluate
the options through $U_{\widehat{Q}_{t}}$, the penalization term $\mathcal{D}%
_{F_{t},\widehat{Q}_{t}}$ in the the dual formulation $\mathfrak{D}_{F,%
\widehat{Q}}(c)$ has a particular form induced by $F_{t}=v_{t}^{\ast }$. In
the special case of linear utility functions $u_{t}(x_{t})=x_{t}$, we
recover the sublinear MOT theory. Indeed, in this case, $v_{t}^{\ast
}(y)=+\infty ,$ for all $y\neq 1$ and $v_{t}^{\ast }(1)=0$, so that $%
\mathcal{D}_{F_{t},\widehat{Q}_{t}}(\cdot )=\delta _{\widehat{Q}_{t}}(\cdot
) $ and thus we obtain the robust pricing-hedging duality (\ref%
{robustDuality}) of the classical MOT.

\subsection{EMOT}

\label{introEMOTdiscussion} 

To describe our main result (Theorem \ref{mainEMOTtheoremgeneral}) we
introduce two general functionals $U$ and $\mathcal{D}_{U}$
that are associated through a Fenchel-Moreau type relation, see 
\eqref{dualrepdivgenabstract}. 
The valuation functional $U:\mathcal{E}\rightarrow \lbrack -\infty
,+\infty )$ is defined on the space of hedging instruments $\mathcal{E}$, where the vector space $\mathcal{E}$ consists of
vectors of continuous functions $\varphi =[\varphi _{0},\dots ,\varphi _{T}]$
satsfying the sublinearity condition of the form $\varphi _{t}(x_{t})\leq
\alpha (1+\left\vert x_{t}\right\vert )$ for some $\alpha \geq 0$. As the map $%
U $ is not necessarily cash additive, we rely on the notion of the
Optimized Certainty Equivalent (OCE), that was introduced in Ben Tal and
Teboulle \cite{BenTalTeboulle86} and further analyzed in Ben Tal and
Teboulle \cite{BenTalTeboulle07}.  We introduce the Generalized Optimized Certainty Equivalent
associated to $U$ as the functional $S^{U}:\mathcal{E}\rightarrow \lbrack
-\infty ,+\infty ]$ defined by 
\begin{equation}
S^{U}(\varphi ):=\sup_{\beta \in {\mathbb{R}}^{T+1}}\left( U(\varphi +\beta
)-\sum_{t=0}^{T}\beta _{t}\right) \text{,}\,\,\,\,\,\varphi \in \mathcal{E}%
\text{.}  \label{defSU}
\end{equation}%
Thus we obtain a cash additive map $S^{U}(\varphi +\beta )=S^{U}(\varphi
)+\sum_{t=0}^{T}\beta _{t}$, which will guarantee that in the problem (\ref%
{EOT}) the elements $\mu \in $Meas$(\Omega )$ are normalized, i.e. are
probability measures. Then the duality \eqref{dualityPD}, $\mathfrak{D}(c)=\mathfrak{P}(c),$ will
take the form%
\begin{equation}
\inf_{Q\in \mathrm{Mart}(\Omega )}\left( E_{Q}\left[ c\right] +\mathcal{D}%
_{U}(Q)\right) =\sup_{\Delta \in \mathcal{H}}\sup_{\varphi \in \mathbf{\Phi }%
_{\Delta }(c)}S^{U}\left( \varphi \right) ,  \label{infsup}
\end{equation}%
and we will also prove the existence of the optimizer for the problem in the LHS
of \eqref{infsup}. The primal problem in \eqref{dualitywithSt} can be
recovered from the more general expression in the RHS of \eqref{infsup}:
indeed, it is enough to take $U(\varphi ):=\sum_{t=0}^{T}S_{t}(\varphi _{t})$%
, with stock additive functional $S_{t},$ and observe that by cash
additivity $S^{U}(\varphi )=U(\varphi )=\sum_{t=0}^{T}S_{t}(\varphi _{t})$.
The generality consists in considering valuation of the \textit{process} $\varphi =[\varphi _{0},\dots ,\varphi _{T}] \in \mathcal{E}$ 
rather than the valuation of the terminal values $\varphi_{t}(x_t)$ only.

The penalization term $\mathcal{D}$:=$\mathcal{D}_{U}$ associated to $U$
does not necessarily have an additive structure, $\mathcal{D}
(Q)=\sum_{t=0}^{T}\mathcal{D}_{t}(Q_{t})$, as in \eqref{D1}, nor needs to have the divergence
formulation, as described in (\ref{Divergence}), and so it does not
necessarily depend on a given martingale measure $\widehat{Q}$ (see e.g.  Section \ref{CasePenalization1gen}). This additional
flexibility in choosing $\mathcal{D}$ constitutes one key  generalization of the Entropy Optimal
Transport theory of \cite{Sava}. Of course, the other additional difference
with EOT is the presence in (\ref{infsup}) of the additional supremum with
respect to admissible integrand $\Delta \in \mathcal{H}$. As a consequence,
in the LHS of (\ref{infsup}) the infimum is now taken with respect to
martingale measures. We also point out that in \cite{Sava}, the cost
functional $c$ is required to be lower semicontinuous and nonnegative and
that the theory is developed only for the bivariate case ($t=0,1)$, while in
this paper we take $c$ lower semicontinuous and with superlinear growth (as
given by \eqref{controlfrombelow}), and consider the multivariate case ($%
t=0,...,T)$ and a multidimensional process. %

Our framework allows to establish and comprehend several different duality
results, even if under different type of assumptions:

\begin{enumerate}
\item The new non linear robust pricing-hedging duality with options
described in (\ref{key}) and proved in
Corollary \ref{corolldivinduced}.\label{elenco}

\item The new non linear robust pricing-hedging duality with options and
singular components, proved in Corollary \ref{corwithsingular}.

\item The linear robust pricing-hedging duality with options (see \cite{Beig}
Theorem 1.1, or  \cite{AcciaioBeiglbockSchachermayerPenkner16}
Theorem 1.4) described in (\ref{robustDuality}) and proved in Corollary \ref%
{CorNew}.

\item The linear robust pricing-hedging duality without options (see for
example  \cite{BurzoniFrittelliMaggis17} Theorem 1.1) proved
in Corollary \ref{corollrob}.

\item A new robust pricing-hedging duality with penalization function based
on market data (see Section \ref{CasePenalization1gen}).

\item A new robust pricing-hedging duality with penalty terms given via
Wasserstein distance (see Section \ref{secwasser}).

\item A new dual robust representation for the Optimized Certainty
Equivalent functional (see Section \ref{GOCE}).
\end{enumerate}

As already mentioned before, we work with (possibly) noncompact $K_{0},\dots
,K_{T}$ (which corresponds to working with possibly unbounded price
processes), and we allow for traded options $\varphi _{t},t=0,\dots ,T$
which satisfy a sublinearity condition of the form $\abs{\varphi _{t}(x_{t})}\leq
\alpha (1+\left\vert x_{t}\right\vert )$ for some $\alpha \geq 0$. Our order
continuity-type assumptions (see\eqref{condconvergencecalls}) in the main
result, Theorem \ref{mainEMOTtheoremgeneral}, are essentially automatically
satisfied when the compactness assumption on $K_{0},\dots ,K_{T}$ is added
(see Corollary \ref{maincorcasecpt}).

\bigskip

One additional feature of the paper consists in replacing the set of
stochastic integrals $\mathcal{I}$ with a general set $\mathcal{A}$  of suitable hedging
instruments, that will be a general convex cone.
Particular choices of such a set $\mathcal{A}$, apart from the usual set of
stochastic integrals, allow us to work with $\varepsilon $-martingale
measures, supermartingales and submartingales in the duality (see Subsection \ref{Examplesnoncpt}). This extends
EMOT beyond the strict martingale property in the strict sense. 

Finally, Section \ref{Secstab} is devoted to \textit{stability and
convergence} issues, as we analyze how the duality is affected by variations in
the penalty terms.  In the Examples \ref{examplepenalmkttomot}, \ref{exampleemotwasserdistance}
 and \ref{exampleutilitiestoMOT} we apply this result to the convergence of EMOT to the
\textquotedblleft extreme\textquotedblright case of MOT,  and in Subsection \ref{secwasser} we
focus on Wasserstein-induced penalization terms. 
\bigskip

We summarize the preceding discussion in the following Table and we point
out that in \textit{this paper we develop the duality theory sketched in the
last line of the Table and provide its financial interpretation}.
Differently from rows $1$, $2$, $5$, $6$, in rows $3$, $4$, $7$, $8$, the
financial market is present and martingale measures are involved in the dual
formulation. In rows $1$, $2$, $3$, $4$ we illustrate the classical setting,
where the conditions in the functional form hold $P$-a.s., while in the last
four rows Optimal Transport is applied to treat the robust versions, where
the inequalities holds for all elements of $\Omega $.

\newpage \captionsetup[table]{labelfont=bf, font=footnotesize} %
\renewcommand{\arraystretch}{1.5}

{\scriptsize \centering
\begin{longtable}{c|c|c|c|c|}

\caption{\label{tab1} $\Pi$($\Omega$) is the set of all
probabilities on $\Omega$; $\mathcal{P}(P)=\left\{Q \in \Pi(\Omega) \mid
Q \ll P\right\}$;  $\mathrm{Mart}(\Omega)$ is the set of all martingale probabilities on
$\Omega$; $\mathcal{M}(P)=\mathrm{Mart}(\Omega)\cap   \mathcal{P}(P) $;
$\Pi(\probq_1,\probq_2)=\left \{Q \in \Pi (\Omega) \text{  with given
marginals}\right\}$; $\mathrm{Mart}(\probq_1,\probq_2)=\left \{Q \in
\mathrm{Mart}(\Omega) \text{ with given marginals}\right\}$; $\mathrm{Meas}(\Omega)$ is the set of all
positive finite measures on $ \Omega$; $\text{Sub}(c)$ is the set of static parts of semistatic subhedging strategies for $c$; $U$ is a concave proper utility functional and $S^U$ is the associated generalized Optimized Certainty Equivalent.}; 
\\
\multicolumn{2}{l|}{ } & FUNCTIONAL FORM  &    SUBLINEAR    &    CONVEX   \\
\hline
\endhead
\hline
\endfoot

1 &- Coherent R.M.      & $-\inf\{m\mid c+m\in\mathcal{A}\},\,\mathcal{A}\text{ cone}$    &  $\inf\limits_{\probq\in\mathcal{Q}\subseteq\mathcal{P}(\probp)}\Eq{c}$ &  \\
\hdashline
         
2 &- Convex R.M.         & $-\inf\{m\mid c+m\in\mathcal{A}\},\,\mathcal{A}\text{ convex}$ &  & $\inf\limits_{\probq\in\mathcal{P}(\probp)}(\Eq{c}+\alpha_{\mathcal{A}}(\probq))$       \\

\hline

3 & Subreplic. price     & $\sup \left\{ m \mid \exists \Delta : m+I^{\Delta }(X)\leq c\right\} $   &  $\inf\limits_{%
Q\in \mathcal{M}(P)}\Eq{c}$ &  \\
\hdashline
4 & Indiff. price       & $\sup \left\{ m\mid U(c-m)\geq U(0)\right\}$ &  & $\inf\limits_{%
Q\in \mathcal{M}(P)}(\Eq{c}+\alpha_{U}(\probq))$       \\

\hline

5 & O.T.   &   $\sup\limits_{\varphi+\psi\leq c}\left({E}_{\probq_1}[\varphi]+{E}_{\probq_2}[\psi]\right)$   & $\inf\limits_{\probq\in \Pi(\probq_1,\probq_2)}\Eq{c}$   &   \\
\hdashline
6 & E.O.T. &   $\sup\limits_{\varphi + \psi\leq c} U(\varphi,\psi)$   &  & $\inf\limits_{\probq\in \mathrm{Meas} (\Omega)  }(\Eq{c}+\mathcal{D}_{U}(\probq))$    \\
\hline
7 & M.O.T.   & $\sup\limits_{[\varphi,\psi]\in\text{Sub}(c)}\left({E}_{\probq_1}[\varphi]+{E}_{\probq_2}[\psi]\right)$      & $\inf\limits_{\probq\in \mathrm{Mart}(\probq_1,\probq_2)}\Eq{c}$   &    \\
\hdashline
8 & E.M.O.T. & $\sup\limits_{[\varphi,\psi]\in\text{Sub}(c)}S^U(\varphi,\psi)$      &   & $\inf\limits_{\probq\in \mathrm{Mart} (\Omega) }(\Eq{c}+\mathcal{D}_{U}(\probq))$    \\

\end{longtable} 
} 

\section{The Entropy Martingale Optimal Transport Duality\label{EMOT new}}

In this section we present the detailed mathematical setting, the main
results and their proofs. The main Theorem \ref{mainEMOTtheoremgeneral}
relies on: (i) a Fenchel-Moreau argument applied to the dual system $%
(C_{0:T},(C_{0:T})^{\ast }),$ where $C_{0:T}$ is a set of appropriately
weighted continuous functions; (ii) Daniell-Stone Theorem that guarantees
that the elements, in the dual space $(C_{0:T})^{\ast },$ that enter in the
dual representation can be represented by probability measures. In order to
make this possible, an order  continuity type assumption on the valuation
functional is enforced (see \eqref{condconvergencecalls}).

\subsection{Setting}
\label{sectionsetting}
For a metric space $\mathbb{X}$, $\mathcal{B}(\mathbb{X})$ denotes the Borel 
$\sigma $-algebra and $m\mathcal{B}(\mathbb{X})$ denotes the class of
real-valued, Borel-measurable functions on $\mathbb{X}$. We define the
following sets: 
\begin{align*}
\mathrm{ca}(\mathbb{X})& :=\left\{ \gamma :\mathcal{B}(\mathbb{X}%
)\rightarrow (-\infty ,+\infty )\mid \gamma \text{ is finite signed Borel
measure on }\mathbb{X}\right\} , \\
\mathrm{Meas}(\mathbb{X})& :=\{\mu :\mathcal{B}(\mathbb{X})\rightarrow
\lbrack 0,+\infty )\mid \mu \text{ is a non negative finite Borel measure on 
}\mathbb{X}\}, \\
\mathrm{Prob}(\mathbb{X})& :=\{Q:\mathcal{B}(\mathbb{X})\rightarrow \lbrack
0,1]\mid Q\text{ is a probability Borel measure on }\mathbb{X}\}, \\
\mathcal{C}(\mathbb{X})& :=\{\varphi :\mathbb{X}\rightarrow {\mathbb{R}}\mid
\varphi \text{ is continuous on }\mathbb{X}\}, \\
\mathcal{C}_{b}(\mathbb{X})& :=\{\varphi :\mathbb{X}\rightarrow {\mathbb{R}}%
\mid \varphi \text{ is bounded and continuous on }\mathbb{X}\}.
\end{align*}%
We now introduce the following families of continuous functions. For a $\psi
\in \mathcal{C}(\mathbb{X})$ we set 
\begin{equation*}
C_{\psi }:=\left\{ \phi \in \mathcal{C}(\mathbb{X})\mid \left\Vert \phi
\right\Vert _{\psi }:=\sup_{x\in \mathbb{X}}\frac{\left\vert \phi
(x)\right\vert }{1+\left\vert \psi (x)\right\vert }<+\infty \right\} \,.
\end{equation*}%
As it can be easily verified just following the classical case of bounded
continuous functions with sup-norm, $C_{\psi }$ is a Banach lattice under
the norm $\left\Vert \cdot \right\Vert _{\Psi }$.

Notice also that $\mathcal{C}_{b}(\mathbb{X})\ni \varphi \mapsto
(1+\left\vert \psi \right\vert )\varphi \in C_{\psi }$ defines an
isomorphism between Banach spaces. Hence if $\mathbb{X}$ is separable, so is 
$C_{\psi }$. 
The topological dual of $C_{\psi }$ will be denoted by $(C_{\psi })^{\ast }$.
\\
Fix now $d\in \mathbb{N},d\geq 1$, modeling the number of
stocks in the market, and fix $d(T+1)$ closed subsets of ${\mathbb{R}}$ : $%
K_{0}^{1},\dots ,K_{0}^{d},\dots ,K_{T}^{1},\dots ,K_{T}^{d}.$ 
For $0 \leq s \leq t \leq T$ we take
\begin{equation*}
\Omega_{s:t}: =\bigtimes_{u=s}^{t}\bigtimes_{j=1}^{d}K_{u}^{j}\,\,\,\,\,\, \text{and} \,\,\,\,\,\,\Omega :=\Omega_{0:T}=\bigtimes_{t=0}^{T}\bigtimes_{j=1}^{d}K_{t}^{j}.
\end{equation*}%

We will consider the following weighted spaces of continuous functions \begin{equation*}
C_{s:t}:=
\left\{ \phi \in \mathcal{C}\left(\Omega_{s:t}\right) \mid \left\Vert
\phi \right\Vert _{s:t}:=\sup_{x\in \bigtimes_{u=s}^t \bigtimes_{j=1}^{d}K_{u}^{j} }\frac{\left\vert \phi
(x)\right\vert }{1+\sum_{u=s}^{t}\sum_{j=1}^{d}\left\vert x_{u}^{j}\right\vert }<+\infty \right\}\,,
\end{equation*}%
\begin{equation*}
C_{t}=C_{t:t}:=\left\{ \phi \in \mathcal{C}\left( \bigtimes_{j=1}^{d}K_{t}^{j}\right) \mid \left\Vert
\phi \right\Vert _{t}:=\sup_{x\in \bigtimes_{j=1}^{d}K_{t}^{j} }\frac{\left\vert \phi
(x)\right\vert }{1+\sum_{j=1}^{d}\left\vert x_{t}^{j}\right\vert }<+\infty \right\}\,.
\end{equation*}%
In general, for an index set $I\subseteq \{1,\dots ,d\}\times \{0,\dots ,T\}$
we introduce 
\begin{equation*}
C_{I}:=
\left\{ \phi \in \mathcal{C}\left( \bigtimes_{(j,t)\in
I}K_{t}^{j}\right) \mid \left\Vert
\phi \right\Vert _{0:T}:=\sup_{x\in \bigtimes_{(j,t)\in
I}K_{t}^{j} }\frac{\left\vert \phi
(x)\right\vert }{1+\sum_{(j,t)\in I}\left\vert x_{t}^{j}\right\vert }<+\infty \right\}\,.
\end{equation*}%
The corresponding norms will be denoted by $\left\Vert \cdot \right\Vert
_{s:t},\left\Vert \cdot \right\Vert _{t},\left\Vert \cdot \right\Vert _{I}$
respectively. 
\begin{remark}
Notice that if $K_{0}^{1},\dots ,K_{0}^{d},\dots ,K_{T}^{1},\dots ,K_{T}^{d}$
are compact sets then 
\begin{equation*}
C_{0:T}=\mathcal{C}_{b}(\Omega )\text{ and }(C_{0:T})^{\ast }=\mathrm{ca}(\Omega ).
\end{equation*}
\end{remark}

Analogously, we introduce the spaces $B_{\psi }$ and $B_{I}$ in a similar
fashion, just substituting the condition $\varphi \in \mathcal{C}(\mathbb{X}%
) $ with the condition $\varphi \in m\mathcal{B}(\mathbb{X})$ in the
definitions. In particular 
\begin{equation*}
B_{0:T}:=\left\{ \phi \in m\mathcal{B}\left( \Omega\right) \mid \left\Vert
\phi \right\Vert _{0:T}:=\sup_{x\in \Omega }\frac{\left\vert \phi
(x)\right\vert }{1+\sum_{t=0}^{T}\sum_{j=1}^{d}\left\vert
x_{t}^{j}\right\vert }<+\infty \right\}
\end{equation*}%
turns out to be a Banach lattice under the norm $\left\Vert \cdot
\right\Vert _{0:T}$. Observe that by slight abuse of notation (regarding the
domains of the functions) for index sets $I\subseteq J\subseteq \{1,\dots
,d\}\times \{0,\dots ,T\}$ we have a constant $0<\theta \leq 1$ such that 
\begin{equation}
C_{I}\subseteq C_{J},\,\,\,\theta \left\Vert \phi \right\Vert _{I}\leq
\left\Vert \phi \right\Vert _{J}\leq \left\Vert \phi \right\Vert
_{I}\,\forall \phi \in C_{I}\,.  \label{consistencyrestrictions}
\end{equation}%
Such a constant $\theta $ is given by 
\begin{equation*}
\theta =\frac{1+\alpha }{1+\alpha +\sum_{(j,t)\in J\setminus
I}\min_{x_{t}^{j}\in K_{t}^{j}}\left\vert x_{t}^{j}\right\vert }
\end{equation*}%
where $\alpha =\min \left\{ \sum_{(j,t)\in I}\left\vert x_{t}^{j}\right\vert
\mid x_{t}^{j}\in K_{t}^{j}\text{ }\forall \,(j,t)\in I\right\} $.
\\

As already mentioned in \cite{PennanenPerkkio19} and \cite%
{CheriditoKupperTangpi17}, every finite signed measure $\gamma $ on the
Borel $\sigma $-algebra $\mathcal{B}(\mathbb{X})$ such that $%
C_{0:T}\subseteq L^{1}(\mathbb{X},\mathcal{B}(\mathbb{X}),\left\vert \gamma
\right\vert )$ induces a continuous linear functional $\lambda \in
(C_{0:T})^{\ast }$ via integration: 
\begin{equation*}
c\mapsto \langle c,\lambda \rangle =\int_{\mathbb{X}}c\,\mathrm{d}\gamma
\,,\,\,\,\forall \,c\in \,C_{0:T}\,.
\end{equation*}%
The collection of such functionals, identified with the corresponding
measures, will be denoted by $\mathrm{ca}^{1}(\mathbb{X})$, that is 
\begin{equation*}
\mathrm{ca}^{1}(\mathbb{X})=\{\gamma \mid \gamma \text{ is a finite signed
measure on }\mathcal{B}(\mathbb{X})\text{ with }C_{0:T}\subseteq L^{1}(%
\mathbb{X},\mathcal{B}(\mathbb{X}),\left\vert \gamma \right\vert )\},
\end{equation*}%
while the classes of non negative measures and probability measures in $%
\mathrm{ca}^{1}(\mathbb{X})$ will be denoted by 
\begin{equation*}
\mathrm{Meas}^{1}(\mathbb{X}) \,\,\,\,\,\,\,\text{and} \,\,\,\,\,\, \mathrm{Prob}^{1}(\mathbb{X}).
\end{equation*}

In a discrete time framework with finite horizon $T$ and assuming zero interest rate, we model a market with $d$ stocks using the canonical $d$-dimensional
process given by $X_{t}^{j}(x)=x_{t}^{j},j=1,\dots ,d,t=0\dots ,T$.

Observe that every $\phi \in C_{0:T}$ satisfies: $\left\vert \phi
(x)\right\vert \leq \left\Vert \phi \right\Vert _{0:T}\left(
1+\sum_{t=0}^{T}\sum_{j=1}^{d}\left\vert x_{t}^{j}\right\vert \right) $ and
so, for any measure $\mu \in \mathrm{Meas}^{1}(\mathbb{X}),$ we have: $%
C_{0:T}\subseteq L^{1}(\mathbb{X},\mathcal{B}(\mathbb{X}),\mu )$ iff $%
X_{t}^{j}\in L^{1}(\mathbb{X},\mathcal{B}(\mathbb{X}),\mu )$ for all $j$ and
all $t.$

Fix now vector subspaces $\mathcal{E}_{0},\dots ,\mathcal{E}_{T}$ with ${%
\mathbb{R}}\subseteq \mathcal{E}_{t}\subseteq C_{0:t},t=0,\dots ,T$. We set 
\begin{equation*}
\mathcal{E}=\mathcal{E}_{0}\times \dots \times \mathcal{E}_{T}\,.
\end{equation*}%
The space $\mathcal{E}$ represents the class of financial instruments that can be used for static hedging, which may, for example, contain plain vanilla options.

Let $U:\mathcal{E}\rightarrow \lbrack -\infty ,+\infty )$ be a proper,
concave functional, representing the evaluation functional of the hedging instruments in $\mathcal{E}$. Consider the proper, convex functional 
\begin{equation*}
V(\varphi ):=-U(-\varphi )\,,
\end{equation*}%
and set%
\begin{equation}
\mathrm{dom}(U):=\{\varphi \in \mathcal{E}\mid U(\varphi )>-\infty
\}\,,\,\,\,\,\,\,\,\,\mathrm{dom}(V)=\left\{ \varphi \in \mathcal{E}\mid
V(\varphi )<+\infty \right\} \,.  \notag
\end{equation}%
We define the (convex) conjugate $\mathcal{D}:\bigtimes_{t=0}^{T}(C_{0:t})^{\ast }\rightarrow
(-\infty ,+\infty ]$ of the functional $U$ by%
\begin{equation}
\mathcal{D}(\gamma _{0},\dots ,\gamma _{T}):=\sup_{\varphi \in \mathcal{E}%
}\left( U(\varphi )-\sum_{t=0}^{T}\langle \varphi _{t},\gamma _{t}\rangle
\right) =\sup_{\varphi \in \mathcal{E}}\left( \sum_{t=0}^{T}\langle \varphi
_{t},\gamma _{t}\rangle -V(\varphi )\right) ,\,\,\,\,\,\gamma \in \bigtimes%
_{t=0}^{T}(C_{0:t})^{\ast }.  \label{dualrepdivgenabstract}
\end{equation}%
$\mathcal{D}$ is a convex functional and is $\sigma (\bigtimes%
_{t=0}^{T}(C_{0:t})^{\ast },\mathcal{E})$- lower semicontinuous, even if we
do not require that $U$ is $\sigma (\mathcal{E},\bigtimes%
_{t=0}^{T}(C_{0:t})^{\ast })$-upper semicontinuous. 
When a $\gamma \in (C_{0:T})^{\ast }$ is given, we somehow improperly write $%
\mathcal{D}(\gamma )=\mathcal{D}(\gamma _{0},\dots ,\gamma _{T}),$ where $%
\gamma _{t}$ is the restriction of $\gamma $ to $C_{0:t}$. We also set 
\begin{equation*}
\mathrm{dom}(\mathcal{D})=\left\{ [\gamma _{0},\dots ,\gamma _{T}]\in %
\bigtimes_{t=0}^{T}(C_{0:t})^{\ast }\mid \mathcal{D}(\gamma _{0},\dots
,\gamma _{T})<+\infty \right\} \,.
\end{equation*}%
As an immediate consequence of the definitions, the Fenchel inequality
holds: if $[\varphi _{0},\dots ,\varphi _{T}]\in \mathcal{E}$ and $[\gamma
_{0},\dots ,\gamma _{T}]\in \bigtimes_{t=0}^{T}(C_{0:t})^{\ast }$ 
\begin{equation}
\sum_{t=0}^{T}\langle \varphi _{t},\gamma _{t}\rangle \leq \mathcal{D}%
(\gamma _{0},\dots ,\gamma _{T})+V(\varphi _{0},\dots ,\varphi _{T})\,.
\label{fenchelgeneral}
\end{equation}

\begin{remark}
\label{remfenchelmoreau}Another way to introduce our setting, that will be
used in Subsections \ref{CasePenalization1gen} and \ref{secwasser}, is to start initially with a
proper convex functional $\mathcal{D}:\mathrm{ca}^{1}(\Omega )\rightarrow
(-\infty ,+\infty ]$ which is $\sigma (\mathrm{ca}^{1}(\Omega ),\mathcal{E})$%
-lower semicontinuous for an $\mathcal{E}=\mathcal{E}_{0}\times \dots \times 
\mathcal{E}_{T}\subseteq (C_{0:T})^{T+1}$. By the Fenchel-Moreau Theorem we
then have the representation 
\begin{equation*}
\mathcal{D}(\gamma )=\sup_{\varphi \in \mathcal{E}}\left(
\sum_{t=0}^{T}\int_{\Omega }\varphi _{t}\,\mathrm{d}\gamma -V(\varphi
)\right) ,
\end{equation*}%
where now $V$ is the Fenchel-Moreau (convex) conjugate of $\mathcal{D}$,
namely 
\begin{equation}  \label{fenchelmoreau}
V(\varphi ):=\sup_{\gamma \in \mathrm{ca}^{1}(\Omega )}\left(
\sum_{t=0}^{T}\int_{\Omega }\varphi _{t}\,\mathrm{d}\,\gamma -\mathcal{D}%
(\gamma )\right) .
\end{equation}%
Setting 
\begin{equation}  \label{eqdefU}
U(\varphi ):=-V(-\varphi )\text{,}\,\,\,\,\,\,\,\,\varphi \in \mathcal{E}%
\text{,}
\end{equation}%
we get back that $\mathcal{D}$ satisfies \eqref{dualrepdivgenabstract} and
additionally that $U$ is $\sigma (\mathcal{E},\mathrm{ca}^{1}(\Omega ))$%
-upper semicontinuous. In conclusion, a pair $(U,\mathcal{D})$ satisfying %
\eqref{dualrepdivgenabstract} might be defined either providing a proper
concave $U:\mathcal{E}\rightarrow \lbrack -\infty ,+\infty )$, as described
at the beginning of this section, or assigning a proper convex and $\sigma (%
\mathcal{E},\mathrm{ca}^{1}(\Omega ))$-lower semicontinuous functional $%
\mathcal{D}:\mathrm{ca}^{1}(\Omega )\rightarrow (-\infty ,+\infty ]$ as
explained in this Remark.
\end{remark}

For a given proper and concave functional $U:\mathcal{E}\rightarrow \lbrack
-\infty ,+\infty )$ we define, as in (\ref{defSU}), the functional $S^{U}:%
\mathcal{E}\rightarrow \lbrack -\infty ,+\infty ]$ by%
\begin{equation*}
S^{U}(\varphi ):=\sup_{\beta \in {\mathbb{R}}^{T+1}}\left( U(\varphi +\beta
)-\sum_{t=0}^{T}\beta _{t}\right) ,\text{\quad with\quad }\mathrm{dom}%
(S^{U}):=\{\varphi \in \mathcal{E}\mid S^{U}(\varphi )>-\infty \},
\end{equation*}%
whose properties are collected in Lemma \ref{remalmostconcave}.

\begin{definition}
Given a convex cone $\mathcal{A}\subseteq C_{0:T}$ and a measurable $c\in m%
\mathcal{B}(\Omega )$ we define 
\begin{equation}
\mathfrak{P}(c):=\sup_{z\in -\mathcal{A}}\sup_{\varphi \in \mathbf{\Phi }%
_{z}(c)}S^{U}\left( \varphi \right) \in \lbrack -\infty ,+\infty ]
\label{deffrakPgeneral}
\end{equation}%
where%
\begin{equation*}
\mathbf{\Phi }_{z}(c):=\left\{ \varphi \in \mathrm{dom}(S^{U})\mid
\sum_{t=0}^{T}\varphi _{t}(x_0,\dots,x_{t})+z(x)\leq c(x)\,\,\,\forall \,x\in \Omega
\right\} ,
\end{equation*}%
and the usual convention $\sup \emptyset =-\infty $ is adopted.
\end{definition}

From (\ref{dualitywithSt}) we recognize that the problem $\mathfrak{P}(c)$
in \eqref{deffrakPgeneral} is a generalized robust subhedging problem for $c$%
, with a general set, namely $-\mathcal{A}$, replacing the set of terminal
values of stochastic integrals used before. Some relevant examples for
choices of $\mathcal{A}$ are provided in Section \ref{Examplesnoncpt}. The
use of $-\mathcal{A}$ in place of $\mathcal{A}$ is somehow a matter of
taste, as explained in Remark \ref{whyminus}. Observe also that in this
notation the \textit{superhedging} problem (with respect to $\mathcal{A}$)
for $c$ is

\begin{equation*}
\mathfrak{S}(c):=\inf_{z\in \mathcal{A}}\inf_{\varphi \in \mathbf{\Psi }%
_{z}(c)}S_{V}\left( \varphi \right) \in \lbrack -\infty ,+\infty ]
\end{equation*}%
where%
\begin{equation*}
S_{V}(\varphi ):=-S^{U}(-\varphi ),\quad \mathrm{dom}(S_{V}):=\{\varphi \in 
\mathcal{E}\mid S_{V}(\varphi )<+\infty \}=-\mathrm{dom}(S^{U})
\end{equation*}%
\begin{equation*}
\mathbf{\Psi }_{z}(c):=\left\{ \varphi \in \mathrm{dom}(S_{V}),\text{ }%
\sum_{t=0}^{T}\varphi _{t}(x_0,\dots,x_{t})+z(x)\geq c(x)\,\,\,\forall \,x\in \Omega
\right\} .
\end{equation*}

Using $-\mathcal{A}$ for $\mathfrak{P}$ and $\mathcal{A}$ for $\mathfrak{S}$
we then get that the two are linked by $\mathfrak{S}(c)=-\mathfrak{P}(-c)$,
and the duality results for $\mathfrak{P}$ can then easily be translated in
duality results for $\mathfrak{S}$. Of course, when $\mathcal{A}$ is a
vector space as in the case of stochastic integrals (see \eqref{StoIntegrad}
and Example \ref{exampleamtg}) we have $\mathcal{A}=-\mathcal{A}$ and there
is no need for distinguishing the two possibilities.

We define the polar $\mathcal{A}^{\circ }$ of the cone $\mathcal{A}$ to be
the set 
\begin{equation*}
\mathcal{A}^{\circ }:=\{\lambda \in (C_{0:T})^{\ast }\mid \langle z,\lambda
\rangle \leq 0\,\,\forall z\in \mathcal{A}\}\,
\end{equation*}%
and we observe that for any $\lambda \in (C_{0:T})^{\ast }$ 
\begin{equation*}
\sigma _{\mathcal{A}}(\lambda ):=\sup_{z\in \mathcal{A}}\,\langle z,\lambda
\rangle =%
\begin{cases}
0 & \lambda \in \mathcal{A}^{\circ } \\ 
+\infty & \text{otherwise}%
\end{cases}%
\,.
\end{equation*}

\subsection{Examples}

\label{Examplesnoncpt}

These examples deal with possible choices of the convex cone $\mathcal{A}$ and its polar $\mathcal{A}^\circ$ (observe that the set $\mathrm{Prob}^{1}(\Omega )\cap \mathcal{A}^{\circ
} $ appears in the main duality (\ref{dualrepreabstract})).

\begin{example}
{(Martingale measures)} \label{exampleamtg} To introduce martingale measures
in this setup we set 
\begin{align}
\mathcal{H}^{d}& :=\left\{ \Delta =[\Delta _{0},\dots ,\Delta _{T-1}]\mid
\Delta _{t}\in (\mathcal{C}_{b}(K_{0}\times \dots \times K_{t}))^{d}\right\}\,,
\notag \\
I^{\Delta }(x)& :=\sum_{t=0}^{T-1}\sum_{j=1}^{d}\Delta _{t}^{j}(x_{0},\dots
,x_{t})(x_{t+1}^{j}-x_{t}^{j})\,\,\,\forall \,x\in \Omega \,, \notag \\
\mathcal{A}& =\mathcal{I}:=\left\{ I^{\Delta }\mid \Delta \in \mathcal{H}%
^{d}\right\} \subseteq C_{0:T}\,.  \label{StoIntegrad}
\end{align}%
Thus the space $\mathcal{H}^{d}$ is the class of admissible trading
strategies and $\mathcal{I}$ is the set of elementary stochastic integral.
The (possibly empty) class of martingale measures for the canonical process
is denoted by $\mathrm{Mart}(\Omega )$, and consists of all the probability
measures on $\mathcal{B}(\Omega )$ which make each of the processes $%
(X_{t}^{j})_{t}$ a martingale under the natural filtration $\mathcal{F}%
_{t}:=\sigma (X_{s}^{j},s\leq t,j=1,...,d),t=0,\dots ,T$. Equivalently, 
\begin{equation*}
\mathrm{Mart}(\Omega ):=\{Q\in \mathrm{Prob}^{1}(\Omega )\mid {E}%
_{Q}[I^{\Delta }]=0,\,\forall \Delta \in \mathcal{H}^{d}\}=\mathrm{Prob}%
^{1}(\Omega )\cap \mathcal{A}^{\circ }.
\end{equation*}%
%
%
%
%
%
%
\end{example}

\begin{example}
($\varepsilon $-martingale measures) \label{ExamplesA} For every $%
\varepsilon \geq 0$, the set of $\varepsilon $-martingale measures (see \cite%
{GuoObloj19}) is 
\begin{equation*}
\mathrm{Mart}_{\varepsilon }(\Omega ):=\left\{ Q\in \mathrm{Prob}^{1}(\Omega
)\mid E_{Q}\left[ I^{\Delta }\right] \leq \varepsilon
\sum_{t=0}^{T-1}\max_{j=1,\dots,d}\left\Vert \Delta _{t}^{j}\right\Vert _{\infty
},\,\forall \Delta \in \mathcal{H}^{d}\right\} \,.
\end{equation*}%
Thus, taking 
\begin{equation}
\mathcal{A}^{\varepsilon }:=\mathrm{convex}\left( \left\{ I^{\Delta
}-\varepsilon \sum_{t=0}^{T-1}\max_{j=1,\dots,d}\left\Vert \Delta
_{t}^{j}\right\Vert _{\infty },\Delta \in \mathcal{H}^{d}\right\} \right)
\label{polarepsilonmarts}
\end{equation}%
(here $\mathrm{convex}(\cdot )$ stands for the convex hull in $C_{0:T}$,
which is easily seen to be a cone since $\mathcal{H}^{d}$ is a vector
space), one sees that 
\begin{equation*}
\mathrm{Mart}_{\varepsilon }(\Omega )=\mathrm{Prob}^{1}(\Omega )\cap (%
\mathcal{A}^{\varepsilon })^{\circ }.
\end{equation*}%
Alternatively, we observe that $\mathrm{Mart}_{\varepsilon }(\Omega )=%
\mathrm{M}_{\varepsilon }^{1}(\Omega )\cap \mathrm{Prob}^{1}(\Omega )$ where 
\begin{equation*}
\mathrm{M}_{\varepsilon }^{1}(\Omega ):=\left\{ \lambda \in (C_{0:T})^{\ast
}\mid \lambda \geq 0,\,\langle I^{\Delta },\lambda \rangle \leq \varepsilon
\sum_{t=0}^{T-1}\max_{j=1,\dots,d}\left\Vert \Delta _{t}^{j}\right\Vert _{\infty
}\lambda (1),\,\forall \Delta \in \mathcal{H}^{d}\right\}
\end{equation*}%
is a $\sigma ((C_{0:T})^{\ast },C_{0:T})$ (i.e. weak$^{\ast }$) closed
convex cone by direct computation. Suppose $\mathrm{M}_{\varepsilon
}^{1}(\Omega )\neq \emptyset $ and set $\mathcal{A}=\left( \mathrm{M}%
_{\varepsilon }^{1}(\Omega )\right) ^{\circ }\subseteq C_{0:T}.$\ Then by
the Bipolar Theorem $\mathcal{A}$ is a closed convex cone with $\mathcal{A}%
^{\circ }= \mathrm{M}_{\varepsilon }^{1}(\Omega ) $.
Consequently, for a $Q\in \mathrm{Prob}^{1}(\Omega )$ we have: either $%
\sigma _{\mathcal{A}}(Q)=0$, which happens if and only if $Q\in \mathrm{Mart}%
_{\varepsilon }(\Omega )$, or $\sigma _{\mathcal{A}}(Q)=+\infty $ otherwise.
From this it follows that $\mathcal{A}$ is the weak closure of $\mathcal{A}%
^{\varepsilon }$, since they are both convex cones with the same polar set.
Taking in particular $\varepsilon =0$ we have $\mathrm{Mart}_{0}(\Omega )=%
\mathrm{Mart}(\Omega )$ as in Example \ref{exampleamtg}.

It is interesting to notice that for any sequence $\varepsilon
_{n}\downarrow _{n}0$ we have 
\begin{equation}
\sigma _{\mathcal{A}^{\varepsilon _{n}}}(Q)\uparrow _{n}\sigma _{\mathcal{I}%
}(Q)\,\,\,\,\,\forall \,Q\in \mathrm{Prob}^{1}(\Omega )\,.
\label{convergenceepsmarttomart}
\end{equation}
\end{example}

\begin{example}
{(Super/submartingale measures)} Alternative choices for the set $\mathcal{A}
$ are $\mathcal{A}^{\pm }=\{I^{\Delta }\mid \Delta \in (\mathcal{H}^{\pm
})^{d}\}$ where $\mathcal{H}^{+}=\{\Delta \in \mathcal{H}\mid \Delta
_{t}\geq 0$ $\forall t=0,\dots ,T\}$ and $\mathcal{H}^{-}=-\mathcal{H}^{+}$.
 $\mathcal{A}^{+}$ models dynamic trading with no short selling and yields 
\begin{equation*}
\{\text{supermartingale measures for the canonical process}\}=\mathrm{Prob}%
^{1}(\Omega )\cap (\mathcal{A}^+)^{\circ }\,.
\end{equation*}
\end{example}

\begin{example}
\label{ExEOT}Let $(C_{0:T})_{+}:=\{f\in C_{0:T}\mid f\geq 0\}.$ For any set $%
\mathcal{A}$ such that $\{0\}\subseteq \mathcal{A}\subseteq -(C_{0:T})_{+}$
we obtain $\mathrm{Prob}^{1}(\Omega )=\mathrm{Prob}^{1}(\Omega )\cap 
\mathcal{A}^{\circ }$. As explained in Section \ref{SecEOT}, this choice
will then determine the Entropy Optimal Transport duality with no dynamic
hedging.
\end{example}

\subsection{The main results \label{EMOT}}

Before providing all the mathematical details, we wish to identify a
suitable candidate for the objective functional for our (primal) problem. To
do so, we start from the EMOT (dual) problem and proceed with an heuristic
argument. In order to avoid all integrability issues and keep notation as
simple as possible, in this introductory discussion we suppose that $d=1$,
that $K_{t}^{1}:=K_{t},$ $t=0,\dots ,T$ are all compact sets and that $%
\mathcal{A}=\mathcal{I}$, as in Example \ref{exampleamtg}.

\begin{align}
& \inf_{Q\in \mathrm{Mart}(\Omega )}\left( E_{Q}\left[ c\right] +\mathcal{D}%
(Q)\right)  \label{eqinitial0} \\
& =\inf_{Q\in \mathrm{Prob}(\Omega )}\sup_{\Delta \in \mathcal{H}}\left(
E_{Q}\left[ c-\sum_{t=0}^{T-1}\Delta _{t}(X_{0},\dots ,X_{t})(X_{t+1}-X_{t})%
\right] +\mathcal{D}(Q)\right)  \label{eqinitial2} \\
& =\inf_{\mu \in \mathrm{Meas}(\Omega )}\sup_{\substack{ \Delta \in \mathcal{%
H}  \\ \beta \in {\mathbb{R}}^{T+1}}}\left( \int_{\Omega }\left[ c-I^{\Delta
}+\sum_{t=0}^{T}\beta _{t}\right] \mathrm{d}\mu -\sum_{t=0}^{T}\beta _{t}+%
\mathcal{D}(\mu )\right)  \label{eqinitial6} \\
& =\inf_{\mu \in \mathrm{Meas}(\Omega )}\sup_{\substack{ \Delta \in \mathcal{%
H}  \\ \beta \in {\mathbb{R}}^{T+1}}}\left( \int_{\Omega }\left[ c-I^{\Delta
}+\sum_{t=0}^{T}\beta _{t}\right] \mathrm{d}\mu -\sum_{t=0}^{T}\beta
_{t}+\sup_{\varphi \in \mathcal{E}}\left( U(\varphi )-\int_{\Omega }\left(
\sum_{t=0}^{T}\varphi _{t}\right) \mathrm{d}\mu \right) \right)
\label{eqinitial7} \\
& =\sup_{\substack{ \Delta \in \mathcal{H},\varphi \in \mathcal{E},  \\ \beta
\in {\mathbb{R}}^{T+1}}}\left( \inf_{\mu \in \mathrm{Meas}(\Omega
)}\int_{\Omega }\left[ c-I^{\Delta }+\sum_{t=0}^{T}\beta
_{t}-\sum_{t=0}^{T}\varphi _{t}\right] \mathrm{d}\mu -\sum_{t=0}^{T}\beta
_{t}+U(\varphi )\right)  \label{eqinitial8} \\
& =\sup_{\substack{ \Delta \in \mathcal{H},  \\ \beta \in {\mathbb{R}}^{T+1}}}%
\sup \left\{ U(\varphi )-\sum_{t=0}^{T}\beta _{t}\mid \varphi \in \mathcal{E}%
,c-I^{\Delta }+\sum_{t=0}^{T}\beta _{t}-\sum_{t=0}^{T}\varphi _{t}\geq
0\right\}  \label{eqinitial9} \\
& =\sup_{\substack{ \Delta \in \mathcal{H},  \\ \beta \in {\mathbb{R}}^{T+1}}}%
\sup \left\{ U(\varphi +\beta )-\sum_{t=0}^{T}\beta _{t}\mid \varphi \in 
\mathcal{E},\sum_{t=0}^{T}\varphi _{t}+I^{\Delta }\leq c\right\}
\label{eqinitial10} \\
& =\sup_{\Delta \in \mathcal{H}}\sup \left\{ \sup_{\beta \in {\mathbb{R}}%
^{T+1}}\left( U(\varphi +\beta )-\sum_{t=0}^{T}\beta _{t}\right) \mid \varphi
\in \mathcal{E},\sum_{t=0}^{T}\varphi _{t}+I^{\Delta }\leq c\right\} \,.
\label{eqinitial11}
\end{align}

The equality chain above is justified as follows: as $K_{t}$ is compact, $%
X_{t}\in L^{1}(K_{t},\mathbb{B}(K_{t}),\mu )$ for all $\mu \in \mathrm{Meas}%
(\Omega )$ and then \eqref{eqinitial0}=\eqref{eqinitial2} follows using the
same argument as in \cite{Beig} Lemma 2.3, which yields that the inner
supremum explodes to $+\infty $ unless $Q$ is a martingale measure on $%
\Omega $ ; \eqref{eqinitial2}=\eqref{eqinitial6} follows observing that the
inner supremum over $\beta \in {\mathbb{R}}^{T+1}$ explodes to $+\infty $
unless $\mu (\Omega )=1$; \eqref{eqinitial6}=\eqref{eqinitial7} exploits %
\eqref{dualrepdivgenabstract}; in \eqref{eqinitial7}=\eqref{eqinitial8} we
proceed heuristically interchanging supremum and infimum; \eqref{eqinitial8}=%
\eqref{eqinitial9} is motivated observing that the infimum in %
\eqref{eqinitial8} equals $-\infty $ unless the inequality $c-I^{\Delta
}+\sum_{t=0}^{T}\beta _{t}-\sum_{t=0}^{T}\varphi _{t}\geq 0$ holds on $\Omega $%
; \eqref{eqinitial9}=\eqref{eqinitial10} is a simple rewriting, and so is %
\eqref{eqinitial10}=\eqref{eqinitial11}. To conclude, we expect our duality
to look as follows: 
\begin{equation*}
\inf_{Q\in \mathrm{Mart}(\Omega )}\left( E_{Q}\left[ c\right] +\mathcal{D}%
(Q)\right) =\sup_{\Delta \in \mathcal{H}}\sup \left\{ S^{U}(\varphi )\mid
\varphi \in \mathcal{E},\sum_{t=0}^{T}\varphi _{t}+I^{\Delta }\leq c\right\}
\,.
\end{equation*}%
A rigorous proof of this duality, based on the above argument and on a
minimax type Theorem that justify \eqref{eqinitial7}=\eqref{eqinitial8}, can
be found in a previous version of this paper (see A. Doldi and M. Frittelli,
Entropy Martingale Optimal Transport and Nonlinear Pricing-Hedging Duality,
Preprint: arXiv:2005.12572v1, 2020, where we also assumed that all the sets $%
K_{t}$ were compact). We now proceed providing a rigorous proof of the
duality above under the more general setup stated in Assumption \ref%
{asscallstozero}. In the proof we will rely on a Fenchel-Moreau type Theorem
for the functional $\mathfrak{P}$, rather than exploiting a minimax argument.

\begin{assumption}
\label{asscallstozero}$\,$

\begin{itemize}
\item[(i)] Let $K_{0}^{1},\dots ,K_{0}^{d},\dots ,K_{T}^{1},\dots ,K_{T}^{d}$
be closed subset of $\mathbb{R}$ and let $\Omega =\bigtimes_{t=0}^{T}%
\bigtimes_{j=1}^{d}K_{t}^{j}.$ The vector subspaces $\mathcal{E}_{0},\dots ,%
\mathcal{E}_{T}$ satisfy ${\mathbb{R}}\subseteq \mathcal{E}_{t}\subseteq
C_{0:t},$ $t=0,\dots ,T$ \ and we set $\mathcal{E}=\mathcal{E}_{0}\times
\dots \times \mathcal{E}_{T}\,.$ The functional $U:\mathcal{E}\rightarrow
\lbrack -\infty ,+\infty )$ is  concave with $U(0)\in {\mathbb{R}}$%
. Moreover, $\mathcal{A}\subseteq C_{0:T}$ is a convex cone with $0\in 
\mathcal{A}$.

\item[(ii)] For every $t=0,\dots ,T$ there exist a sequence of compact sets $%
\mathfrak{K}_{t}(n)\subseteq \bigtimes_{j=1}^d K^j_{t},n\geq 1$ and a
sequence of functions $0\leq f_{t}^{n}\in \mathcal{E}_{t},n\geq 1$ such
that: 
\begin{equation}
1+\sum_{t=0}^{T}\sum_{j=1}^{d}\left\vert x_{t}^{j}\right\vert \leq
\sum_{t=0}^{T}f_{t}^{n}(x_{0},\dots ,x_{t})\,\,\,\,\,\forall \lbrack
x_{0},\dots ,x_{T}]\in \Omega \setminus \mathfrak{K}_{0}(n)\times \dots
\times \mathfrak{K}_{T}(n)  \label{ineqoutsidecpt}
\end{equation}%
and 
\begin{equation}
V(\Gamma f_{0}^{n},\dots ,\Gamma f_{T}^{n})\rightarrow
_{n}0\,\,\,\,\,\,\,\forall \,\Gamma \in {\mathbb{R}},\Gamma >0\,.
\label{condconvergencecalls}
\end{equation}
\end{itemize}
\end{assumption}

Assumption \ref{asscallstozero}.(ii) is inspired by \cite%
{CheriditoKupperTangpi17} and admits a very simple interpretation, described
in Remark \ref{remcompactispart}.

\begin{example}
\label{examplecalls} Let 
\begin{equation}
\begin{split}
f^{\alpha }(x)& :=(x-\alpha )^{+}+(-\alpha -x)^{+}=(\left\vert x\right\vert
-\alpha )^{+},\,\,\,x\in {\mathbb{R}},\alpha \geq 0, \\
f_{j,t}^{\alpha }& :=f^{\alpha }|_{K_{t}^{j}},j=1,\dots ,d,t=0,\dots ,T
\end{split}
\label{defeqcalls}
\end{equation}%
and suppose that $f_{j,t}^{\alpha }\in \mathcal{E}_{t}$ for every $\alpha
\geq 0$, $j=1,\dots ,d$, $t=0,\dots ,T$. Set also $f_{t}^{n}:=%
\sum_{j=1}^{d}f_{j,t}^{\frac{n}{\beta }}$ for $\beta =\beta (d,T)$ given in
Proposition \ref{propcontrolcones}. In order to guarantee that (\ref%
{ineqoutsidecpt}) and (\ref{condconvergencecalls}) are satisfied, it is
enough to request that $V$ is (componentwise) nondecreasing on $\bigtimes%
_{t=0}^{T}\mathcal{E}_{t}$, $V(0)=0$ and 
\begin{equation}  \label{valuationoncomponents}
V_{t}(\alpha f_{j,t}^{\frac{n}{\beta }}):=V(0,\dots ,0,\alpha f_{j,t}^{\frac{%
n}{\beta }},0,\dots ,0)\rightarrow _{n}0
\end{equation}%
for every $j=1,\dots ,d$ and $t=0,\dots ,T$. $V_{t}$ is here represents a
valuation of the static hedging strategy consisting of a single option
having maturity $t$. To see that the former assumptions imply (\ref%
{ineqoutsidecpt}), one can select $\mathfrak{K}_{t}(n)=\bigtimes%
_{j=1}^dK^j_{t}\cap \lbrack -n,n]^{d}$ and observe that Proposition \ref%
{propcontrolcones} guarantees that \eqref{ineqoutsidecpt} holds. Moreover,
by convexity of $V$ we have%
\begin{align*}
0& =V(0)\leq V(\alpha f_{0}^{n},\dots ,\alpha f_{T}^{n})\leq \sum_{t=0}^{T}%
\frac{1}{T+1}V_{t}(\alpha (T+1)f_{t}^{n}) \\
& =\frac{1}{T+1}\sum_{t=0}^{T}V_{t}\left( \frac{1}{d}\sum_{j=1}^{d}d(T+1)%
\alpha f_{j,t}^{\frac{n}{\beta }}\right) \leq \frac{1}{d(t+1)}%
\sum_{t=0}^{T}\sum_{j=1}^{d}V_{t}\left( d(T+1)\alpha f_{j,t}^{\frac{n}{\beta 
}}\right) \rightarrow _{n}0\,,
\end{align*}%
that is (\ref{condconvergencecalls}).
\end{example}

\begin{remark}
\label{remcompactispart} In the particular case $K^1_{0},\dots
,K^d_0,\dots,K_T^1,\dots,K^d_{T}\subseteq \lbrack 0,+\infty )$, the
conditions $V_{t}(\alpha f_{j,t}^{\frac{n}{\beta }})\rightarrow _{n}0$ in %
\eqref{valuationoncomponents} means that the valuations over a suitable
sequence of call options on the underlying stocks converge to zero when the
corresponding strikes diverge to infinity.
\end{remark}

%

\begin{theorem}
\label{mainEMOTtheoremgeneral} Suppose Assumption \ref{asscallstozero} is
fulfilled.

\begin{itemize}
\item[(i)] {If }%
\begin{equation}
{\mathfrak{P}(\widehat{c})<+\infty }\text{ for some }{\widehat{c}\in B_{0:T},%
}  \label{BB}
\end{equation}%
{then $\mathfrak{P}(c)\in {\mathbb{R}}$ for every $c\in B_{0:T}$ and $%
\mathfrak{P}:B_{0:T}\rightarrow {\mathbb{R}}$ is norm continuous, cash
additive, concave and nondecreasing on $B_{0:T}$;}

\item[(ii)] {For every lower semicontinuous $c:\Omega \rightarrow (-\infty
,+\infty ]$ satisfying }%
\begin{equation}
c(x)\geq -A\left( 1+\sum_{t=0}^{T}\sum_{j=1}^{d}\left\vert
x_{t}^{j}\right\vert \right) \quad \forall x\in \Omega \text{,\quad {for some%
}}{\ A\in \lbrack 0,+\infty ){,}}  \label{controlfrombelow}
\end{equation}%
we have 
\begin{equation}
\mathfrak{P}(c):=\sup_{z\in -\mathcal{A}}\sup_{\varphi \in \mathbf{\Phi }%
_{z}(c)}S^{U}\left( \varphi \right)=\inf_{Q\in \mathrm{Prob}^{1}(\Omega )\cap\mathcal{A}%
^\circ}\left( E_{Q} \left[c\right]+\mathcal{D}(Q)\right)
\label{dualrepreabstract}
\end{equation}%
for $\Omega_{0:t}:=\bigtimes_{s=0}^t\bigtimes_{j=1}^d K^j_s$ and 
\begin{equation*}
\mathcal{D}(Q)=\sup_{\varphi \in \mathcal{E}}\left( U(\varphi
)-\sum_{t=0}^{T}\int_{\Omega_{0:t}}\varphi _{t}\mathrm{d}Q_{t}\right)
=\sup_{\varphi \in \mathcal{E}}\left( \sum_{t=0}^T\int_{\Omega_{0:t}}\varphi
_{t}\mathrm{d}Q_{t}-V(\varphi )\right)\,,
\end{equation*}%
where $Q_{t}$ is the marginal of $Q\in \mathrm{Prob}^{1}(\Omega )$ on $%
\mathcal{B}(\Omega_{0:t})$. Furthermore, if $\mathfrak{P}(c)<+\infty $ the
infimum in RHS of \eqref{dualrepreabstract} is a minimum.
\end{itemize}
\end{theorem}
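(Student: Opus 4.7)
My plan is to establish (i) from the structural properties of $\mathfrak{P}$, then derive (ii) by applying the Fenchel--Moreau theorem on the Banach lattice $C_{0:T}$, using Assumption \ref{asscallstozero}.(ii) through a Daniell--Stone argument to identify dual elements with probability measures, and finally extending to lower semicontinuous $c$ with linear lower bound along with attainment of the infimum. For part (i), concavity of $\mathfrak{P}$ is inherited from concavity of $S^U$ (itself coming from concavity of $U$ via the OCE definition \eqref{defSU}) together with the observation that, for $\eta\in[0,1]$, $z_i\in-\mathcal{A}$ and $\varphi^i\in\mathbf{\Phi}_{z_i}(c_i)$, the convex combination $\eta\varphi^1+(1-\eta)\varphi^2$ lies in $\mathbf{\Phi}_{\eta z_1+(1-\eta)z_2}(\eta c_1+(1-\eta)c_2)$, since $-\mathcal{A}$ is a convex cone. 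Pointwise monotonicity is immediate, and cash additivity follows from $S^U(\varphi+\beta)=S^U(\varphi)+\sum_{t}\beta_t$ after distributing $k\in\mathbb{R}$ into one component of $\varphi$. Real-valuedness and $1$-Lipschitz continuity on all of $B_{0:T}$ then follow from the hypothesis $\mathfrak{P}(\widehat c)<+\infty$ combined with these three properties, sandwiching any $c$ between $\widehat c\pm\|c-\widehat c\|_{0:T}(1+\sum_{t,j}|x_t^j|)$.

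For part (ii), I would restrict $\mathfrak{P}$ to $C_{0:T}$ and apply Fenchel--Moreau to the proper, convex, norm-continuous functional $-\mathfrak{P}$, obtaining
\begin{equation*}
\mathfrak{P}(c)=\inf_{\lambda\in(C_{0:T})^{\ast}}\bigl(\langle c,\lambda\rangle-\mathfrak{P}_{\ast}(\lambda)\bigr),\qquad \mathfrak{P}_{\ast}(\lambda)=\inf_{c\in C_{0:T}}\bigl(\langle c,\lambda\rangle-\mathfrak{P}(c)\bigr).
\end{equation*}
Monotonicity forces $\lambda\ge 0$; cash additivity forces $\lambda(1)=1$; and the inequality $\mathfrak{P}(c-tw)\ge\mathfrak{P}(c)$ for $w\in\mathcal{A},\,t\ge 0$ (valid because $-\mathcal{A}-w\subseteq -\mathcal{A}$ whenever $w\in\mathcal{A}$) forces $\langle w,\lambda\rangle\le 0$ for all $w\in\mathcal{A}$, i.e.\ $\lambda\in\mathcal{A}^{\circ}$. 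Unfolding the definitions of $\mathbf{\Phi}_{z}$ and $S^{U}$ and using these three constraints on $\lambda$, the inner infimum in $\mathfrak{P}_{\ast}(\lambda)$ collapses to $-\mathcal{D}(\lambda)$: the inf over $c\in C_{0:T}$ dominating $\sum\varphi_t+z$ is attained at $c=\sum\varphi_t+z$ by $\lambda\ge 0$, the inf over $z\in-\mathcal{A}$ equals $0$ by $\lambda\in\mathcal{A}^{\circ}$, and the remaining inf over $\beta\in\mathbb{R}^{T+1}$ hidden in $S^U$ contributes $0$ by $\lambda(1)=1$, leaving exactly $\inf_\varphi(\sum_t\langle\varphi_t,\lambda\rangle-U(\varphi))=-\mathcal{D}(\lambda)$.

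The crucial step is promoting each such $\lambda$ from an abstract positive normalized linear functional on $C_{0:T}$ to a Borel probability measure in $\mathrm{Prob}^{1}(\Omega)$, and this is the main obstacle of the proof. Here Assumption \ref{asscallstozero}.(ii) enters via Daniell--Stone: the options $f_{t}^{n}\in\mathcal{E}_{t}$ with $\sum_{t}f_{t}^{n}\ge 1+\sum_{t,j}|x_{t}^{j}|$ outside $\mathfrak{K}_{0}(n)\times\cdots\times\mathfrak{K}_{T}(n)$ and $V(\Gamma f_{0}^{n},\dots,\Gamma f_{T}^{n})\to 0$ imply, through the Fenchel inequality \eqref{fenchelgeneral}, that $\sum_{t}\langle f_{t}^{n},\lambda\rangle\to 0$ uniformly on every sublevel set $\{\mathcal{D}(\lambda)\le K\}$. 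This uniform-tightness estimate forces the $\lambda$-mass outside $\mathfrak{K}_{0}(n)\times\cdots\times\mathfrak{K}_{T}(n)$ to vanish as $n\to\infty$, which yields $\sigma$-additivity via Daniell--Stone and identifies $\lambda$ with a measure $Q\in\mathrm{Prob}^{1}(\Omega)\cap\mathcal{A}^{\circ}$; in particular $\mathfrak{P}_{\ast}(\lambda)=-\mathcal{D}(Q)$, and the outer infimum in Fenchel--Moreau is effectively taken over $\mathrm{Prob}^{1}(\Omega)\cap\mathcal{A}^{\circ}$, yielding \eqref{dualrepreabstract} for $c\in C_{0:T}$.

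The extension from $c\in C_{0:T}$ to lower semicontinuous $c$ bounded below by $-A(1+\sum_{t,j}|x_{t}^{j}|)$ is obtained by approximating $c$ from below by $c_{n}\in C_{0:T}$ with $c_{n}\uparrow c$ (e.g.\ via Moreau--Yosida regularization using the affine weight $1+\sum_{t,j}|x_{t}^{j}|$ as Lipschitz modulus). Monotone convergence on the dual side (the linear lower bound makes $c^{-}$ integrable against every $Q\in\mathrm{Prob}^{1}$) combined with monotonicity of $\mathfrak{P}$ on the primal side allows passing to the limit in the duality. Finally, when $\mathfrak{P}(c)<+\infty$, attainment of the infimum follows because the sublevel sets $\{Q\in\mathrm{Prob}^{1}(\Omega)\cap\mathcal{A}^{\circ}:\mathcal{D}(Q)\le K\}$ are $\sigma(\mathrm{ca}^{1},C_{0:T})$-closed and tight (once again via the $f_{t}^{n}$-bound), hence relatively compact, and $Q\mapsto E_{Q}[c]+\mathcal{D}(Q)$ is lower semicontinuous for this topology (by Fatou and the $\sigma$-lower semicontinuity of $\mathcal{D}$), producing a minimizer.
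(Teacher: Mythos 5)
Your overall architecture for part (ii) is the one the paper uses: Fenchel--Moreau applied to the restriction of $\mathfrak{P}$ to $C_{0:T}$, identification of the conjugate as $\mathcal{D}+\sigma_{\mathcal{A}}$ with the three constraints $\lambda\geq 0$, $\lambda(1)=1$, $\lambda\in\mathcal{A}^{\circ}$, a Daniell--Stone argument driven by the functions $f_{t}^{n}$ of Assumption \ref{asscallstozero}.(ii) to turn dual functionals into probability measures, and then monotone approximation from below plus weak$^{\ast}$ compactness of the sublevel sets of $\mathcal{D}$ for the extension to lower semicontinuous $c$ and for attainment. That part is sound, with one caveat: the inequality $\lim_{n}\mathfrak{P}(c_{n})\geq\inf_{Q}(E_{Q}[c]+\mathcal{D}(Q))$ already requires extracting a convergent subsequence of dual optimizers $Q^{n}$ from a compact sublevel set and using lower semicontinuity of $Q\mapsto E_{Q}[c]+\mathcal{D}(Q)$; monotone convergence plus monotonicity of $\mathfrak{P}$ only give the easy direction. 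You list the compactness ingredients only for attainment, but they are what makes the limit passage itself work.

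The genuine gap is in part (i), in the noncompact case. The sandwich $\widehat c-\|c-\widehat c\|_{0:T}\bigl(1+\sum_{t,j}|x_{t}^{j}|\bigr)\leq c\leq\widehat c+\|c-\widehat c\|_{0:T}\bigl(1+\sum_{t,j}|x_{t}^{j}|\bigr)$ perturbs $\widehat c$ by an \emph{unbounded} function when the $K_{t}^{j}$ are not compact, so cash additivity cannot peel it off: neither real-valuedness nor the claimed $1$-Lipschitz continuity with respect to $\|\cdot\|_{0:T}$ follows (cash additivity and monotonicity give Lipschitz continuity with respect to the sup-norm only; the paper obtains norm continuity from the extended Namioka--Klee theorem once finiteness is established). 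More seriously, you never show $\mathfrak{P}(c)>-\infty$, i.e.\ that $\mathbf{\Phi}_{z}(c)\neq\emptyset$: one must exhibit $\varphi\in\mathrm{dom}(S^{U})$ with $\sum_{t}\varphi_{t}+z\leq c$, and this is exactly where Assumption \ref{asscallstozero}.(ii) must already be used in part (i) — the vector $[-\|c-z\|_{0:T}f_{t}^{n}]_{t}$, shifted down by a constant, subhedges $c-z$ thanks to \eqref{ineqoutsidecpt}, and it belongs to $\mathrm{dom}(S^{U})$ precisely because $V(\Gamma f_{0}^{n},\dots,\Gamma f_{T}^{n})\rightarrow_{n}0$. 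Your proposal invokes Assumption (ii) only in the Daniell--Stone step. Finally, once $\mathfrak{P}>-\infty$ holds everywhere, finiteness everywhere is deduced from $\mathfrak{P}(\widehat c)<+\infty$ through concavity and the midpoint identity $\widehat c=\tfrac12(2\widehat c-\widetilde c)+\tfrac12\widetilde c$, not through the sandwich; without this step the upper bound on $\mathfrak{P}(c)$ for general $c\in B_{0:T}$ is not established.
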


\begin{remark}
\label{remsufficientfiniteabovep} Notice that the condition $\mathfrak{P}(%
\widehat{c})<+\infty $ for some $\widehat{c}\in B_{0:T}$ is not required for
the validity of Theorem \ref{mainEMOTtheoremgeneral} Item (ii). In addition, 
we allow in \eqref{dualrepreabstract} $\mathrm{Prob}^1(\Omega)\cap\mathcal{A}%
^{\circ }=\emptyset $ with the usual convention $\inf \emptyset =+\infty $.
We now provide conditions ensuring that $\mathfrak{P}(0)<+\infty $. By Item
(i) this will then imply{\ that $\mathfrak{P}(c)\in {\mathbb{R}}$ for every $%
c\in B_{0:T}.$}

\begin{itemize}
\item[(a)] If there exists a $\lambda \in \mathcal{A}^{\circ }\cap \partial
U(0)\subseteq (C_{0:T})^{\ast }$ then $\mathfrak{P}(0)<+\infty $ (here, $%
\partial U(0)\subseteq \bigtimes_{t=0}^{T}(C_{0:T})^{\ast }$ is the
supergradient of $U$ at $0\in \mathcal{E}$, and we are identifying $\lambda $
with the vector of its restrictions in writing improperly $\lambda \in
\partial U(0)$). To see this, let $\lambda $ satisfy $S^{U}(\varphi )\leq
\sum_{t=0}^{T}\langle \varphi _{t},\lambda _{t}\rangle ,\,\forall \varphi
\in \mathcal{E}$. In particular then for all $z\in -\mathcal{A}$ and all $%
\varphi \in \Phi _{z}(0)$ it holds that $S^{U}(\varphi )\leq \langle
\sum_{t=0}^{T}\varphi _{t},\lambda \rangle \leq \langle
\sum_{t=0}^{T}\varphi _{t}+z,\lambda \rangle \leq 0$, as $\langle z,\lambda
\rangle \geq 0$ for all $\lambda \in \mathcal{A}^{\circ }$, which in turns
yields $\mathfrak{P}(0)\leq 0$.

\item[(b)] 
\begin{equation*}
\text{There exists } Q\in \mathrm{Prob}^{1}(\Omega )\cap \mathcal{A}^{\circ }%
\text{ such that }\mathcal{D}(Q)<+\infty \Leftrightarrow \mathfrak{P}%
(0)<+\infty \text{.}
\end{equation*}%
Indeed, by definition we have $\mathfrak{P}(0)\leq \int_{\Omega }0\,\mathrm{d%
}Q+\mathfrak{P}^{\ast }(Q)=\mathfrak{P}^{\ast }(Q)$, but from Lemma \ref%
{propformaconiugata} (which does not rely on Lemma \ref{propdualrepre}) we
have $\mathfrak{P}^{\ast }(Q)=\mathcal{D}(Q)+\sigma _{\mathcal{A}}(Q)=%
\mathcal{D}(Q)$ (the latter inequality coming from $Q\in \mathcal{A}^{\circ
} $). Hence $\mathfrak{P}(0)\leq \mathcal{D}(Q)<+\infty $. Conversely, $%
\mathfrak{P}(0)<+\infty $ implies the existence of a minimum point in %
\eqref{dualrepreabstract}.
\item [(c)] The existence of the optimizer in MOT implies that $\mathcal{M}(\widehat{Q}_{0},\widehat{Q}_{1},...\widehat{Q}_{T})$ is not empty and that the marginals must be in convex order. In EMOT the optimizer $Q^{\ast}$ belongs to $\mathrm{Prob}^{1}(\Omega )\cap \mathcal{A}^{\circ }$ with  $\mathcal{D}(Q^{\ast})<+\infty$ with no other requirement.\end{itemize}
\end{remark}

\begin{remark}
\label{whyminus} From the proof of Theorem \ref{mainEMOTtheoremgeneral} it
becomes clear that the use of $-\mathcal{A}$ in place of $\mathcal{A}$ in
defining $\mathfrak{P}(c)$ is somehow a matter of taste. Now the infimum in %
\eqref{dualrepreabstract} is in fact taken over measures in the polar $%
\mathcal{A}^{\circ }$. Instead, without the minus sign in defining $%
\mathfrak{P}(c),$ we would work with $(-\mathcal{A})^{\circ }$, which is
less comfortable in the computations of the proof.
\end{remark}

\begin{corollary}
\label{mainthm}Suppose that Assumption \ref{asscallstozero}.(i) holds with $%
K_{0}^{1},\dots ,K_{0}^{d},\dots ,K_{T}^{1},\dots ,K_{T}^{d}$ compact
subsets of ${\mathbb{R}}$, that $c:\Omega \rightarrow (-\infty ,+\infty ]$
is lower semicontinuous and that $U(\varphi )=0$ for some $\varphi \in 
\mathcal{E}$. Then (\ref{dualrepreabstract}) holds true and if ${\mathfrak{P}%
(c)<+\infty }$ then there exists an optimum in the RHS of (\ref%
{dualrepreabstract}).
\end{corollary}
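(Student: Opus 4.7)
The plan is to reduce the corollary to Theorem \ref{mainEMOTtheoremgeneral}(ii) by verifying the remaining hypotheses in the compact case. First, since $\Omega=\bigtimes_{t,j}K_t^j$ is compact, any lower semicontinuous $c:\Omega\rightarrow (-\infty,+\infty]$ attains its infimum and is therefore bounded below, so condition (\ref{controlfrombelow}) is automatic for a suitable $A\geq 0$.

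For Assumption \ref{asscallstozero}(ii), the natural choice is the constant compact sequence $\mathfrak{K}_t(n) = \bigtimes_{j=1}^d K_t^j$, for which the set in (\ref{ineqoutsidecpt}) is empty and the inequality vacuous, together with the constant zero sequence $f_t^n = 0 \in \mathbb{R} \subseteq \mathcal{E}_t$. Condition (\ref{condconvergencecalls}) then reduces to $V(0)=0$, i.e.\ $U(0)=0$. Since Assumption \ref{asscallstozero}(i) only ensures $U(0)\in\mathbb{R}$, I would use the extra hypothesis to enforce this normalization: pick $\varphi^*\in\mathcal{E}$ with $U(\varphi^*)=0$ and replace $U$ by the translate $\widetilde{U}(\psi):=U(\psi+\varphi^*)$, which is still proper and concave on the vector space $\mathcal{E}$ and satisfies $\widetilde{U}(0)=\widetilde{V}(0)=0$. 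Then Assumption \ref{asscallstozero}(ii) holds for $\widetilde{U}$ with the trivial choices above, and Theorem \ref{mainEMOTtheoremgeneral}(ii) applies to $\widetilde{U}$ with any lower semicontinuous bounded-below cost, in particular with $\widetilde{c}:=c-\sum_{t=0}^T\varphi^*_t$, which is lower semicontinuous (as each $\varphi^*_t$ is continuous) and bounded below on the compact $\Omega$.

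Routine changes of variable give the shift relations
\begin{equation*}
\widetilde{S^U}(\varphi) = S^U(\varphi + \varphi^*), \qquad \widetilde{\mathcal{D}}(Q) = \mathcal{D}(Q) + \sum_{t=0}^T \int \varphi^*_t \,\mathrm{d}Q_t, \qquad \widetilde{\mathfrak{P}}(c') = \mathfrak{P}\Bigl(c'+\sum_{t=0}^T\varphi^*_t\Bigr),
\end{equation*}
the last identity obtained via the substitution $\psi=\varphi+\varphi^*$ inside the supremum defining $\widetilde{\mathfrak{P}}$. Plugging $c'=\widetilde{c}$ into the duality $\widetilde{\mathfrak{P}}(\widetilde{c}) = \inf_{Q\in\mathrm{Prob}^1(\Omega)\cap\mathcal{A}^\circ}(E_Q[\widetilde{c}]+\widetilde{\mathcal{D}}(Q))$ supplied by the main theorem, the shifts cancel on both sides and one recovers exactly (\ref{dualrepreabstract}) for the original $(U,c)$. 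Since the infimum over $Q$ is numerically the same after cancellation, any minimizer for the translated problem is also a minimizer for the original one, so existence of an optimizer when $\mathfrak{P}(c)<+\infty$ follows from the corresponding statement in Theorem \ref{mainEMOTtheoremgeneral}(ii).

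The only delicate point is recognizing that the role of the hypothesis $U(\varphi^*)=0$ is precisely to enable this translation normalization, which in turn allows Assumption \ref{asscallstozero}(ii) to be verified trivially in the compact setting. Once this is noted, the argument is purely bookkeeping around the main theorem and no further machinery is required.
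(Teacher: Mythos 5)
Your proof is correct, and it follows the same basic strategy as the paper: in the compact case one kills Assumption \ref{asscallstozero}(ii) by taking the exhausting compacts to be the whole factors $\bigtimes_{j}K_t^j$ (so that \eqref{ineqoutsidecpt} is vacuous) and then only the normalization \eqref{condconvergencecalls} remains. Where you diverge is in how that normalization is extracted from the hypothesis $U(\varphi^*)=0$: the paper simply sets $f_t^n\equiv-\varphi^*_t$, whereas you set $f_t^n\equiv 0$ and translate the functional, $\widetilde U(\cdot):=U(\cdot+\varphi^*)$, transferring the duality back via the shift identities for $S^U$, $\mathcal D$ and $\mathfrak P$. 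Your route is slightly longer but arguably tighter: the paper's literal choice $f_t^n=-\varphi^*_t$ requires $0\le -\varphi^*_t$ and, since the sequence is constant in $n$, $V(-\Gamma\varphi^*)=0$ for \emph{every} $\Gamma>0$, neither of which follows from $U(\varphi^*)=0$ alone (they do hold for the canonical choice $\varphi^*=0$ when $U(0)=0$). Your translation reduces everything to $\widetilde V(0)=-\widetilde U(0)=0$, which is exactly what the hypothesis delivers, and the cancellation of the shift terms $\sum_t\int\varphi^*_t\,\mathrm dQ_t$ between $E_Q[\widetilde c]$ and $\widetilde{\mathcal D}(Q)$ is checked correctly, so the identification of minimizers also goes through.
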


\begin{proof}
When $K_{0}^{1},\dots ,K_{0}^{d},\dots ,K_{T}^{1},\dots ,K_{T}^{d}$ are
compact then $C_{0:T}=\mathcal{C}_{b}(\Omega ).$ If $U(\varphi )=0$ for some 
$\varphi \in \mathcal{E}$, then \eqref{ineqoutsidecpt} and %
\eqref{condconvergencecalls} are automatically satisfied: indeed one can
take $\mathfrak{K}(n)=\Omega $ and $f_{t}^{n}\equiv -\varphi _{t},t=0,\dots
,T,n\geq 1$. Obviously, a possible choice for such a $\varphi $ is $\varphi
=0.$
\end{proof}

\label{secusefulrephrase} We now rephrase our findings in Theorem \ref%
{mainEMOTtheoremgeneral}, with minor additions, to get the formulations in
Corollary \ref{maincor} and Corollary \ref{maincorcasecpt} which will
simplify our discussion of Section \ref{secpartcases}. In particular, this
reformulation will come in handy when dealing with subhedging dualities in
Corollaries \ref{corolldivinduced}-\ref{corollrob} and Proposition \ref%
{propdualwithbestfit}.

For a given proper concave $U:\mathcal{E}\rightarrow {\mathbb{R}}$, recall
the definition of $S^{U}$ in \eqref{defSU} and, for $V(\cdot )=-U(-\cdot )$,
the definition $S_{V}(\varphi ):=-S^{U}(-\varphi )$.

Furthermore, given functions $c:\Omega \rightarrow (-\infty ,+\infty ]$, $%
g:\Omega \rightarrow \lbrack -\infty ,+\infty )$ we introduce the sets 
\begin{equation}
\mathcal{S}_{sub}(c):=\left\{ \varphi \in \mathrm{dom}(S^U)\mid \exists
\Delta \in \mathcal{H}\text{ s.t. }\sum_{t=0}^{T}\varphi (x_0,\dots,x_{t})+I^{\Delta
}(x)\leq c(x)\,\,\,\forall \,x\in \Omega \right\}  \label{ssub}
\end{equation}%
\begin{equation}
\mathcal{S}_{sup}(g):=\left\{ \varphi \in \mathrm{dom}(S_V)\mid \exists
\Delta \in \mathcal{H}\text{ s.t. }\sum_{t=0}^{T}\varphi (x_0,\dots,x_{t})+I^{\Delta
}(x)\geq g(x)\,\,\,\forall \,x\in \Omega \right\} \,  \label{ssup}
\end{equation}%
and observe that $\mathcal{S}_{sup}(g)=-\mathcal{S}_{sub}(-g)$.

\begin{corollary}
\label{maincor} 
Suppose that the assumptions in Theorem \ref{mainEMOTtheoremgeneral} are
satisfied, that $g:\Omega \rightarrow \lbrack -\infty ,+\infty )$ is upper
semicontinuous and that also condition (\ref{controlfrombelow}) holds
replacing $c $ with $-g$. Then the following hold 
\begin{equation}
\inf_{Q\in \mathrm{Mart}(\Omega )}\left( E_{Q}\left[ c\right] +\mathcal{D}%
(Q)\right) =\sup_{\varphi \in \mathcal{S}_{sub}(c)}S^{U}\left( \varphi
\right) \,,  \label{subhedgingcor}
\end{equation}%
\begin{equation}
\sup_{Q\in \mathrm{Mart}(\Omega )}\left( E_{Q}\left[ g\right] -\mathcal{D}%
(Q)\right) =\inf_{\varphi \in \mathcal{S}_{sup}(g)}S_{V}\left( \varphi
\right) \,.  \label{superhedgingcor}
\end{equation}%
Finally, if LHS of \eqref{subhedgingcor} (resp. \eqref{superhedgingcor}) is
finite, then an optimum exists in the LHS of \eqref{subhedgingcor} (resp. %
\eqref{superhedgingcor}).
\end{corollary}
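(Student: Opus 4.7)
The plan is to specialize Theorem \ref{mainEMOTtheoremgeneral} to the case $\mathcal{A}=\mathcal{I}$ (the stochastic integrals from Example \ref{exampleamtg}), then derive the superhedging statement by a sign-flipping argument from the subhedging one. Since $\mathcal{I}$ is a vector space ($\mathcal{H}^d$ is a vector space and $\Delta\mapsto I^{\Delta}$ is linear), we have $-\mathcal{A}=\mathcal{A}$, and by Example \ref{exampleamtg} we also have $\mathrm{Prob}^{1}(\Omega)\cap\mathcal{A}^\circ=\mathrm{Mart}(\Omega)$. Moreover the assumption of Example \ref{examplecalls} (or an analogous construction) provides the sequences needed in Assumption \ref{asscallstozero}(ii). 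Rewriting the double supremum in (\ref{deffrakPgeneral}),
\begin{equation*}
\sup_{z\in -\mathcal{A}}\sup_{\varphi\in\mathbf{\Phi}_z(c)}S^U(\varphi)=\sup\Bigl\{S^U(\varphi)\mid\exists \Delta\in\mathcal{H}\text{ s.t. }\textstyle\sum_t\varphi_t+I^\Delta\le c\Bigr\}=\sup_{\varphi\in\mathcal{S}_{sub}(c)}S^U(\varphi),
\end{equation*}
so Theorem \ref{mainEMOTtheoremgeneral}(ii), applied to the lower semicontinuous $c$ satisfying (\ref{controlfrombelow}), yields (\ref{subhedgingcor}) at once, and its final clause gives existence of an optimizer in the LHS of (\ref{subhedgingcor}) whenever $\mathfrak{P}(c)<+\infty$, which (in conjunction with Remark \ref{remsufficientfiniteabovep}, if needed, to rule out $-\infty$) handles the finiteness case.

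For (\ref{superhedgingcor}) I would apply (\ref{subhedgingcor}) to $c:=-g$, which is lower semicontinuous and, by the hypothesis on $g$, satisfies the lower bound (\ref{controlfrombelow}). A direct verification from the definitions gives $\mathcal{S}_{sub}(-g)=-\mathcal{S}_{sup}(g)$ (replace $\varphi$ by $-\varphi$ and $\Delta$ by $-\Delta$), together with the identity $S^U(-\psi)=-S_V(\psi)$ valid on $\mathrm{dom}(S_V)=-\mathrm{dom}(S^U)$. Hence
\begin{equation*}
\sup_{\varphi\in\mathcal{S}_{sub}(-g)}S^U(\varphi)=\sup_{\psi\in\mathcal{S}_{sup}(g)}S^U(-\psi)=-\inf_{\psi\in\mathcal{S}_{sup}(g)}S_V(\psi),
\end{equation*}
while on the dual side
\begin{equation*}
\inf_{Q\in\mathrm{Mart}(\Omega)}\bigl(E_Q[-g]+\mathcal{D}(Q)\bigr)=-\sup_{Q\in\mathrm{Mart}(\Omega)}\bigl(E_Q[g]-\mathcal{D}(Q)\bigr).
\end{equation*}
Multiplying (\ref{subhedgingcor}) for $-g$ by $-1$ gives (\ref{superhedgingcor}). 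For the final assertion, finiteness of the LHS of (\ref{superhedgingcor}) translates into finiteness of $\mathfrak{P}(-g)$, and the attainment statement from Theorem \ref{mainEMOTtheoremgeneral}(ii) transfers to a maximizer in the LHS of (\ref{superhedgingcor}).

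I do not expect any serious obstacle: the result is essentially a bookkeeping corollary of the main theorem. The only delicate points are (i) checking that the measurability/continuity and lower-bound hypotheses on $c$ and $-g$ are exactly those required by Theorem \ref{mainEMOTtheoremgeneral}(ii), and (ii) being careful with the conventions $\sup\emptyset=-\infty$ and $\inf\emptyset=+\infty$ so that the "if finite, then attained" statements do not accidentally claim attainment in a trivial empty case. Both are routine once the algebraic identities $-\mathcal{I}=\mathcal{I}$, $\mathcal{S}_{sub}(-g)=-\mathcal{S}_{sup}(g)$ and $S_V=-S^U(-\cdot)$ are used to reduce the superhedging statement to the subhedging one.
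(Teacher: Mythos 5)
Your proposal is correct and follows essentially the same route as the paper's own proof: specialize Theorem \ref{mainEMOTtheoremgeneral} with $\mathcal{A}=\mathcal{I}$ so that $\mathrm{Prob}^{1}(\Omega)\cap\mathcal{A}^{\circ}=\mathrm{Mart}(\Omega)$, then obtain \eqref{superhedgingcor} by applying \eqref{subhedgingcor} to $c:=-g$ together with the identities $\mathcal{S}_{sup}(g)=-\mathcal{S}_{sub}(-g)$ and $S_{V}(\cdot)=-S^{U}(-\cdot)$. No gaps.
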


\begin{proof}
Equation \eqref{subhedgingcor} is an easy rephrasing of the corresponding %
\eqref{dualrepreabstract}, taking $\mathcal{A}=\mathcal{I}$ as in Example %
\ref{exampleamtg} so that $\mathrm{Prob}^1(\Omega)\cap\mathcal{A}^{\circ }=%
\mathrm{Mart}(\Omega )$. Existence of an optimum in the case of finiteness
of LHS of \eqref{subhedgingcor} follows again from Theorem \ref%
{mainEMOTtheoremgeneral}. As to \eqref{superhedgingcor}, we observe that for 
$c:=-g$ we get from \eqref{subhedgingcor} 
\begin{equation*}
\sup_{\varphi \in \mathcal{S}_{sub}(-g)}S^{U}\left( \varphi \right)
=\inf_{Q\in \mathrm{Mart}(\Omega )}\left( E_{Q}\left[ -g\right] +\mathcal{D}%
(Q)\right) =-\sup_{Q\in \mathrm{Mart}(\Omega )}\left( E_{Q}\left[ g\right] -%
\mathcal{D}(Q)\right) \,.
\end{equation*}%
From $\mathcal{S}_{sup}(g)=-\mathcal{S}_{sub}(-g)$ and $S_{V}(\cdot
)=-S^{U}(-\cdot ),$ we get $\sup_{\varphi \in \mathcal{S}_{sub}(-g)}S^{U}%
\left( \varphi \right) =-\inf_{\varphi \in \mathcal{S}_{sup}(g)}S_{V}\left(
\varphi \right) $. Existence of an optimum when LHS of %
\eqref{superhedgingcor} is finite can be inferred in a similar way.
\end{proof}

\begin{corollary}
\label{maincorcasecpt}{If }$d=1$ and{\ $\Omega :=K_{0}\times \dots \times
K_{T}$ for compact sets $K_{0},\dots ,K_{T}\subseteq {\mathbb{R}}$, %
\eqref{subhedgingcor} and \eqref{superhedgingcor}, as well as existence of
optima, are guaranteed by the following simplified set of assumptions: $%
c:\Omega \rightarrow (-\infty ,+\infty ]$ is lower semicontinuous, $g:\Omega
\rightarrow (-\infty ,+\infty ]$ is upper semicontinuous and $U(\varphi )=0$
for some $\varphi \in \mathcal{E}$.}
\end{corollary}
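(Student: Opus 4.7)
The plan is to show that Corollary \ref{maincorcasecpt} is essentially a repackaging of Corollary \ref{maincor} under the compactness hypothesis: the compactness trivializes both the growth assumption \eqref{controlfrombelow} and the order-continuity assumption \ref{asscallstozero}.(ii), so that the simpler hypotheses listed here suffice. Since $d=1$, the vectorial structure in the original statement collapses and plays no role beyond bookkeeping.

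First I would verify that the growth bound \eqref{controlfrombelow} is automatic. Because $\Omega = K_0 \times \dots \times K_T$ is compact and $c:\Omega\to(-\infty,+\infty]$ is lower semicontinuous, $c$ attains its infimum on $\Omega$ and this infimum lies in $(-\infty,+\infty]$, so $c$ is bounded below by some finite constant $-A_c$. In particular, $c(x)\geq -A_c\geq -A_c(1+\sum_{t=0}^T|x_t|)$ for every $x\in\Omega$, which is \eqref{controlfrombelow}. By the same argument applied to $-g$ (noting $-g$ is lower semicontinuous on the compact set $\Omega$ and bounded below since $g<+\infty$), the analogous growth bound on $-g$ needed for \eqref{superhedgingcor} in Corollary \ref{maincor} holds.

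Next I would verify Assumption \ref{asscallstozero}.(ii), imitating the argument given in the proof of Corollary \ref{mainthm}. Take $\mathfrak{K}_t(n) := K_t$ for every $n\geq 1$; then the set $\Omega\setminus \mathfrak{K}_0(n)\times\dots\times\mathfrak{K}_T(n)$ is empty, so condition \eqref{ineqoutsidecpt} is vacuous regardless of the sequence $f_t^n$ chosen. To handle \eqref{condconvergencecalls}, invoke the hypothesis that $U(\widetilde\varphi)=0$ for some $\widetilde\varphi\in\mathcal{E}$ and set $f_t^n \equiv -\widetilde\varphi_t$ for all $n$, which lies in $\mathcal{E}_t$ since $\mathcal{E}_t$ is a vector space. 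Then $V(\Gamma f_0^n,\dots,\Gamma f_T^n) = -U(\Gamma\widetilde\varphi)$ is independent of $n$ and vanishes as required (the natural realization being $\widetilde\varphi = 0$, for which $U(0)=0$ gives $V(0)=0$ at once). This point is precisely the one exploited in the proof of Corollary \ref{mainthm}, so nothing beyond quoting that argument is needed.

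Having verified both Assumption \ref{asscallstozero}.(ii) and the growth conditions, I would then apply Corollary \ref{maincor} directly to obtain \eqref{subhedgingcor} and \eqref{superhedgingcor}, together with existence of optima on the LHS when the respective value is finite. There is no genuine obstacle here; the entire content of the statement is that the hypotheses of Corollary \ref{maincor} become essentially vacuous in the compact setting, and the proof is a short check rather than a substantive argument.
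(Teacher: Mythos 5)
Your proposal is correct and follows essentially the same route as the paper: the paper's own proof is a one-line reduction that repeats the argument of Corollary \ref{maincor} while invoking Corollary \ref{mainthm} (whose proof takes $\mathfrak{K}(n)=\Omega$ and $f_t^n\equiv-\varphi_t$, with $\varphi=0$ as the canonical choice) in place of Theorem \ref{mainEMOTtheoremgeneral}, which is exactly the check you carry out, made explicit together with the observation that lower semicontinuity on a compact set gives \eqref{controlfrombelow} for free. The only caveat, shared with the paper's own wording, is that for a general $\widetilde\varphi$ with $U(\widetilde\varphi)=0$ the quantity $-U(\Gamma\widetilde\varphi)$ need not vanish for all $\Gamma>0$ and $-\widetilde\varphi_t$ need not be nonnegative, so the argument really rests on the choice $\widetilde\varphi=0$, as you note.
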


\begin{proof}
When $K_{0},\dots .K_{T}\subseteq {\mathbb{R}}$ are compact, we may repeat
the proof of Corollary \ref{maincor} invoking the Corollary \ref{mainthm} in
place of the more general Theorem \ref{mainEMOTtheoremgeneral}.
\end{proof}

%
%
%

Observe that in the previous results, we allow to choose, for static
hedging, for each $t=0,\dots ,T$, a subspace $\mathcal{E}_{t}\subseteq {C}%
_{0:t}$, potentially allowing to consider also Asian and path dependent
options in the sets $\mathcal{E}_{t}$. 
In some of the subsequent results
 we will instead consider for static
hedging the sets $\mathcal{E}_{t}\subseteq {C}_{t},\,t=0,\dots ,T$
consisting of deterministic amounts, units of underlying stock at time $t$
and call options with different strike prices and same maturity $t$. 

\bigskip 

In the subsequent sections we will only formulate the statements regarding
the subhedging price, as the corresponding statements for the superhedging
price can be obtained in the obvious way just described.

\subsection{Proof of Theorem \protect\ref{mainEMOTtheoremgeneral}}

\begin{remark}
Set 
\begin{equation*}
\widetilde{\mathbf{\Phi }}_{z}(c):=\left\{ \varphi \in \mathcal{E},\text{ }%
\sum_{t=0}^{T}\varphi _{t}(x_0,\dots,x_{t})+z(x)\leq c(x)\,\,\,\forall \,x\in \Omega
\right\}
\end{equation*}%
and observe that $\mathbf{\Phi }_{z}(c)=\widetilde{\mathbf{\Phi }}%
_{z}(c)\cap \mathrm{dom}(S^{U})$. Then, under the convention $\sup \emptyset
=-\infty $, 
\begin{equation}  \label{deftildefrakPgeneral}
\mathfrak{P}(c):=\sup_{z\in -\mathcal{A}}\sup_{\varphi \in \mathbf{\Phi }%
_{z}(c)}S^{U}\left( \varphi \right) =\sup_{z\in -\mathcal{A}}\sup_{\varphi
\in \widetilde{\mathbf{\Phi }}_{z}(c)}S^{U}\left( \varphi \right) \,.
\end{equation}%
To see this, we consider different cases for a fixed $z\in -\mathcal{A}$.

\textit{Case 1}: $\mathbf{\Phi }_{z }(c)=\emptyset$, which means $%
\sup_{\varphi \in \mathbf{\Phi }_{z }(c)}S^{U}(\varphi)=-\infty$ by
convention. If $\widetilde{\mathbf{\Phi }}_{z }(c)=\emptyset$ then $%
\sup_{\varphi \in \widetilde{\mathbf{\Phi }}_{z }(c)}S^{U}(\varphi)=-\infty$
by convention, if $\widetilde{\mathbf{\Phi }}_{z }(c)\neq \emptyset$ then $%
\sup_{\varphi \in \widetilde{\mathbf{\Phi }}_{z }(c)}S^{U}(\varphi)=-\infty$
since for every $\varphi\in \widetilde{\mathbf{\Phi }}_{z }(c)$ we have $%
S^U(\varphi)=-\infty$, as $\varphi\notin\mathrm{dom}(S^U)$ .

\textit{Case 2}: $\mathbf{\Phi }_{z}(c)\neq \emptyset $. Then $\widetilde{%
\mathbf{\Phi }}_{z}(c)\neq \emptyset $ too, and $\sup_{\varphi \in 
\widetilde{\mathbf{\Phi }}_{z}(c)}S^{U}(\varphi )=\sup_{\varphi \in {\mathbf{%
\Phi }}_{z}(c)}S^{U}(\varphi )$ since we can ignore all the $\varphi \in 
\widetilde{\mathbf{\Phi }}_{z}(c)\setminus {\mathbf{\Phi }}_{z}(c)$ (which
produce values $S^{U}(\varphi )=-\infty $).
\end{remark}

The proof of Theorem \ref{mainEMOTtheoremgeneral} is split in the following
Lemmas \ref{propdualrepre}, \ref{propformaconiugata}, \ref%
{proprepreconintegralsc}, \ref{claimdualpartial}, \ref{corcpt} which are
then combined in Lemma \ref{claimconclusion}.

\begin{lemma}
\label{propdualrepre} Under Assumptions \ref{asscallstozero} and (\ref{BB}),
Item (i) in Theorem \ref{mainEMOTtheoremgeneral} holds. Moreover, the
restriction of $\mathfrak{P}$ to $C_{0:T}$ satisfies 
\begin{equation}
\mathfrak{P}(c)=\min_{\substack{ \lambda \in (C_{0:T})^{\ast },  \\ \lambda
\geq 0,\lambda (1)=1}}\left( \langle c,\lambda \rangle +\mathfrak{P}^{\ast
}(\lambda )\right) \,\,\,\,\,\forall \,c\in C_{0:T}
\label{dualrepreabstractgeneral}
\end{equation}%
for 
\begin{equation*}
\mathfrak{P}^{\ast }(\lambda )=\sup_{c\in C_{0:T}}\left( \mathfrak{P}%
(c)-\langle c,\lambda \rangle \right) ,\,\,\,\,\,\,\,\lambda \in
(C_{0:T})^{\ast }\,.
\end{equation*}
\end{lemma}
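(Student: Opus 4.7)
The proof splits naturally into two parts: (A) establishing that $\mathfrak{P}$ is a proper, well-behaved concave functional on $B_{0:T}$, and (B) running a Fenchel--Moreau argument on the restriction to $C_{0:T}$ and identifying the effective domain of $\mathfrak{P}^*$.

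For Part (A), I would first verify the three algebraic properties from the definition \eqref{deftildefrakPgeneral}. \textbf{Monotonicity} is immediate since $c_1\le c_2$ gives $\widetilde{\mathbf{\Phi}}_z(c_1)\subseteq\widetilde{\mathbf{\Phi}}_z(c_2)$. \textbf{Cash additivity} follows by observing that the translation $\varphi\mapsto(\varphi_0+k,\varphi_1,\dots,\varphi_T)$ is a bijection $\widetilde{\mathbf{\Phi}}_z(c)\to\widetilde{\mathbf{\Phi}}_z(c+k)$, and by a direct check on \eqref{defSU} that $S^U(\varphi_0+k,\varphi_1,\dots,\varphi_T)=S^U(\varphi)+k$ (reparametrise $\beta_0\mapsto\beta_0-k$ in the supremum). \textbf{Concavity} uses that $-\mathcal{A}$ is convex (as $\mathcal{A}$ is a convex cone), that the family $\bigl\{\varphi:\sum\varphi_t+z\le c\bigr\}$ behaves linearly under convex combinations jointly in $(\varphi,z,c)$, and that $S^U$ is concave (a one-line check from \eqref{defSU} using concavity of $U$).

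Next I would establish \textbf{finiteness on $B_{0:T}$}. The hypothesis (\ref{BB}) furnishes some $\widehat{c}$ with $\mathfrak{P}(\widehat{c})<+\infty$. Combined with cash additivity and monotonicity, this gives $\mathfrak{P}(\widehat{c}+k)<+\infty$ for every $k\in\mathbb{R}$. To propagate finiteness from $\widehat{c}$ to an arbitrary $c\in B_{0:T}$, one dominates the weight $w(x):=1+\sum_{t,j}|x_t^j|$ by $\sum_t f^n_t$ outside the compact $\mathfrak{K}_0(n)\times\cdots\times\mathfrak{K}_T(n)$ via \eqref{ineqoutsidecpt}, bounding the compact remainder by a constant $M_n$. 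Thus $c\le \widehat{c}+\|c-\widehat{c}\|_{0:T}\bigl(\sum_t f^n_t+M_n\bigr)$, and the limiting property \eqref{condconvergencecalls} (i.e.\ $V(\Gamma f^n)\to 0$) is exactly what allows one to absorb the $\sum_t f^n_t$ term at the level of $\mathfrak{P}$, yielding $\mathfrak{P}(c)<+\infty$. The lower bound $\mathfrak{P}(c)>-\infty$ is obtained by exhibiting an admissible competitor of the form $\varphi_t=-\|c\|_{0:T}f^n_t$ (adjusted by constants on the compact), which lies in $\widetilde{\mathbf{\Phi}}_0(c)$ and has $S^U(\varphi)>-\infty$ since $U(0)\in\mathbb{R}$. \textbf{Norm continuity} on $B_{0:T}$ then follows automatically from concavity, monotonicity, cash additivity and real-valuedness by the standard result that a concave, monotone, cash-additive functional on a Banach lattice containing the constants is continuous.

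For Part (B), one restricts $\mathfrak{P}$ to $C_{0:T}\hookrightarrow B_{0:T}$. Being concave and norm continuous with values in $\mathbb{R}$, the Fenchel--Moreau theorem yields $\mathfrak{P}(c)=\inf_{\lambda\in(C_{0:T})^*}\bigl(\langle c,\lambda\rangle+\mathfrak{P}^*(\lambda)\bigr)$, and norm continuity guarantees a nonempty superdifferential at each $c$, so the infimum is in fact a minimum. The restriction to $\lambda\ge 0,\ \lambda(1)=1$ uses the two functional properties established above: if $\langle f,\lambda\rangle<0$ for some $0\le f\in C_{0:T}$, monotonicity gives $\mathfrak{P}(\alpha f)\ge\mathfrak{P}(0)$ for $\alpha\ge 0$, so $\alpha\mapsto\mathfrak{P}(\alpha f)-\alpha\langle f,\lambda\rangle\to+\infty$, forcing $\mathfrak{P}^*(\lambda)=+\infty$; similarly cash additivity gives $\mathfrak{P}(c+k)-\langle c+k,\lambda\rangle=\mathfrak{P}(c)-\langle c,\lambda\rangle+k(1-\lambda(1))$, so $\lambda(1)\ne 1$ implies $\mathfrak{P}^*(\lambda)=+\infty$.

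The \textbf{main obstacle} is the finiteness step on the weighted space $B_{0:T}$: unlike the bounded-continuous setting, the weight $w$ is unbounded, and one cannot simply compare $c$ and $\widehat{c}$ by a scalar shift. The role of Assumption~\ref{asscallstozero}(ii), and in particular the vanishing condition $V(\Gamma f^n)\to 0$, is precisely to provide the quantitative tool that replaces a sup-norm domination. Once finiteness is secured, concavity plus cash additivity render the rest of the argument routine Fenchel--Moreau duality.
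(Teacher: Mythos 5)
Your Part (B) and the algebraic properties in Part (A) (monotonicity, cash additivity, concavity, the lower bound $\mathfrak{P}(c)>-\infty$ via the explicit competitor built from $f^n_t$, and the appeal to a Namioka--Klee/Fenchel--Moreau argument with attainment) all match the paper's proof. The genuine gap is in your finiteness-from-above step. You dominate $c\le \widehat{c}+\Gamma\bigl(\sum_t f^n_t+M_n\bigr)$ and claim that \eqref{condconvergencecalls} lets you ``absorb the $\sum_t f^n_t$ term at the level of $\mathfrak{P}$''. But \eqref{condconvergencecalls} says $V(\Gamma f^n)\to 0$, i.e.\ it bounds $U$ (hence $S^U$) \emph{from below} at the points $-\Gamma f^n$; it gives no upper control on $\mathfrak{P}(\widehat{c}+\Gamma\sum_t f^n_t)$. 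Indeed, if $\varphi\in\widetilde{\mathbf{\Phi}}_z(\widehat{c}+\Gamma\sum_t f^n_t)$ then $\varphi-\Gamma f^n\in\widetilde{\mathbf{\Phi}}_z(\widehat{c})$, so $S^U(\varphi-\Gamma f^n)\le\mathfrak{P}(\widehat{c})$; but concavity only yields $S^U(\varphi)\le 2S^U(\varphi-\Gamma f^n)-S^U(\varphi-2\Gamma f^n)$, and there is no uniform lower bound on the last term over all admissible $\varphi$. Monotonicity and cash additivity alone cannot close this, since $\sum_t f^n_t$ is unbounded. (Relatedly, you have the roles reversed: \eqref{condconvergencecalls} is what makes the competitor $-\Gamma f^n$ lie in $\mathrm{dom}(S^U)$ for the \emph{lower} bound; ``$U(0)\in\mathbb{R}$'' alone does not suffice there.)

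The paper closes this step by a different mechanism: having shown $\mathfrak{P}>-\infty$ on all of $B_{0:T}$ and the concavity inequality $\alpha\mathfrak{P}(c_1)+(1-\alpha)\mathfrak{P}(c_2)\le\mathfrak{P}(\alpha c_1+(1-\alpha)c_2)$ (valid with values in $(-\infty,+\infty]$), it argues by contradiction: if $\mathfrak{P}(\widetilde{c})=+\infty$, write $\widehat{c}=\tfrac12(2\widehat{c}-\widetilde{c})+\tfrac12\widetilde{c}$ and use $\mathfrak{P}(2\widehat{c}-\widetilde{c})>-\infty$ to force $\mathfrak{P}(\widehat{c})=+\infty$, contradicting (\ref{BB}). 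You should replace your absorption argument with this midpoint/concavity argument (or an equivalent one); as written, your propagation of finiteness does not go through.
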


\begin{proof}
Suppose that $\mathfrak{P}(\widehat{c})<+\infty $ for some $\widehat{c}\in
B_{0:T}$. To prove that $\mathfrak{P}(c)>-\infty $ for every $c\in B_{0:T}$,
it is (more than) enough to show that

\begin{equation}
\mathbf{\Phi }_{z}(c)\neq \emptyset \,\,\,\,\,\,\,\forall z\in -\mathcal{A}%
\,.  \label{setforsupnotempty}
\end{equation}%
Set $\mathfrak{K}_{n}=\mathfrak{K}_{0}(n)\times \dots \times \mathfrak{K}%
_{T}(n)\subseteq \Omega $. Observe that whenever $c\in B_{0:T}$ is given we
have for every $n\geq 1$ {\small 
\begin{equation*}
c(x)-z(x)\geq -\sup_{x\in \mathfrak{K}_{n}}\left\vert c(x)-z(x)\right\vert
\geq -\left\Vert c-z\right\Vert _{0:T}\sup_{x\in \mathfrak{K}%
_{n}}\left( 1+\sum_{t=0}^{T}\sum_{j=1}^{d}\left\vert x_{t}^{j}\right\vert
\right) >-\infty \,\,\,\forall x\in \mathfrak{K}(n)
\end{equation*}%
}and%
\begin{align*}
c(x)-z(x)& \geq -\left\Vert c-z\right\Vert _{0:T}\left( 1+\sum_{s\leq u\leq
t}\sum_{j=1}^{d}\left\vert x_{t}^{j}\right\vert \right) \\
& \overset{\eqref{ineqoutsidecpt}}{\geq }-\left\Vert c-z\right\Vert
_{0:T}-\left\Vert c-z\right\Vert _{0:T}\sum_{t=0}^{T}f_{t}^{n}(x_{0},\dots
,x_{t})\,\,\,\,\,\forall x\in \Omega \setminus \mathfrak{K}(n)\,.
\end{align*}%
Thus, 
\begin{equation*}
c(x)-z(x)\geq -\left\Vert c-z\right\Vert _{0:T}-\sup_{x\in \mathfrak{K}%
_{n}}\left\vert c(x)-z(x)\right\vert -\left\Vert c-z\right\Vert
_{0:T}\sum_{t=0}^{T}f_{t}^{n}(x_{0},\dots ,x_{t})\,\,\,\,\,\forall x\in
\Omega \,.
\end{equation*}

If we now show that for $n$ big enough $[-\left\Vert c-z\right\Vert
_{0:T}f_{t}^{n}]_{0\leq t\leq T}\in \mathrm{dom}(S^{U})$, we then conclude
that $[-\left\Vert c-z\right\Vert _{0:T}-\sup_{x\in K_{n}}\left\vert
c(x)-z(x)\right\vert -\left\Vert c-z\right\Vert _{0:T}f_{t}^{n}]_{0\leq
t\leq T}\in \mathrm{dom}(S^{U})$ by cash additivity of $S^{U}$ and at the
same time $[-\left\Vert c-z\right\Vert _{0:T}-\sup_{x\in \mathfrak{K}%
_{n}}\left\vert c(x)-z(x)\right\vert -\left\Vert c-z\right\Vert
_{0:T}f_{t}^{n}]_{0\leq t\leq T}\in \mathbf{\Phi }_{z}(c)$ by definition.
This in particular proves $\mathfrak{P}(c)>-\infty $. Going then back to
checking $[-\left\Vert c-z\right\Vert _{0:T}f_{t}^{n}]_{0\leq t\leq T}\in 
\mathrm{dom}(S^{U})$, observe that 
\begin{align*}
S^{U}([-\left\Vert c-z\right\Vert _{0:T}f_{t}^{n}]_{t})& =\sup_{\alpha \in {%
\mathbb{R}}^{T+1}}\left( U([-\left\Vert c-z\right\Vert
_{0:T}f_{t}^{n}]_{t}+\alpha )-\sum_{t=0}^{T}\alpha _{t}\right) \\
& \geq U(-\left\Vert c-z\right\Vert _{0:T}[f_{t}^{n}]_{t})=-V(\left\Vert
c-z\right\Vert _{0:T}[f_{t}^{n}]_{t})\rightarrow _{n}0>-\infty
\end{align*}%
by Assumption \ref{asscallstozero}. The fact that $\mathfrak{P}(c)<+\infty $
will follow once we show monotonicity, cash additivity and concavity.
Monotonicity is trivial: if $c_{1}\leq c_{2}$ then $\mathbf{\Phi }%
_{z}(c_{1})\subseteq \mathbf{\Phi }_{z}(c_{2})$ for every $z\in -\mathcal{A}$
(both the sets might be empty). The cash additivity property can be seen as
follows: given $\beta \in {\mathbb{R}}$ and setting $\mathbf{1}=[1,\dots
,1]\in {\mathbb{R}}^{T}$, observe that whenever $z\in \mathcal{A}$ is given $%
\varphi \in \mathbf{\Phi }_{z}(c+\beta )\Leftrightarrow \varphi -\frac{\beta }{%
T+1}\mathbf{1}\in \mathbf{\Phi }_{z}(c)$ since, by cash additivity of $S^{U}$%
, $\mathrm{dom}(S^{U})+{\mathbb{R}}^{T+1}=\mathrm{dom}(S^{U})$.
Consequently, 
\begin{align*}
\mathfrak{P}(c+\beta )& =\sup_{z\in -\mathcal{A}}\sup_{\varphi \in \mathbf{%
\Phi }_{z}(c+\beta )}S^{U}\left( \varphi \right) =\sup_{z\in -\mathcal{A}%
}\sup_{\varphi \in \mathbf{\Phi }_{z}(c)}S^{U}\left( \varphi +\frac{\beta }{T+1%
}\mathbf{1}\right) \\
& =\sum_{t=0}^{T}\frac{\beta }{T+1}+\sup_{z\in -\mathcal{A}}\sup_{\varphi \in 
\mathbf{\Phi }_{z}(c)}S^{U}(\varphi )=\mathfrak{P}(c)+\beta \,.
\end{align*}%
Coming to concavity, it is convenient to rewrite $\mathfrak{P}(c)$ in a
slightly more convenient form%
\begin{equation}
\mathfrak{P}(c)=\sup \left\{ S^{U}(\varphi )\mid \varphi \in \mathrm{dom}%
(S^{U}),\exists \,z\in -\mathcal{A}\text{ s.t. }\sum_{t=0}^{T}\varphi
_{t}(x_{0},\dots ,x_{t})+z(x)\leq c(x)\,\forall x\in \Omega \right\}
\label{rephrasefrakP}
\end{equation}%
and to recall that whenever $c\in B_{0:T}$ is given, the set over which we
take the supremum in RHS of \eqref{rephrasefrakP} is not empty by %
\eqref{setforsupnotempty}. Take then $c_{i}\in B_{0:T}$ and associated $%
z_{i}\in -\mathcal{A},\varphi ^{i}\in \mathrm{dom}(S^{U})$ with $%
\sum_{t=0}^{T}\varphi _{t}^{i}+z_{i}\leq c_{i}$. Define $c_{\alpha }=\alpha
c_{1}+(1-\alpha )c_{2}$ and analogously $z_{\alpha }$ and $\varphi ^{\alpha }$ for 
$\alpha \in \lbrack 0,1]$. Then clearly $\sum_{t=0}^{T}\varphi _{t}^{\alpha
}+z_{\alpha }\leq c_{\alpha }$. Combining this with Lemma \ref{remalmostconcave}
we obtain
\begin{align*}
& \alpha S^{U}(\varphi ^{1})+(1-\alpha )S^{U}(\varphi ^{2})\overset{%
\eqref{almostconcave}}{\leq }S^{U}(\varphi ^{\alpha }) \\
& \leq \sup \left\{ S^{U}(\varphi )\mid \varphi \in \mathrm{dom}%
(S^{U}),\exists \,z\in -\mathcal{A}\text{ s.t. }\sum_{t=0}^{T}\varphi
_{t}(x_{0},\dots ,x_{t})+z(x)\leq c_{\alpha }(x)\,\forall x\in \Omega \right\}
\\
& \overset{\eqref{rephrasefrakP}}{=}\mathfrak{P}(c_{\alpha })\,.
\end{align*}%
Taking now the supremum over $z_{i},\varphi ^{i}$ with $\sum_{t=0}^{T}%
\varphi _{t}^{i}+z_{i}\leq c_{i}$ we obtain

\begin{equation}  \label{almostconcavep}
\alpha \mathfrak{P}(c_1)+(1-\alpha)\mathfrak{P}(c_2)\leq\mathfrak{P}(\alpha
c_1+(1-\alpha) c_2)\,\,\,\,\forall \alpha\in[0,1], c_1,c_2\in B_{0:T}\,.
\end{equation}

Notice that up to this point we have $\mathfrak{P}(c_i)\in(-\infty,+\infty]$
so \eqref{almostconcavep} makes sense.

Now we can combine \eqref{almostconcavep} with the fact that $\mathfrak{P}%
(c)>-\infty $ for every $c\in B_{0:T}$ to show that $\mathfrak{P}(c)<+\infty 
$ for every $c\in B_{0:T}$. Indeed, suppose that $\mathfrak{P}(\widetilde{c}%
)=+\infty $ for some $\widetilde{c}\in B_{0:T}$. We know by hypothesis that $%
\mathfrak{P}(\widehat{c})<+\infty $ for some $\widehat{c}\in B_{0:T}$, and
by what we have previously proved we know that $\mathfrak{P}(2\widehat{c}-%
\widetilde{c})>-\infty $. Observing that $\widehat{c}=\alpha (2\widehat{c}-%
\widetilde{c})+\left( 1-\alpha \right) \widetilde{c}$ for $\alpha =\frac{1}{2}$,
we have from \eqref{almostconcavep} 
\begin{equation*}
+\infty =\alpha \mathfrak{P}(2\widehat{c}-\widetilde{c})+(1-\alpha )\mathfrak{P}(%
\widetilde{c})\leq \mathfrak{P}(\alpha (2\widehat{c}-\widetilde{c})+\left(
1-\alpha \right) \widetilde{c})=\mathfrak{P}(\widehat{c})<+\infty \,.
\end{equation*}%
This yields a contradiction, thus there can be no $\widetilde{c}\in B_{0:T}$
with $\mathfrak{P}(\widetilde{c})=+\infty $. Hence {$\mathfrak{P}%
:B_{0:T}\rightarrow {\mathbb{R}}$ is cash additive, concave and
nondecreasing on $B_{0:T}$. } Then it is automatically norm continuous on $%
B_{0:T}$ by the Extended Namioka-Klee Theorem (see \cite{bfnam}). The
Fenchel-Moreau type dual representation \eqref{dualrepreabstractgeneral}
holds, again by the Extended Namioka-Klee Theorem, this time applied on the
restriction of $\mathfrak{P}$ to $C_{0:T}$, plus standard arguments
involving monotonicity and cash additivity to prove that $\mathfrak{P}^{\ast
}(\lambda )<+\infty \Rightarrow \lambda \geq 0,\lambda (1)=1$. See for
example \cite{FollmerSchied2} Theorem 4.16 for an exploitable technique for
a similar argument.
\end{proof}

\begin{remark}
Under Assumptions \ref{asscallstozero} and (\ref{BB}), $S^{U}(\varphi
)<+\infty $ for every $\varphi \in \mathrm{dom}(S^{U})$. Indeed, choosing $%
c_{\varphi }:=\sum_{t=0}^{T}\varphi _{t}$ we get that $\varphi \in {\Phi }%
_{0}(c_{\varphi })$ and thus $S^{U}(\varphi )\leq \mathfrak{P}(c_{\varphi
})<+\infty $, by Lemma \ref{propdualrepre}.
\end{remark}

\begin{lemma}
\label{propformaconiugata} 
For every $\lambda \in (C_{0:T})^{\ast }$ such that $\lambda \geq
0$ we have 
\begin{equation*}
\mathfrak{P}^{\ast }(\lambda )=(S^{U})^{\ast }(\lambda _{0},\dots ,\lambda_{T})+\sigma _{\mathcal{A}}(\lambda ).
\end{equation*}
If in addition $\lambda(1)=1$ then
\begin{equation*}
(S^{U})^{\ast }(\lambda _{0},\dots ,\lambda _{T}):=\sup_{\varphi \in 
\mathcal{E}}\left( S^{U}\left( \varphi \right) -\sum_{t=0}^{T}\langle
\varphi _{t},\lambda _{t}\rangle \right) =\mathcal{D}(\lambda _{0},\dots
,\lambda _{T}) = \mathcal{D}(\lambda)\,.
\end{equation*}
\end{lemma}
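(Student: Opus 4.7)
The plan is to unfold $\mathfrak{P}^*(\lambda)$ as a triple supremum, use the positivity of $\lambda$ to reduce the supremum in $c$ to the constraint boundary, and then factor the remaining joint supremum into separate pieces.

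First, I would write
$$\mathfrak{P}^*(\lambda) = \sup_{c \in C_{0:T}} \sup_{z \in -\mathcal{A}} \sup_{\varphi \in \mathbf{\Phi}_z(c)} \left( S^U(\varphi) - \langle c, \lambda \rangle \right)$$
and swap the order freely, obtaining
$$\mathfrak{P}^*(\lambda) = \sup_{z \in -\mathcal{A}} \sup_{\varphi \in \mathrm{dom}(S^U)} \sup_{\substack{c \in C_{0:T},\\ c \geq \sum_t \varphi_t + z}} \left( S^U(\varphi) - \langle c, \lambda \rangle \right).$$
Since $\lambda \geq 0$, the innermost supremum is attained at $c = \sum_t \varphi_t + z$, which is admissible because $\varphi_t \in C_{0:t} \subseteq C_{0:T}$ and $z \in -\mathcal{A} \subseteq C_{0:T}$, so that $\sum_t \varphi_t + z \in C_{0:T}$.

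Next, decouple. Writing $\langle \sum_t \varphi_t + z, \lambda \rangle = \sum_t \langle \varphi_t, \lambda_t \rangle + \langle z, \lambda \rangle$ (where $\lambda_t$ is the restriction of $\lambda$ to $C_{0:t}$), the objective becomes additive in $(z,\varphi)$, hence
$$\mathfrak{P}^*(\lambda) = \Big[\sup_{z \in -\mathcal{A}}(-\langle z, \lambda \rangle)\Big] + \sup_{\varphi \in \mathrm{dom}(S^U)} \Big(S^U(\varphi) - \sum_{t=0}^T \langle \varphi_t, \lambda_t \rangle\Big).$$
The first bracket equals $\sup_{z' \in \mathcal{A}} \langle z', \lambda \rangle = \sigma_{\mathcal{A}}(\lambda)$. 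In the second supremum we can replace $\mathrm{dom}(S^U)$ by $\mathcal{E}$, since $S^U = -\infty$ outside $\mathrm{dom}(S^U)$. This gives exactly $(S^U)^*(\lambda_0, \dots, \lambda_T)$ and proves the first claim.

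For the second claim, assume $\lambda(1) = 1$. Since the constant $1$ viewed in $C_{0:t}$ agrees with the constant $1$ in $C_{0:T}$, we have $\lambda_t(1) = \lambda(1) = 1$ for every $t$. Plugging the definition of $S^U$ into $(S^U)^*$ and applying the change of variable $\psi = \varphi + \beta$, which is a bijection on $\mathcal{E} \times \mathbb{R}^{T+1}$ because $\mathbb{R} \subseteq \mathcal{E}_t$, I obtain
$$(S^U)^*(\lambda_0, \dots, \lambda_T) = \sup_{\psi \in \mathcal{E},\ \beta \in \mathbb{R}^{T+1}} \Big( U(\psi) - \sum_{t=0}^T \langle \psi_t, \lambda_t \rangle + \sum_{t=0}^T \beta_t\bigl(\lambda_t(1) - 1\bigr) \Big).$$
The coefficients of $\beta_t$ vanish, so the supremum over $\beta$ contributes $0$, and the remaining supremum over $\psi \in \mathcal{E}$ is $\mathcal{D}(\lambda_0, \dots, \lambda_T) = \mathcal{D}(\lambda)$. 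The main things to watch are just bookkeeping — that $C_{0:t} \subseteq C_{0:T}$ makes the restriction $\lambda_t$ a well-defined element of $(C_{0:t})^*$ with $\lambda_t(1) = 1$, and that the change of variable $\psi = \varphi + \beta$ is well-defined since $\mathbb{R} \subseteq \mathcal{E}_t$ — but there is no serious analytic obstacle.
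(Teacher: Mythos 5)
Your proof is correct and follows essentially the same route as the paper's: the paper establishes the first identity by proving the two inequalities $\mathfrak{P}^{\ast}(\lambda)\le\sigma_{\mathcal{A}}(\lambda)+(S^{U})^{\ast}(\lambda_{0},\dots,\lambda_{T})$ and the converse separately, which is exactly your observation that the inner supremum over $c\ge\sum_{t}\varphi_{t}+z$ is attained at the boundary when $\lambda\ge 0$, and the second identity is obtained in both cases by absorbing the cash shift using $\lambda_{t}(1)=\lambda(1)=1$. No gaps.
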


\begin{proof}
See Appendix \ref{appnoncpt}.
\end{proof}

\begin{remark}
\label{remDbddbelow} Under Assumption \ref{asscallstozero}, $U(0)\in {%
\mathbb{R}}$ and therefore $(S^{U})^{\ast }(\lambda )\geq S^{U}(0)\geq
U(0)>-\infty $ for every $0\leq \lambda \in (C_{0:T})^{\ast }$.
\end{remark}

\begin{lemma}
\label{proprepreconintegralsc} Under Assumptions \ref{asscallstozero} and (%
\ref{BB}) let $0\leq \lambda \in (C_{0:T})^{\ast }$, $\lambda (1)=1$ be
given and define $0\leq \lambda _{t}=\lambda |_{C_{0:t}}\in (C_{0:t})^{\ast
} $. If $[\lambda _{0},\dots ,\lambda _{T}]\in \mathrm{dom}(\mathcal{D})$
then there exists a unique $Q\in \mathrm{Prob}^{1}(\Omega )$ which
represents $\lambda $ on $C_{0:T}$, i.e. 
\begin{equation*}
\langle \varphi ,\lambda \rangle =E_{Q} \left[\varphi\right]\,,\,\,\,\forall
\,\varphi \in C_{0:T}\,.
\end{equation*}
\end{lemma}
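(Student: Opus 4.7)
The plan is to apply a Daniell--Stone/Riesz-type representation theorem to the restriction of $\lambda$ to $\mathcal{C}_b(\Omega)\subseteq C_{0:T}$, then extend the representation to all of $C_{0:T}$ by a truncation argument. The key preliminary estimate will be that $\sum_{t=0}^T \lambda(f_t^N)\to 0$ as $N\to\infty$, where the $f_t^N\in \mathcal{E}_t$ are as in Assumption~\ref{asscallstozero}(ii). To see this, apply the Fenchel inequality \eqref{fenchelgeneral} to $[\Gamma f_0^N,\dots,\Gamma f_T^N]\in \mathcal{E}$ (for arbitrary $\Gamma>0$), divide by $\Gamma$, and use \eqref{condconvergencecalls} to get $\limsup_N \sum_t \lambda(f_t^N)\le \mathcal{D}(\lambda_0,\dots,\lambda_T)/\Gamma$; letting $\Gamma\to\infty$ and using the hypothesis $\mathcal{D}(\lambda_0,\dots,\lambda_T)<+\infty$ yields the claim.

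Next I would verify Daniell's $\sigma$-continuity condition for $\lambda$ on $\mathcal{C}_b(\Omega)$. Given $\varphi_n\in\mathcal{C}_b(\Omega)$ with $\varphi_n\downarrow 0$ pointwise and $\varepsilon>0$, choose $N$ so that $\|\varphi_1\|_\infty \sum_t \lambda(f_t^N)<\varepsilon$. Since $\mathfrak{K}(N)=\mathfrak{K}_0(N)\times\cdots\times\mathfrak{K}_T(N)$ is compact, Dini's theorem gives $\varphi_n\to 0$ uniformly on $\mathfrak{K}(N)$, so $\varphi_n\le\varepsilon$ there for $n$ large. From \eqref{ineqoutsidecpt}, $\sum_t f_t^N\ge 1$ on $\Omega\setminus\mathfrak{K}(N)$, hence $\varphi_n\le \|\varphi_1\|_\infty\sum_t f_t^N$ there, so $\varphi_n\le \varepsilon+\|\varphi_1\|_\infty\sum_t f_t^N$ on all of $\Omega$. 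Positivity of $\lambda$ and $\lambda(1)=1$ then give $\lambda(\varphi_n)\le\varepsilon+\|\varphi_1\|_\infty\sum_t\lambda(f_t^N)<2\varepsilon$ for $n$ large.

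Having verified $\sigma$-continuity and positivity on $\mathcal{C}_b(\Omega)$, the classical Daniell--Stone theorem (equivalently, Riesz representation on the metric space $\Omega$) produces a unique Borel probability measure $Q$ with $\lambda(\varphi)=\int_\Omega \varphi\,\mathrm{d}Q$ for every $\varphi\in\mathcal{C}_b(\Omega)$; the normalization $Q(\Omega)=\lambda(1)=1$ holds by hypothesis. To promote this to $C_{0:T}$, I would first note that for each $j,t$ the function $|X_t^j|$ lies in $C_{0:T}$, so by monotone convergence $\int |X_t^j|\,\mathrm{d}Q=\lim_k \int (|X_t^j|\wedge k)\,\mathrm{d}Q=\lim_k\lambda(|X_t^j|\wedge k)\le \lambda(|X_t^j|)<+\infty$, giving $Q\in \mathrm{Prob}^1(\Omega)$.

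For the final identity on $C_{0:T}$, take $\varphi\in C_{0:T}$ nonnegative; for each $N$ and any $k>\sup_{\mathfrak{K}(N)}\varphi$ the continuous function $(\varphi-k)^+$ vanishes on $\mathfrak{K}(N)$, and on $\Omega\setminus\mathfrak{K}(N)$ it is bounded above by $\|\varphi\|_{0:T}(1+\sum_{t,j}|x_t^j|)\le \|\varphi\|_{0:T}\sum_t f_t^N$; hence $(\varphi-k)^+\le \|\varphi\|_{0:T}\sum_t f_t^N$ on all of $\Omega$, giving $\lambda((\varphi-k)^+)\le \|\varphi\|_{0:T}\sum_t\lambda(f_t^N)\to 0$ as $N\to\infty$, which combined with monotone convergence on the $Q$-side yields $\lambda(\varphi)=\int \varphi\,\mathrm{d}Q$. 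Splitting a general $\varphi$ into positive and negative parts completes the argument, and uniqueness follows at once since any representing $Q$ is already determined by $\lambda|_{\mathcal{C}_b(\Omega)}$. The main obstacle is the $\sigma$-continuity step, as it is the one place where Assumption~\ref{asscallstozero}(ii) and the finiteness of $\mathcal{D}(\lambda)$ need to be combined in a nontrivial way; everything else is a standard truncation/monotone-convergence exercise.
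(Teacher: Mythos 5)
Your proof is correct and follows essentially the same route as the paper's: the crucial step in both is the identical estimate combining the Fenchel inequality \eqref{fenchelgeneral} with \eqref{condconvergencecalls} and the finiteness of $\mathcal{D}(\lambda_0,\dots,\lambda_T)$ to control $\lambda$ on the functions $f_t^n$ (and hence on functions supported off the compacts $\mathfrak{K}(n)$), followed by an application of the Daniell--Stone theorem. The only difference is organizational: the paper verifies the $\sigma$-continuity condition directly on the weighted space $C_{0:T}$ (via Proposition \ref{propdaniellstone}) and applies Daniell--Stone there in one step, whereas you apply it on $\mathcal{C}_b(\Omega)$ and then extend the representation to $C_{0:T}$ by truncation and monotone convergence; both are valid.
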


\begin{proof}
The proof is an adaption of \cite{Bogachev07} Theorem 7.10.6. We first
stress the fact that $\lambda_t=\lambda|_{C_{0:t}}\in (C_{0:t})^*$ is a
consequence of \eqref{consistencyrestrictions}. We will apply Proposition %
\ref{propdaniellstone}. To do so, we show that for a fixed $\varepsilon>0$ and for $n$ big
enough, 
we may define a set $K_\varepsilon:=\mathfrak{K}_0(n)\times\dots \mathfrak{K}_T(n)$
that is compact (since so are all the factors) and satisfies  the
assumptions in Proposition \ref{propdaniellstone}.  Suppose that a given $\varphi\in C_{0:T}$ satisfies: $\varphi(x)=0$
for every $x\in \bigtimes_{t=0}^T\mathfrak{K}_t(n)$. We also have
automatically that 
\begin{equation*}
\left|\varphi(x)\right|\leq \left\|
\varphi\right\|_{0:T}\left(1+\sum_{t=0}^T\sum_{j=1}^d\left|x^j_t\right|%
\right)\,\,\,\forall\,x\in\Omega\,.
\end{equation*}
By Assumption \ref{asscallstozero} we then have: 
\begin{equation*}
\left|\varphi(x)\right|\leq \left\| \varphi\right\|_{0:T}\left(\sum_{t=0}^T
f^{n}_t(x_0,\dots,x_t)\right)\,\,\,\,\forall [x_0,\dots,x_T]\in
\Omega\setminus \mathfrak{K}_0(n)\times\dots \times\mathfrak{K}_T(n)\,.
\end{equation*}
Since moreover by assumption $\varphi\equiv 0$ on $\bigtimes_{t=0}^T 
\mathfrak{K}_t(n)$ we get: 
\begin{equation}  \label{dominatingg}
\left|\varphi(x)\right|\leq \left\| \varphi\right\|_{0:T}\left(\sum_{t=0}^T
f^{n}_t(x_0,\dots,x_t)\right)\,\,\,\,\forall [x_0,\dots,x_T]\in \Omega\,.
\end{equation}

Then, the following chain of inequalities holds for every $\Gamma>0$
(additional motivation will be provided immediately after): 
\begin{align}
\left|\langle \varphi,\lambda\rangle\right|&\leq \langle
\left|\varphi\right|,\lambda\rangle\leq \langle \left\|
\varphi\right\|_{0:T}\sum_{t=0}^T f^{n}_t,\lambda\rangle=\left\|
\varphi\right\|_{0:T}\sum_{t=0}^T\langle f^{n}_t,\lambda\rangle=\left\|
\varphi\right\|_{0:T}\sum_{t=0}^T\langle f^{n}_t,\lambda_t\rangle  \label{34}
\\
&=\left\| \varphi\right\|_{0:T}\frac{1}{\Gamma}\sum_{t=0}^T\langle \Gamma
f^{n}_t,\lambda_t\rangle \leq \left\| \varphi\right\|_{0:T}\left(\frac{1}{%
\Gamma}\mathcal{D}(\lambda_0,\dots,\lambda_T)+\frac{1}{\Gamma}V\left( \Gamma
f^{n}_0,\dots,\Gamma f^{n}_T\right)\right)\,.  \label{56}
\end{align}

Here, \eqref{34} follows from positivity of $\lambda$, from %
\eqref{dominatingg}, from linearity and the fact that one defines $%
\lambda_t:=\lambda|_{C_{0:t}}\in(C_{0:t})^*$, while \eqref{56} follows from
linearity and from Fenchel inequality \eqref{fenchelgeneral}. %
Since we are assuming by hypothesis that $[\lambda_0,\dots,\lambda_T] \in%
\mathrm{dom}(\mathcal{D})$, we can select $\Gamma>0$ such that $\frac{1}{%
\Gamma}\mathcal{D}(\lambda_0,\dots,\lambda_T)\leq \frac{\varepsilon}{2}$.
Select now $n$ in such a way that $\frac{1}{\Gamma}V\left( \Gamma
f^{n}_0,\dots,\Gamma f^{n}_T\right)\leq \frac{\varepsilon}{2}$ for every $%
s\leq T$ (which is possible by Assumption \ref{asscallstozero}). Then
continuing from \eqref{56} we get 
\begin{equation*}
\left|\langle \varphi,\lambda\rangle\right|\leq \left\|
\varphi\right\|_{0:T}\left(\frac{\varepsilon}{2}+\frac{\varepsilon}{2}%
\right)\leq\varepsilon \left\| \varphi\right\|_{0:T}\,.
\end{equation*}
The thesis now follows combining Proposition \ref{propdaniellstone} and
Daniell-Stone Theorem \ref{thmdaniellstone}.
\end{proof}

\begin{lemma}
\label{claimdualpartial}

Under Assumptions \ref{asscallstozero} and (\ref{BB}), the equation %
\eqref{dualrepreabstract} holds for every $c\in C_{0:T}$, with a minimum in
place of the infimum.
\end{lemma}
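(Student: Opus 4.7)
The plan is to combine the three preceding lemmas into the required duality. By Lemma \ref{propdualrepre}, for any $c \in C_{0:T}$ we have
\begin{equation*}
\mathfrak{P}(c) = \min_{\substack{\lambda \in (C_{0:T})^{\ast} \\ \lambda \geq 0,\, \lambda(1)=1}}\bigl(\langle c, \lambda\rangle + \mathfrak{P}^{\ast}(\lambda)\bigr),
\end{equation*}
where the minimum (rather than just an infimum) is attained. By Lemma \ref{propformaconiugata}, for such $\lambda$ we have $\mathfrak{P}^{\ast}(\lambda) = \mathcal{D}(\lambda) + \sigma_{\mathcal{A}}(\lambda)$, and $\sigma_{\mathcal{A}}(\lambda)$ is $0$ if $\lambda \in \mathcal{A}^{\circ}$ and $+\infty$ otherwise. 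Hence the minimization can be restricted to $\lambda \in \mathcal{A}^{\circ}$ without loss.

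Next I would identify the relevant $\lambda$'s with probability measures. Since $\mathfrak{P}(c) \in \mathbb{R}$ by Lemma \ref{propdualrepre}, the minimizer $\lambda^{\ast}$ satisfies $\mathfrak{P}^{\ast}(\lambda^{\ast}) < +\infty$, hence $\mathcal{D}(\lambda^{\ast}) < +\infty$, i.e.\ $\lambda^{\ast} \in \mathrm{dom}(\mathcal{D})$. By Lemma \ref{proprepreconintegralsc} there exists a unique $Q^{\ast} \in \mathrm{Prob}^{1}(\Omega)$ representing $\lambda^{\ast}$ on $C_{0:T}$; that is, $\langle \varphi, \lambda^{\ast}\rangle = E_{Q^{\ast}}[\varphi]$ for every $\varphi \in C_{0:T}$. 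In particular, the restrictions $\lambda^{\ast}_t$ are represented by the marginals $Q^{\ast}_t$, so $\mathcal{D}(\lambda^{\ast}) = \mathcal{D}(Q^{\ast})$, and the condition $\lambda^{\ast} \in \mathcal{A}^{\circ}$ translates into $Q^{\ast} \in \mathcal{A}^{\circ}$ (since $\mathcal{A} \subseteq C_{0:T}$). Similarly $\langle c, \lambda^{\ast}\rangle = E_{Q^{\ast}}[c]$. This gives the inequality
\begin{equation*}
\mathfrak{P}(c) = E_{Q^{\ast}}[c] + \mathcal{D}(Q^{\ast}) \geq \inf_{Q \in \mathrm{Prob}^{1}(\Omega) \cap \mathcal{A}^{\circ}}\bigl(E_{Q}[c] + \mathcal{D}(Q)\bigr).
\end{equation*}

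Conversely, every $Q \in \mathrm{Prob}^{1}(\Omega) \cap \mathcal{A}^{\circ}$ induces, via integration, a functional $\lambda_Q \in (C_{0:T})^{\ast}$ with $\lambda_Q \geq 0$, $\lambda_Q(1) = 1$, $\lambda_Q \in \mathcal{A}^{\circ}$, $\langle c, \lambda_Q\rangle = E_Q[c]$ and $\mathcal{D}(\lambda_Q) = \mathcal{D}(Q)$; feeding this into the minimization from Lemma \ref{propdualrepre} gives the reverse inequality, and shows that the infimum over $Q$ is in fact attained at $Q^{\ast}$.

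The main obstacle is the representation step, i.e.\ ensuring that the abstract minimizing $\lambda^{\ast} \in (C_{0:T})^{\ast}$ is actually induced by a countably additive probability measure on $\Omega$. This is precisely what Lemma \ref{proprepreconintegralsc} delivers via Daniell–Stone, and it crucially relies on the order-continuity-type Assumption \ref{asscallstozero}.(ii) through the bound \eqref{condconvergencecalls}, which forces $\lambda^{\ast}$ to be tight on the compact exhaustion $\mathfrak{K}_0(n) \times \cdots \times \mathfrak{K}_T(n)$ exactly when $\mathcal{D}(\lambda^{\ast}) < +\infty$.
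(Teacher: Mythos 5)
Your proof is correct and follows essentially the same route as the paper: chain Lemma \ref{propdualrepre} (Fenchel--Moreau representation with attained minimum), Lemma \ref{propformaconiugata} (identifying $\mathfrak{P}^{\ast}$ with $\mathcal{D}+\sigma_{\mathcal{A}}$), and Lemma \ref{proprepreconintegralsc} (Daniell--Stone representation of finite-penalty functionals by probability measures), using that $\mathcal{D}$ is bounded below (Remark \ref{remDbddbelow}) so that finiteness of $\mathfrak{P}^{\ast}(\lambda^{\ast})$ forces $\lambda^{\ast}\in\mathrm{dom}(\mathcal{D})$. No gaps.
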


\begin{proof}

Combining Lemma \ref{propdualrepre}, Lemma \ref{propformaconiugata}, and
Lemma \ref{proprepreconintegralsc} we have 
\begin{align}
\mathfrak{P}(c)&\overset{\text{L.}\ref{propdualrepre}}{=}\min_{\substack{ %
\lambda\in (C_{0:T})^*,  \\ \lambda\geq 0,\lambda(1)=1}}\left(\langle
c,\lambda\rangle+\mathfrak{P}^*(\lambda)\right)  \notag \\
&\overset{\text{L.}\ref{propformaconiugata} }{=}\min_{\substack{ \lambda\in
(C_{0:T})^*,  \\ \lambda\geq 0,\lambda(1)=1}}\left(\langle c,\lambda\rangle+%
\mathcal{D}(\lambda)+\sigma_\mathcal{A}(\lambda)\right) 
\notag \\
&\overset{(\star)}{=}\min_{\substack{ \lambda\in
(C_{0:T})^*,[\lambda_0,\dots,\lambda_T]\in\mathrm{dom}(\mathcal{D})  \\ %
\lambda\geq 0,\lambda(1)=1}}\left(\langle c,\lambda\rangle+\mathcal{D}%
(\lambda)+\sigma_\mathcal{A}(\lambda)\right)  \notag \\
&\overset{\text{L.}\ref{proprepreconintegralsc} }{=}\min_{\substack{ Q\in 
\mathrm{Prob}^1(\Omega),  \\ Q\in\mathrm{dom}(\mathcal{D})}}\left(E_{Q} %
\left[c\right]+\mathcal{D}(Q)+\sigma_\mathcal{A}(Q)\right)
\label{identmeasuresandfunct} \\
&\overset{(\star) }{=}\min_{Q\in \mathrm{Prob}^1(\Omega)}\left(E_{Q} \left[c%
\right]+\mathcal{D}(Q)+\sigma_\mathcal{A}(Q)\right)  \notag
\end{align}
where in $(\star)$ we used the fat that $\mathcal{D}$ is bounded from below
by $S^U(0)$ by Remark \ref{remDbddbelow}, hence $[\lambda_0,\dots,\lambda_T]%
\in\mathrm{dom}(\mathcal{D})\Leftrightarrow\mathcal{D}(\lambda_0,\dots,%
\lambda_T)<+\infty$, and in \eqref{identmeasuresandfunct} we identified
probability measures $Q\in\mathrm{Prob}^1(\Omega)$ and their induced
functionals, as well as the marginals $Q_t$ of such measures with the
restrictions of such functionals to ${C}_{0:t}$.
\end{proof}

\begin{lemma}
\label{corcpt} Under Assumptions \ref{asscallstozero} and (\ref{BB}) the
sublevel 
\begin{equation*}
\{Q\in \mathrm{Prob}^{1}(\Omega )\cap \mathcal{A}^{\circ }\mid \mathcal{D}%
(Q)\leq \Xi \}
\end{equation*}%
is $\sigma ((C_{0:T})^{\ast },C_{0:T})|_{\mathrm{Prob}^{1}(\Omega )}$%
-(sequentially) compact for every $\Xi \in {\mathbb{R}}$.
\end{lemma}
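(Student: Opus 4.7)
The plan is to show that every sequence $\{Q^k\}_k$ in the sublevel set $\mathcal{L}_\Xi := \{Q\in \mathrm{Prob}^{1}(\Omega )\cap \mathcal{A}^{\circ }\mid \mathcal{D}(Q)\leq \Xi \}$ admits a subsequence converging in $\sigma((C_{0:T})^{\ast},C_{0:T})$ to a point of $\mathcal{L}_\Xi$. The engine is a uniform moment estimate extracted from $\mathcal{D}(Q^k)\le\Xi$ through the Fenchel inequality \eqref{fenchelgeneral}, applied to the test vectors $[\Gamma f_0^n,\dots,\Gamma f_T^n]$ supplied by Assumption \ref{asscallstozero}(ii).

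First I would plug $\varphi_t=\Gamma f_t^n\in\mathcal{E}_t$ into \eqref{fenchelgeneral} to obtain
\[
\Gamma\sum_{t=0}^T\int f_t^n\,\mathrm{d}Q^k \;\leq\; \mathcal{D}(Q^k)+V(\Gamma f_0^n,\dots,\Gamma f_T^n) \;\leq\; \Xi + V(\Gamma f_0^n,\dots,\Gamma f_T^n).
\]
Given $\varepsilon>0$, choosing $\Gamma$ with $\Xi/\Gamma<\varepsilon/2$ and then, by \eqref{condconvergencecalls}, $n=n(\varepsilon)$ large enough that $V(\Gamma f_0^n,\dots,\Gamma f_T^n)/\Gamma<\varepsilon/2$, yields $\sup_k\int\sum_t f_t^n\,\mathrm{d}Q^k\to 0$ as $n\to\infty$. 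Combined with \eqref{ineqoutsidecpt}, which gives $1+\sum_{t,j}\lvert x_t^j\rvert\leq\sum_t f_t^n$ outside $\mathfrak{K}(n):=\mathfrak{K}_0(n)\times\dots\times\mathfrak{K}_T(n)$, this delivers both the tightness of $\{Q^k\}$ and the uniform integrability of $x\mapsto 1+\sum_{t,j}\lvert x_t^j\rvert$ against $\{Q^k\}$.

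By Prokhorov's theorem, a subsequence $Q^{k_l}$ converges weakly (against $\mathcal{C}_b(\Omega)$) to some $Q\in\mathrm{Prob}(\Omega)$. I would upgrade the convergence to $\sigma((C_{0:T})^{\ast},C_{0:T})$ by approximating any $\varphi\in C_{0:T}$ by the bounded truncations $\varphi\wedge M\vee(-M)\in\mathcal{C}_b(\Omega)$ and estimating the residual using the uniform integrability bound, noting that $\lvert\varphi\rvert>M$ forces $1+\sum_{t,j}\lvert x_t^j\rvert>M/\lVert\varphi\rVert_{0:T}$. Fatou's lemma applied to $(1+\sum_{t,j}\lvert x_t^j\rvert)\wedge M$ and monotone convergence as $M\to\infty$ then place $Q$ in $\mathrm{Prob}^1(\Omega)$.

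To close, I would note that $\mathcal{A}^\circ$ is $\sigma((C_{0:T})^\ast,C_{0:T})$-closed as an intersection of half-spaces, so $Q\in\mathcal{A}^\circ$; and that $\mathcal{D}$, being the supremum of $\sigma(\bigtimes_{t=0}^T(C_{0:t})^{\ast},\mathcal{E})$-continuous affine functionals by \eqref{dualrepdivgenabstract}, is lower semicontinuous, so $\mathcal{D}(Q)\leq\liminf_l\mathcal{D}(Q^{k_l})\leq\Xi$ once one observes that the marginals $Q^{k_l}_t$ converge to $Q_t$ in $\sigma((C_{0:t})^\ast,C_{0:t})$ (and hence in the coarser $\sigma((C_{0:t})^\ast,\mathcal{E}_t)$, since $\mathcal{E}_t\subseteq C_{0:t}$). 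Therefore $Q\in\mathcal{L}_\Xi$. The principal technical obstacle is the upgrade from $\mathcal{C}_b(\Omega)$-weak convergence to $C_{0:T}$-convergence via the uniform integrability estimate above; the remaining ingredients (closedness of $\mathcal{A}^\circ$ and lsc of $\mathcal{D}$) are direct consequences of the setup in Section \ref{sectionsetting}.
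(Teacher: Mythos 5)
Your proof is correct, but it takes a genuinely different route from the paper's. The paper works entirely on the dual side of $\mathfrak{P}$: it invokes Lemma \ref{propdualrepre} (and hence hypothesis \eqref{BB}) to get that $\mathfrak{P}$ is finite, concave and norm continuous on $C_{0:T}$, deduces from this that the sublevel set $\{\lambda \mid \mathfrak{P}^{\ast}(\lambda)\leq \Xi\}$ is norm-bounded and hence sits inside a weak$^{\ast}$-compact ball (Banach--Alaoglu), observes that it is weak$^{\ast}$-closed because $\mathfrak{P}^{\ast}$ is weak$^{\ast}$-lsc by definition, and then translates back to measures via the conjugacy identity of Lemma \ref{propformaconiugata} and the Daniell--Stone representation of Lemma \ref{proprepreconintegralsc}. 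You instead argue directly on the measure side: the Fenchel inequality tested against $\Gamma f^{n}$, combined with \eqref{ineqoutsidecpt}--\eqref{condconvergencecalls}, yields a uniform tail estimate for $1+\sum_{t,j}\lvert x_{t}^{j}\rvert$ over the sublevel set, whence tightness, Prokhorov, and an upgrade from $\mathcal{C}_{b}$-weak convergence to $C_{0:T}$-convergence by truncation; lower semicontinuity of $\mathcal{D}$ and weak$^{\ast}$-closedness of $\mathcal{A}^{\circ}$ keep the limit in the set. Your estimate is essentially the one the paper deploys later in Lemma \ref{claimconclusion} (cf.\ \eqref{goingtosublevels}), so nothing exotic is being used. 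What your route buys: it does not use \eqref{BB} or any property of $\mathfrak{P}$ at all, and Prokhorov hands you a countably additive limit for free, so you bypass the Daniell--Stone step. What the paper's route buys: genuine (net) compactness rather than only sequential compactness. Your argument as written delivers sequential compactness; since your moment bound also gives norm-boundedness, full compactness would follow once one shows the set is weak$^{\ast}$-closed in $(C_{0:T})^{\ast}$, and that is precisely the point where one must rule out purely finitely additive weak$^{\ast}$-limit functionals, i.e.\ where Lemma \ref{proprepreconintegralsc} re-enters. Only sequential compactness is used downstream (Lemmas \ref{claimconclusion} and Proposition \ref{propstability}), so this is a cosmetic rather than substantive shortfall, but it is worth stating explicitly which of the two properties you are proving.
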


\begin{proof}
We show that $\{\lambda \in (C_{0:T})^{\ast }\mid \lambda \geq 0,\lambda
(1)=1,\mathfrak{P}^{\ast }(\lambda )\leq \Xi \}$ is weak$^{\ast }$
-(sequentially) compact. To do so, it is  enough to prove that $%
\{\lambda \in (C_{0:T})^{\ast }\mid \mathfrak{P}^{\ast }(\lambda )\leq \Xi
\} $ is weak$^{\ast }$-(sequentially) compact. Once this is done notice
that, combining the fact that $\sigma _{\mathcal{A}}=\delta _{\mathcal{A}%
^{\circ }} $ and Lemma \ref{propformaconiugata}, $\{\lambda \in
(C_{0:T})^{\ast }\mid \mathfrak{P}^{\ast }(\lambda )\leq \Xi \}=\{\lambda
\in (C_{0:T})^{\ast }\mid \lambda \geq 0,$ $\lambda (1)=1,\mathcal{D}%
(\lambda )\leq \Xi \}\cap \mathcal{A}^{\circ }$. Since by Lemma \ref%
{proprepreconintegralsc} there is a natural identification between
normalized nonnegative functionals in $\mathrm{dom}(\mathfrak{P}^{\ast })$
and the measures in $\mathrm{Prob}^{1}(\Omega )$, the proof will be
complete. In order to prove that $\{\lambda \in (C_{0:T})^{\ast }\mid 
\mathfrak{P}^{\ast }(\lambda )\leq \Xi \} $ is weak$^{\ast }$-(sequentially)
compact, observe first that by \eqref{dualrepreabstractgeneral} we have for
every $r>0$ and $\lambda \in (C_{0:T})^{\ast }\text{ s.t. }\mathfrak{P}%
^{\ast }(\lambda )\leq \Xi $

\begin{equation}
\sup_{\substack{ c\in C_{0:T},  \\ \left\Vert c\right\Vert _{0:T}\leq r}}%
\left\vert \langle c,\lambda \rangle \right\vert =\sup_{\substack{ c\in
C_{0:T},  \\ \left\Vert c\right\Vert _{0:T}\leq r}}\langle c,\lambda \rangle
\leq \sup_{\substack{ c\in C_{0:T},  \\ \left\Vert c\right\Vert _{0:T}\leq r 
}}\left( -\mathfrak{P}(-c)\right) +\mathfrak{P}^{\ast }(\lambda )\leq \Xi
+\sup _{\substack{ c\in C_{0:T},  \\ \left\Vert c\right\Vert _{0:T}\leq r}}%
\left( -\mathfrak{P}(-c)\right) \,.  \label{boundfornorms}
\end{equation}%
Now since $-\mathfrak{P}(\cdot )$ is real valued, convex and continuous on $%
C_{0:T}$ (Lemma \ref{propdualrepre}) it follows from \cite{Aliprantis}
Theorem 5.43 that the RHS in (\ref{boundfornorms}) is finite for some $r>0$.
Then, the operator norms of elements of the set $\{\lambda \in
(C_{0:T})^{\ast }\mid \mathfrak{P}^{\ast }(\lambda )\leq \Xi \}$ are
uniformly bounded, implying that $\{\lambda \in (C_{0:T})^{\ast }\mid 
\mathfrak{P}^{\ast }(\lambda )\leq \Xi \}$ is contained in some (weak$^{\ast
}$ compact, by Banach Alaoglu Theorem, and sequentially compact by
separability of $C_{0:T}$, see \cite{Aliprantis} Theorem 6.30) ball of $%
(C_{0:T})^{\ast }$. Since $\mathfrak{P}^{\ast }$ is weak$^{\ast }$ lower
semicontinuous by its own very definition, its sublevel sets are weak$^{\ast
}$ closed. This concludes the proof of weak$^{\ast }$-(sequential)
compactness of $\{\lambda \in (C_{0:T})^{\ast }\mid \mathfrak{P}^{\ast
}(\lambda )\leq \Xi \}$.
\end{proof}

\begin{lemma}
\label{claimconclusion} Under Assumption \ref{asscallstozero}, for every
lower semicontinuous $c:\Omega \rightarrow (-\infty ,+\infty ]$ satisfying %
\eqref{controlfrombelow}, the duality \eqref{dualrepreabstract} holds and,
if $\mathfrak{P}(c)<+\infty $, the infimum in \eqref{dualrepreabstract} is a
minimum.
\end{lemma}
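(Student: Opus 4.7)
The plan splits on whether the auxiliary condition \eqref{BB} holds. If $\mathfrak{P}(\widehat c)=+\infty$ for every $\widehat c\in B_{0:T}$, set $\widehat c_A(x):=-A(1+\sum_{t=0}^T\sum_{j=1}^d|x_t^j|)\in B_{0:T}$. Any feasible $(\varphi,z)$ for $\widehat c_A$ is feasible for $c$ (direct monotonicity of $\mathfrak{P}$ in $c$, which does not require \eqref{BB}), so $\mathfrak{P}(c)\geq \mathfrak{P}(\widehat c_A)=+\infty$. At the same time Remark \ref{remsufficientfiniteabovep}(b), together with $\mathfrak{P}(0)=+\infty$, forces $\mathrm{Prob}^{1}(\Omega)\cap\mathcal{A}^\circ\cap\mathrm{dom}(\mathcal{D})=\emptyset$, so the RHS of \eqref{dualrepreabstract} is $+\infty$ under the convention $\inf\emptyset=+\infty$, and attainment is vacuous. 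I therefore assume \eqref{BB} holds throughout the rest of the argument, so that Lemmas \ref{propdualrepre}--\ref{corcpt} are all at hand.

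Next I would approximate $c$ from below by elements of $C_{0:T}$. Setting $h(x):=A(1+\sum_{t,j}|x_t^j|)\in C_{0:T}$ and $g:=c+h\geq 0$ lower semicontinuous, I define
\begin{equation*}
g_n(x):=\inf_{y\in\Omega}\Bigl\{g(y)+n\sum_{t,j}|x_t^j-y_t^j|\Bigr\},\qquad c_n:=g_n-h.
\end{equation*}
Standard arguments show $g_n$ is non-negative and $n$-Lipschitz, has linear growth (so $g_n\in C_{0:T}$), and $g_n\uparrow g$ pointwise because $g$ is lower semicontinuous. Hence $c_n\in C_{0:T}$, $c_n\uparrow c$, and $c_n\geq -h$. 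Lemma \ref{claimdualpartial} then yields $\mathfrak{P}(c_n)=E_{Q_n}[c_n]+\mathcal{D}(Q_n)=\mathfrak{D}(c_n)$ for a minimizer $Q_n\in\mathrm{Prob}^1(\Omega)\cap\mathcal{A}^\circ$, where I write $\mathfrak{D}(\,\cdot\,):=\inf_{Q\in\mathrm{Prob}^1(\Omega)\cap\mathcal{A}^\circ}(E_Q[\,\cdot\,]+\mathcal{D}(Q))$ for brevity.

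Weak duality $\mathfrak{P}(c)\leq \mathfrak{D}(c)$ is immediate: for feasible $(\varphi,z)$ and any admissible $Q$, the definition of $\mathcal{D}$ gives $S^U(\varphi)\leq \sum_tE_{Q_t}[\varphi_t]+\mathcal{D}(Q)$, and $\sum_tE_{Q_t}[\varphi_t]=E_Q[\sum_t\varphi_t]\leq E_Q[c-z]\leq E_Q[c]$, the last step because $z\in -\mathcal{A}$ and $Q\in\mathcal{A}^\circ$ yield $E_Q[z]\geq 0$. Monotonicity gives $L:=\sup_n\mathfrak{D}(c_n)\leq \mathfrak{D}(c)$, and the remaining task is to show $L\geq \mathfrak{D}(c)$ with attainment. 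Assuming $L<+\infty$ (else trivial), I use the Fenchel inequality \eqref{fenchelgeneral} with $\varphi=\Gamma(f_0^m,\dots,f_T^m)$ to obtain
\begin{equation*}
\Gamma\sum_{t=0}^TE_{Q_n}[f_t^m]\leq \mathcal{D}(Q_n)+V(\Gamma f_0^m,\dots,\Gamma f_T^m).
\end{equation*}
Combining with $\sum_tf_t^m\geq 1+\sum_{t,j}|x_t^j|$ off the compact set $\mathfrak{K}(m):=\bigtimes_t\mathfrak{K}_t(m)$, where $1+\sum_{t,j}|x_t^j|\leq M_m<\infty$, gives $E_{Q_n}[1+\sum_{t,j}|x_t^j|]\leq M_m+\mathcal{D}(Q_n)/\Gamma+V(\Gamma f^m)/\Gamma$. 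Inserting this into $\mathcal{D}(Q_n)=\mathfrak{D}(c_n)-E_{Q_n}[c_n]\leq L+AE_{Q_n}[1+\sum_{t,j}|x_t^j|]$ (using $c_n\geq -h$) and choosing $\Gamma>A$ together with $m$ large enough that $V(\Gamma f^m)<+\infty$ (Assumption \ref{asscallstozero}(ii)) absorbs $\mathcal{D}(Q_n)$ on the right and produces a uniform bound $\sup_n\mathcal{D}(Q_n)<+\infty$.

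Lemma \ref{corcpt} then supplies a subsequence (not relabeled) $Q_n\to Q^\ast$ in $\sigma((C_{0:T})^\ast,C_{0:T})$, with $Q^\ast\in\mathrm{Prob}^1(\Omega)\cap\mathcal{A}^\circ$ and $\mathcal{D}(Q^\ast)\leq \liminf_n\mathcal{D}(Q_n)$. For the cost term, fix $m$: for $n\geq m$, $c_n\geq c_m\in C_{0:T}$, hence $\liminf_n E_{Q_n}[c_n]\geq \lim_n E_{Q_n}[c_m]=E_{Q^\ast}[c_m]$. Monotone convergence applied to $c_m+h\geq 0$ (legitimate since $h\in L^1(Q^\ast)$ as $Q^\ast\in\mathrm{Prob}^1(\Omega)$) gives $E_{Q^\ast}[c_m]\uparrow E_{Q^\ast}[c]$, so $\liminf_n E_{Q_n}[c_n]\geq E_{Q^\ast}[c]$. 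Consequently $L\geq E_{Q^\ast}[c]+\mathcal{D}(Q^\ast)\geq \mathfrak{D}(c)$, and the chain $\mathfrak{P}(c)\leq \mathfrak{D}(c)\leq L=\sup_n\mathfrak{P}(c_n)\leq \mathfrak{P}(c)$ collapses to equalities, exhibiting $Q^\ast$ as an optimum whenever $\mathfrak{P}(c)<+\infty$. The main obstacle is the tightness step: the lower bound \eqref{controlfrombelow} and the dominating sequence $\{f_t^m\}$ from Assumption \ref{asscallstozero}(ii) must be balanced via the choice $\Gamma>A$ so that the contribution of $-AE_{Q_n}[1+\sum|x|]$ can be re-absorbed into $\mathcal{D}(Q_n)$, which is precisely what the growth hypothesis is tailored to permit.
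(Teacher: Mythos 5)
Your proof is correct and follows essentially the same route as the paper's: weak duality via the Fenchel inequality, monotone approximation of $c$ from below by elements of $C_{0:T}$, dual attainment for each approximant via Lemma \ref{claimdualpartial}, a uniform bound on $\mathcal{D}(Q_n)$ derived from the growth condition, and weak$^{*}$ sequential compactness (Lemma \ref{corcpt}) together with lower semicontinuity to pass to the limit. The differences are only refinements of detail: you make the approximating sequence and the case distinction on \eqref{BB} explicit, and your tightness estimate works with the functions $f_t^m\in\mathcal{E}_t$ from Assumption \ref{asscallstozero}(ii) where the paper's display \eqref{goingtosublevels} plugs $\eta_t$ into $V$ directly.
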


\begin{proof}
Take $c$ as in the statement. Observe that, from the definition of $%
\mathfrak{P}$ and the Fenchel inequality on $S^{U},$ for any $Q\in \mathrm{%
Prob}^{1}(\Omega )\cap \mathcal{A}^{\circ }$ we have 
\begin{align*}
\mathfrak{P}(c)& =\sup_{z\in -\mathcal{A}}\sup_{\varphi \in \mathbf{\Phi }%
_{z}(c)}S^{U}\left( \varphi \right) \leq \sup_{z\in -\mathcal{A}%
}\sup_{\varphi \in \mathbf{\Phi }_{z}(c)}\left( (S^{U})^{\ast }({Q}%
_{0},\dots ,Q_{T})+E_{Q}\left[ \sum_{t=0}^{T}\varphi _{t}\right] \right) \\
& \overset{{Q}\in \mathrm{Prob}^{1}(\Omega )\cap \mathcal{A}^{0}}{\leq }%
\sup_{z\in -\mathcal{A}}\sup_{\varphi \in \mathbf{\Phi }_{z}(c)}\left(
(S^{U})^{\ast }({Q}_{0},\dots ,Q_{T})+E_{Q}\left[ \sum_{t=0}^{T}\varphi
_{t}+z\right] \right) \\
& \overset{\text{L.}\ref{propformaconiugata}}{\leq }\sup_{z\in -\mathcal{A}%
}\sup_{\varphi \in \mathbf{\Phi }_{z}(c)}\left( E_{Q}\left[ c\right] +%
\mathcal{D}({Q})\right) \\
& =E_{Q}\left[ c\right] +\mathcal{D}({Q}).
\end{align*}%
Hence:%
\begin{equation}
\mathfrak{P}(c)\leq \inf_{Q\in \mathrm{Prob}^{1}(\Omega )\cap \mathcal{A}%
^{\circ }}E_{Q}\left[ c\right] +\mathcal{D}({Q})  \label{PPPinfty}
\end{equation}%
The case $\mathfrak{P}(c)=+\infty $ is thus trivial and we now focus on the
case $\mathfrak{P}(c)<+\infty $. Let $c^{A}(x):=-A\left(
1+\sum_{t=0}^{T}\sum_{j=1}^{d}\left\vert x_{t}^{j}\right\vert \right) ,x\in
\Omega .$ Then $c\geq c^{A}\in C_{0:T}$ and $\mathfrak{P}(c^{A})\leq 
\mathfrak{P}(c)<+\infty $, as can be easily verified.

%
%

A standard argument produces a sequence $(c_{n})_{n}\subseteq C_{0:T}$ with $%
c_{n}\uparrow _{n}c$ pointwise on $\Omega $. We claim that, given a sequence
of optima for the dual problems of $\mathfrak{P}(c_{n})$, taking a suitable
converging subsequence the limit $\widehat{Q}$ satisfies $\widehat{Q}\in 
\mathrm{Prob}^{1}(\Omega )\cap \mathcal{A}^{\circ }$ and $E_{\widehat{Q}}%
\left[ c\right] +\mathcal{D}(\widehat{Q})\leq \mathfrak{P}(c)$. This and (%
\ref{PPPinfty}) will then imply \eqref{dualrepreabstract}.

To prove the claim, recall from Lemma \ref{claimdualpartial} and $+\infty >%
\mathfrak{P}(c)\geq \mathfrak{P}(c_{n})$, that each dual problem for $%
\mathfrak{P}(c_{n})$ admits an optimum, call it $Q^{n}\in \mathrm{Prob}%
^{1}(\Omega )\cap \mathcal{A}^{\circ }$. We proceed observing that $\mathcal{D}(\probq^n)\in\R$ for every $n$ and
\begin{equation}
\mathfrak{P}(c_{n})=E_{Q^{n}}[c_{n}]+\mathcal{D}(Q^{n})\geq -E_{Q^{n}}\left[
\left\Vert c_{1}\right\Vert _{0:T}\left( 1+\sum_{t=0}^{T}\eta _{t}\right) %
\right] +\mathcal{D}(Q^{n})  \label{goingtosublevels}
\end{equation}%
where we set $\eta _{t}(x_{t})=\sum_{j=1}^{d}\left\vert x_{t}^{j}\right\vert
,x_{t}\in K_{t}$. Now by Fenchel inequality \eqref{fenchelgeneral},
mimicking the argument in \eqref{56}, 
\begin{equation*}
E_{Q^{n}}\left[ \left\Vert c_{1}\right\Vert _{0:T}\sum_{t=0}^{T}\eta _{t}%
\right] \leq \frac{1}{2}\mathcal{D}(Q^{n})+\frac{1}{2}V\left( 2\left\Vert
c_{1}\right\Vert _{0:T}\eta _{0},\dots ,2\left\Vert c_{1}\right\Vert
_{0:T}\eta _{T}\right) \,.
\end{equation*}%
Going back to \eqref{goingtosublevels} we then get%
\begin{equation}  \label{goingtosublevels2}
\mathfrak{P}(c_{n})\geq \zeta +\frac{1}{2}\mathcal{D}(Q^{n})
\end{equation}%
where $\zeta \in {\mathbb{R}}$ is a constant depending on $c_{1},V,\eta
_{0},\dots ,\eta _{T}$. Since now $\mathfrak{P}(c_{n})\leq \mathfrak{P}%
(c)<+\infty $ we conclude that $\sup_{n}\mathcal{D}(Q^{n})<+\infty $, which
in turns implies that the sequence $(Q_{n})_{n}$ lies in $\{Q\in \mathrm{Prob%
}^{1}(\Omega )\cap \mathcal{A}^{\circ }\mid \mathcal{D}(Q)\leq \Xi \}$ for $%
\Xi \in {\mathbb{R}}$ big enough. We know that the latter set is weak$^{\ast
}$ sequentially compact by Lemma \ref{corcpt}, thus we can extract a weak$%
^{\ast }$ converging subsequence, which we rename again $(Q^{n})_{n}$, say
converging to a $\widehat{Q}\in \mathrm{Prob}^{1}\cap \mathcal{A}^{\circ }$.
Now it is easily seen that 
\begin{align*}
E_{\widehat{Q}}\left[ c\right] +\mathcal{D}(\widehat{Q})& =\lim_{n}E_{%
\widehat{Q}}\left[ c_{n}\right] +\mathcal{D}(\widehat{Q}) \\
& \overset{(\star )}{\leq }\lim_{n}\liminf_{m}\left( E_{Q^{m}}[c_{n}]+%
\mathcal{D}(Q^{m})\right) \\
& \overset{(\star \star )}{\leq }\lim_{n}\liminf_{m}\left( E_{Q^{m}}[c_{m}]+%
\mathcal{D}(Q^{m})\right) \\
& =\liminf_{m}\left( E_{Q^{m}}[c_{m}]+\mathcal{D}(Q^{m})\right) =\lim_{m}%
\mathfrak{P}(c^{m})\leq \mathfrak{P}(c)
\end{align*}%
where in $(\star )$ we exploited the fact that $Q\mapsto E_{Q}[c_{n}]\,%
\mathrm{d}Q+\mathcal{D}(Q)$ is weak$^{\ast }$ lower semicontinuous being sum
of weak$^{\ast }$ lower semicontinuous functionals, and in $(\star \star )$
we used the fact that $c_{n}\leq c_{m}$ if $m\geq n$.
\end{proof}
\subsection{Convergence of EMOT}

\label{Secstab} In this Section \ref{Secstab} we study some stability and
convergence results for the EMOT problem. In particular, we show how under
suitable convergence assumptions on the penalty terms, one can see the
classical MOT as a limit case for EMOT.

We suppose that for each $n\in \mathbb{N}\cup \{\infty \} $ we are given a
functional $U_{n}$ and a set $\mathcal{A}_{n}\subseteq C_{0:T}$. We denote
the corresponding problem as in \eqref{deffrakPgeneral} by $\mathfrak{P}%
_{n}(c)$.

\begin{proposition}
\label{propstability} Suppose that, for each $n\in \mathbb{N}\cup \{\infty
\},$ the same assumptions of Theorem \ref{mainEMOTtheoremgeneral} hold for $%
\mathfrak{P}_{n}(c)$ and that $\mathfrak{P}_{n}(c)<+\infty $. Suppose that 
\begin{equation}
\begin{split}
& \mathcal{D}_{\infty }(Q)+\sigma _{\mathcal{A_{\infty }}}(Q)=\sup_{n\in 
\mathbb{N}}\left( \mathcal{D}_{n}(Q)+\sigma _{\mathcal{A}_{n}}(Q)\right) \\
& \mathcal{D}_{n+1}(Q)+\sigma _{\mathcal{A}_{n+1}}(Q)\geq \mathcal{D}%
_{n}(Q)+\sigma _{\mathcal{A}_{n}}(Q),\quad n\in \mathbb{N},
\end{split}
\label{nondecreasing}
\end{equation}%
for every $Q\in \mathrm{Prob}^{1}(\Omega )$. 
Then $\mathfrak{P}_{n}(c)\uparrow _{n}\mathfrak{P}_{\infty }(c)$ for every $%
c:\Omega \rightarrow (-\infty ,+\infty ]$ which is lower semicontinuous and
satisfies \eqref{controlfrombelow}.
\end{proposition}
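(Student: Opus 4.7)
The plan is to derive both inequalities in $\mathfrak{P}_n(c)\uparrow \mathfrak{P}_\infty(c)$ from the dual representation of Theorem \ref{mainEMOTtheoremgeneral}. For each $n\in\mathbb{N}\cup\{\infty\}$, that theorem gives
\begin{equation*}
\mathfrak{P}_n(c)=\min_{Q\in\mathrm{Prob}^1(\Omega)}F_n(Q),\qquad F_n(Q):=E_Q[c]+\mathcal{D}_n(Q)+\sigma_{\mathcal{A}_n}(Q),
\end{equation*}
with attainment at some $Q_n\in\mathrm{Prob}^1(\Omega)\cap\mathcal{A}_n^\circ$. Assumption \eqref{nondecreasing} says $F_n(Q)\uparrow F_\infty(Q)$ pointwise in $Q$, so $\mathfrak{P}_n(c)=\inf_Q F_n(Q)$ is nondecreasing and $L:=\lim_n\mathfrak{P}_n(c)\leq\mathfrak{P}_\infty(c)$.

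For the reverse inequality I would first establish uniform tightness of $(Q_n)_n$. From $F_n(Q_n)\leq \mathfrak{P}_\infty(c)<+\infty$ and monotonicity \eqref{nondecreasing}, $F_1(Q_n)\leq \mathfrak{P}_\infty(c)$; in particular $\sigma_{\mathcal{A}_1}(Q_n)<+\infty$, whence $\sigma_{\mathcal{A}_1}(Q_n)=0$ and $Q_n\in\mathcal{A}_1^\circ$. Next I would replicate the Fenchel-inequality computation from the proof of Lemma \ref{claimconclusion}, but applied to the functionals $V_1,\mathcal{D}_1$ of the fixed problem $\mathfrak{P}_1$: using the lower bound \eqref{controlfrombelow} on $c$, the Fenchel inequality \eqref{fenchelgeneral} for $V_1$, and the dominating call-type functions supplied by Assumption \ref{asscallstozero} applied to $\mathfrak{P}_1$, one obtains a constant $\zeta_1\in\mathbb{R}$ with
\begin{equation*}
\mathfrak{P}_n(c)\geq E_{Q_n}[c]+\mathcal{D}_1(Q_n)\geq \zeta_1+\tfrac{1}{2}\mathcal{D}_1(Q_n).
\end{equation*}
Hence $\sup_n\mathcal{D}_1(Q_n)<+\infty$, and Lemma \ref{corcpt} applied to $\mathfrak{P}_1$ says that the sublevel set of $\mathcal{D}_1+\sigma_{\mathcal{A}_1}$ in $\mathrm{Prob}^1(\Omega)\cap\mathcal{A}_1^\circ$ at the resulting level is sequentially weakly$^*$ compact. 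Thus, up to a subsequence, $Q_{n_k}\to\widehat{Q}$ weakly$^*$ with $\widehat{Q}\in\mathrm{Prob}^1(\Omega)$.

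I would conclude by a double-indexed lower-semicontinuity argument. For each fixed $m$, $Q\mapsto F_m(Q)$ is weakly$^*$ lower semicontinuous: $\mathcal{D}_m$ and $\sigma_{\mathcal{A}_m}$ are lsc by their Fenchel-conjugate and support-function definitions, while $E_Q[c]=\sup_j E_Q[c_j]$ for a sequence $c_j\uparrow c$ in $C_{0:T}$ constructed as in the proof of Lemma \ref{claimconclusion}, each $Q\mapsto E_Q[c_j]$ being weakly$^*$ continuous. Combining this with monotonicity \eqref{nondecreasing}, for all $n_k\geq m$,
\begin{equation*}
F_m(Q_{n_k})\leq F_{n_k}(Q_{n_k})=\mathfrak{P}_{n_k}(c),
\end{equation*}
so $F_m(\widehat{Q})\leq\liminf_k F_m(Q_{n_k})\leq L$. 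Letting $m\uparrow\infty$ and invoking \eqref{nondecreasing} once more,
\begin{equation*}
\mathfrak{P}_\infty(c)\leq F_\infty(\widehat{Q})=\sup_m F_m(\widehat{Q})\leq L,
\end{equation*}
which, together with the first direction, yields $\mathfrak{P}_n(c)\uparrow\mathfrak{P}_\infty(c)$.

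The main obstacle is the uniform tightness step: the direct bound only controls $\mathcal{D}_n(Q_n)+\sigma_{\mathcal{A}_n}(Q_n)$, but the compactness delivered by Lemma \ref{corcpt} pertains to sublevel sets of one fixed penalty. Monotonicity \eqref{nondecreasing} is precisely what lets one replace $\mathcal{D}_n,\mathcal{A}_n$ by $\mathcal{D}_1,\mathcal{A}_1$ in the Fenchel estimate and confine every optimizer $Q_n$ inside a single weakly$^*$ compact set, independent of $n$.
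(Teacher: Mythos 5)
Your argument is correct and follows essentially the same route as the paper's proof: the easy inequality from pointwise monotonicity of the dual objectives, confinement of the dual optimizers in a single weak$^{*}$ compact sublevel set of $\mathcal{D}_1+\sigma_{\mathcal{A}_1}$ via the Fenchel estimate of \eqref{goingtosublevels2} together with Lemma \ref{corcpt}, and then weak$^{*}$ lower semicontinuity combined with the supremum representation of the limiting penalty. The only (harmless) cosmetic difference is that you bundle $E_Q[c]$ and the penalty into one functional $F_m$ and run a single lower-semicontinuity argument, where the paper treats them separately with extra subsequence extractions.
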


\begin{proof}
From Lemma \ref{claimdualpartial} we see that each dual problem for $%
\mathfrak{P}(c_{n})$ admits an optimum, call it $Q^{n}\in \mathrm{Prob}%
^{1}(\Omega )\cap \mathcal{A}_{n}^{\circ }$. for each $n\in \mathbb{N}$.
Observe that $+\infty>\mathfrak{P}_{\infty }(c)\geq \sup_{n}\mathfrak{P}%
_{n}(c)=\lim_{n}\mathfrak{P}_{n}(c)$ and that, with an argument similar to
the one yielding \eqref{goingtosublevels2},%
\begin{align*}
\mathfrak{P}_{n}(c)& =E_{Q^{n}}[c]+\mathcal{D}_{n}(Q^{n})+\sigma _{\mathcal{A%
}_{n}}(Q^{n})\geq E_{Q^{n}}\left[ -\left\Vert c\right\Vert
_{0:T}\sum_{t=0}^{T}\psi _{t}\right] +\mathcal{D}_{n}(Q^{n})+\sigma _{%
\mathcal{A}_{n}}(Q^{n}) \\
& \geq -\frac{1}{2}\mathfrak{P}_{n}^{\ast }(Q^n)+\frac{1}{2}\mathfrak{P}%
_{n}\left( -2\left\Vert c\right\Vert _{0:T}\sum_{t=0}^{T}\psi _{t}\right) +%
\mathcal{D}_{n}(Q^{n})+\sigma _{\mathcal{A}_{n}}(Q^{n}) \\
& \overset{L.\ref{propformaconiugata}}{= }\frac{1}{2}\mathfrak{P}%
_{n}\left( -2\left\Vert c\right\Vert _{0:T}\sum_{t=0}^{T}\psi _{t}\right) +%
\frac{1}{2}\mathcal{D}_{n}(Q^{n})+\sigma _{\mathcal{A}_{n}}(Q^{n}) \\
& \geq \frac{1}{2}\mathfrak{P}_{1}\left( -2\left\Vert c\right\Vert
_{0:T}\sum_{t=0}^{T}\psi _{t}\right) +\frac{1}{2}\mathcal{D}%
_{1}(Q^{n})+\sigma _{\mathcal{A}_{1}}(Q^{n}).
\end{align*}%
As a consequence, for some constant $\eta $,

\begin{equation*}
+\infty >\mathfrak{P}_{\infty }(c)\geq \mathfrak{P}_{n}(c)\geq \eta +\frac{1%
}{2}\mathcal{D}_{1}(Q^{n})+\sigma _{\mathcal{A}_{1}}(Q^{n})
\end{equation*}

%
Hence, all the measures $(Q_{n})_{n}$ belong to the a sublevel in the form $%
\{Q\in \mathrm{Prob}^{1}(\Omega )\cap (\mathcal{A}_{1}\mathcal{)}^{\circ
}\mid \mathcal{D}_{1}(Q)\leq \Xi \}$ which is $\sigma (\mathrm{Prob}%
^{1}(\Omega ),C_{0:T})$-(sequentially) compact by Lemma \ref{corcpt}.
Extract a subsequence, which will be renamed again $(Q_{n})_{n}$, converging
to the limit $Q^{\infty }\in \mathrm{Prob}^{1}(\Omega )$. Since $\mathcal{D}_{n},\sigma _{\mathcal{A}_{n}}$ are lower
semicontinuous, so is $\mathcal{D}_{n}+\sigma _{\mathcal{A}_{n}}$ for every $%
n\in \mathbb{N}\cap \{\infty \}$. Hence 
\begin{align*}
\mathcal{D}_{\infty }(Q^{\infty })+\sigma _{\mathcal{A}_{\infty }}(Q^{\infty
})& \overset{\eqref{nondecreasing}}{=}\sup_{K}\left( \mathcal{D}%
_{K}(Q^{\infty })+\sigma _{\mathcal{A}_{K}}(Q^{\infty })\right) \leq
\sup_{K}\liminf_{n}\left( \mathcal{D}_{K}(Q^{n})+\sigma _{\mathcal{A}%
_{K}}(Q^{n})\right) \\
& \overset{\eqref{nondecreasing}}{\leq }\sup_{K}\liminf_{n}\left( \mathcal{D}%
_{n}(Q^{n})+\sigma _{\mathcal{A}_{n}}(Q^{n})\right) =\liminf_{n}\left( 
\mathcal{D}_{n}(Q^{n})+\sigma _{\mathcal{A}_{n}}(Q^{n})\right) \,.
\end{align*}%
Up to taking a further subsequence, again renamed $(Q_{n})_{n}$ we might as
well assume that the $\liminf $ above is in fact a limit, so that 
\begin{equation*}
\mathcal{D}_{\infty }(Q^{\infty })+\sigma _{\mathcal{A}_{\infty }}(Q^{\infty
})\leq \lim_{n}\left( \mathcal{D}_{n}(Q^{n})+\sigma _{\mathcal{A}%
_{n}}(Q^{n})\right) \,.
\end{equation*}%
Since now $c:\Omega \rightarrow (-\infty ,+\infty ]$ is lower semicontinuous
and satisfies \eqref{controlfrombelow} for some $A\geq 0$, there exists a
sequence $(c_{n})_{n}\subseteq C_{0:T}$ with $c_{n}\uparrow _{n}c$ pointwise
on $\Omega $, just as in the proof of Lemma \ref{claimconclusion}. Notice
that by Monotone Convergence Theorem we then have that $E_{Q}\left[ c\right]
=\sup_{n}E_{Q}\left[ c_{n}\right] $. We conclude that $Q\mapsto E_{Q}\left[ c%
\right] $ is the supremum of linear functional, each being continuous w.r.t
the topology induced by $\sigma ((C_{0:T})^{\ast },C_{0:T})$ on $\mathrm{Prob%
}^{1}(\Omega )$. Then, $Q\mapsto E_{Q}\left[ c\right] $ is lower
semicontinuous w.r.t. such topology and $E_{Q^{\infty }}[c]\leq
\liminf_{n}E_{Q^{n}}[c]$. Passing again to a further subsequence we can
assume that $\liminf_{n}E_{Q^{n}}[c]=\lim_{n}E_{Q^{n}}[c]$. From the
previous arguments we then get 
\begin{align*}
\mathfrak{P}_{\infty }(c)& \leq E_{Q^{\infty }}[c]+\mathcal{D}_{\infty
}(Q^{\infty })+\sigma _{\mathcal{A}_{\infty }}(Q^{\infty })\leq
\lim_{n}E_{Q^{n}}[c]+\mathcal{D}_{\infty }(Q^{\infty })+\sigma _{\mathcal{A}%
_{\infty }}(Q^{\infty }) \\
& \leq \lim_{n}E_{Q^{n}}[c]+\lim_{n}\mathcal{D}_{n}(Q^{n})+\sigma _{\mathcal{%
A}_{n}}(Q^{n})=\lim_{n}\left( E_{Q^{n}}[c]+\mathcal{D}_{n}(Q^{n})+\sigma _{%
\mathcal{A}_{n}}(Q^{n})\right) =\lim_{n}\mathfrak{P}_{n}(c)\,.
\end{align*}%
where we exploited the fact that $(Q^{n})_{n}$ are optima. Since we already
know $\lim_{n}\mathfrak{P}_{n}(c)\leq \mathfrak{P}_{\infty }(c)$ this
concludes the proof of $\mathfrak{P}_{n}(c)\uparrow _{n}\mathfrak{P}_{\infty
}(c)$. 
\end{proof}
\section{Additive structure}

\label{secadditive}

In Section \ref{EMOT new}, we did not require any particular structural form
of the functionals $\mathcal{D},U$. 
Here instead, we will assume an additive structure of $U$ and,
complementarily, an additive structure of $\mathcal{D}$. In the whole
Section \ref{secadditive} we take for each $t=0,\dots ,T$ a vector subspace $%
\mathcal{E}_{t}\subseteq \mathcal{C}_{t}$ such that $\mathcal{E}_{t}+{%
\mathbb{R}}=\mathcal{E}_{t}$ and set $\mathcal{E}=\mathcal{E}_{0}\times
\dots \times \mathcal{E}_{T}$. Observe that we automatically have $\mathcal{E%
}+{\mathbb{R}}^{T+1}=\mathcal{E}$. It is also clear that $\mathcal{E}$ is a
subspace of $(C_{0:T})^{T+1},$ if we interpret $\mathcal{E}_{0},\dots ,%
\mathcal{E}_{T}$ as subspaces of $\mathcal{C}_{0:T}$.
We also mention here that up to now we used for a $\lambda\in(C_{0:T})^*$ (resp.  for a measure $\mu\in\mathrm{ca(\Omega)}$) , the notation $\lambda_t$ (resp. $\mu_t$) for restrictions to $C_{0:t}$ (resp. marginals on $\Omega_{0:t}$). This was motivated by the fact that we were considering general $\mathcal{E}_t\subseteq C_{0:t}$. Since from now on we will mostly work with $\mathcal{E}_t\subseteq C_t$, we change notation slightly.
\begin{notation} For the whole Sections \ref{secadditive}, \ref{secpartcases},  \ref{sectionexamplesfornoncpaddition}, given a $\lambda\in(C_{0:T})^*$ (resp.  given a measure $\mu\in\mathrm{ca(\Omega)}$), we use the notation $\lambda_t$ (resp. $\mu_t$) for restrictions to $C_{t}$ (resp. marginals on $\bigtimes_{j=1}^d K_t^j$).
\end{notation}
\subsection{Additive structure of $U$}

\label{secaddU}

\begin{setup}
\label{SetupsigmaU} For every $t=0,\dots ,T$ we consider a proper concave
functional $U_{t}:\mathcal{E}_t\rightarrow \lbrack -\infty ,+\infty )$. We
define $\mathcal{D}_{t}$ on $\mathrm{ca}^1(K_{t})$ similarly to (\ref%
{dualrepdivgenabstract}) as 
\begin{equation*}
\mathcal{D}_t(\gamma_t):=\sup_{\varphi_t\in\mathcal{E}_t}\left(U_t(%
\varphi_t)-\int_{K_t}\varphi_t\,\mathrm{d}\gamma_t\right)\,,\,\,\,\,\,\,%
\gamma_t\in\mathrm{ca}^1(K_t)
\end{equation*}
and observe that $\mathcal{D}_{t}$ can also be thought to be defined on $%
\mathrm{ca}^1(\Omega )$ using for $\gamma \in \mathrm{ca}^1(\Omega )$ the
marginals $\gamma _{0},...,\gamma _{T} $ and setting $\mathcal{D}_{t}(\gamma
):=\mathcal{D}_{t}(\gamma _{t})$. We may now define, for each $\varphi\in%
\mathcal{E}$, $U(\varphi ):=\sum_{t=0}^{T}U_{t}(\varphi _{t})$ and define $%
\mathcal{D}$ on $\mathrm{ca}^1(\Omega )$ using \eqref{dualrepdivgenabstract}%
. Recall from \eqref{defSU} 
\begin{equation*}
S^{U}(\varphi ):=\sup_{\beta \in {\mathbb{R}}^{T+1}}\left( U(\varphi +\beta
)-\sum_{t=0}^{T}\beta _{t}\right) \text{,}\,\,\varphi \in \mathcal{E},\text{%
\quad }S^{U_{t}}(\varphi _{t}):=\sup_{\alpha \in {\mathbb{R}}}\left(
U(\varphi _{t}+\alpha )-\alpha \right) ,\,\ \varphi _{t}\in \mathcal{E}_{t}.
\label{defSU1}
\end{equation*}
\end{setup}

\begin{lemma}
\label{conjugateaddU} In Setup \ref{SetupsigmaU} and under the convention $%
+\infty -\infty =-\infty $ we have 
\begin{equation}
\mathcal{D}(\gamma )=\sum_{t=0}^{T}\mathcal{D}_{t}(\gamma )=\sum_{t=0}^{T}%
\mathcal{D}_{t}(\gamma _{t}),\text{\quad }\,\forall \,\gamma \in \mathrm{ca}%
^{1}(\Omega ),\,\,\,\,\,\,\,\,S^{U}(\varphi
)=\sum_{t=0}^{T}S^{U_{t}}(\varphi _{t})\text{ for all }\,\varphi \in 
\mathcal{E},  \label{additivestr}
\end{equation}%
and for all $\varphi \in \mathcal{E}$ 
\begin{equation*}
S^{U}(\varphi +\beta )=S^{U}(\varphi )+\sum_{t=0}^{T}\beta _{t}\text{, for }%
\beta \in {\mathbb{R}}^{T+1}\text{,\quad }S^{U_{t}}(\varphi _{t}+\beta
)=S^{U_{t}}(\varphi _{t})+\beta ,\text{ for }\beta \in {\mathbb{R}}\,.
\end{equation*}
\end{lemma}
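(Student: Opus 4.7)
The proof essentially reduces to the standard fact that the supremum of a sum of independent objective functions over a product set decomposes as the sum of the individual suprema, together with the simple identity showing cash-additivity of $S^U$ and $S^{U_t}$.

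I would first establish the additive decomposition of $\mathcal{D}$. By Setup \ref{SetupsigmaU}, $U(\varphi)=\sum_{t=0}^{T}U_t(\varphi_t)$ and $\mathcal{E}=\mathcal{E}_0\times\cdots\times\mathcal{E}_T$, so by definition \eqref{dualrepdivgenabstract}
\[
\mathcal{D}(\gamma)=\sup_{\varphi\in\mathcal{E}}\sum_{t=0}^{T}\Bigl(U_t(\varphi_t)-\int_{K_t}\varphi_t\,\mathrm{d}\gamma_t\Bigr).
\]
Since each summand depends only on the $t$-th coordinate of $\varphi$ and the $\varphi_t$ vary independently over $\mathcal{E}_t$, the supremum distributes over the sum, yielding $\mathcal{D}(\gamma)=\sum_{t=0}^T\mathcal{D}_t(\gamma_t)=\sum_{t=0}^T\mathcal{D}_t(\gamma)$. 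The only delicate point is to justify this exchange under the convention $+\infty-\infty=-\infty$: one direction, $\mathcal{D}(\gamma)\leq \sum_t\mathcal{D}_t(\gamma_t)$, is immediate by considering each coordinate separately; for the reverse, if every $\mathcal{D}_t(\gamma_t)<+\infty$, approximating each $\mathcal{D}_t(\gamma_t)$ to within $\varepsilon/(T+1)$ by some $\varphi_t^\varepsilon\in\mathcal{E}_t$ and assembling $\varphi^\varepsilon=[\varphi_0^\varepsilon,\dots,\varphi_T^\varepsilon]\in\mathcal{E}$ gives the matching lower bound, and if some $\mathcal{D}_t(\gamma_t)=+\infty$, choosing $\varphi_t^n\in\mathcal{E}_t$ with $U_t(\varphi_t^n)-\int\varphi_t^n\,\mathrm{d}\gamma_t\to+\infty$ while fixing any admissible choice at the remaining coordinates (possible because each $U_s$ is proper) drives $\mathcal{D}(\gamma)$ to $+\infty$ as well.

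Next I would derive the additive structure of $S^U$ from the additive structure of $U$. By the definition in \eqref{defSU1},
\[
S^U(\varphi)=\sup_{\beta\in\mathbb{R}^{T+1}}\Bigl(\sum_{t=0}^T U_t(\varphi_t+\beta_t)-\sum_{t=0}^T\beta_t\Bigr)=\sup_{\beta\in\mathbb{R}^{T+1}}\sum_{t=0}^T\bigl(U_t(\varphi_t+\beta_t)-\beta_t\bigr),
\]
and since the $\beta_t$'s are independent real parameters and each summand involves only $\beta_t$, the same product-supremum argument as above gives $S^U(\varphi)=\sum_{t=0}^T S^{U_t}(\varphi_t)$. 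The handling of $\pm\infty$ works exactly as in the first paragraph, using that each $U_t$ is proper so that $S^{U_t}>-\infty$ on a nonempty set.

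Finally, the cash-additivity relations are a straightforward substitution. For $\beta\in\mathbb{R}^{T+1}$, writing $\gamma=\alpha+\beta$ in $S^U(\varphi+\beta)=\sup_{\alpha\in\mathbb{R}^{T+1}}(U(\varphi+\beta+\alpha)-\sum_t\alpha_t)$ yields
\[
S^U(\varphi+\beta)=\sup_{\gamma\in\mathbb{R}^{T+1}}\Bigl(U(\varphi+\gamma)-\sum_{t=0}^T\gamma_t\Bigr)+\sum_{t=0}^T\beta_t=S^U(\varphi)+\sum_{t=0}^T\beta_t,
\]
and the one-dimensional case for $S^{U_t}$ is identical. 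The main obstacle is merely bookkeeping with the $+\infty-\infty$ convention to ensure all identities hold even when some individual $\mathcal{D}_t$ or $S^{U_t}$ values are infinite; once this is done carefully, the rest is formal.
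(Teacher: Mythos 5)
Your proposal is correct and follows essentially the same route as the paper: decompose the supremum of a coordinate-wise sum over the product set $\mathcal{E}_0\times\dots\times\mathcal{E}_T$ (resp.\ over $\beta\in{\mathbb{R}}^{T+1}$) into a sum of suprema, and obtain cash additivity by the change of variables $\gamma=\alpha+\beta$; indeed you supply more detail on the infinite cases than the paper does. The only minor imprecision is the claim that the $\pm\infty$ bookkeeping for $S^{U}$ works ``exactly as'' for $\mathcal{D}$: for $\mathcal{D}$ no summand can be $-\infty$ (by properness of $U_t$ and integrability of $\varphi_t$), whereas a summand $S^{U_t}(\varphi_t)$ can be $-\infty$, which is precisely where the convention $+\infty-\infty=-\infty$ is needed and where one checks directly that both sides equal $-\infty$.
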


\begin{proof}
We will only focus on \eqref{additivestr}, since the remaining claims are
easily checked. We have that 
\begin{align*}
\mathcal{D}(\gamma )& =\sup_{\varphi \in \mathcal{E}}\left(
\sum_{t=0}^{T}U_{t}(\varphi _{t})-\sum_{t=0}^{T}\int_{K_{t}}\varphi _{t}\,%
\mathrm{d}\gamma \right) =\sum_{t=0}^{T}\sup_{\varphi _{t}\in \mathcal{E}%
_{t}}\left( U_{t}(\varphi _{t})-\int_{K_{t}}\varphi _{t}\,\mathrm{d}\gamma_t
\right) \\
& =\sum_{t=0}^{T}\mathcal{D}_{t}(\gamma _{t})=\sum_{t=0}^{T}\mathcal{D}%
_{t}(\gamma )\,.
\end{align*}%
As to the second claim in \eqref{additivestr}, in view of $%
+\infty-\infty=-\infty$, we observe that 
\begin{equation*}
\sup_{\beta \in {\mathbb{R}}^{T+1}}\left( U(\varphi +\beta )-\sum_{t=0}^{T}\beta
_{t}\right) =\sum_{t=0}^{T}\sup_{\beta \in {\mathbb{R}}}\left( U_{t}(\varphi
_{t}+\beta )-\beta \right) =\sum_{t=0}^{T}S^{U_{t}}(\varphi _{t}).
\end{equation*}
\end{proof}


\subsection{Duality for the general Cash Additive setup}

As a consequence of Theorem \ref{mainEMOTtheoremgeneral}, we are now ready
to prove the duality $\mathfrak{D}(c)=\mathfrak{P}(c)$ announced in the
Introduction, in equation \eqref{dualityPD}.

\begin{theorem}
\label{mainthm3noncpt} Suppose that $\mathcal{E}_{t} \subseteq C_t$ with $X_{t}\in \mathcal{E}_{t}$ and that $S_{t}:\mathcal{E}_{t}\rightarrow 
{\mathbb{R}}$ is a concave, cash additive functional null in $0$.  Set $U(\varphi
):=\sum_{t=0}^{T}S_{t}(\varphi _{t}),$ for $\varphi \in \mathcal{E}=\mathcal{E}_{0}\times \dots \times \mathcal{E}_{T}$ and  suppose
that Assumption \ref{asscallstozero} is fulfilled.
Consider for every $t=0,\dots ,T$ the penalizations 
\begin{equation*}
\mathcal{D}_{t}(Q_{t}):=\sup_{\varphi _{t}\in \mathcal{E}_{t}}\left(
S_{t}(\varphi _{t})-\int_{K_{t}}\varphi _{t}\,\mathrm{d}Q_{t}\right) \,\,\,%
\text{ for }Q_{t}\in \mathrm{Prob}^{1}(K_{t}).
\end{equation*}%
Let $c:\Omega \rightarrow (-\infty ,+\infty ]$ be lower semicontinuous and
such that \eqref{controlfrombelow} holds. Then 
\begin{equation}
\begin{split}
\mathfrak{P}(c)& =\sup \left\{ \sum_{t=0}^{T}S_{t}(\varphi _{t})\mid \varphi
\in \mathcal{E}\text{ is s.t. }\exists \Delta \in \mathcal{H}\text{ with }%
\sum_{t=0}^{T}\varphi _{t}(x_{t})+I^{\Delta }(x)\leq c(x)\text{ }\forall
x\in \Omega \right\} \\
& =\inf_{Q\in \mathrm{Mart}(\Omega )}\left( E_{Q}\left[ c\right]
+\sum_{t=0}^{T}\mathcal{D}_{t}(Q_{t})\right)
\end{split}
\label{dualrepreabstractnoncpt}
\end{equation}%
and the infimum in \eqref{dualrepreabstractnoncpt} is a minimum provided
that $\mathfrak{P}(c)<+\infty $.
\end{theorem}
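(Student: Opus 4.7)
The plan is to deduce this theorem as a direct consequence of Corollary \ref{maincor} (equivalently, Theorem \ref{mainEMOTtheoremgeneral}) once the additive structure of $U$ is exploited via Lemma \ref{conjugateaddU}. The key is to observe that the generalized Optimized Certainty Equivalent trivializes when the building blocks $S_t$ are already cash additive.

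First I would check that the hypotheses of Corollary \ref{maincor} are met. Viewing $\mathcal{E}_t\subseteq C_t$ as a subspace of $C_{0:t}$ via \eqref{consistencyrestrictions}, the functional $U(\varphi)=\sum_{t=0}^T S_t(\varphi_t)$ is proper and concave on $\mathcal{E}$, with $U(0)=0\in{\mathbb R}$; Assumption \ref{asscallstozero} is granted by hypothesis, and I choose $\mathcal{A}=\mathcal{I}$ as in Example \ref{exampleamtg}, so that $\mathrm{Prob}^1(\Omega)\cap\mathcal{A}^\circ=\mathrm{Mart}(\Omega)$.

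Next I would identify the two objects appearing in \eqref{subhedgingcor} with the ones stated in \eqref{dualrepreabstractnoncpt}. On the primal side, the cash additivity of each $S_t$ gives $S^{U_t}(\varphi_t)=\sup_{\alpha\in{\mathbb R}}(S_t(\varphi_t)+\alpha-\alpha)=S_t(\varphi_t)$, so Lemma \ref{conjugateaddU} yields
\[
S^U(\varphi)=\sum_{t=0}^T S^{U_t}(\varphi_t)=\sum_{t=0}^T S_t(\varphi_t),\qquad \varphi\in\mathcal{E}.
\]
In particular $S^U$ is everywhere finite on $\mathcal{E}$, hence $\mathrm{dom}(S^U)=\mathcal{E}$ and the set $\mathcal{S}_{sub}(c)$ of \eqref{ssub} coincides with the set over which the supremum in the first line of \eqref{dualrepreabstractnoncpt} is taken. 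On the dual side, Lemma \ref{conjugateaddU} also gives $\mathcal{D}(Q)=\sum_{t=0}^T\mathcal{D}_t(Q_t)$ for every $Q\in\mathrm{Prob}^1(\Omega)$, matching the penalization in the infimum of \eqref{dualrepreabstractnoncpt}.

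Finally I would invoke Corollary \ref{maincor}, equation \eqref{subhedgingcor}, to obtain
\[
\sup_{\varphi\in\mathcal{S}_{sub}(c)} S^U(\varphi)=\inf_{Q\in\mathrm{Mart}(\Omega)}\!\left(E_Q[c]+\mathcal{D}(Q)\right),
\]
and the two identifications above translate this exactly into \eqref{dualrepreabstractnoncpt}; attainment of the infimum when $\mathfrak{P}(c)<+\infty$ is also inherited from Corollary \ref{maincor}. No real obstacle is expected: the whole argument is a bookkeeping reduction, and the only point requiring care is the verification that cash additivity of $S_t$ collapses $S^{U_t}$ onto $S_t$ so that the OCE regularization leaves the primal functional unchanged.
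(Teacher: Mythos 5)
Your argument is correct and follows exactly the paper's own route: place the problem in Setup \ref{SetupsigmaU}, use Lemma \ref{conjugateaddU} to get $S^{U}(\varphi)=\sum_{t=0}^{T}S_{t}(\varphi_{t})$ and $\mathcal{D}(Q)=\sum_{t=0}^{T}\mathcal{D}_{t}(Q_{t})$, and then apply Corollary \ref{maincor} with $\mathcal{A}=\mathcal{I}$. The extra bookkeeping you supply (that cash additivity collapses $S^{U_t}$ onto $S_t$ and that $\mathrm{dom}(S^U)=\mathcal{E}$, so $\mathcal{S}_{sub}(c)$ is the right feasible set) is exactly what the paper leaves implicit.
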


\begin{proof}
Let $\mathcal{D}$ be defined as in \eqref{dualrepdivgenabstract}.
Observe that we are in Setup \ref{SetupsigmaU}. Lemma \ref{conjugateaddU}
tells us that $S^{U}(\varphi )=\sum_{t=0}^{T}S_{t}(\varphi _{t})$, since $%
S_{0},\dots ,S_{T}$ are cash additive, and that $\mathcal{D}$ coincides on $%
\mathrm{Mart}(\Omega )$ with the penalization term $Q\mapsto \sum_{t=0}^{T}%
\mathcal{D}_{t}(Q_{t})$, as provided in the statement of this Theorem. All
the assumptions of Theorem \ref{mainEMOTtheoremgeneral} are fulfilled, so
that we can apply Corollary \ref{maincor}, which yields exactly $\mathfrak{D}%
(c)=\mathfrak{P}(c)$.
\end{proof}

\subsection{Additive structure of $\mathcal{D}.$}

The results of this subsection will be applied in Subsections \ref%
{CasePenalization1gen} and \ref{secwasser}. In the spirit of Remark \ref{remfenchelmoreau}, we
may now reverse the procedure taken in the previous subsection: we start
from some functionals $\mathcal{D}_{t}$ on $\mathrm{ca}^1(K_{t}),$ for $%
t=0,\dots ,T$, and build an additive functional $\mathcal{D}$ on $\mathrm{ca}%
^1(\Omega ).$ Our aim is to find the counterparts of the results in Section %
\ref{secaddU}. \label{additiveD}

\begin{setup}
\label{SetupsigmaD} For every $t=0,\dots ,T$ we consider a proper, convex, $%
\sigma (\mathrm{ca}^1(K_{t}),\mathcal{E}_{t})$-lower semicontinuous
functional $\mathcal{D}_{t}:\mathrm{ca}^1(K_{t})\rightarrow (-\infty
,+\infty ]$. We can then extend the functionals $\mathcal{D}_{t}$ to $%
\mathrm{ca}^1(\Omega )$ by using, for any $\gamma \in \mathrm{ca}(\Omega )$,
the marginals $\gamma _{0},\dots ,\gamma _{T}$. If $\gamma \in \mathrm{ca}%
^1(\Omega )$, we set 
\begin{equation*}
\mathcal{D}_{t}(\gamma ):=\mathcal{D}_{t}(\gamma
_{t})\,\,\,\,\,\,\,\,\,\,\,\,\,\text{ and }\,\,\,\,\,\,\,\,\,\,\,\,\,%
\mathcal{D}(\gamma ):=\sum_{t=0}^{T}\mathcal{D}_{t}(\gamma ) =\sum_{t=0}^{T}%
\mathcal{D}_{t}(\gamma _{t})\,.  \label{extensionfromKtoOmega}
\end{equation*}

We define $V(\varphi)$ for $\varphi\in\mathcal{E}$ and $V_{t}(\varphi_t)$
for $\varphi_t\in\mathcal{E}_{t},$ for $t=0,\dots ,T$ similarly to %
\eqref{fenchelmoreau}, as 
\begin{equation*}
V(\varphi):=\sup_{\gamma\in\mathrm{ca}^1(\Omega)}\left(\int_{\Omega}\left(%
\sum_{t=0}^T\varphi_t\right)\,\mathrm{d}\gamma-\mathcal{D}(\gamma)\right)%
\text{ and }V_t(\varphi_t):=\sup_{\gamma\in\mathrm{ca}^1(K_t)}\left(%
\int_{K_t}\varphi_t\,\mathrm{d}\gamma-\mathcal{D}_t(\gamma)\right)\,.
\end{equation*}

We define on $\mathcal{E}$ the functional $U(\cdot )=-V(-\cdot )$, as in %
\eqref{eqdefU}, and similarly $U_{t}(\cdot )=-V_{t}(-\cdot )$ on $\mathcal{E}%
_{t},$ for $t=0,\dots ,T$. Finally, $S^{U}(\varphi ),$ $S^{U_{0}}(\varphi
_{0}),\dots ,S^{U_{T}}(\varphi _{T})$ are defined as in Setup \ref%
{SetupsigmaU}. 
\end{setup}

\begin{lemma}
\label{sumislsc} In Setup \ref{SetupsigmaD} we have:

\begin{enumerate}
\item $\mathcal{D}_{0},\dots ,\mathcal{D}_{T}$, as well as $\mathcal{D}$,
are $\sigma (\mathrm{ca}^1(\Omega ),\mathcal{E})$-lower semicontinuous.

\item Under the additional assumption that $\mathrm{dom}(\mathcal{D}%
_t)\subseteq\mathrm{Prob}^1(K_t)$ for every $t=0,\dots,T$, for any $\varphi
=[\varphi _{0},\dots ,\varphi _{T}]\in \mathcal{E}_{0}\times \dots \times 
\mathcal{E}_{T}$ 
\begin{align}
U(\varphi ) &=\sum_{t=0}^{T}U_{t}(\varphi
_{t})=\sum_{t=0}^{T}-V_{t}(-\varphi _{t})\,,  \label{Usum} \\
S^{U}(\varphi ) &=\sum_{t=0}^{T}S^{U_{t}}(\varphi _{t})\,.  \label{Ssum}
\end{align}
\end{enumerate}
\end{lemma}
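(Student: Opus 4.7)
My plan is to treat the two assertions separately. For (1), the key observation is that the marginal map $\pi_t\colon \mathrm{ca}^1(\Omega)\to\mathrm{ca}^1(K_t)$, $\gamma\mapsto\gamma_t$, is continuous when the source carries $\sigma(\mathrm{ca}^1(\Omega),\mathcal{E})$ and the target carries $\sigma(\mathrm{ca}^1(K_t),\mathcal{E}_t)$. Indeed, for any $\varphi_t\in\mathcal{E}_t$, the vector $(0,\dots,0,\varphi_t,0,\dots,0)$ belongs to $\mathcal{E}$ (since $\mathbb{R}\subseteq \mathcal{E}_s$ for every $s$), and because $\varphi_t$ depends only on $x_t$ one has $\int_\Omega\varphi_t\,\mathrm{d}\gamma=\int_{K_t}\varphi_t\,\mathrm{d}\gamma_t$. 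Thus the evaluation $\gamma\mapsto\int_{K_t}\varphi_t\,\mathrm{d}\gamma_t$ is by definition continuous in the $\mathcal{E}$-topology. Composing with the $\sigma(\mathrm{ca}^1(K_t),\mathcal{E}_t)$-lower semicontinuous functional $\mathcal{D}_t$ yields lower semicontinuity of $\gamma\mapsto\mathcal{D}_t(\gamma_t)$ on $\mathrm{ca}^1(\Omega)$, and $\mathcal{D}=\sum_t \mathcal{D}_t$ is then lower semicontinuous as a finite sum of such functionals (no $+\infty-\infty$ issue since each $\mathcal{D}_t$ is proper and convex, and lower semicontinuous proper convex functions on a dual pair are bounded below by an affine function).

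For (2), since $U(\cdot)=-V(-\cdot)$ and $U_t(\cdot)=-V_t(-\cdot)$, it suffices to prove $V(\varphi)=\sum_{t=0}^T V_t(\varphi_t)$ for every $\varphi\in\mathcal{E}$. The inequality $V(\varphi)\le\sum_t V_t(\varphi_t)$ is immediate: for any $\gamma\in\mathrm{ca}^1(\Omega)$,
\begin{equation*}
\int_\Omega \Bigl(\sum_{t=0}^T\varphi_t\Bigr)\,\mathrm{d}\gamma-\mathcal{D}(\gamma)=\sum_{t=0}^T\Bigl(\int_{K_t}\varphi_t\,\mathrm{d}\gamma_t-\mathcal{D}_t(\gamma_t)\Bigr)\le \sum_{t=0}^T V_t(\varphi_t),
\end{equation*}
and taking the supremum over $\gamma$ gives the bound.

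The reverse inequality is where the hypothesis $\mathrm{dom}(\mathcal{D}_t)\subseteq\mathrm{Prob}^1(K_t)$ enters. Given any $\mu_t\in\mathrm{ca}^1(K_t)$ for $t=0,\dots,T$, if some $\mu_t\notin\mathrm{dom}(\mathcal{D}_t)$ then the corresponding summand equals $-\infty$ and contributes nothing; otherwise each $\mu_t$ is a probability on $K_t$ with $X_t^j\in L^1(\mu_t)$, and I can form the product $\gamma:=\mu_0\otimes\mu_1\otimes\cdots\otimes\mu_T\in\mathrm{Prob}^1(\Omega)\subseteq \mathrm{ca}^1(\Omega)$, whose $t$-th marginal is exactly $\mu_t$. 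For this particular $\gamma$,
\begin{equation*}
V(\varphi)\ge \int_\Omega \Bigl(\sum_{t=0}^T\varphi_t\Bigr)\,\mathrm{d}\gamma-\mathcal{D}(\gamma)=\sum_{t=0}^T\Bigl(\int_{K_t}\varphi_t\,\mathrm{d}\mu_t-\mathcal{D}_t(\mu_t)\Bigr).
\end{equation*}
Since the right-hand side can be maximized coordinatewise in the $\mu_t$'s, passing to the supremum yields $V(\varphi)\ge \sum_t V_t(\varphi_t)$. This establishes \eqref{Usum}. The main obstacle to watch here is precisely the construction of a $\gamma\in\mathrm{ca}^1(\Omega)$ with prescribed marginals and whose contribution to $V$ splits; this is why probability-valued effective domains are needed.

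Finally, \eqref{Ssum} follows from \eqref{Usum} and the same computation as in Lemma~\ref{conjugateaddU}: splitting $\beta=(\beta_0,\dots,\beta_T)\in\mathbb{R}^{T+1}$ componentwise and using the convention $+\infty-\infty=-\infty$,
\begin{equation*}
S^U(\varphi)=\sup_{\beta\in\mathbb{R}^{T+1}}\Bigl(U(\varphi+\beta)-\sum_{t=0}^T\beta_t\Bigr)=\sum_{t=0}^T\sup_{\beta_t\in\mathbb{R}}\bigl(U_t(\varphi_t+\beta_t)-\beta_t\bigr)=\sum_{t=0}^T S^{U_t}(\varphi_t),
\end{equation*}
concluding the proof.
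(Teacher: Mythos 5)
Your proof is correct and follows essentially the same route as the paper's: Item 1 via continuity of the marginal projections composed with the lower semicontinuous $\mathcal{D}_t$, and Item 2 via restricting to probability-valued marginals (using $\mathrm{dom}(\mathcal{D}_t)\subseteq\mathrm{Prob}^1(K_t)$) and realizing any prescribed vector of marginals by the product measure $\mu_0\otimes\cdots\otimes\mu_T$, with \eqref{Ssum} then following by the coordinatewise sup-splitting of Lemma \ref{conjugateaddU}. The only cosmetic difference is that you organize the key identity \eqref{Usum} as two inequalities rather than the paper's single equality chain.
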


\begin{proof}
$\,$

\textbf{Item 1}. For each $t=0,\dots ,T$ $\mathcal{D}_{t}(\gamma )=\mathcal{D%
}_{t}\circ \pi _{t}(\gamma ),$ where $\mathcal{D}_{t}$ is $\sigma (\mathrm{ca%
}^1(K_{t}),\mathcal{E}_{t})$-lower semicontinuous and $\pi _{t}$, the
projection to the $t$-th marginal, is $\sigma (\mathrm{ca}^1(\Omega ),%
\mathcal{E})-\sigma (\mathrm{ca}^1(K_{t}),\mathcal{E}_{t})$ continuous.
Hence, for each $t=0,\dots ,T$ $\gamma \mapsto \mathcal{D}_{t}(\gamma )$ is $%
\sigma (\mathrm{ca}^1(\Omega ),\mathcal{E})$-lower semicontinuous. Lower
semicontinuity of $\mathcal{D}$ is then a consequence of the fact that the
sum of lower semicontinuous functions is lower semicontinuous.

\textbf{Item 2, equation (\ref{Usum})}. We have that for $\psi =-\varphi $ 
\begin{align*}
& -U(\varphi )=V(\psi )=\sup_{\gamma \in \mathrm{ca}^{1}(\Omega )}\left(
\int_{\Omega }\left( \sum_{t=0}^{T}\psi _{t}\right) \mathrm{d}\gamma -%
\mathcal{D}(\mu )\right) =\sup_{\gamma \in \mathrm{ca}^{1}(\Omega
)}\sum_{t=0}^{T}\left( \int_{K_{t}}\psi _{t}\mathrm{d}\gamma -\mathcal{D}%
_{t}(\mu _{t})\right)  \\
& \overset{(i)}{=}\sup \left\{ \sum_{t=0}^{T}\left( \int_{K_{t}}\psi _{t}%
\mathrm{d}\gamma _{t}-\mathcal{D}_{t}(\gamma _{t})\right) \mid \gamma \in 
\mathrm{ca}^{1}(\Omega )\text{ with }\gamma _{t}\in \mathrm{Prob}%
^{1}(K_{t})\,\,\forall t=0,\dots ,T\right\}  \\
& \overset{(ii)}{=}\sup \left\{ \sum_{t=0}^{T}\left( \int_{K_{t}}\psi _{t}%
\mathrm{d}Q_{t}-\mathcal{D}_{t}(Q_{t})\right) \mid \lbrack Q_{0},\dots
,Q_{T}]\in \mathrm{Prob}^{1}(K_{0})\times \dots \times \mathrm{Prob}%
^{1}(K_{T})\right\}  \\
& =\sum_{t=0}^{T}\sup_{Q_{t}\in \mathrm{Prob}^{1}(K_{t})}\left(
\int_{K_{t}}\psi _{t}\mathrm{d}Q_{t}-\mathcal{D}_{t}(Q_{t})\right) \overset{%
(iii)}{=}\sum_{t=0}^{T}\sup_{\gamma _{t}\in \mathrm{ca}^{1}(K_{t})}\left(
\int_{K_{t}}\psi _{t}\mathrm{d}\gamma _{t}-\mathcal{D}_{t}(\gamma
_{t})\right)  \\
& =\sum_{t=0}^{T}V_{t}(\psi _{t})=\sum_{t=0}^{T}-U_{t}(\varphi _{t})
\end{align*}%
where for $(i)$ we used 
\begin{equation*}
\mathrm{dom}(\mathcal{D})\subseteq \mathcal{Z}:=\left\{ \gamma \in \mathrm{ca%
}^{1}(\Omega )\mid \gamma _{t}\in \mathrm{Prob}^{1}(K_{t})\,\,\forall
t=0,\dots ,T\right\} .
\end{equation*}%
In $(ii)$ we applied the facts that: (1) any vector of probability measures $%
[Q_{0},\dots ,Q_{T}]$ with $Q_{t}\in \mathrm{Prob}^{1}(K_{t})$, $t=0,\dots ,T
$ identifies $\gamma :=Q_{0}\otimes \dots \otimes Q_{T}\in \mathcal{Z}$%
\footnote{%
Note that this does not hold for a general vector of (signed) measures,
which is why we need the additional assumption on the domains of the
penalization functionals for Item 2} with $\mathcal{D}(\gamma
)=\sum_{t=0}^{T}\mathcal{D}_{t}(Q_{t});$ (2) for every $\gamma \in \mathcal{Z%
}$ (setting $Q_{t}:=\gamma _{t}\in \mathrm{Prob}^{1}(K_{t}))$ we have $%
\mathcal{D}(\gamma )=\sum_{t=0}^{T}\mathcal{D}_{t}(Q_{t})$. The equality $%
(iii)$ follows from $\mathrm{dom}(\mathcal{D}_{t})\subseteq \mathrm{Prob}%
^{1}(K_{t})$ for each $t=0,\dots ,T$.

\textbf{Item 2, equation (\ref{Ssum})}. The argument is identical to the one
in the proof of Lemma \ref{conjugateaddU}, using the additive structure of $%
U $ we obtained in the previous step of the proof.
\end{proof}

\subsection{Divergences induced by utility functions}

\label{SecDivergence}In this section we provide the exact formulation of the
divergences induced by utility functions $u_{t}:{\mathbb{R}}\rightarrow
\lbrack -\infty ,+\infty )$, distinguishing the two cases: $\mathrm{dom}%
(u_{t})={\mathbb{R}}$ and $\mathrm{dom}(u_{t})\supseteq \lbrack 0,+\infty )$.

\begin{assumption}
\label{aU1} We consider concave, upper semicontinuous nondecreasing
functions $u_{0},\dots ,u_{T}:{\mathbb{R}}\rightarrow \lbrack -\infty
,+\infty )$ with $u_{0}(0)=\dots =u_{T}(0)=0$, $u_{t}(x)\leq x\,\,\forall
\,x\in {\mathbb{R}}$ (that is $1\in \partial u_{0}(0)\cap \dots \cap
\partial u_{T}(0)$). For each $t=0,\dots ,T$ we define $%
v_{t}(x):=-u_{t}(-x),\,x\in {\mathbb{R}}$ and 
\begin{equation}
v_{t}^{\ast }(y):=\sup_{x\in {\mathbb{R}}}(xy-v_{t}(x))\,=\sup_{x\in {%
\mathbb{R}}}(u_{t}(x)-xy)),\,\,\,\,\,\,y\in {\mathbb{R}}\,.
\label{cnvxconjvstar}
\end{equation}
\end{assumption}

We observe that $v_{t}(y)=v_{t}^{\ast \ast }(y)=\sup_{x\in {\mathbb{R}}%
}(xy-v_{t}^{\ast }(y))$ for all $y\in {\mathbb{R}}$ by Fenchel-Moreau
Theorem and that $v_{t}^{\ast }$ is convex, lower semicontinuous and lower
bounded on ${\mathbb{R}}$.

\begin{example}
\label{exampleofutils} Assumption \ref{aU1} is satisfied by a wide range of
functions. Just to mention a few with various peculiar features, we might
take $u_{t}$ of the following forms: $u_{t}(x)=1-\exp (-x)$, whose convex
conjugate is given by $v_{t}^{\ast }(y)=-\infty $ for $y<0$, $v_{t}^{\ast
}(0)=0$, $v_{t}^{\ast }(y)=(y\log (y)-y+1)$ for $y>0$; $u_{t}(x)=\alpha
x1_{(-\infty ,0]}(x)$ for $\alpha \geq 1$, so that $v_{t}^{\ast }(y)=+\infty 
$ for $y<0$, $v_{t}^{\ast }(y)=0$ for $y\in \lbrack 0,\alpha ]$, $%
v_{t}^{\ast }(y)=+\infty $ for $y>\alpha $; $u_{t}(x)=\log (x+1)$ for $x>-1$%
, $u_{t}(x)=-\infty $ for $x\leq -1$, so that $v_{t}^{\ast }(y)=+\infty $
for $y\leq 0$, $v_{t}^{\ast }(y)=y-\log (y)-1$ for $y>0$; $u_{t}(x)=-\infty $
for $x\leq -1$, $u_{t}(x)=\frac{x}{x+1}$ for $x>-1$ so that $v_{t}^{\ast
}(y)=-\infty $ for $y<0$, $v_{t}^{\ast }(y)=y-2\sqrt{y}+1$ for $y\geq 0$; $%
u_t(x)=-\infty$ for $x<0$, $u_t(x)=1-\exp(-x)$ for $x\geq 0$, so that $%
v^*_t(y)=+\infty$ for $y<0$, $v^*_t(y)=y\log(y)-y+1$ for $0 \leq y\leq 1$, $%
v^*_t(y)=0$ for $y>1$.
\end{example}

Fix $\widehat{\mu_{t}} \in \mathrm{Meas}(K_{t})$. We pose for $\mu \in 
\mathrm{Meas}(K_{t})$ 
\begin{equation}
\mathcal{D}_{v_{t}^{\ast },\widehat{\mu}_{t}}(\mu ):=%
\begin{cases}
\int_{K_{t}}v_{t}^{\ast }\left( \frac{\mathrm{d}\mu }{\mathrm{d}\widehat{\mu}%
_{t}}\right) \,\mathrm{d}\widehat{\mu}_{t}\,\,\, & \text{ if }\mu \ll 
\widehat{\mu}_{t} \\ 
+\infty \,\,\, & \text{ otherwise}%
\end{cases}%
\,.  \label{dvstar}
\end{equation}%
In the next two propositions, whose proofs are postponed to the Appendix \ref%
{Proofs}, we provide the dual representation of the divergence terms.

\begin{proposition}
\label{propvstar}Take $u_{0},\dots ,u_{T}$ satisfying Assumption \ref{aU1} with $\mathrm{dom}(u_{0})=\dots =\mathrm{dom}(u_{T})={\mathbb{R}}$,
consider closed (possibly noncompact) $K_0,\dots,K_T\subseteq {\mathbb{R}%
}$ and let $\widehat{\mu }_{t}\in \mathrm{Meas}(K_{t})$, $t=0,\dots ,T$. Then

%
%
%
\begin{equation}
\mathcal{D}_{v_{t}^{\ast },\widehat{\mu }_{t}}(\mu )=\sup_{\varphi _{t}\in 
\mathcal{C}_{b}(K_{t})}\left( \int_{K_{t}}\varphi _{t}(x_{t})\,\mathrm{d}\mu
(x_{t})-\int_{K_{t}}v_{t}(\varphi _{t}(x_{t}))\,\mathrm{d}\widehat{\mu }%
_{t}(x_{t})\right) \,.  \label{explicitdual}
\end{equation}%
\label{propconjugatesofranddiv}
\end{proposition}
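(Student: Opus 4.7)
The claim is a Fenchel--Rockafellar-type integral duality, and I would prove it as two matching inequalities. The assumption $\mathrm{dom}(u_t)=\mathbb{R}$ yields $\mathrm{dom}(v_t)=\mathbb{R}$, so $v_t$ is convex and finite on $\mathbb{R}$, hence continuous and locally Lipschitz; in particular $v_t\circ \varphi_t\in\mathcal{C}_b(K_t)$ for every $\varphi_t\in\mathcal{C}_b(K_t)$, which makes the RHS of \eqref{explicitdual} well-defined.

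The easy direction $(\geq)$ splits by absolute continuity. When $\mu\ll\widehat{\mu}_t$ with density $f=\tfrac{\mathrm{d}\mu}{\mathrm{d}\widehat{\mu}_t}$, Young's inequality $\varphi_t(x_t)\,y\le v_t(\varphi_t(x_t))+v_t^\ast(y)$ applied pointwise with $y=f(x_t)$ and integrated over $\widehat{\mu}_t$ gives $\int\varphi_t\,\mathrm{d}\mu-\int v_t(\varphi_t)\,\mathrm{d}\widehat{\mu}_t\le \mathcal{D}_{v_t^\ast,\widehat{\mu}_t}(\mu)$; taking the supremum over $\varphi_t$ yields the inequality. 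When $\mu\not\ll\widehat{\mu}_t$, $\mathcal{D}_{v_t^\ast,\widehat{\mu}_t}(\mu)=+\infty$ by definition, and I must show the RHS is also $+\infty$. On the Polish space $K_t$ all finite Borel measures are Radon; inner regularity of $\mu$ and outer regularity of $\widehat{\mu}_t$ produce a compact $C$ with $\mu(C)>0$, $\widehat{\mu}_t(C)=0$, together with open sets $U_n\supseteq C$ with $\widehat{\mu}_t(U_n)\downarrow 0$. Urysohn's lemma provides $\chi_n\in\mathcal{C}_b(K_t;[0,1])$ with $\chi_n\equiv 1$ on $C$, $\chi_n\equiv 0$ off $U_n$; for $\varphi_n:=M\chi_n$ one has $\int\varphi_n\,\mathrm{d}\mu\ge M\mu(C)$, while $v_t(\varphi_n)\le \max(v_t(0),v_t(M))\mathbf{1}_{U_n}$ gives $\int v_t(\varphi_n)\,\mathrm{d}\widehat{\mu}_t\to 0$; sending first $n\to\infty$ and then $M\to\infty$ proves RHS $=+\infty$.

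The delicate direction $(\le)$ in the absolutely continuous case is a Rockafellar normal-integrand argument. First discretise the density: for each $N$, truncate $f_N:=f\wedge N$ and choose a finite Borel partition $\{B_{N,k}\}_{k=1}^{M_N}$ of $K_t$ with values $y_{N,k}\in[0,N]$ so that the simple functions $\widetilde f_N:=\sum_k y_{N,k}\mathbf{1}_{B_{N,k}}$ satisfy $\int v_t^\ast(\widetilde f_N)\,\mathrm{d}\widehat{\mu}_t\uparrow \int v_t^\ast(f)\,\mathrm{d}\widehat{\mu}_t$; this uses a standard partition refinement, monotone convergence and the fact that $v_t^\ast$ is proper, lower semicontinuous and bounded below. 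For each $k$ select $x_{N,k}\in\mathbb{R}$ almost attaining the supremum defining $v_t^\ast(y_{N,k})$, and form the bounded measurable step function $\psi_N:=\sum_k x_{N,k}\mathbf{1}_{B_{N,k}}$; by construction $\int\psi_N\,\mathrm{d}\mu-\int v_t(\psi_N)\,\mathrm{d}\widehat{\mu}_t\ge \int v_t^\ast(\widetilde f_N)\,\mathrm{d}\widehat{\mu}_t-1/N$. Finally apply Lusin's theorem to the finite Radon measure $\nu:=\mu+\widehat{\mu}_t$ on $K_t$ to obtain $\varphi_N\in\mathcal{C}_b(K_t)$ with $\|\varphi_N\|_\infty\le \|\psi_N\|_\infty$ and $\nu(\{\varphi_N\neq \psi_N\})<1/N$; continuity of $v_t$, the uniform bound $\|\varphi_N\|_\infty\le \max_k|x_{N,k}|$, and dominated convergence transfer the estimate from $\psi_N$ to $\varphi_N$, and letting $N\to\infty$ concludes $(\le)$.

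The main obstacle is the final Lusin-approximation step: the continuous $\varphi_N$ must keep both $\int\varphi_N\,\mathrm{d}\mu$ and $\int v_t(\varphi_N)\,\mathrm{d}\widehat{\mu}_t$ close to their step-function counterparts simultaneously. This is precisely the reason for using the common reference measure $\mu+\widehat{\mu}_t$, so that a single exceptional set is small for both; the hypothesis $\mathrm{dom}(u_t)=\mathbb{R}$ enters crucially here because it delivers the local Lipschitz bounds on $v_t$ needed to dominate $v_t(\varphi_N)$ uniformly on the exceptional set and thus legitimise dominated convergence.
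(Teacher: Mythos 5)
Your strategy is genuinely different from the paper's: the paper does not prove \eqref{explicitdual} from scratch at all, but verifies the hypotheses of \cite{Sava} Theorem 2.7 and Remark 2.8 (after identifying $F=v_t^\ast$, $F^\circ=u_t$ and checking $F\geq 0$, $F(1)\leq 0$, $F'_\infty=+\infty$) and then simply invokes that result. Your self-contained convex-duality proof — Young's inequality plus a Urysohn bump for the singular case in one direction, discretisation of the density plus near-maximisers and Lusin in the other — is the standard route to such divergence representations and, if completed, would make the proposition independent of the external reference. The $(\geq)$ direction as you present it is sound (note $v_t(0)=0$ and $v_t$ nondecreasing, so your bound on $\int v_t(M\chi_n)\,\mathrm{d}\widehat{\mu}_t$ is correct).

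In the $(\leq)$ direction there are two concrete gaps. First, the inequality $\int\psi_N\,\mathrm{d}\mu-\int v_t(\psi_N)\,\mathrm{d}\widehat{\mu}_t\geq\int v_t^\ast(\widetilde f_N)\,\mathrm{d}\widehat{\mu}_t-1/N$ does not follow "by construction": near-attainment in $v_t^\ast(y_{N,k})$ gives you $\sum_k x_{N,k}y_{N,k}\widehat{\mu}_t(B_{N,k})-\int v_t(\psi_N)\,\mathrm{d}\widehat{\mu}_t$, i.e.\ $\int\psi_N\widetilde f_N\,\mathrm{d}\widehat{\mu}_t$, not $\int\psi_N f\,\mathrm{d}\widehat{\mu}_t=\int\psi_N\,\mathrm{d}\mu$. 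You must additionally control $\int|\psi_N|\,|\widetilde f_N-f|\,\mathrm{d}\widehat{\mu}_t$, and since $\|\psi_N\|_\infty$ is only determined after the partition (hence $\widetilde f_N$) is chosen, the order of choices has to be untangled, e.g.\ by fixing a value-partition of $[0,N]$, computing the $x_{N,k}$, and only then refining the mesh. Second, in the Lusin step the tolerance $\nu(\{\varphi_N\neq\psi_N\})<1/N$ is not enough: the error you incur is of order $\bigl(\|\psi_N\|_\infty+\sup_k|v_t(x_{N,k})|\bigr)/N$, and $\|\psi_N\|_\infty$ may grow without bound in $N$ (the near-maximisers $x_{N,k}$ are uncontrolled, especially where $v_t^\ast(y_{N,k})$ is large or infinite). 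The fix is routine — choose the exceptional set of measure smaller than $1/\bigl(N(1+\|\psi_N\|_\infty+\|v_t(\psi_N)\|_\infty)\bigr)$ — but as written the dominated-convergence claim does not close. With these two repairs the argument goes through; also note that the phrase "almost attaining the supremum" needs the obvious reformulation on the pieces where $v_t^\ast(y_{N,k})=+\infty$.
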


Set: 
\begin{equation*}
(v_{t}^{\ast })_{\infty }^{\prime }:=\lim_{y\rightarrow +\infty }\frac{%
v_{t}^{\ast }(y)}{y},\,\,\,\,\,\,\,t=0,\dots ,T\,.
\end{equation*}%
As $u_t(0)=0$, $(v_{t}^{\ast })_{\infty }^{\prime }\in \lbrack 0,+\infty ]$
since $v^*_t(y)\geq u_t(0)-0\cdot y=0$. Let $\widehat{Q}_{t}\in \mathrm{Prob}%
(K_{t})$ and, for $\mu \in \mathrm{Meas}(K_{t})$, let $\mu =\mu _{a}+\mu
_{s} $ be the Lebesgue Decomposition of $\mu $ with respect to $\widehat{Q}%
_{t}$, where $\mu _{a}\ll \widehat{Q}_{t}$ and $\mu _{s}\perp \widehat{Q}%
_{t} $. Then we can define for $\mu \in \mathrm{Meas}(K_{t})$ 
\begin{equation*}
\mathcal{F}_{t}(\mu \mid \widehat{Q}_{t}):=\int_{K_{t}}v_{t}^{\ast }\left( 
\frac{\mathrm{d}\mu _{a}}{\mathrm{d}\widehat{Q}_{t}}\right) \,\mathrm{d}%
\widehat{Q}_{t}+(v_{t}^{\ast })_{\infty }^{\prime }\mu _{s}(K_{t})
\end{equation*}%
where we use the convention $\infty \times 0=0$, in case $(v_{t}^{\ast
})_{\infty }^{\prime }=+\infty ,$ $\mu _{s}(K_{t})=0$. Observe that the
restriction of $\mathcal{F}(\cdot \mid \widehat{Q}_{t})$ to $\mathrm{Meas}%
(K_{t})$ coincides with the functional in \cite{Sava} (2.35) with $%
F=v_{t}^{\ast }$, and that whenever $\mathrm{dom}(u_{t})={\mathbb{R}}$ we
have $(v_{t}^{\ast })_{\infty }^{\prime }=\lim_{y\rightarrow +\infty }\frac{%
v_{t}^{\ast }(y)}{y}=+\infty $ and $\mathcal{F}_{t}(\cdot \mid \widehat{Q}%
_{t})$ coincides with $\mathcal{D}_{v_{t}^{\ast },\widehat{Q}_{t}}(\cdot )$
(see \eqref{dvstar}) on $\mathrm{Meas}(K_{t})$.

\begin{proposition}
\label{PPP}Suppose that $u_{0},\dots ,u_{T}:{\mathbb{R}}\rightarrow \mathbb{%
[-\infty },+\infty )$ satisfy Assumption \ref{aU1}, and assume that $K_0,\dots,K_T\subseteq{\mathbb{R}}$
are compact. If $\widehat{Q}_{t}\in \mathrm{Prob}(K_{t}),$ $t\in \{0,\dots
,T\}$, has full support then

\begin{equation}
\mathcal{F}_{t}(\mu \mid \widehat{Q}_{t})=\sup_{\varphi _{t}\in \mathcal{C}%
_{b}(K_{t})}\left( \int_{K_{t}}\varphi _{t}(x_{t})\,\mathrm{d}\mu
(x_{t})-\int_{K_{t}}v_{t}(\varphi _{t}(x_{t}))\,\mathrm{d}\widehat{Q }%
_{t}(x_{t})\right) \,.  \label{eqfullcb}
\end{equation}

\end{proposition}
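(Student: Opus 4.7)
The plan is to verify the two inequalities separately. The direction ``$\sup\le \mathcal{F}_t(\mu\mid\widehat{Q}_t)$'' is a direct application of the Fenchel--Young inequality together with the recession identity $\sup\mathrm{dom}(v_t)=(v_t^{\ast})_{\infty}^{\prime}$, while the reverse direction requires a constructive approximation tailored to the Lebesgue decomposition of $\mu$ with respect to $\widehat{Q}_t$.

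For the easy direction I would fix $\varphi_t\in\mathcal{C}_b(K_t)$ and dispose of the trivial case $\int v_t(\varphi_t)\,\mathrm{d}\widehat{Q}_t=+\infty$, assuming henceforth that $v_t\circ\varphi_t$ is $\widehat{Q}_t$-integrable. The full-support hypothesis forces $\{v_t\circ\varphi_t<+\infty\}$ to be dense in $K_t$; combined with continuity of $\varphi_t$ and the recession identity (which follows from the classical duality between the recession function of $v_t^{\ast}$ and the support function of $\mathrm{dom}(v_t)$), this forces $\varphi_t(x)\le(v_t^{\ast})_{\infty}^{\prime}$ for \emph{every} $x\in K_t$. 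Writing $\mu=\mu_a+\mu_s$ with $\mu_a\ll\widehat{Q}_t$ and $\mu_s\perp\widehat{Q}_t$, and setting $\rho:=\mathrm{d}\mu_a/\mathrm{d}\widehat{Q}_t$, pointwise Fenchel--Young gives $\varphi_t\,\rho-v_t(\varphi_t)\le v_t^{\ast}(\rho)$, while the pointwise bound on $\varphi_t$ yields $\int\varphi_t\,\mathrm{d}\mu_s\le(v_t^{\ast})_{\infty}^{\prime}\mu_s(K_t)$. Adding the two estimates delivers the claim.

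For the reverse direction I would assume $\mathcal{F}_t(\mu\mid\widehat{Q}_t)<+\infty$ and construct an optimizing sequence $\varphi_t^{n}\in\mathcal{C}_b(K_t)$ in two stages. First, a measurable selection $\bar y(x)\in\partial v_t^{\ast}(\rho(x))$ satisfies, by Young's equality, $\bar y\rho-v_t(\bar y)=v_t^{\ast}(\rho)$ $\widehat{Q}_t$-a.e.; after truncating on a sublevel set $\{\rho\le k\}$, Lusin's theorem (applicable since $K_t$ is compact metric and $\widehat{Q}_t$ is Radon) produces continuous approximations $\psi_n\in\mathcal{C}_b(K_t)$ bounded strictly below $(v_t^{\ast})_{\infty}^{\prime}$ and satisfying
\begin{equation*}
\int\psi_n\rho\,\mathrm{d}\widehat{Q}_t-\int v_t(\psi_n)\,\mathrm{d}\widehat{Q}_t\longrightarrow\int v_t^{\ast}(\rho)\,\mathrm{d}\widehat{Q}_t.
\end{equation*}
To recover the singular contribution $(v_t^{\ast})_{\infty}^{\prime}\mu_s(K_t)$ I would use $\mu_s\perp\widehat{Q}_t$ to select, for each $n$, a compact $C_n$ with $\mu_s(C_n)\ge\mu_s(K_t)-1/n$ and an open neighbourhood $U_n\supseteq C_n$ with $\widehat{Q}_t(U_n)$ arbitrarily small, and then Urysohn's lemma produces a continuous bump that raises $\psi_n$ close to $(v_t^{\ast})_{\infty}^{\prime}$ on $C_n$ while leaving it essentially unchanged outside $U_n$.

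The main obstacle is to calibrate simultaneously the height of this bump and $\widehat{Q}_t(U_n)$: because $v_t(\xi)$ diverges as $\xi\uparrow(v_t^{\ast})_{\infty}^{\prime}$, the approach rate of the bump towards $(v_t^{\ast})_{\infty}^{\prime}$ and the rate $\widehat{Q}_t(U_n)\downarrow 0$ must be coupled so that the extra $\widehat{Q}_t$-cost incurred on $U_n$ vanishes, while on $C_n$ the reward under $\mu_s$ tends to $(v_t^{\ast})_{\infty}^{\prime}\mu_s(K_t)$. This delicate balance, together with the truncation of $\rho$, is precisely where the combined hypotheses of compactness of $K_t$ and full support of $\widehat{Q}_t$ are needed; in the subcase $(v_t^{\ast})_{\infty}^{\prime}=+\infty$ the singular part must vanish and the Urysohn step can be bypassed, recovering in spirit the statement of Proposition~\ref{propvstar}.
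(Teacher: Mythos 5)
Your route is genuinely different from the paper's. The paper does not reprove the duality at all: it invokes Liero--Mielke--Savar\'e (\cite{Sava}, Theorem 2.7 and Remark 2.8), which gives \eqref{eqfullcb} with the supremum restricted to those $\varphi_t$ for which $u_t(\varphi_t)\in\mathcal{C}_b(K_t)$, and then uses full support only to remove that restriction: writing $\mathrm{dom}(u_t)=[M,+\infty)$ or $(M,+\infty)$, any $\varphi_t$ with $\inf\varphi_t<M$ contributes $-\infty$ because $\{\varphi_t<M\}$ is open nonempty hence $\widehat{Q}_t$-charged, and the boundary case $\inf\varphi_t=M$ is handled by truncating to $\max(\varphi_t,M+\varepsilon)$ and monotone convergence. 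Your first inequality ($\sup\le\mathcal{F}_t$) is complete and correct, and it isolates nicely where full support enters: it forces $\varphi_t\le(v_t^{\ast})_{\infty}^{\prime}$ \emph{everywhere} (not just $\widehat{Q}_t$-a.e.), which is exactly what fails in the paper's Example \ref{counterexample1}. What your approach buys is self-containedness and transparency about the role of the singular part; what the paper's buys is that the hard inequality never has to be constructed by hand.

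The reverse inequality in your proposal, however, is not a proof but a plan whose decisive step you explicitly leave open (``the main obstacle is to calibrate\dots''). That step does close, but only if you fix the order of choices: first pick the bump height $\xi_n\uparrow(v_t^{\ast})_{\infty}^{\prime}$ (with $v_t(\xi_n)<+\infty$), \emph{then} use $\mu_s\perp\widehat{Q}_t$ and inner/outer regularity to choose $C_n$ and $U_n\supseteq C_n$ with $\widehat{Q}_t(U_n)\le n^{-1}\bigl(1+|v_t(\xi_n)|+|v_t(-\|\psi_n\|_\infty)|\bigr)^{-1}$, so the extra $\widehat{Q}_t$-cost of the bump vanishes while the $\mu_s$-reward tends to $(v_t^{\ast})_{\infty}^{\prime}\mu_s(K_t)$; raising $\varphi$ only increases $\int\varphi\,\mathrm{d}\mu_a$, so the absolutely continuous part is not harmed. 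Two further points need attention: the measurable selection $\bar y(x)\in\partial v_t^{\ast}(\rho(x))$ may fail where the subdifferential is empty (boundary of $\mathrm{dom}(v_t^{\ast})$), so an $\varepsilon$-subgradient or approximation argument is needed there; and you never treat the case $\mathcal{F}_t(\mu\mid\widehat{Q}_t)=+\infty$, where one must show the supremum is also $+\infty$ (e.g.\ when $\mu_s(K_t)>0$ and $(v_t^{\ast})_{\infty}^{\prime}=+\infty$, by bumps of arbitrary finite height). As written, the proposal establishes one inequality and a credible but unfinished programme for the other.
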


\begin{example}
\label{counterexample1} The requirement that $\widehat{Q}_{0},\dots ,%
\widehat{Q}_{T}$ have full support is crucial for the proof of Proposition %
\ref{PPP}. We provide a simple example to the fact that \eqref{eqfullcb}
does not hold in general when such an assumption is not fulfilled. To this
end, take $K=\{-2,0,2\}$, $\widehat{Q}=\frac{1}{2}\delta _{\{-2\}}+\frac{1}{2%
}\delta _{\{+2\}}$, $\mu =\delta _{\{0\}}$, $u(x):=\frac{x}{x+1}$ for $x\geq
-1$ and $u(x)=-\infty $ for $x<-1$. It is easy to see that the associated $%
v^{\ast }$ via \eqref{cnvxconjvstar} is defined by $v^{\ast }(y)=1+y-2\sqrt{y%
}$ for $y\geq 0$ and $v^{\ast }(y)=-\infty $ for $y<0$, so that $%
(v_{t}^{\ast })_{\infty }^{\prime }=1.$ It is also easy to see that $\mu
\perp \widehat{Q}$, hence in the Lebesgue decomposition with respect to $%
\widehat{Q},$ $\mu _{a}=0$ and $\mu _{s}=\mu $. Hence $\mathcal{F}(\mu \mid 
\widehat{Q})=1+1\mu (K)=2$. At the same time we see that taking $\varphi
_{N}\in \mathcal{C}_{b}(K)$ defined via $\varphi _{N}(-2)=\varphi
_{N}(2)=0,\varphi _{N}(0)=-N$ (observe that for $N$ sufficiently large $%
u(\varphi _{N})\notin \mathcal{C}_{b}(K)$) we have 
\begin{align*}
&\sup_{\varphi \in \mathcal{C}_{b}(K)}\left( \int_{K}\varphi \,\mathrm{d}\mu
-\int_{K}v(\varphi )\,\mathrm{d}\widehat{Q }\right) =\sup_{\varphi \in 
\mathcal{C}_{b}(K)}\left( \int_{K}u(\varphi )\,\mathrm{d}\widehat{Q}%
-\int_{K}\varphi \,\mathrm{d}\mu \right) \\
& \geq \sup_{N}\left( \int_{K}u(\varphi _{N})\,\mathrm{d}\widehat{Q}%
-\int_{K}\varphi _{N}\,\mathrm{d}\mu \right) \geq \sup_{N}\left( \left(
0\right) \frac{1}{2}+\left( 0\right) \frac{1}{2}-(-N)\right) =+\infty .\,
\end{align*}
\end{example}

\section{Applications in the compact case}

\label{secpartcases} 

The pricing-hedging dualities numbered 1., 2., 4., 5., 6., announced in the Introduction (page \pageref{elenco}) are proved in this section. 

\bigskip

We suppose, in the whole Section \ref{secpartcases} that the following requirements are
fulfilled.

\begin{standassumption}
\label{asssubsuper} Let $d=1$ and $\Omega :=K_{0}\times \dots \times K_{T}$
for compact sets $K_{0},\dots ,K_{T}\subseteq {\mathbb{R}}$; $K_0=\{x_0\}$
for some $x_0\in{\mathbb{R}}$; the functional $c:\Omega \rightarrow (-\infty
,+\infty ]$ is lower semicontinuous; $\widehat{Q}\in \mathrm{Mart}%
(\Omega )$ is a given martingale measure with marginals $\widehat{Q}%
_{0},\dots ,\widehat{Q}_{T}$; $c\in L^{1}(\widehat{Q})$.
\end{standassumption}

Under this assumption, $C_{0:T}=\mathcal{C}_{b}(\Omega )$  and $(C_{0:T})^{\ast }=\mathrm{ca}(\Omega )=\mathrm{ca}^1(\Omega )$. 
We observe that the stock  $X_t$ is assumed to be bounded due to
the compactness assumption on $K_0,\dots,K_T$. As a consequence, if we consider, for example, the call option $(X_t-\alpha)^+$, $\alpha \in \mathbb{R}$, then it is also bounded on $
\Omega$. The selection $\mathcal{E} \subseteq \mathcal{C}_{b}(K_{0})\times \dots
\times \mathcal{C}_{b}(K_{T})$ is then, in this context, appropriate.

\label{secsubsupercorollaries}

\subsection{Subhedging with vanilla options}

\label{caseofvanilla} As in \cite{Beig}, in this Section \ref{caseofvanilla}
we suppose that the elements in $\mathcal{E}_{t}$ represent portfolios
obtained combining call options with maturity $t$, units of the underlying
stock at time $t$ ($x_{t}$) and deterministic amounts, that is $\mathcal{E}%
_{t}$ consists of all the functions in $\mathcal{C}_{b}(K_{t})$ with the
following form: 
\begin{equation}  \label{mixcalls}
\varphi _{t}(x_{t})=a+bx_{t}+\sum_{n=1}^{N}c_{n}(x_{t}-\alpha_{n})^{+}\text{, for 
}a,b,c_{n},\alpha_{n}\in \mathbb{R},\,x_{t}\in K_{t}
\end{equation}%
and take $\mathcal{E}=\mathcal{E}_{0}\times \dots \times \mathcal{E}_{T}$.
As shown in the proofs of Corollaries \ref{corolldivinduced} and \ref%
{CorBeig}, which are the core content of this Section \ref{caseofvanilla},
one could as well take $\mathcal{E}=\mathcal{C}_{b}(K_{0})\times \dots
\times \mathcal{C}_{b}(K_{T})$ preserving validity of \eqref{subnonadd} and %
\eqref{subbeig}.

As it will become clear from the proofs, in all the results in Section \ref%
{caseofvanilla} the functional $U$ is real valued on the whole $\mathcal{E}$
and cash additive, which yields $\mathrm{dom}(U)=\mathrm{dom}(S^U)=\mathcal{E%
}$. Thus, we will exploit Corollary \ref{mainthm} and Corollary \ref{maincor},
in particular \eqref{ssub} and \eqref{ssup}, in the case $\mathrm{dom}(S^U)=%
\mathcal{E}$.

\label{secdivinducedbyutils}We set for $\varphi _{t}\in \mathcal{C}%
_{b}(K_{t})$ 
\begin{equation}
U_{\widehat{Q}_{t}}(\varphi _{t})=\sup_{\alpha ,\lambda \in {\mathbb{R}}%
}\left( \int_{K_{t}}u_{t}(\varphi _{t}(x_{t})+\alpha x_{t}+\lambda )%
\mathrm{d}\widehat{Q}_{t}(x_{t})-(\alpha x_{0}+\lambda )\right) \,.
\label{defUQ}
\end{equation}

We observe that Assumption \ref{aU1} does \textbf{not} impose that the
functions $u_{t}$ are real valued on the whole ${\mathbb{R}}$. Nevertheless,
for the functional $U_{\widehat{Q}_{t}}$ we have:

\begin{lemma}
\label{remfrominftytofiniteness} Under Assumption \ref{aU1} for each $%
t=0,\dots ,T$

\begin{enumerate}
\item $U_{\widehat{Q}_{t}}$ is real valued on $\mathcal{C}_{b}(K_{t})$ and
null in $0$.

\item $U_{\widehat{Q}_{t}}$ is concave and nondecreasing.

\item $U_{\widehat{Q}_{t}}$ is stock additive on $\mathcal{C}_{b}(K_{t})$,
namely for every $\alpha _{t},\lambda _{t}\in {\mathbb{R}}$ and $\varphi
_{t}\in \mathcal{C}_{b}(K_{t})$ 
\begin{equation*}
U_{\widehat{Q}_{t}}(\varphi _{t}+\alpha _{t}X_{t}+\lambda _{t})=U_{\widehat{Q%
}_{t}}(\varphi _{t})+\alpha _{t}x_{0}+\lambda _{t}\,.
\end{equation*}
\end{enumerate}
\end{lemma}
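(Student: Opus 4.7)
The plan is to verify the three items directly from the definition of $U_{\widehat{Q}_t}$, using three ingredients from Assumption \ref{aU1} and the martingale property of $\widehat{Q}$: (i) $u_t$ is nondecreasing and concave with $u_t(0)=0$ and $u_t(y)\leq y$ on $\mathbb{R}$; (ii) since $u_t$ is nondecreasing with $u_t(0)=0$, one has $u_t\geq 0$ on $[0,+\infty)$, in particular $[0,+\infty)\subseteq \mathrm{dom}(u_t)$; (iii) since $\widehat{Q}\in\mathrm{Mart}(\Omega)$ and $K_0=\{x_0\}$, the marginal $\widehat{Q}_t$ has mean $x_0$, i.e.\ $\int_{K_t}x_t\,\mathrm{d}\widehat{Q}_t(x_t)=x_0$.

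For Item 1, I would first establish the upper bound $U_{\widehat{Q}_t}(\varphi_t)\leq\int_{K_t}\varphi_t\,\mathrm{d}\widehat{Q}_t$. Indeed, using (i) and (iii),
\[
\int_{K_t}u_t(\varphi_t+\alpha x_t+\lambda)\,\mathrm{d}\widehat{Q}_t-(\alpha x_0+\lambda)\leq \int_{K_t}(\varphi_t+\alpha x_t+\lambda)\,\mathrm{d}\widehat{Q}_t-(\alpha x_0+\lambda)=\int_{K_t}\varphi_t\,\mathrm{d}\widehat{Q}_t,
\]
so the RHS is finite for $\varphi_t\in\mathcal{C}_b(K_t)$. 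For the lower bound, $\varphi_t\geq -M$ for some $M\in\mathbb{R}$; choosing $\alpha=0$ and $\lambda=M$, we have $\varphi_t+\lambda\geq 0$ and, by (ii), $u_t(\varphi_t+\lambda)\geq 0$, so the expression inside the supremum is $\geq -M$. Thus $U_{\widehat{Q}_t}(\varphi_t)\in\mathbb{R}$. Specializing to $\varphi_t=0$, the upper bound reads $\leq 0$, while $\alpha=\lambda=0$ gives the lower bound $\int_{K_t}u_t(0)\,\mathrm{d}\widehat{Q}_t=0$, hence $U_{\widehat{Q}_t}(0)=0$.

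For Item 2, monotonicity follows immediately from the monotonicity of $u_t$: if $\varphi_t^1\leq\varphi_t^2$ then the integrand in the definition of $U_{\widehat{Q}_t}(\varphi_t^1)$ is pointwise dominated by that of $U_{\widehat{Q}_t}(\varphi_t^2)$. For concavity, given $\theta\in[0,1]$ and $(\alpha_i,\lambda_i)\in\mathbb{R}^2$ for $i=1,2$, I would set $\alpha_\theta=\theta\alpha_1+(1-\theta)\alpha_2$ and $\lambda_\theta=\theta\lambda_1+(1-\theta)\lambda_2$; concavity of $u_t$ then yields, after integration and rearrangement,
\[
\int_{K_t}u_t\bigl(\theta\varphi_t^1+(1-\theta)\varphi_t^2+\alpha_\theta x_t+\lambda_\theta\bigr)\,\mathrm{d}\widehat{Q}_t-(\alpha_\theta x_0+\lambda_\theta)\geq \theta\,\mathcal{A}_1+(1-\theta)\,\mathcal{A}_2,
\]
where $\mathcal{A}_i$ is the quantity inside the supremum defining $U_{\widehat{Q}_t}(\varphi_t^i)$ at $(\alpha_i,\lambda_i)$. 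Taking the supremum over $(\alpha_1,\lambda_1),(\alpha_2,\lambda_2)$ gives $U_{\widehat{Q}_t}(\theta\varphi_t^1+(1-\theta)\varphi_t^2)\geq \theta U_{\widehat{Q}_t}(\varphi_t^1)+(1-\theta)U_{\widehat{Q}_t}(\varphi_t^2)$.

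For Item 3, I would perform the change of variables $\alpha':=\alpha+\alpha_t$, $\lambda':=\lambda+\lambda_t$ inside the supremum defining $U_{\widehat{Q}_t}(\varphi_t+\alpha_t X_t+\lambda_t)$. The integrand becomes $u_t(\varphi_t+\alpha' x_t+\lambda')$, and the linear penalty becomes $-(\alpha'-\alpha_t)x_0-(\lambda'-\lambda_t)=\alpha_t x_0+\lambda_t-(\alpha' x_0+\lambda')$. Since the map $(\alpha,\lambda)\mapsto(\alpha',\lambda')$ is a bijection on $\mathbb{R}^2$, the supremum factorizes as $\alpha_t x_0+\lambda_t+U_{\widehat{Q}_t}(\varphi_t)$, as desired. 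I do not anticipate any real obstacle; the only point requiring care is that $u_t$ may take the value $-\infty$, but this is precisely what the lower-bound argument in Item 1 takes care of (any $\varphi_t\in\mathcal{C}_b(K_t)$ can be translated into the half-line where $u_t$ is finite via the auxiliary variable $\lambda$).
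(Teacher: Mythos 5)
Your proposal is correct and follows essentially the same route as the paper: the upper bound via $u_t(y)\leq y$ together with $\int_{K_t}x_t\,\mathrm{d}\widehat{Q}_t=x_0$, the lower bound by translating $\varphi_t$ into $[0,+\infty)\subseteq\mathrm{dom}(u_t)$ via $\lambda$, concavity by taking convex combinations of the auxiliary parameters $(\alpha,\lambda)$ before passing to the supremum, and stock additivity by the bijective change of variables $(\alpha,\lambda)\mapsto(\alpha+\alpha_t,\lambda+\lambda_t)$. No gaps.
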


\begin{proof}
Clearly $U_{\widehat{Q}_{t}}(\varphi _{t})>-\infty $, as we may choose $%
\lambda _{t}\in {\mathbb{R}}$ so that $(\varphi _{t}+0X_{t}+\lambda _{t})\in 
\mathrm{dom}(u)\supseteq \lbrack 0,+\infty )$. Furthermore, 
\begin{equation}  \label{usegradientu}
\begin{split}
& U_{\widehat{Q}_{t}}(\varphi _{t})\overset{1\in \partial U_{t}(0)}{\leq }%
\sup_{\alpha ,\lambda \in {\mathbb{R}}}\left( \int_{K_{t}}\left( \varphi
_{t}+\alpha X_{t}+\lambda \right) \mathrm{d}\widehat{Q}_t-(\alpha
x_{0}+\lambda )\right) \\
& \overset{\widehat{Q}\in \mathrm{Mart}(\Omega )}{=}\sup_{\alpha ,\lambda
\in {\mathbb{R}}}\left( \int_{K_{t}}\varphi _{t}\,\mathrm{d}\widehat{Q}%
_t+(\alpha x_{0}+\lambda -\alpha x_{0}-\lambda )\right) \leq \left\Vert
\varphi _{t}\right\Vert _{\infty }.
\end{split}%
\end{equation}%
Finally, $0=\int_{K_{t}}u\left( 0\right) \,\mathrm{d}\widehat{Q}_{t}\leq U_{%
\widehat{Q}_{t}}(0)\leq \left\Vert 0\right\Vert _{\infty }$.

Item 2: monotonicity is trivial from the definition, while for concavity we see that whenever $\varphi^1,\varphi^2\in\mathcal{C}_b(K_t),0\leq \gamma\leq 1$ are given, setting $\varphi^\gamma=\gamma\varphi^1+(1-\gamma)\varphi^2$ and defining for $\lambda^1,\alpha^1,\lambda^2,\alpha^2\in\R$ the values $\lambda^\gamma,\alpha^\gamma\in\R$ analogously, we have by concavity of $U$ 
\begin{align*}
&\gamma U_{\widehat{Q}_{t}}(\varphi^1)+(1-\gamma)U_{\widehat{Q}_{t}}(\varphi^2)\\
&=\sup_{\lambda^1,\alpha^1,\lambda^2,\alpha^2\in\R}\left(\gamma\left(U(\varphi^1+\alpha^1 X_t+\lambda^1)-(\alpha^1 x_0+\lambda^1)\right)+(1-\gamma)\left(U(\varphi^2+\alpha^2 X_t+\lambda^2)-(\alpha^2 x_0+\lambda^2)\right)\right)\\
&\leq \sup_{\lambda^1,\alpha^1,\lambda^2,\alpha^2\in\R}\left( U(\varphi^\gamma+\alpha^\gamma X_t+\lambda^\gamma)-(\alpha^\gamma x_0+\lambda^\gamma)\right)\\
&\leq\sup_{\lambda,\alpha\in\R}\left( U(\varphi^\gamma+\alpha X_t+\lambda)-(\alpha x_0+\lambda)\right)=U_{\widehat{Q}_{t}}(\varphi^\gamma) \,.
\end{align*}

Item 3: we see that%
\begin{equation*}
U_{\widehat{Q}_{t}}(\varphi _{t}+\alpha _{t}X_{t}+\lambda _{t})=\sup 
_{\substack{ \alpha \in {\mathbb{R}}  \\ \lambda \in {\mathbb{R}}}}\left(
\int_{K_{t}}u_t\left( \varphi _{t}(x_{t})+(\alpha +\alpha
_{t})x_{t}+(\lambda +\lambda _{t})\right) \,\mathrm{d}\widehat{Q}%
_{t}(x_t)-(\alpha x_{0}+\lambda)\right)
\end{equation*}%
\begin{equation*}
=\sup_{\substack{ \alpha \in {\mathbb{R}}  \\ \lambda \in {\mathbb{R}}}}%
\left( \int_{K_{t}}u_t\left( \varphi _{t}(x_{t})+(\alpha +\alpha
_{t})x_{t}+(\lambda +\lambda _{t})\right) \,\mathrm{d}\widehat{Q}%
_{t}(x_t)-((\alpha _{t}+\alpha )x_{0}+(\lambda _{t}+\lambda ))\right)
+\alpha _{t}x_{0}+\lambda _{t},
\end{equation*}%
in which we recognize the definition of $U_{\widehat{Q}_{t}}(\varphi
_{t})+\alpha _{t}x_{0}+\lambda _{t}$.
\end{proof}

\bigskip

\begin{corollary}
\label{corolldivinduced} Take $u_{0},\dots ,u_{T}$ satisfying Assumption \ref%
{aU1} and suppose $\mathrm{dom}(u_{0})=\dots =\mathrm{dom}(u_{T})={\mathbb{R}%
}$. Then%
\begin{align}
\inf_{Q\in \mathrm{Mart}(\Omega )}\left( E_{Q}\left[ c\right] +\sum_{t=0}^{T}%
\mathcal{D}_{v_{t}^{\ast },\widehat{Q}_{t}}(Q_{t})\right) & =\sup \left\{
\sum_{t=0}^{T}U_{\widehat{Q}_{t}}(\varphi _{t})\mid \varphi \in \mathcal{S}%
_{sub}(c)\right\}\,.  \label{subnonadd}
\end{align}%
Moreover, if LHS of \eqref{subnonadd} is finite, a minimum point exists. %
%
%
%
%
%
%
%
%
\end{corollary}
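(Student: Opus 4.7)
My plan is to derive Corollary \ref{corolldivinduced} as a direct application of Theorem \ref{mainthm3noncpt} (equivalently, Corollary \ref{maincor}) with the stock-additive valuation functionals $S_{t} := U_{\widehat{Q}_{t}}$, and then to identify the resulting conjugates with the divergences $\mathcal{D}_{v_{t}^{\ast },\widehat{Q}_{t}}$ by means of Proposition \ref{propvstar}. I would take $\mathcal{E}_{t}$ either as the span of calls, units of stock, and constants from \eqref{mixcalls}, or alternatively as $\mathcal{C}_{b}(K_{t})$; both contain $X_{t}$ and $\mathbb{R}$. The hypotheses on $S_{t}$ in Theorem \ref{mainthm3noncpt} are delivered by Lemma \ref{remfrominftytofiniteness}: $U_{\widehat{Q}_{t}}$ is real-valued, concave, null at $0$, and stock-additive (hence in particular cash-additive). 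Since $K_{0},\dots ,K_{T}$ are compact, Assumption \ref{asscallstozero} is trivially verified via Corollary \ref{maincorcasecpt}: take $\varphi =0\in \mathcal{E}$, so $U(0)=0$.

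The core computation is to show that for every $Q\in \mathrm{Mart}(\Omega )$ the penalization produced by the theorem,
\begin{equation*}
\mathcal{D}_{t}(Q_{t})=\sup_{\varphi _{t}\in \mathcal{E}_{t}}\left( U_{\widehat{Q}_{t}}(\varphi _{t})-\int_{K_{t}}\varphi _{t}\,\mathrm{d}Q_{t}\right) ,
\end{equation*}
coincides with $\mathcal{D}_{v_{t}^{\ast },\widehat{Q}_{t}}(Q_{t})$. Using the definition \eqref{defUQ} and substituting $\psi _{t}:=\varphi _{t}+\alpha X_{t}+\lambda \in \mathcal{E}_{t}$ (the change of variables is a bijection on $\mathcal{E}_{t}$ since $X_{t},1\in \mathcal{E}_{t}$), one gets
\begin{equation*}
\mathcal{D}_{t}(Q_{t})=\sup_{\psi _{t}\in \mathcal{E}_{t},\,\alpha ,\lambda \in \mathbb{R}}\left( \int u_{t}(\psi _{t})\,\mathrm{d}\widehat{Q}_{t}-\int \psi _{t}\,\mathrm{d}Q_{t}+\alpha (E_{Q_{t}}[X_{t}]-x_{0})+\lambda (Q_{t}(K_{t})-1)\right) .
\end{equation*}
Since $Q_{t}$ is the $t$-marginal of a martingale probability issued from $x_{0}$, both correcting terms vanish, and setting $\phi _{t}=-\psi _{t}$ and using $v_{t}(y)=-u_{t}(-y)$ yields
\begin{equation*}
\mathcal{D}_{t}(Q_{t})=\sup_{\phi _{t}\in \mathcal{E}_{t}}\left( \int \phi _{t}\,\mathrm{d}Q_{t}-\int v_{t}(\phi _{t})\,\mathrm{d}\widehat{Q}_{t}\right) .
\end{equation*}

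This is exactly the dual expression appearing in Proposition \ref{propvstar} when $\mathcal{E}_{t}=\mathcal{C}_{b}(K_{t})$, which identifies the above with $\mathcal{D}_{v_{t}^{\ast },\widehat{Q}_{t}}(Q_{t})$. For the smaller span of calls, stock, and constants, the inequality $\mathcal{D}_{t}(Q_{t})\leq \mathcal{D}_{v_{t}^{\ast },\widehat{Q}_{t}}(Q_{t})$ is immediate by monotonicity of the sup, and the reverse one follows by approximating any $\phi _{t}\in \mathcal{C}_{b}(K_{t})$ pointwise and boundedly by linear combinations of calls and invoking dominated convergence (using $\mathrm{dom}(u_{t})=\mathbb{R}$, which keeps all integrals finite). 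This is the only mildly delicate point; it is what makes the remark after \eqref{mixcalls}, that both choices of $\mathcal{E}$ give the same result, work.

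Once $\mathcal{D}_{t}=\mathcal{D}_{v_{t}^{\ast },\widehat{Q}_{t}}$ on martingale marginals, Theorem \ref{mainthm3noncpt} delivers \eqref{subnonadd}, and the existence of a minimizer on the LHS, when finite, is exactly the final statement of that theorem. The main obstacle is the substitution argument in the middle paragraph, where I must use the martingale property of $Q$ and the containment $X_{t},1\in \mathcal{E}_{t}$ in a coordinated way; the density/approximation argument for the two admissible choices of $\mathcal{E}_{t}$ is the only other place where care is needed.
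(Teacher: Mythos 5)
Your proposal is correct and follows essentially the same route as the paper: reduce to Theorem \ref{mainthm3noncpt}/Corollary \ref{maincorcasecpt} with $S_{t}=U_{\widehat{Q}_{t}}$ via Lemma \ref{remfrominftytofiniteness}, eliminate the $\alpha X_{t}+\lambda$ terms using the martingale marginal identities $E_{Q_{t}}[X_{t}]=x_{0}$ and $Q_{t}(K_{t})=1$, and identify the resulting conjugate with $\mathcal{D}_{v_{t}^{\ast},\widehat{Q}_{t}}$ through Proposition \ref{propvstar}. The only (immaterial) variation is in the passage from the span of calls to $\mathcal{C}_{b}(K_{t})$: you approximate inside the conjugate with dominated convergence, whereas the paper invokes norm continuity of $U_{\widehat{Q}_{t}}$ (Extended Namioka--Klee) together with sup-norm density of piecewise linear functions; both are valid on compact $K_{t}$.
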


\begin{proof}
Set $U(\varphi )=\sum_{t=0}^{T}U_{\widehat{Q}_{t}}(\varphi _{t})$ for $%
\varphi \in \mathcal{E}$. By Lemma \ref{remfrominftytofiniteness} for each $%
t=0,\dots ,T$ the monotone concave functional $\varphi _{t}\mapsto U_{%
\widehat{Q}_{t}}(\varphi _{t})$ is actually well defined, finite valued,
concave and nondecreasing on the whole $\mathcal{C}_{b}(K_{t}).$ Hence, by
the Extended Namioka-Klee Theorem (see \cite{bfnam}) it is norm continuous
on $\mathcal{C}_{b}(K_{t})$. 

We also observe that in this case we are in Setup \ref{SetupsigmaU} and we can apply Lemma 
\ref{conjugateaddU} Equation \eqref{additivestr}. We 
have 
\begin{align}
\mathcal{D}(Q)& :=\sup_{\varphi \in \mathcal{E}}\left( U(\varphi
)-\sum_{t=0}^{T}\int_{K_{t}}\varphi _{t}\,\mathrm{d}Q_{t}\right)=%
\sum_{t=0}^T\sup_{\varphi_t \in \mathcal{E}_t}\left( U_{\widehat{Q}%
_t}(\varphi_t )-\int_{K_{t}}\varphi _{t}\,\mathrm{d}Q_{t}\right)  \notag \\
&=\sum_{t=0}^T\sup_{\varphi_t \in \mathcal{C}_b(K_t)}\left( U_{\widehat{Q}%
_t}(\varphi_t )-\int_{K_{t}}\varphi _{t}\,\mathrm{d}Q_{t}\right)
\label{eq11} \\
&\overset{(\star)}{=}\sum_{t=0}^{T}\sup_{\varphi \in \mathcal{C}%
_{b}(K_t),\alpha_t\in{\mathbb{R}},\lambda_t\in{\mathbb{R}}}\left(
\int_{K_{t}}u_{t}(\varphi _{t}+\alpha_t X_t+\lambda_t)\,\mathrm{d}\widehat{Q}%
_{t}-\int_{K_{t}}(\varphi _{t}+\alpha_t X_t+\lambda_t)\,\mathrm{d}%
Q_{t}\right)  \notag \\
&\overset{(\star\star)}{=}\sum_{t=0}^{T}\sup_{\varphi \in \mathcal{C}%
_{b}(K_t)}\left( \int_{K_{t}}u_{t}(\varphi _{t})\,\mathrm{d}\widehat{Q}%
_{t}-\int_{K_{t}}\varphi _{t}\,\mathrm{d}Q_{t}\right)  \notag \\
& =\sum_{t=0}^{T}\sup_{\psi _{t}\in \mathcal{C}_{b}(K_t)}\left(
\int_{K_{t}}\psi _{t}\,\mathrm{d}Q_{t}-\int_{K_{t}}v_{t}(\psi _{t})\,\mathrm{%
d}\widehat{Q}_{t}\right) =\sum_{t=0}^{T}\mathcal{D}_{v_{t}^{\ast },\widehat{Q%
}_{t}}(Q_{t}),\,\,\,\forall Q\in \mathrm{Mart}(\Omega )  \notag
\end{align}%
where: in \eqref{eq11} we combined the aforementioned continuity of $U_{%
\widehat{Q}_t}$ on $\mathcal{C}_b(K_t)$ with the fact that $\mathcal{E}_{t}$
consists of all piecewise linear functions on $K_{t}$ hence $\mathcal{E}_t$
is norm dense in $\mathcal{C}_{b}(K_{t})$; in $(\star)$ we used the fact
that for $\widetilde{\varphi}_{t}:=\varphi _{t}+\alpha_t X_{t}+\lambda_t$,
and for every ${Q}\in \mathrm{Mart}(\Omega )$, $\int_{K_t}\widetilde{\varphi}%
\mathrm{d}{Q}_t=\int_{K_t}{\varphi}\mathrm{d}\widehat{Q}_t+\alpha_t
x_0+\lambda_t$, and in $(\star\star)$ we exploited the fact that $\widetilde{%
\varphi}_{t}\in \mathcal{E}_{t}$ for every $\varphi_t\in\mathcal{E}_t$, $%
\alpha_t ,\lambda_t \in {\mathbb{R}}$; the last equality follows from Proposition \ref%
{propvstar} Equation \eqref{explicitdual}.

Using Lemma \ref{conjugateaddU} and the fact that $U_{\widehat{Q}_{0}},\dots
,U_{\widehat{Q}_{T}}$ are cash additive we get $S^{U}(\varphi
)=\sum_{t=0}^{T}S^{U_{\widehat{Q}_{t}}}(\varphi _{t})=\sum_{t=0}^{T}U_{%
\widehat{Q}_{t}}(\varphi _{t})=U(\varphi )$. By Lemma \ref%
{remfrominftytofiniteness} Item 1, the assumptions of Corollary \ref%
{maincorcasecpt} are satisfied so that we obtain 
\begin{equation*}
\inf_{Q\in \mathrm{Mart}(\Omega )}\left( E_{Q}\left[ c(X)\right]
+\sum_{t=0}^{T}\mathcal{D}_{v_{t}^{\ast },\widehat{Q}_{t}}(Q_{t})\right)
=\sup \left\{ \sum_{t=0}^{T}U_{\widehat{Q}_{t}}(\varphi _{t})\mid \varphi
\in \,\mathcal{S}_{sub}(c)\right\} \,.
\end{equation*}
Existence of optima follows again from Corollary \ref{maincorcasecpt}.
\end{proof}


We stress the fact that in Corollary \ref{corolldivinduced} we assume that
all the functions $u_{0},\dots ,u_{T}$ are real valued on the whole ${%
\mathbb{R}}$. A more general result can be obtained when weakening this
assumption, but it requires an additional assumption on the marginals of $%
\widehat{Q}$. %

\begin{corollary}
\label{corwithsingular} Suppose Assumption \ref{aU1} is fulfilled. Assume
that $\widehat{Q}_{0},\dots ,\widehat{Q}_{T}$ have full support on $%
K_{0},\dots ,K_{T}$ respectively\footnote{%
\noindent Observe that we are requesting the full support property on $%
K_{0},\dots ,K_{T}$ with respect to their induced (Euclidean) topology. In
particular, this means that whenever $k_{t}\in K_{t}$ is an isolated point, $%
\widehat{Q}_{t}(\{k_{t}\})>0$. This is consistent with the assumption $%
K_{0}=\{x_{0}\}$, which implies $\mathrm{Prob}(K_{0})$ reduces to the Dirac
measure, $\mathrm{Prob}(K_{0})=\{\delta _{\{x_{0}\}}\}$.}. Then Equation %
\eqref{subnonadd} holds true replacing $\mathcal{D}_{v_{t}^{\ast },\widehat{Q%
}_{t}}(Q_{t})$ with $\mathcal{F}_{t}(Q_{t}| \widehat{Q}_{t})$. Moreover,
finiteness of the problem in the LHS of \eqref{subnonadd} implies the existence of a minimum.
\end{corollary}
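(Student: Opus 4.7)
The plan is to mimic the proof of Corollary \ref{corolldivinduced} step by step, substituting Proposition \ref{PPP} in place of Proposition \ref{propvstar} in the key dual identification; the full support assumption on the marginals $\widehat{Q}_0,\dots,\widehat{Q}_T$ is precisely what is needed to invoke Proposition \ref{PPP}.

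First, I would set $U(\varphi):=\sum_{t=0}^{T}U_{\widehat{Q}_{t}}(\varphi_{t})$ for $\varphi\in\mathcal{E}$. Lemma \ref{remfrominftytofiniteness} (whose hypotheses are precisely Assumption \ref{aU1}, without any requirement that $\mathrm{dom}(u_t)=\mathbb{R}$) guarantees that each $U_{\widehat{Q}_t}$ is real-valued, concave, nondecreasing, stock additive, and null at $0$ on $\mathcal{C}_b(K_t)$; by the Extended Namioka--Klee Theorem it is then norm continuous on $\mathcal{C}_b(K_t)$. This places us in Setup \ref{SetupsigmaU}, so that Lemma \ref{conjugateaddU} yields both the additive formula $S^U(\varphi)=\sum_{t=0}^T S^{U_{\widehat{Q}_t}}(\varphi_t)$ and (via the cash additivity of the $U_{\widehat{Q}_t}$) the identity $S^U(\varphi)=U(\varphi)$.

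The main step is to identify the conjugate $\mathcal{D}(Q)$ for $Q\in\mathrm{Mart}(\Omega)$ as $\sum_{t=0}^{T}\mathcal{F}_{t}(Q_{t}\mid \widehat{Q}_{t})$. I would repeat, verbatim, the chain of equalities of Corollary \ref{corolldivinduced}: using density of $\mathcal{E}_t$ in $\mathcal{C}_b(K_t)$ combined with norm continuity of $U_{\widehat{Q}_t}$ to pass from the supremum over $\mathcal{E}_t$ to the supremum over $\mathcal{C}_b(K_t)$; then unfolding the definition (\ref{defUQ}) of $U_{\widehat{Q}_t}$ to write it as a supremum over $\varphi_t\in\mathcal{C}_b(K_t)$, $\alpha_t,\lambda_t\in\mathbb{R}$; using that $\widetilde\varphi_t=\varphi_t+\alpha_tX_t+\lambda_t$ satisfies $\int_{K_t}\widetilde\varphi_t\,\mathrm{d}Q_t=\int_{K_t}\varphi_t\,\mathrm{d}\widehat{Q}_t+\alpha_tx_0+\lambda_t$ (since $Q\in\mathrm{Mart}(\Omega)$ has the same expectation of $X_t$ as $\widehat{Q}$, both equal to $x_0$); and finally, after the change of variable $\psi_t=-\varphi_t$, invoking Proposition \ref{PPP} to recognize
\begin{equation*}
\sup_{\psi_t\in\mathcal{C}_b(K_t)}\left(\int_{K_t}\psi_t\,\mathrm{d}Q_t-\int_{K_t}v_t(\psi_t)\,\mathrm{d}\widehat{Q}_t\right)=\mathcal{F}_t(Q_t\mid \widehat{Q}_t),
\end{equation*}
where the full support hypothesis on $\widehat{Q}_t$ is exactly the assumption required by Proposition \ref{PPP}.

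With $S^U=U$ real-valued on $\mathcal{E}$ and $\mathcal{D}(Q)=\sum_{t=0}^{T}\mathcal{F}_{t}(Q_t\mid \widehat{Q}_t)$ on $\mathrm{Mart}(\Omega)$, the assumptions of Corollary \ref{maincorcasecpt} are fulfilled (lower semicontinuity of $c$ is standing, and $U(0)=0\in\mathbb{R}$). Applying Corollary \ref{maincorcasecpt} yields the desired duality and the existence of a minimizer in the LHS whenever it is finite. The main obstacle is the dual identification of $\mathcal{F}_t$ via bounded continuous test functions; this has been isolated in Proposition \ref{PPP}, and as Example \ref{counterexample1} shows, this is where full support is essential --- without it one cannot approximate singular-type measures through $\mathcal{C}_b(K_t)$ test functions even when $(v_t^\ast)_\infty'<+\infty$.
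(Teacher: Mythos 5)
Your proposal is correct and follows exactly the route the paper takes: its proof of Corollary \ref{corwithsingular} consists precisely of the remark that the argument of Corollary \ref{corolldivinduced} carries over verbatim once Proposition \ref{propvstar} is replaced by Proposition \ref{PPP}, the full-support hypothesis being what licenses that substitution. Your additional observations (that Lemma \ref{remfrominftytofiniteness} never uses $\mathrm{dom}(u_t)={\mathbb{R}}$, and that Example \ref{counterexample1} pinpoints why full support is needed) are accurate and consistent with the paper.
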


\begin{proof}
The proof can be carried over almost literally as the proof of Corollary \ref%
{corolldivinduced}, with the exception of replacing the reference to
Proposition \ref{propconjugatesofranddiv} with the reference to Proposition %
\ref{PPP}.
\end{proof}

\bigskip

We now take $u_{t}(x)=x$ for each $t=0,\dots ,T$, and get $U_{\widehat{Q}%
_{t}}(\varphi _{t})=V_{\widehat{Q}_{t}}(\varphi _{t})={E}_{\widehat{Q}%
_{t}}[\varphi _{t}]$. Hence with an easy computation we have 
\begin{equation*}
\mathcal{D}_{v_{t}^{\ast },\widehat{Q}_{t}}(Q_{t})=%
\begin{cases}
0 & \,\,\,\,\,\text{ if }Q_{t}\equiv \widehat{Q}_{t} \\ 
+\infty  & \,\,\,\,\,\text{ otherwise. }%
\end{cases}%
\,\quad \text{for all }Q\in \mathrm{Mart}(\Omega ).
\end{equation*}%
Recalling that $\mathrm{Mart}(\widehat{Q}_{0},\dots ,\widehat{Q}_{T})=\{Q\in 
\mathrm{Mart}(\Omega )\mid Q_{t}\equiv \widehat{Q}_{t}\,\,\forall
\,t=0,\dots ,T\}$, from Corollary \ref{corolldivinduced} we can recover the
following result of \cite{Beig} (under the compactness assumption, which
will be dropped in Corollary \ref{CorNew}).

\begin{corollary}
\label{CorBeig}The following equality holds: 
\begin{align}
\inf_{Q\in \mathrm{Mart}(\widehat{Q}_{0},\dots ,\widehat{Q}_{T})}E_{Q}\left[
c\right] & =\sup \left\{ \sum_{t=0}^{T}{E}_{\widehat{Q}_{t}}[\varphi
_{t}]\mid \varphi \in \mathcal{S}_{sub}(c)\right\}\,.  \label{subbeig}
\end{align}
Moreover, if LHS of \eqref{subbeig} is finite, a minimum point exists.
\end{corollary}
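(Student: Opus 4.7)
My plan is to derive Corollary \ref{CorBeig} as a direct specialization of Corollary \ref{corolldivinduced}, by taking the trivial utility functions $u_t(x)=x$ for every $t=0,\dots,T$ and verifying that this choice collapses the entropic penalty to the hard marginal constraint.

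First I would verify Assumption \ref{aU1} for $u_t(x)=x$: these functions are concave, continuous, nondecreasing, satisfy $u_t(0)=0$ and $u_t(x)\le x$ (with equality), and are finite on all of $\mathbb{R}$, so Corollary \ref{corolldivinduced} is applicable. A short computation of the convex conjugate gives $v_t(x)=-u_t(-x)=x$ and
\[
v_t^{\ast}(y)=\sup_{x\in\mathbb{R}}\bigl(xy-x\bigr)=\sup_{x\in\mathbb{R}}x(y-1)=\begin{cases}0 & \text{if }y=1,\\ +\infty & \text{otherwise.}\end{cases}
\]
From the definition of $\mathcal{D}_{v_t^{\ast},\widehat{Q}_t}$ in \eqref{dvstar} this implies that $\mathcal{D}_{v_t^{\ast},\widehat{Q}_t}(Q_t)=0$ if $Q_t\ll\widehat{Q}_t$ with $\tfrac{\mathrm{d}Q_t}{\mathrm{d}\widehat{Q}_t}=1$ $\widehat{Q}_t$-a.s., i.e.\ $Q_t=\widehat{Q}_t$, and $+\infty$ otherwise.

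Consequently, in the LHS of \eqref{subnonadd} the penalization term $\sum_{t=0}^T\mathcal{D}_{v_t^{\ast},\widehat{Q}_t}(Q_t)$ is finite only when $Q_t=\widehat{Q}_t$ for every $t=0,\dots,T$, so the infimum over $\mathrm{Mart}(\Omega)$ restricts exactly to the infimum over $\mathrm{Mart}(\widehat{Q}_0,\dots,\widehat{Q}_T)$, matching the LHS of \eqref{subbeig}.

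For the RHS, I would compute the functional $U_{\widehat{Q}_t}$ defined in \eqref{defUQ} with $u_t(x)=x$: using that $\widehat{Q}\in\mathrm{Mart}(\Omega)$ (so $\int x_t\,\mathrm{d}\widehat{Q}_t=x_0$), the integrand $u_t(\varphi_t+\alpha x_t+\lambda)=\varphi_t(x_t)+\alpha x_t+\lambda$ integrates to $E_{\widehat{Q}_t}[\varphi_t]+\alpha x_0+\lambda$, and after subtracting $\alpha x_0+\lambda$ the supremum in $(\alpha,\lambda)$ collapses to $E_{\widehat{Q}_t}[\varphi_t]$. Plugging $U_{\widehat{Q}_t}(\varphi_t)=E_{\widehat{Q}_t}[\varphi_t]$ into the RHS of \eqref{subnonadd} yields the RHS of \eqref{subbeig}. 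Existence of a minimizer whenever the LHS of \eqref{subbeig} is finite is inherited from the corresponding statement in Corollary \ref{corolldivinduced}. The only mild subtlety is checking that, under the standing assumption $c\in L^1(\widehat{Q})$ and the compactness of $\Omega$, the invocation of Corollary \ref{corolldivinduced} is truly legitimate (in particular $c$ is lower semicontinuous and $u_t$ is real valued on all of $\mathbb{R}$), but these verifications are immediate and there is no serious obstacle.
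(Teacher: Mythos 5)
Your proposal is correct and is essentially identical to the paper's own proof: the paper also specializes Corollary \ref{corolldivinduced} to $u_t(x)=x$, computes $v_t^{\ast}(1)=0$ and $v_t^{\ast}(y)=+\infty$ for $y\neq 1$ so that $\mathcal{D}_{v_t^{\ast},\widehat{Q}_t}$ becomes the characteristic function of $\{Q_t\equiv\widehat{Q}_t\}$, and identifies $U_{\widehat{Q}_t}(\varphi_t)=E_{\widehat{Q}_t}[\varphi_t]$. No gaps.
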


\subsection{Subhedging without Options}

The pricing-hedging duality without options takes the following form.

\begin{corollary}
\label{corollrob} The following equality hold: 
\begin{equation}
\inf_{Q\in \mathrm{Mart}(\Omega )}E_{Q}\left[ c\right]=\sup \left\{ m\in {%
\mathbb{R}}\mid \exists \Delta\in\mathcal{H}\text{ with }m+I^\Delta\leq
c\right\} :=\Pi ^{\text{sub}}(c)\,.  \label{robustsub}
\end{equation}
Moreover, if LHS of \eqref{robustsub} is
finite, a minimum point exists.
\end{corollary}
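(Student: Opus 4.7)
The plan is to deduce Corollary \ref{corollrob} from the additive-structure duality in Theorem \ref{mainthm3noncpt} (with the compact-case hypotheses of Corollary \ref{maincorcasecpt}) by choosing the static hedging spaces $\mathcal{E}_t$ and valuation functionals $S_t$ so narrowly that (i) the induced penalization vanishes on $\mathrm{Mart}(\Omega)$ and (ii) any static portfolio can be absorbed into a deterministic dynamic strategy without changing its $S$-value. Concretely I would take
\begin{equation*}
\mathcal{E}_{t}:=\{\alpha X_{t}+\beta \mid \alpha ,\beta \in \mathbb{R}\}\subseteq C_{t},\qquad S_{t}(\alpha X_{t}+\beta ):=\alpha x_{0}+\beta ,\qquad t=0,\dots ,T.
\end{equation*}
Here $\mathcal{E}_{0}=\mathbb{R}$ since $K_{0}=\{x_{0}\}$. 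Each $S_{t}$ is linear (hence concave), cash additive, null at $0$, and $X_{t}\in \mathcal{E}_{t}$; moreover $U(0)=\sum _{t}S_{t}(0)=0$, so the hypotheses of Corollary \ref{maincorcasecpt} (which replaces Assumption \ref{asscallstozero}.(ii) in the compact setting) are fulfilled and Theorem \ref{mainthm3noncpt} applies.

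For the left-hand side, I would observe that every $Q\in \mathrm{Mart}(\Omega )$ satisfies $E_{Q_{t}}[X_{t}]=E_{Q}[X_{0}]=x_{0}$ by the martingale property and $K_{0}=\{x_{0}\}$, so
\begin{equation*}
\mathcal{D}_{t}(Q_{t})=\sup _{\alpha ,\beta \in \mathbb{R}}\bigl(\alpha x_{0}+\beta -\alpha E_{Q_{t}}[X_{t}]-\beta \bigr)=0.
\end{equation*}
Hence the LHS of \eqref{dualrepreabstractnoncpt} collapses to $\inf _{Q\in \mathrm{Mart}(\Omega )}E_{Q}[c]$, as desired.

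For the right-hand side, the key (mild) computation is a telescoping rearrangement: for $\Delta '_{s}:=\sum _{u=s+1}^{T}\alpha _{u}\in \mathbb{R}\subseteq \mathcal{H}$ one checks
\begin{equation*}
I^{\Delta '}(x)=\sum _{s=0}^{T-1}\Bigl(\sum _{u=s+1}^{T}\alpha _{u}\Bigr)(x_{s+1}-x_{s})=\sum _{u=1}^{T}\alpha _{u}(x_{u}-x_{0}),
\end{equation*}
so $\sum _{t=0}^{T}(\alpha _{t}x_{t}+\beta _{t})=m+I^{\Delta '}(x)$ with $m:=(\sum _{t}\alpha _{t})x_{0}+\sum _{t}\beta _{t}=\sum _{t}S_{t}(\varphi _{t})$. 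Consequently, a pair $(\varphi ,\Delta )\in \mathcal{E}\times \mathcal{H}$ with $\sum _{t}\varphi _{t}+I^{\Delta }\leq c$ gives rise to $(m,\Delta +\Delta ')\in \mathbb{R}\times \mathcal{H}$ with $m+I^{\Delta +\Delta '}\leq c$ and $m=\sum _{t}S_{t}(\varphi _{t})$, showing that the RHS of \eqref{dualrepreabstractnoncpt} is bounded above by $\Pi ^{\text{sub}}(c)$. Conversely, given any admissible $(m,\Delta )$ for $\Pi ^{\text{sub}}(c)$, taking $\varphi _{0}:=m\in \mathcal{E}_{0}$, $\varphi _{t}:=0$ for $t\geq 1$ exhibits a feasible point with value $m$, so the two quantities coincide.

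Existence of the minimizer when $\inf _{Q\in \mathrm{Mart}(\Omega )}E_{Q}[c]<+\infty $ is then inherited directly from the final clause of Corollary \ref{maincorcasecpt}. The only non-routine step is the telescoping identity, but it is a standard discrete-time martingale rearrangement, so no real obstacle arises; the whole proof is essentially a specialization of the machinery already built.
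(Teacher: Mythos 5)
Your proof is correct and follows the same strategy as the paper's: specialize the additive duality (Theorem \ref{mainthm3noncpt}, via the compact-case Corollary \ref{maincorcasecpt}) to a linear valuation whose conjugate penalization vanishes on $\mathrm{Mart}(\Omega )$. The only material difference is the choice of static space: the paper takes $\mathcal{E}_{t}={\mathbb{R}}$ and $U(m)=\sum_{t}m_{t}$, so the primal problem is literally $\Pi ^{\text{sub}}(c)$ with no further work, whereas your larger $\mathcal{E}_{t}=\{\alpha X_{t}+\beta \}$ buys nothing here and costs you the telescoping rearrangement (which is carried out correctly) plus an unaddressed, though harmless, well-definedness check for $S_{t}$ when some $K_{t}$ is a singleton --- under Standing Assumption \ref{asssubsuper} a martingale measure exists, which forces any singleton $K_{t}$ to equal $\{x_{0}\}$, so $S_{t}$ is indeed well defined. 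Both routes are valid; the paper's is the minimal one.
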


\begin{proof}
We take $\mathcal{E}_{0}=\dots =\mathcal{E}_{T}={\mathbb{R}}$ and $\mathcal{E%
}=\mathcal{E}_{0}\times \dots \times \mathcal{E}_{T}={\mathbb{R}}^{T+1}.$ For each $\varphi \in \mathcal{E}$ with $%
\varphi =[m_{1},\dots ,m_{T}]$, $m\in {\mathbb{R}}^{T+1}$ we select\footnote{%
As $\mathcal{E}={\mathbb{R}}^{T+1}$, we notice that if $u_{t}(x_{t})=x_{t},$ 
$t=0,...,T,$ and $\widehat{Q}\in \mathrm{Mart}(\Omega )$, the functional $U_{%
\widehat{Q}_{t}}$ defined in \eqref{defUQ} is given by $U_{\widehat{Q}%
_{t}}(m_{t})=m_{t}$ and so $U(m)=\sum_{t=0}^{T}U_{\widehat{Q}%
_{t}}(m_{t})=\sum_{t=0}^{T}m_{t}$ for all $m\in \mathcal{E}$.} $U(\varphi
)=\sum_{t=0}^{T}m_{t}$. Then by definition of $\mathcal{D}$ (see Section \ref{sectionsetting}) we get 
\begin{equation*}
\mathcal{D}(\gamma )=%
\begin{cases}
0 & \text{ for }\gamma \in \mathrm{ca}(\Omega )\text{ s.t. }\gamma (\Omega
)=1 \\ 
+\infty & \text{ otherwise.}%
\end{cases}%
.
\end{equation*}%
In particular $\mathcal{D}(Q)=0$ for every $Q\in \mathrm{Mart}(\Omega )$.
Moreover we observe that $\mathcal{S}^{U}(\varphi )=U(\varphi )$ for every $%
\varphi \in \mathcal{E}$. Applying Corollary \ref{maincorcasecpt}, from Equation \eqref{subhedgingcor}
we get that 
\begin{equation*}
\inf_{Q\in \mathrm{Mart}(\Omega )}E_{Q}\left[ c\right] =\sup \left\{
\sum_{t=0}^{T}m_{t}\mid m_{0},\dots ,m_{T}\in {\mathbb{R}}\text{ s.t. }%
\exists \,\Delta \in \mathcal{H}\text{ with }\sum_{t=0}^{T}m_{t}+I^{\Delta
}\leq c\right\} .
\end{equation*}%
We recognize in the RHS above the RHS of \eqref{robustsub}. Finally, existence of
optima follows again from Corollary \ref{maincorcasecpt}.
\end{proof}

\subsection{Subhedging without the market\label{SecEOT}}

As in Example \ref{ExEOT} we let $\{0\}\subseteq \mathcal{A}\subseteq
-(C_{0:T})_{+}$, so that $\mathrm{Prob}^{1}(\Omega )=\mathcal{A}^{\circ
}\cap \mathrm{Prob}^{1}(\Omega )$.

\begin{corollary}
\label{CorEOT} Let $\mathcal{E}=\bigtimes_{t=0}^T\mathcal{
C}_b(K_t)$ and suppose that $U(\varphi)=0$ for some $\varphi\in\mathcal{E}$.
Then: 
\begin{align}
\inf_{Q\in \mathrm{Prob}^{1}(\Omega )}\left( E_{Q}[c]+\mathcal{D}(Q)\right)
=\sup \left\{ S^{U}(\varphi )\mid \varphi \in \mathcal{E},\text{ }%
\sum_{t=0}^{T}\varphi _{t}\leq c\right\} \,.  \label{SubEOT}
\end{align}
\end{corollary}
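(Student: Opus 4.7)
The plan is to specialize Corollary \ref{mainthm} to the present setup, where the convex cone of admissible hedging instruments is essentially trivial. Under Standing Assumption \ref{asssubsuper} we have $d=1$, $K_0,\dots,K_T$ compact, and $c:\Omega\to(-\infty,+\infty]$ lower semicontinuous; together with $\mathcal{E}=\bigtimes_{t=0}^T\mathcal{C}_b(K_t)$ and the hypothesis $U(\varphi)=0$ for some $\varphi\in\mathcal{E}$, every hypothesis of Corollary \ref{mainthm} is in force. Applying it yields
\begin{equation*}
\mathfrak{P}(c)=\sup_{z\in -\mathcal{A}}\sup_{\varphi\in\mathbf{\Phi}_z(c)}S^U(\varphi)=\inf_{Q\in\mathrm{Prob}^1(\Omega)\cap\mathcal{A}^\circ}\bigl(E_Q[c]+\mathcal{D}(Q)\bigr).
\end{equation*}

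Next, I would identify the dual side. Since by hypothesis $\mathcal{A}\subseteq-(C_{0:T})_+$, every $z\in\mathcal{A}$ satisfies $z\le 0$; consequently, for every non-negative $\lambda\in(C_{0:T})^*$ one has $\langle z,\lambda\rangle\le 0$ for all $z\in\mathcal{A}$, so $\lambda\in\mathcal{A}^\circ$. In particular $\mathrm{Prob}^1(\Omega)\subseteq\mathcal{A}^\circ$, and hence $\mathrm{Prob}^1(\Omega)\cap\mathcal{A}^\circ=\mathrm{Prob}^1(\Omega)$. This matches the infimum on the LHS of \eqref{SubEOT}.

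Then I would simplify the primal side. The hypothesis $\{0\}\subseteq\mathcal{A}$ ensures $0\in-\mathcal{A}$, while for any other $z\in-\mathcal{A}$ we have $z\ge 0$; thus if $\varphi\in\mathbf{\Phi}_z(c)$ then $\sum_{t=0}^T\varphi_t\le\sum_{t=0}^T\varphi_t+z\le c$, and conversely any $\varphi$ with $\sum_{t=0}^T\varphi_t\le c$ lies in $\mathbf{\Phi}_0(c)$. Therefore the outer supremum over $z\in-\mathcal{A}$ is attained at $z=0$ and reduces to
\begin{equation*}
\sup\bigl\{S^U(\varphi)\mid \varphi\in\mathrm{dom}(S^U),\,\textstyle\sum_{t=0}^T\varphi_t\le c\bigr\}.
\end{equation*}
Since $S^U(\varphi)=-\infty$ whenever $\varphi\in\mathcal{E}\setminus\mathrm{dom}(S^U)$, such $\varphi$ contribute nothing to the supremum, so the above coincides with the RHS of \eqref{SubEOT}. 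Chaining the three equalities gives the claim.

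There is no real obstacle: the result is essentially the specialization of Theorem \ref{mainEMOTtheoremgeneral} (via Corollary \ref{mainthm}) to a cone $\mathcal{A}$ whose polar contains all probability measures, so that the martingale-type restriction disappears and one recovers the Entropy Optimal Transport duality of \cite{Sava} (in its EMOT-formulated version) without any dynamic hedging.
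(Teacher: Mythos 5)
Your proposal is correct and follows essentially the same route as the paper: apply Corollary \ref{mainthm}, observe that $\mathrm{Prob}^{1}(\Omega)\cap\mathcal{A}^{\circ}=\mathrm{Prob}^{1}(\Omega)$ because every nonnegative functional lies in the polar of $\mathcal{A}\subseteq-(C_{0:T})_{+}$, and note that the supremum over $z\in-\mathcal{A}$ collapses to $z=0$ since any $z\in-\mathcal{A}$ is nonnegative. Your write-up simply spells out the details that the paper's two-line proof leaves implicit.
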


\begin{proof}
It is enough to apply Corollary \ref{mainthm}, and to observe that since $%
f\in-\mathcal{A}\Rightarrow f\in (C_{0:T})_+$ we have  
\begin{equation*}
\sup \left\{ S^{U}(\varphi )\mid \varphi \in \mathcal{E},\text{ }%
\sum_{t=0}^{T}\varphi _{t}+f\leq c\text{ for some }f\in -\mathcal{A}%
\right\}=\sup \left\{ S^{U}(\varphi )\mid \varphi \in \mathcal{E},\text{ }%
\sum_{t=0}^{T}\varphi _{t}\leq c\right\} \,.
\end{equation*}
\end{proof}

The LHS of \eqref{SubEOT} is the Entropy Optimal Transport problem as in 
\cite{Sava}, but with $\mathrm{Prob}^{1}(\Omega )$ replacing $\mathrm{Meas}%
(\Omega )$ and \eqref{SubEOT} is the subhedging price where only options can
be used for hedging. Notice that a similar argument yields the same duality
in the noncompact case, provided that one invokes (if the corresponding
assumptions on $c$ and $U$ are satisfied) Theorem \ref%
{mainEMOTtheoremgeneral} in place of its Corollary \ref{mainthm}.

\subsection{Penalization with market price}

\label{CasePenalization1gen} In this Section we change our perspective.
Instead of starting from a given $U$, we will give a particular form of the
penalization term $\mathcal{D}$ and proceed in identifying the corresponding 
$U$ in the spirit of Remark \ref{remfenchelmoreau}. For each $t=0,\dots ,T$
we suppose that finite sequences $(c_{t,n})_{1\leq n\leq N_{t}}\subseteq {%
\mathbb{R}}$ and $(f_{t,n})_{1\leq n\leq N_{t}}\subseteq \mathcal{C}%
_{b}(K_{t})$ are given. The functions $(f_{t,n})_{1\leq n\leq
N_{t}}\subseteq \mathcal{C}_{b}(K_{t})$ represent payoffs of options whose
prices $(c_{t,n})_{1\leq n\leq N_{t}}\subseteq {\mathbb{R}}$ are known from
the market. We also take $\mathcal{E}=\bigtimes_{t=0}^T\mathcal{
C}_b(K_t)$. 

We introduce the notion of a loss function that will be useful
here and also in the sequel (see
Section \ref{secwasser} and Section \ref{secconvergewasser}) to build penalization functions.

\begin{definition}
\label{condpenalty} A function $G:{\mathbb{R}}\rightarrow (-\infty ,+\infty ]$
is called a \emph{loss function} if it is convex, nondecreasing, lower
semicontinuous and satisfies $G(0)=0$. We set $\mathrm{dom}(G):=\left\{ x\in 
{\mathbb{R}}\mid G(x)<+\infty \right\} $. The conjugate function $G^{\ast }:{%
\mathbb{R}}\rightarrow (-\infty ,+\infty ]$, defined by $G^{\ast
}(y)=\sup_{x\in {\mathbb{R}}}(xy-G(x)),$ satisfies, due to the monotonicity of $G$, $G^{\ast }(y)=+\infty $ for every $y<0$.
\end{definition}

Our requirements allow for a wide range of penalizations.
For example, we might use power-like penalizations, i.e. $G(x)=\frac{%
x^{p}}{{p}}$ for $x>0$ and for ${p}\in (1,+\infty )$, $%
G(x)=0$ for $x\leq 0$. In such a case for every $x\geq 0$ we have $%
G^{\ast }(x)=\frac{x^{q}}{{q}}$ for $\frac{1}{p}+%
\frac{1}{q}=1$. Alternatively, we
might take for $x,y\in {\mathbb{R}}$ 
\begin{equation}
\label{threshold}
G(x)=%
\begin{cases}
0 & \,\,\,\,\text{ if }x\leq \varepsilon  \\ 
+\infty  & \,\,\,\,\text{ otherwise}%
\end{cases}%
\,\,\,\,\,\,\,\,\,\Longrightarrow \,\,\,\,\,\,\,\,\,G^{\ast
}(y)=\varepsilon y\text{ for }y\geq 0\,.
\end{equation}

We define 
\begin{equation}
\mathrm{Mart}_{t}(K_{t})=\left\{ \gamma _{t}\in \mathrm{Prob}(K_{t})\mid
\exists \,Q\in \mathrm{Mart}(\Omega )\text{ with }\gamma _{t}\equiv
Q_{t}\right\} \subseteq\mathrm{ca}(K_t)  \label{marginalmartgen}
\end{equation}
and for for $\gamma _{t}\in \mathrm{ca}(K_{t})$ 
\begin{equation*}
\mathcal{D}^{G}_t(\gamma_t):=%
\begin{cases}
\sum_{n=1}^{N_{t}}G_{t,n}\left(\left\vert \int_{K_{t}}f_{t,n}\,\mathrm{d}%
\gamma _{t}-c_{t,n}\right\vert\right) & \,\,\,\,\text{ for }\gamma _{t}\in 
\mathrm{Mart}_t(K_{t}) \\ 
+\infty & \,\,\,\,\text{ otherwise }%
\end{cases}
\label{defDct1gen}
\end{equation*}


\begin{proposition}
\label{propdualwithbestfit} For every $n=0,\dots
,N_{t},\,t=0,\dots ,T$, we assume that $G_{n,t}:{\mathbb{R}}\rightarrow
(-\infty ,+\infty ]$\ is a loss function and we suppose that the martingale measure $\widehat{Q}\in 
\mathrm{Mart}(\Omega )$ in Standing Assumption \ref{asssubsuper} also
satisfies $\left\vert \int_{K_{t}}f_{t,n}\,\mathrm{d}\widehat{Q}%
_{t}-c_{t,n}\right\vert \in \mathrm{dom}(G_{t,n})$. Then

\begin{equation}
\inf_{Q\in \mathrm{Mart}(\Omega )}\left( E_{Q}\left[ c\right] +\sum_{t=0}^{T}%
\mathcal{D}_{t}^{G}(Q_{t})\right) =\sup \left\{
\sum_{t=0}^{T}U_{t}^{G}(\varphi _{t})\mid \varphi \in \mathcal{S}%
_{sub}(c)\right\} ,  \label{subpenal}
\end{equation}%
where 
\begin{equation*}
U_{t}^{G}(\varphi _{t}):=\sup_{y_{t}\in {\mathbb{R}}^{N_{t}}}\left( \Pi ^{%
\text{sub}}\left( \varphi _{t}+\sum_{n=1}^{N_{t}}y_{t,n}(f_{t,n}-c_{t,n})\right)
-\sum_{n=1}^{N_{t}}G_{t,n}^{\ast }(y_{t,n})\right) 
\end{equation*}%
is a stock additive functional and $\Pi ^{\text{sub}}$ is given in %
\eqref{robustsub}. 
%
%
Finally, if LHS of \eqref{subpenal} is finite, a minimum point exists.
\end{proposition}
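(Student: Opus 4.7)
The plan is to realize this result as an instance of the additive-penalization framework of Subsection \ref{additiveD}, viewing $\mathcal{D}^{G}(Q) = \sum_{t=0}^T \mathcal{D}^G_t(Q_t)$ as a proper convex lower semicontinuous functional on $\mathrm{ca}(\Omega)$, and then invoking Corollary \ref{maincorcasecpt} together with an explicit identification of the induced valuation $U_t^G$.

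First I would verify that each $\mathcal{D}_t^G$ fits Setup \ref{SetupsigmaD}. Convexity follows because $\mathrm{Mart}_t(K_t)$ is convex (the projection of the convex set $\mathrm{Mart}(\Omega)$), the maps $\gamma_t \mapsto \int f_{t,n}\,\mathrm{d}\gamma_t - c_{t,n}$ are affine and $\sigma(\mathrm{ca}(K_t),\mathcal{C}_b(K_t))$-continuous, and $z \mapsto G_{t,n}(|z|)$ is convex as the composition of a convex nondecreasing function with $|\cdot|$. Lower semicontinuity uses the $\sigma(\mathrm{ca}(K_t),\mathcal{C}_b(K_t))$-closedness of $\mathrm{Mart}_t(K_t)$ (standard under compactness of $\Omega$, via the martingale characterization through test functions) together with lsc-ness of $G_{t,n}$. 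Properness is provided by the hypothesis $|\int f_{t,n}\,\mathrm{d}\widehat{Q}_t - c_{t,n}| \in \mathrm{dom}(G_{t,n})$, which yields $\mathcal{D}_t^G(\widehat{Q}_t) < +\infty$. Since $\mathrm{dom}(\mathcal{D}_t^G) \subseteq \mathrm{Prob}(K_t)$, Lemma \ref{sumislsc} delivers the additive decompositions $U = \sum_t U_t^G$ and $S^U = \sum_t S^{U_t^G}$ for the functionals built from $\mathcal{D}^G$ via \eqref{eqdefU}.

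Next I would identify $U_t^G$ with the formula in the statement. By construction,
\[
U_t^G(\varphi_t) = -V_t^G(-\varphi_t) = \inf_{\gamma_t \in \mathrm{Mart}_t(K_t)}\!\left(\int_{K_t}\varphi_t\,\mathrm{d}\gamma_t + \sum_{n=1}^{N_t} G_{t,n}\!\left(\left|\int_{K_t} f_{t,n}\,\mathrm{d}\gamma_t - c_{t,n}\right|\right)\right).
\]
Introducing dual variables $y_{t,n}$ via the convex-conjugate representation of $G_{t,n}(|\cdot|)$ and swapping the infimum over $\gamma_t$ with the supremum over $y$ by a Fenchel-Rockafellar (minimax) argument, the inner minimization of the linear functional $\gamma_t \mapsto \int h\,\mathrm{d}\gamma_t$ over $\mathrm{Mart}_t(K_t)$ coincides with $\inf_{Q \in \mathrm{Mart}(\Omega)} E_Q[h(X_t)]$, which by Corollary \ref{corollrob} equals $\Pi^{\text{sub}}(h(X_t))$. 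Using cash additivity of $\Pi^{\text{sub}}$ to absorb the constants $\sum_n y_{t,n} c_{t,n}$ produces exactly the stated expression. Stock additivity of $U_t^G$ on $\mathcal{E}_t$ then follows from the cash additivity of $\Pi^{\text{sub}}$ together with the martingale identity $E_Q[X_t] = x_0$ for every $Q \in \mathrm{Mart}(\Omega)$.

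With the additive decomposition $S^U(\varphi) = \sum_t U_t^G(\varphi_t)$ in hand (cash additivity of each $U_t^G$ collapses $S^{U_t^G}$ to $U_t^G$), Theorem \ref{mainthm3noncpt} in its compact specialization Corollary \ref{maincorcasecpt} yields \eqref{subpenal} together with existence of a minimum in the LHS whenever it is finite. The main obstacle is the Fenchel-Rockafellar step in the identification of $U_t^G$: the infimum over $\mathrm{Mart}_t(K_t)$ must be interchanged with the supremum produced by the duality of $G_{t,n}(|\cdot|)$, and the Slater-type qualification required for this swap is supplied precisely by the standing assumption that $\widehat{Q}$ is an interior point for $\mathcal{D}_t^G$, namely $|\int f_{t,n}\,\mathrm{d}\widehat{Q}_t - c_{t,n}| \in \mathrm{dom}(G_{t,n})$. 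A subsidiary point of care is the bookkeeping of signs in the conjugate of $z\mapsto G_{t,n}(|z|)$ versus $G_{t,n}^{\ast}$ itself, which is handled by the fact that $G_{t,n}^{\ast}$ is automatically $+\infty$ on the negative half-line.
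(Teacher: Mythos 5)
Your overall architecture coincides with the paper's: cast $\mathcal{D}^G=\sum_t\mathcal{D}^G_t$ into Setup \ref{SetupsigmaD}, use Lemma \ref{sumislsc} to obtain the additive decomposition of $U$ and $S^U$, identify $U^G_t$ by dualizing $G_{t,n}$ and recognizing the inner linear optimization over $\mathrm{Mart}_t(K_t)$ as a sub/superhedging value via Corollary \ref{corollrob}, and conclude with Corollary \ref{maincorcasecpt}. The verification of convexity, lower semicontinuity and properness of $\mathcal{D}^G_t$ is also essentially as in the paper.

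The gap is in your justification of the inf--sup interchange. You invoke Fenchel--Rockafellar and claim the required Slater-type qualification is supplied by the hypothesis $\left\vert\int_{K_t}f_{t,n}\,\mathrm{d}\widehat{Q}_t-c_{t,n}\right\vert\in\mathrm{dom}(G_{t,n})$. That hypothesis only places the point in the domain, not in its interior: for the threshold loss \eqref{threshold} the domain is $(-\infty,\varepsilon]$ and the point may sit exactly at $\varepsilon$, so no interior-point qualification is available in general. In the paper this hypothesis plays a different role --- it makes $\mathcal{D}^G_t$ proper, hence $V^G_t$ finite and null at a suitable constant, which is what Corollary \ref{maincorcasecpt} needs --- while the interchange is obtained from a Sion/Simons-type minimax theorem exploiting the $\sigma(\mathrm{ca}(K_t),\mathcal{C}_b(K_t))$-compactness of $\mathrm{Mart}_t(K_t)$ (Lemma \ref{lemmamargmartcptgen}), after restricting the $y$-variables to $\mathrm{dom}(G_{t,1}^{\ast})\times\dots\times\mathrm{dom}(G_{t,N_t}^{\ast})$ so that the Lagrangian is real valued; no constraint qualification enters. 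Replacing your Fenchel--Rockafellar step by this compactness-based argument closes the proof. A further caveat: your remark that the sign issue in conjugating $z\mapsto G_{t,n}(\left\vert z\right\vert)$ is handled by $G^{\ast}_{t,n}\equiv+\infty$ on $(-\infty,0)$ is not by itself a complete argument, since the biconjugate recovers $G_{t,n}(z)$ rather than $G_{t,n}(\left\vert z\right\vert)$ and the two differ for $z<0$ unless $G_{t,n}$ vanishes on $(-\infty,0]$; this point deserves an explicit treatment (the paper's own computation passes over it in the same way).
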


\begin{proof}
$\,$

\textbf{Step 1}: Set $g_{t,n}:=f_{t,n}-c_{t,n}
$. 
For any $t\in \{0,\dots ,T\}$ we prove that the
functional $\mathcal{D}_{t}^{G}$ is $\sigma (\mathrm{ca}(K_{t}),\mathcal{C}%
_{b}(K_{t}))$-lower semicontinuous and  that, for every $\varphi _{t}\in \mathcal{C}%
_{b}(K_{t})$, its Fenchel-Moreau (convex) conjugate satisfies 
\begin{equation*}
V_{t}^{G}(\varphi _{t}):=\sup_{\gamma _{t}\in \mathrm{ca}(K_{t})}\left(
\int_{K_{t}}\varphi _{t}\,\mathrm{d}\gamma _{t}-\mathcal{D}_{t}^{G}(\gamma
_{t})\right) =\inf_{y_{t}\in {\mathbb{R}}^{N_{t}}}\left( \Pi ^{{sup}%
}\left( \varphi _{t}-\sum_{n=1}^{N_{t}}y_{t,n}g_{t,n}\right)
+\sum_{n=1}^{N_{t}}G_{t,n}^{\ast }(y_{t,n})\right) \,,
\end{equation*}%
and thus 
\begin{equation}
U_{t}^{G}(\varphi _{t}):=-V_{t}^{G}(-\varphi _{t})=\sup_{y_{t}\in {\mathbb{R}%
}^{N_{t}}}\left( \Pi ^{{sub}}\left( \varphi
_{t}+\sum_{n=1}^{N_{t}}y_{t,n}g_{t,n}\right)
-\sum_{n=1}^{N_{t}}G_{t,n}^{\ast }(y_{t,n})\right) \,.  \label{utpenalized}
\end{equation}%
Here we are using the definition of the superhedging price $\Pi ^{{sup}}(g):=-\Pi ^{{sub}}(-g)=\sup_{Q\in \mathrm{Mart}(\Omega )}E_{Q}\left[ g\right] $, by Corollary \ref{corollrob}.
We observe that $\mathcal{D}_{t}^{G}$ is $\sigma (\mathrm{ca}(K_{t}),%
\mathcal{C}_{b}(K_{t}))$-lower semicontinuous (it is a sum of functions,
each being composition of a lower semicontinuous function and a continuous
function on $\mathrm{Mart}_{t}(K_{t})$ which is $\sigma (\mathrm{ca}(K_{t}),%
\mathcal{C}_{b}(K_{t}))$-compact by Lemma \ref{lemmamargmartcptgen}). We now
need to compute 
\begin{equation*}
V_{t}^{G}(\varphi _{t})=\sup_{\gamma _{t}\in \mathrm{ca}(K_{t})}\left(
\int_{K_{t}}\varphi _{t}\,\mathrm{d}\gamma _{t}-\mathcal{D}_{t}^{G}(\gamma
_{t})\right) =\sup_{Q_{t}\in \mathrm{Mart}_{t}(K_{t})}\left(
\int_{K_{t}}\varphi _{t}\,\mathrm{d}Q_{t}-\mathcal{D}_{t}^{G}(Q_{t})\right)
\,.
\end{equation*}%
Recall now that from Fenchel-Moreau Theorem that $G_{t,n}(x)=\sup_{y\in {%
\mathbb{R}}}(xy-G_{t,n}^{\ast }(y))$. Hence 
\begin{align*}
V_{t}^{G}(\varphi _{t})& =\sup_{Q_{t}\in \mathrm{Mart}_{t}(K_{t})}\left(
\int_{K_{t}}\varphi _{t}\,\mathrm{d}Q_{t}-\sum_{n=1}^{N_{t}}\sup_{y_{t,n}\in 
{\mathbb{R}}}\left( y_{t,n}\int_{K_{t}}g_{t,n}\,\mathrm{d}%
Q_{t}-G_{t,n}^{\ast }(y_{t,n})\right) \right)  \\
& =\sup_{Q_{t}\in \mathrm{Mart}_{t}(K_{t})}\left( \int_{K_{t}}\varphi _{t}\,%
\mathrm{d}Q_{t}-\sum_{n=1}^{N_{t}}\sup_{y_{t,n}\in \mathrm{dom}%
(G_{t,n}^{\ast })}\left( y_{t,n}\int_{K_{t}}g_{t,n}\,\mathrm{d}%
Q_{t}-G_{t,n}^{\ast }(y_{t,n})\right) \right)  \\
& =\sup_{Q_{t}\in \mathrm{Mart}_{t}(K_{t})}\inf_{y_{t}\in \mathrm{dom}%
}\left( \int_{K_{t}}\left( \varphi
_{t}-\sum_{n=1}^{N_{t}}y_{t,n}g_{t,n}\right) \,\mathrm{d}Q_{t}+%
\sum_{n=1}^{N_{t}}G_{t,n}^{\ast }(y_{t,n})\right)  \\
& =:\sup_{Q_{t}\in \mathrm{Mart}(K_{t})}\inf_{y_{t}\in \mathrm{dom}}\mathcal{%
T}(y_{t},Q_{t}),
\end{align*}%
where $\mathrm{dom}=\mathrm{dom}(G_{t,1}^{\ast })\times \dots \times \mathrm{%
dom}(G_{t,N_{t}}^{\ast })\subseteq {\mathbb{R}}^{N_{t}}$. We now see that $%
\mathcal{T}$ is real valued on $\mathrm{dom}\times \mathrm{Mart}_{t}(K_{t})$%
, is convex in the first variable and concave in the second. Moreover, $\{%
\mathcal{T}(y_{t},\cdot )\geq C\}$ is $\sigma (\mathrm{Mart}_{t}(K_{t}),%
\mathcal{C}_{b}(K_{t}))$-closed in $\mathrm{Mart}_{t}(\Omega )$ for every $%
y_{t}\in \mathrm{dom}$, and $\mathrm{Mart}_{t}(K_{t})$ is $\sigma (\mathrm{%
Mart}_{t}(K_{t}),\mathcal{C}_{b}(K_{t}))$-compact (by Lemma \ref%
{lemmamargmartcptgen}). As a consequence $\mathcal{T}(y_{t},\cdot )$ is $%
\sigma (\mathrm{Mart}_{t}(K_{t}),\mathcal{C}_{b}(K_{t}))$-lower
semicontinuous on $\mathrm{Mart}_{t}(K_{t})$. We can apply \cite{Simons}
Theorem 3.1 with $A=\mathrm{dom}$ and $B=\mathrm{Mart}_{t}(K_{t})$ endowed
with the topology $\sigma (\mathrm{Mart}_{t}(K_{t}),\mathcal{C}_{b}(K_{t}))$%
, and interchange $\inf $ and $\sup $. From our previous computations we
then get 
\begin{align*}
V_{t}^{G}(\varphi _{t})& =\sup_{Q_{t}\in \mathrm{Mart}_{t}(K_{t})}%
\inf_{y_{t}\in \mathrm{dom}}\mathcal{T}(y_{t},Q_{t})=\inf_{y_{t}\in \mathrm{%
dom}}\sup_{Q_{t}\in \mathrm{Mart}_{t}(K_{t})}\mathcal{T}(y_{t},Q_{t}) \\
& =\inf_{y_{t}\in \mathrm{dom}}\left( \sup_{Q_{t}\in \mathrm{Mart}%
_{t}(K_{t})}\int_{K_{t}}\left( \varphi
_{t}-\sum_{n=1}^{N_{t}}y_{t,n}g_{t,n}\right) \,\mathrm{d}Q_{t}+%
\sum_{n=1}^{N_{t}}G_{t,n}^{\ast }(y_{t,n})\right)  \\
& =\inf_{y_{t}\in \mathrm{dom}}\left( \sup_{Q\in \mathrm{Mart}(\Omega
)}\int_{\Omega }\left( \varphi _{t}-\sum_{n=1}^{N_{t}}y_{t,n}g_{t,n}\right)
\,\mathrm{d}Q+\sum_{n=1}^{N_{t}}G_{t,n}^{\ast }(y_{t,n})\right)  \\
& {=}\inf_{y_{t}\in \mathrm{dom}}\left( \Pi ^{{%
sup}}\left( \varphi _{t}-\sum_{n=1}^{N_{t}}y_{t,n}g_{t,n}\right)
+\sum_{n=1}^{N_{t}}G_{t,n}^{\ast }(y_{t,n})\right)  \\
& =\inf_{y_{t}\in {\mathbb{R}}^{N_{t}}}\left( \Pi ^{{sup}}\left( \varphi
_{t}-\sum_{n=1}^{N_{t}}y_{t,n}g_{t,n}\right)
+\sum_{n=1}^{N_{t}}G_{t,n}^{\ast }(y_{t,n})\right) .
\end{align*}%
Equation \eqref{utpenalized} can be obtained with minor manipulations.

\textbf{Step 2}: conclusion. We are clearly in the setup of Corollary  \ref{maincorcasecpt} with $\mathcal{D}$ given as in Setup \ref{SetupsigmaD} from $%
\mathcal{D}_{0}^{G},\dots ,\mathcal{D}_{T}^{G}$, and by definition $\mathrm{%
dom}(\mathcal{D}_{t}^{G})\subseteq \mathrm{Prob}(K_{t})$ for each $t=0,\dots
,T$. 
Using Lemma \ref{sumislsc} Item 2, together with the computations in Step 1
and the fact that clearly $S^{U_{t}^{G}}\equiv U_{t}^{G}$ by cash additivity
of $U_{t}^{G}$, we get the desired equality from Corollary \ref%
{maincorcasecpt} Equation \eqref{subhedgingcor}: indeed, $G_{t,n}^{\ast }$
is lower bounded and proper by our assumptions on $G_{t,n}$, and $\Pi ^{sub}$
is real valued and cash additive on bounded continuous functions. This
guarantees that $V_{t}^{G}(\varphi _{t})$ is null for an appropriate choice
of (constant) $\varphi _{t}$. Existence of optima follows again from
Corollary \ref{maincorcasecpt}.
\end{proof}

\begin{remark}
Our assumption of existence of a particular $\widehat{Q}\in \mathrm{Mart}%
(\Omega )$ in Proposition \ref{propdualwithbestfit} expresses the fact that
we are assuming our market prices $(c_{t,n})_{t,n}$ are close enough to
those given by expectations under some martingale measure.
\end{remark}

\begin{example}
Proposition \ref{propdualwithbestfit} covers a wide range of penalizations.
For example, we might impose a threshold for the
fitting, by taking into account only those martingale measure $Q$ such
that $\left\vert \int_{\Omega }f_{t,n}\,\mathrm{d}Q_{t}-c_{t,n}\right\vert
\leq \varepsilon _{t,n}$ for some $\varepsilon _{t,n}\geq 0$. To express this, just
take $G_{t,n}$ in the form \eqref{threshold} for $\varepsilon=\varepsilon_{t,n}$
\end{example}

\begin{example}
\label{examplepenalmkttomot}
We now study the convergence of the penalized problem described above to the classical MOT problem. We suppose that our information on the marginal
distributions increases, by increasing the number of prices available from
the market. We take $f_{t,n}(x_t)=(x_t-\alpha_n)^+$ to be call options with
maturity $t$ and strikes $(\alpha_n)_n$ that are a dense subset of ${\mathbb{%
R}}$.

We take as loss functions $G_{t,n}(x)=0$ for $x\leq 0$ and $%
G_{t,n}(x)=+\infty $ for all $x>0$ $t=0,\dots ,T$, $n\geq 1$. This means
that in LHS of \eqref{subpenal} the infimum is equivalently taken only over
martingale measures whose theoretical prices exactly match the ones for the
data, namely $c_{t,n}$. For each $t=0,\dots ,T$, $(c_{t,n})_{n}$ is a given
sequence of prices, and we suppose that they are all computed under the same
martingale measure $\widehat{Q}\in \mathrm{Mart}(\Omega )$. We consider for
each $k\in \mathbb{N}$ the initial segment $c_{t,1},\dots ,c_{t,N_{t}(k)}$
for sequences $N_{t}(k)\uparrow _{k}\infty $, $t=0,\dots ,T$.

This means that 
\begin{align*}
\mathcal{D}_{k}(Q)& :=\sum_{t=0}^{T}\sum_{n=1}^{N_{t}(k)}G_{t,n}\left(
\left\vert \int_{K_{t}}f_{t,n}\,\mathrm{d}Q_{t}-c_{t,n}\right\vert \right) 
\\
& \leq \sum_{t=0}^{T}\sum_{n=1}^{N_{t}(k+1)}G_{t,n}\left( \left\vert
\int_{K_{t}}f_{t,n}\,\mathrm{d}Q_{t}-c_{t,n}\right\vert \right) =:\mathcal{D}%
_{k+1}(Q)\,\,\,\forall Q\in \mathrm{Mart}(\Omega )\,.
\end{align*}%
and

\begin{equation*}
\mathcal{D}_{\infty }(Q)=\sup_{k}\mathcal{D}_{k+1}(Q)=\sum_{t=0}^{T}%
\sum_{n=1}^{+\infty }G_{t,n}\left( \left\vert \int_{K_{t}}f_{t,n}\,\mathrm{d}%
Q_{t}-c_{t,n}\right\vert \right) 
\end{equation*}%
so that

\begin{equation*}
\mathcal{D}_\infty(Q)=%
\begin{cases}
0\,\, & \text{if }\int_{K_{t}}f_{t,n}\,\mathrm{d}Q _{t}=c_{t,n}\,\forall
0\leq t\leq T, n\geq 1 \\ 
+\infty\,\, & \text{otherwise }%
\end{cases}%
\,.
\end{equation*}

From the density of $(\alpha_n)_n$ we conclude that 
\begin{equation*}
\mathcal{D}_\infty(Q)=%
\begin{cases}
0\,\, & \text{if }Q_t\equiv\widehat{Q}_t,\forall 0\leq t\leq T \\ 
+\infty\,\, & \text{otherwise }%
\end{cases}%
\,.
\end{equation*}

As a consequence, by Proposition \ref{propstability}, we have the convergence 
$$\inf_{\probq\in\mathrm{Mart}(\Omega)}\left(E_\probq[c]+\mathcal{D}_k(\probq)\right)\rightarrow_k \inf_{Q\in 
\mathrm{Mart}(\widehat{Q}_{0},\dots ,\widehat{Q}_{T})}E_{Q}\left[ c\right]\,.$$
\end{example}
\subsection{Penalty terms given via Wasserstein distance}

\label{secwasser} Let $d_{t}$ be a metric on $K_{t}$ (equivalent to the
Euclidean one). Call $W_{t}:\mathrm{Prob}(K_{t})\times \mathrm{Prob}%
(K_{t})\rightarrow {\mathbb{R}}$ the (1-)Wasserstein distance induced by $%
d_{t}$. Let $\mathrm{Lip}(1,K_{t})$ be the the class of $d_{t}$-Lipschitz
functions on $K_{t}$ with Lipschitz constant not bigger than $1$. Notice
that $\mathrm{Lip}(1,K_{t})\subseteq \mathcal{C}_{b}(K_{t})$ since $d_{t}$
is equivalent to the Euclidean metric. For each $t$, let $G_{t}:{\mathbb{R}}%
\rightarrow (-\infty ,+\infty ]$ be a loss function as given in Definition \ref{condpenalty}. For $\mathrm{Mart}%
_{t}(K_{t})$ as given in Section \ref{CasePenalization1gen}, we introduce 
\begin{equation}
Q_{t}\in \mathrm{Prob}(K_{t})\mapsto \mathcal{D}^W_{t}(Q_{t}):=%
\begin{cases}
G_{t}(W_{t}(Q_{t},\widehat{Q}_{t})) & \text{for }Q_{t}\in \mathrm{Mart}%
_{t}(K_{t}) \\ 
+\infty  & \text{otherwise}%
\end{cases}.
\label{penalwasser}
\end{equation}%
$\mathcal{D}^W_{t}$ is lower semicontinuous w.r.t. the topology of weak
convergence of probability measures, since the Wasserstein metric metrizes
it for compact underlying spaces and $\mathrm{Mart}_{t}(K_{t})$ is compact
under such a topology, by Lemma \ref{lemmamargmartcptgen}. We are then in Setup \ref{SetupsigmaD} and of Lemma %
\ref{sumislsc} Item 2.
As is Section \ref{CasePenalization1gen} we take $\mathcal{E}=\bigtimes_{t=0}^T\mathcal{
C}_b(K_t)$. 
\begin{proposition}
\label{propwasserstein}For each $t=0,\dots ,T$, suppose that $G_t$ is a loss function, that there exists a $Q\in \mathrm{Mart}%
(\Omega )$ such that $G_{t}(W_{t}(Q_{t},\widehat{Q}_{t}))<+\infty$
,\,where $\widehat{Q}\in \mathrm{Mart}(\Omega )$ is the
martingale measure from Standing Assumption \ref{asssubsuper} and take 
$\mathcal{D}^W_{t}$ as defined in \eqref{penalwasser}. Then 
\begin{equation}
\inf_{Q\in \mathrm{Mart}(\Omega )}\left( E_{Q}\left[ c\right] +\sum_{t=0}^{T}%
\mathcal{D}^W_{t}(Q_{t})\right) =\sup \left\{ \sum_{t=0}^{T}U^W_{t}(\varphi
_{t})\mid \varphi \in \mathcal{S}_{sub}(c)\right\} ,  \label{subwasser}
\end{equation}%
where 
\begin{equation*}
U^W_{t}(\varphi _{t}):=\sup_{\substack{ y\geq 0 \\ \ell _{t}\in \mathrm{Lip}%
(1,K_{t})}}\left( \Pi ^{sub}(\varphi _{t}+y\ell _{t})-\int_{K_{t}}y\ell _{t}%
\mathrm{d}\widehat{Q}_{t}-G_{t}^{\ast }(y)\right) 
\end{equation*}%
is a stock additive functional and $\Pi ^{\text{sub}}$ is given in %
\eqref{robustsub}. 
%
%
Finally, if LHS of \eqref{subwasser} is finite, a minimum point exists.
\end{proposition}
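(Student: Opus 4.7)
The plan is to proceed in close analogy with the proof of Proposition \ref{propdualwithbestfit}: verify that we are in the Setup \ref{SetupsigmaD} for the additive structure of $\mathcal{D}$, compute the Fenchel--Moreau conjugate of each $\mathcal{D}^W_t$ explicitly using Kantorovich--Rubinstein duality together with a minimax interchange, and then apply Corollary \ref{maincorcasecpt} to conclude via the formula \eqref{subhedgingcor}.

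First I would check that each $\mathcal{D}^W_t$ is a proper, convex and $\sigma(\mathrm{ca}(K_t), \mathcal{C}_b(K_t))$--lower semicontinuous functional with $\mathrm{dom}(\mathcal{D}^W_t) \subseteq \mathrm{Prob}(K_t)$. Properness follows from the standing hypothesis that some $Q \in \mathrm{Mart}(\Omega)$ gives $G_t(W_t(Q_t,\widehat{Q}_t)) < +\infty$. Convexity is clear from convexity of $W_t(\cdot,\widehat{Q}_t)$ and monotonicity and convexity of $G_t$. For lower semicontinuity: on the compact $K_t$ the 1-Wasserstein distance metrizes the weak topology on $\mathrm{Prob}(K_t)$, and $\mathrm{Mart}_t(K_t)$ is weakly compact (Lemma \ref{lemmamargmartcptgen}), so $Q_t \mapsto G_t(W_t(Q_t,\widehat{Q}_t))$ is lower semicontinuous on $\mathrm{Mart}_t(K_t)$, while outside of this compact set $\mathcal{D}^W_t = +\infty$, hence lower semicontinuity extends globally. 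Thus Setup \ref{SetupsigmaD} applies and Lemma \ref{sumislsc} yields $S^U(\varphi) = \sum_{t=0}^{T} S^{U^W_t}(\varphi_t)$, with $U^W_t = -V^W_t(-\cdot)$ and $V^W_t$ the conjugate of $\mathcal{D}^W_t$.

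The main computational step is to identify $U^W_t$ explicitly. Using Fenchel--Moreau on $G_t$ (so $G_t(x) = \sup_{y \geq 0}(xy - G_t^{\ast}(y))$, recalling that $G_t^{\ast}(y) = +\infty$ for $y<0$) and Kantorovich--Rubinstein duality
\[
W_t(Q_t,\widehat{Q}_t) = \sup_{\ell_t \in \mathrm{Lip}(1,K_t)} \left( \int_{K_t} \ell_t \, \mathrm{d} Q_t - \int_{K_t} \ell_t \, \mathrm{d} \widehat{Q}_t \right),
\]
I can write
\[
V^W_t(\varphi_t) = \sup_{Q_t \in \mathrm{Mart}_t(K_t)} \inf_{\substack{y \geq 0 \\ \ell_t \in \mathrm{Lip}(1,K_t)}} \left( \int_{K_t}(\varphi_t - y\ell_t) \,\mathrm{d}Q_t + y \int_{K_t} \ell_t \,\mathrm{d}\widehat{Q}_t + G_t^{\ast}(y) \right).
\]
Here the hard part will be the minimax interchange: as in the proof of Proposition \ref{propdualwithbestfit}, I plan to apply Simons' theorem (\cite{Simons} Theorem 3.1) with $A = [0,+\infty) \times \mathrm{Lip}(1,K_t)$ and $B = \mathrm{Mart}_t(K_t)$, the latter endowed with the compact (by Lemma \ref{lemmamargmartcptgen}) weak topology $\sigma(\mathrm{Mart}_t(K_t), \mathcal{C}_b(K_t))$. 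The integrand is concave-linear in $Q_t$ (and upper semicontinuous in $Q_t$ as pointwise infimum of linear functionals) and convex in $(y,\ell_t)$, so the hypotheses of \cite{Simons} are met. After interchanging I recognize the inner supremum over $Q_t \in \mathrm{Mart}_t(K_t)$ as $\sup_{Q \in \mathrm{Mart}(\Omega)} E_Q[\varphi_t - y\ell_t] = \Pi^{\mathrm{sup}}(\varphi_t - y\ell_t)$ using Corollary \ref{corollrob}, giving
\[
V^W_t(\varphi_t) = \inf_{\substack{y \geq 0 \\ \ell_t \in \mathrm{Lip}(1,K_t)}} \left( \Pi^{\mathrm{sup}}(\varphi_t - y\ell_t) + y \int_{K_t} \ell_t \,\mathrm{d}\widehat{Q}_t + G_t^{\ast}(y) \right),
\]
from which $U^W_t(\varphi_t) = -V^W_t(-\varphi_t)$ yields exactly the stated formula, using $\Pi^{\mathrm{sub}}(g) = -\Pi^{\mathrm{sup}}(-g)$.

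Finally I would invoke Corollary \ref{maincorcasecpt}: $c$ is lower semicontinuous by Standing Assumption \ref{asssubsuper}, each $\mathcal{D}^W_t$ is bounded below by $G_t(0) = 0$ and by Remark \ref{remsufficientfiniteabovep}(b) the existence of the admissible $Q$ with $\mathcal{D}^W_t(Q_t) < +\infty$ forces $\mathfrak{P}(0) < +\infty$, so $S^U(\varphi) = 0$ holds at a suitable constant $\varphi$; hence the hypothesis $U(\varphi)=0$ for some $\varphi\in \mathcal{E}$ is verified. By cash additivity of $\Pi^{\mathrm{sub}}$ and lower-boundedness and properness of $G_t^{\ast}$, $U^W_t$ is cash additive so $S^{U^W_t} = U^W_t$, and then Equation \eqref{subhedgingcor} is precisely \eqref{subwasser}. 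The stock additivity of $U^W_t$ is inherited from that of $\Pi^{\mathrm{sub}}$ in its call option argument $\varphi_t$, and existence of the optimum when the LHS is finite follows from the same Corollary.
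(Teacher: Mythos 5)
Your proposal is correct and follows essentially the same route as the paper: dualize $G_t$ via Fenchel--Moreau, apply Kantorovich--Rubinstein, interchange sup and inf with Simons' theorem on the compact set $\mathrm{Mart}_t(K_t)$, identify $\Pi^{\mathrm{sup}}$ via Corollary \ref{corollrob}, and conclude with the general duality machinery as in Step 2 of Proposition \ref{propdualwithbestfit}. The only cosmetic difference is that the paper restricts the infimum to $\mathrm{dom}(G_t^{\ast})\times\mathrm{Lip}(1,K_t)$ before invoking Simons so that the saddle function is real valued there; your version should do the same, but this is an immediate adjustment.
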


\begin{proof}
Starting from $\mathcal{D}^W_t$, we compute now the associated $V^W_t$ as

\begin{align*}
V^W_{t}(\varphi _{t})& :=\sup_{\gamma \in \mathrm{ca}(K_{t})}\left(
\int_{K_{t}}\varphi _{t}\mathrm{d}\gamma -\mathcal{D}^W_{t}(\gamma )\right)
=\sup_{Q\in \mathrm{Mart}_{t}(K_{t})}\left( \int_{K_{t}}\varphi _{t}\mathrm{d%
}Q-G_{t}(W_{t}(Q,\widehat{Q}_{t}))\right)  \\
& \overset{(a)}{=}\sup_{Q\in \mathrm{Mart}_{t}(K_{t})}\left(
\int_{K_{t}}\varphi _{t}\mathrm{d}Q-\sup_{y\geq 0}\left( yW_{t}(Q,\widehat{Q}%
_{t})-G_{t}^{\ast }(y)\right) \right)  \\
& =\sup_{Q\in \mathrm{Mart}_{t}(K_{t})}\inf_{y\geq 0}\left(
\int_{K_{t}}\varphi _{t}\mathrm{d}Q-yW_{t}(Q,\widehat{Q}_{t})+G_{t}^{\ast
}(y)\right)  \\
& \overset{(b)}{=}\sup_{Q\in \mathrm{Mart}_{t}(K_{t})}\inf_{y\in \mathrm{dom}%
(G_{t}^{\ast })}\left( \int_{K_{t}}\varphi _{t}\mathrm{d}Q-y\sup_{\ell \in 
\mathrm{Lip}(1,K_{t})}\left( \int_{K_{t}}\ell \mathrm{d}Q-\int_{K_{t}}\ell 
\mathrm{d}\widehat{Q}_{t}\right) +G_{t}^{\ast }(y)\right)  \\
& \overset{}{=}\sup_{Q\in \mathrm{Mart}_{t}(K_{t})}\inf_{\substack{ y\in 
\mathrm{dom}(G_{t}^{\ast }) \\ \ell \in \mathrm{Lip}(1,K_{t})}}\left(
\int_{K_{t}}(\varphi _{t}-y\ell )\mathrm{d}Q+\int_{K_{t}}y\ell \mathrm{d}%
\widehat{Q}_{t}+G_{t}^{\ast }(y)\right)  \\
& \overset{(c)}{=}\inf_{\substack{ y\in \mathrm{dom}(G_{t}^{\ast }) \\ \ell
\in \mathrm{Lip}(1,K_{t})}}\left( \sup_{Q\in \mathrm{Mart}_{t}(K_{t})}\left(
\int_{K_{t}}(\varphi _{t}-y\ell )\mathrm{d}Q\right) +\int_{K_{t}}y\ell _{t}%
\mathrm{d}\widehat{Q}_{t}+G_{t}^{\ast }(y)\right)  \\
& \overset{(d)}{=}\inf_{\substack{ y\in \mathrm{dom}(G_{t}^{\ast }) \\ \ell
_{t}\in \mathrm{Lip}(1,K_{t})}}\left( \Pi ^{sup}(\varphi _{t}-y\ell
_{t})+\int_{K_{t}}y\ell _{t}\mathrm{d}\widehat{Q}_{t}+G_{t}^{\ast
}(y)\right) =\inf_{\substack{ y\geq 0 \\ \ell _{t}\in \mathrm{Lip}(1,K_{t})}}%
\left( \Pi ^{sup}(\varphi _{t}-y\ell _{t})+\alpha (y,\ell _{t})\right) 
\end{align*}%
for the penalty $\alpha (y,\ell _{t}):=\int_{K_{t}}y\ell _{t}\mathrm{d}%
\widehat{Q}_{t}+G_{t}^{\ast }(y)$. In the equality chain above we use the
following facts: in (a) we used the dual representation of $G_{t}$, in (b)
we exploited the definition of $\mathrm{dom}(G_{t}^{\ast })$ and the
classical Kantorovich-Rubinstein Duality (see \cite{Villani09} Remark 6.5),
in (c) we applied \cite{Simons} Theorem 3.1, observe that $\mathrm{Mart}%
_{t}(K_{t})$ is compact by Lemma \ref{lemmamargmartcptgen}), in (d) we used
the definition of the superhedging price $\Pi ^{{sup}}(g):=-\Pi ^{{sub}}(-g)=\sup_{Q\in \mathrm{Mart}(\Omega )}E_{Q}\left[ g\right] $, by Corollary \ref{corollrob}.
Once we have $V^W_{t}$, we have $U^W_{t}$ and we can
argue as in Step 2 of the proof of Proposition \ref{propdualwithbestfit},
also regarding existence of an optimum.
\end{proof}

\begin{remark}
If $U^W_{t}$ (as well as $U_{t}^{G}$ in the previous Proposition \ref%
{propdualwithbestfit}) is real valued on $\mathcal{C}_{b}(K_{t})$, one might
take $\mathcal{E}_{t}$ as the set of functions in the form \eqref{mixcalls}
in place of $\mathcal{E}_{t}=\mathcal{C}_{b}(K_{t})$ in both Proposition \ref%
{propdualwithbestfit} and \ref{propwasserstein}, using norm density of the
piecewise linear functions just as in the proof of Corollary \ref%
{corolldivinduced}.
\end{remark}

\begin{remark}
\label{remnotmartinwasser} The reader can check that the property $\widehat{Q%
}\in \mathrm{Mart}(\Omega )$ is not used in the proof, and that it would
suffice to have only $\widehat{Q}\in \mathrm{Prob}(\Omega )$. This will be
exploited in Section \ref{secconvergewasser}.
\end{remark}

\begin{example}
\label{exampleemotwasserdistance} Taking $G_{t}(x)=0$ if $x\leq \varepsilon
_{t}$, $G_{t}(x)=+\infty $ otherwise, we get $G_{t}^{\ast }(y)=\varepsilon
_{t}y$ if $y\geq 0$ and $G_{t}^{\ast }(y)=+\infty $ otherwise. In this case {\small 
\begin{equation}
\inf_{Q\in \mathrm{Mart}(\Omega )}\left( E_\probq[c]+\sum_{t=0}^T
\mathcal{D}^W_t(Q)\right) =\inf \left\{ E_\probq[c]\mid Q\in 
\mathrm{Mart}(\Omega )\text{ and }\,\,W_{t}(Q_{t},\widehat{Q}_{t})\leq
\varepsilon _{t} \, \forall t=0,\dots ,T\right\} \,.
\label{infmarginalsepsilonclose}
\end{equation}%
}

One can verify, with the same techniques of Example \ref{examplepenalmkttomot},
that we have convergence, as  $\varepsilon _{t}\downarrow 0$ for every $t=0,\dots,T$, of the
problems in RHS of \eqref{infmarginalsepsilonclose} to the MOT problem in LHS of
\eqref{subbeig}.

For the dual problem, we show that in this particular case, calling $V_{\varepsilon _{t}}$ the associated $V_t^W$
as computed above, $\lim_{\varepsilon _{t}\downarrow
0}V_{\varepsilon _{t}}(\varphi _{t})=V_{0}(\varphi _{t})$. Indeed
\begin{equation*}
\lim_{\varepsilon _{t}\downarrow 0}V_{\varepsilon _{t}}(\varphi
_{t})=\inf_{\varepsilon _{t}>0}V_{\varepsilon _{t}}(\varphi _{t})=\inf
_{\substack{ y\geq 0 \\ \ell _{t}\in \mathrm{Lip}(1,K_t)}}\inf_{\varepsilon
_{t}>0}\left( \Pi ^{sup}(\varphi _{t}-y\ell _{t})+\int_{K_{t}}y\ell _{t}%
\mathrm{d}\widehat{Q}_{t}+G_{t}^{\ast }(y)\right) 
\end{equation*}%
\begin{equation*}
=\inf_{\substack{ y\geq 0 \\ \ell _{t}\in \mathrm{Lip}(1,K_t)}}\left( \Pi
^{sup}(\varphi _{t}-y\ell _{t})+\int_{K_{t}}y\ell _{t}\mathrm{d}\widehat{Q}%
_{t}\right) =V_{0}(\varphi _{t})\,.
\end{equation*}%

Now, one can also see that $V_{0}(\varphi _{t})={E}_{\widehat{Q}}[\varphi _{t}]$ using the definition of $V_{0}$ and observing that for $\varepsilon=0$ we have $\mathcal{D}^W_t=\delta_{\widehat{\probq_t}}$.
\end{example}

\subsubsection{Convergence with Wasserstein induced penalization}

\label{secconvergewasser}

%

As already mentioned, in the classical MOT framework, the marginals $%
\widehat{Q}_{0}\dots ,\widehat{Q}_{T}$ need to be determined, potentially
from the prices of vanilla options. It is then reasonable to suppose that in
a real-world situation, one proceeds by approximation, that is one
determines sequences of candidates $(\widehat{Q}_{t}^{n})_{n}\subseteq 
\mathrm{Prob}(\Omega )$ for $t=0,\dots ,T$. If such approximation
scheme (whose details are beyond the scope of this paper) is working, one
has a convergence of these sequences to the true marginals. One suitable
candidate for such a convergence is the weak one, that is one might suppose
that $\widehat{Q}_{t}^{n}\rightarrow _{n}\widehat{Q}_{t}^{\infty }:=\widehat{%
Q}_{t}$ for $t=0,\dots ,T$ in the weak sense for probability measures. We are here supposing that $K_{0}\,\dots ,K_{T}$
are compact sets and so such a weak convergence is equivalent to the
convergence in the Wasserstein distance. Proposition \ref%
{propconvergencewasser} will show how the EMOT problems treated in
Proposition \ref{propwasserstein} and associated to the approximating
measures $\widehat{Q}_{t}^{n},t=0,\dots ,T$ converge to the original MOT
problem for the true marginals $\widehat{Q}_{0},\dots ,\widehat{Q}_{T}$,
provided that the loss functions $G_{t}^{n}$ appropriately converge.

Let us now move to a mathematically precise formulation. For each $n\in 
\mathbb{N}\cup \{\infty \}$ and $t=0,\dots ,T$ let $\widehat{Q}_{t}^{n}\in 
\mathrm{Prob}(K_{t})$ be a fixed probability measure. Notice that, for $n\in \mathbb{N}$%
, we do not require these measures to be marginals of some martingale
measure.

\begin{proposition}
\label{propconvergencewasser} For each $n\in \mathbb{N}\cup \{\infty \}$ and $%
t=0,\dots ,T$ let $G_{t}^{n}$ be a loss functions with $\sup_{n\in \mathbb{N}}G_{t}^{n}(x)=G_{t}^{%
\infty }(x)$ for every $x\in {\mathbb{R}}$ and $G_{t}^{\infty
}(x)=+\infty $ for every $x>0$. For every $%
t=0,\dots ,T$ assume that $W_{t}(%
\widehat{Q}_{t}^{n},\widehat{Q}_{t}^{\infty })\rightarrow _{n}0$  and that there
exists a $Q^*\in \mathrm{Mart}(\Omega )$, with $c\in L^{1}(Q^*)$, having
marginals $\widehat{Q}_{t}^{\infty }$. If $\lim_{n}G_{t}^{n}(W_{t}(\widehat{Q}_{t}^{n},\widehat{Q}_{t}^{\infty }))=0$
for every $t=0,\dots ,T$, then 
\begin{equation}
\lim_{n}\mathfrak{P}_{n}^{W}(c)=\mathfrak{P}_{\infty }^{W}(c)=\inf_{Q\in 
\mathrm{Mart}(\widehat{Q}_{0}^{\infty },\dots ,\widehat{Q}_{T}^{\infty
})}E_{Q}\left[ c\right]\, ,  \label{stabwasser}
\end{equation}
where
\begin{equation*}
\mathfrak{P}_{n}^{W}(c)=\inf_{Q\in \mathrm{Mart}(\Omega )}\left( E_{Q}\left[
c\right] +\sum_{t=0}^{T}G_{t}^{n}(W_{1}(Q_{t},\widehat{Q}_{t}^{n}))\right), \,\,\,\,\,\,\,n\in \mathbb{N}\cup \{\infty \}
\,.
\end{equation*}

\end{proposition}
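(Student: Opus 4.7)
The plan is to prove the chain of equalities in \eqref{stabwasser} by first identifying the middle term and then establishing the two one-sided inequalities $\limsup_n \mathfrak{P}^W_n(c) \leq \mathfrak{P}^W_\infty(c)$ and $\liminf_n \mathfrak{P}^W_n(c) \geq \mathfrak{P}^W_\infty(c)$. I would not try to invoke Proposition \ref{propstability} directly, since monotonicity of $\mathcal{D}_n(Q) = \sum_t G_t^n(W_t(Q_t,\widehat{Q}_t^n))$ in $n$ is spoiled by the shifting reference measures $\widehat{Q}_t^n$, so a direct compactness argument is needed instead.

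First, I would identify $\mathfrak{P}^W_\infty(c)$ with the MOT problem: each $G_t^n$ being a loss function gives $G_t^n(0)=0$, hence $G_t^\infty(0)=\sup_n G_t^n(0)=0$, while $G_t^\infty(x)=+\infty$ for $x>0$; consequently, any $Q\in\mathrm{Mart}(\Omega)$ contributing a finite value to $\mathfrak{P}^W_\infty(c)$ must satisfy $W_t(Q_t,\widehat{Q}_t^\infty)=0$, i.e.\ $Q_t=\widehat{Q}_t^\infty$ for each $t$, yielding the identification $\mathfrak{P}^W_\infty(c)=\inf_{Q\in\mathrm{Mart}(\widehat{Q}_0^\infty,\dots,\widehat{Q}_T^\infty)} E_Q[c]$.

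For the upper bound, I would pick any $Q\in\mathrm{Mart}(\widehat{Q}_0^\infty,\dots,\widehat{Q}_T^\infty)$ (this set is nonempty by the existence of $Q^\ast$); feasibility in $\mathfrak{P}^W_n$ gives $\mathfrak{P}^W_n(c)\leq E_Q[c]+\sum_{t=0}^T G_t^n(W_t(\widehat{Q}_t^\infty,\widehat{Q}_t^n))$, and the second summand vanishes in the limit by the hypothesis $\lim_n G_t^n(W_t(\widehat{Q}_t^n,\widehat{Q}_t^\infty))=0$. Passing to $\limsup_n$ and then to the infimum over such $Q$ delivers $\limsup_n \mathfrak{P}^W_n(c) \leq \mathfrak{P}^W_\infty(c)$. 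For the lower bound, I would invoke Proposition \ref{propwasserstein} to obtain optimizers $Q^n\in\mathrm{Mart}(\Omega)$ for $\mathfrak{P}^W_n(c)$; weak compactness of $\mathrm{Prob}(\Omega)$ (since $\Omega$ is compact) yields a subsequence $Q^{n_k}\to Q^\ast$ weakly, with $Q^\ast\in\mathrm{Mart}(\Omega)$ by closedness of the martingale property under weak convergence against integrands $I^\Delta\in C_b(\Omega)$. Joint continuity of $W_t$ on $\mathrm{Prob}(K_t)\times\mathrm{Prob}(K_t)$ gives $W_t(Q^{n_k}_t,\widehat{Q}_t^{n_k})\to W_t(Q^\ast_t,\widehat{Q}_t^\infty)$; combined with lower semicontinuity of $c$, a liminf inequality $E_{Q^\ast}[c]\leq \liminf_k \mathfrak{P}^W_{n_k}(c)$ would then close the argument, provided the marginals are correct.

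The main obstacle is to show $Q^\ast_t=\widehat{Q}_t^\infty$ from the bound $\sup_k \sum_t G_t^{n_k}(W_t(Q^{n_k}_t,\widehat{Q}_t^{n_k}))<+\infty$ (which follows from the upper bound already established). The natural argument uses that, under the monotonicity of $(G_t^n)_n$ in $n$ implicit in the sup hypothesis (and realized in Example \ref{exampleemotwasserdistance} with $\varepsilon_t^n\downarrow 0$), for any fixed $N$ and $n_k\geq N$ one has $G_t^{n_k}\geq G_t^N$, whence by lower semicontinuity of $G_t^N$
\[
\liminf_k G_t^{n_k}(W_t(Q^{n_k}_t,\widehat{Q}_t^{n_k}))\geq G_t^N(W_t(Q^\ast_t,\widehat{Q}_t^\infty));
\]
taking the supremum over $N$ and using $\sup_N G_t^N=G_t^\infty$ forces $G_t^\infty(W_t(Q^\ast_t,\widehat{Q}_t^\infty))<+\infty$, which in turn forces $W_t(Q^\ast_t,\widehat{Q}_t^\infty)=0$. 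Thus $Q^\ast\in\mathrm{Mart}(\widehat{Q}_0^\infty,\dots,\widehat{Q}_T^\infty)$, and $\mathfrak{P}^W_\infty(c)\leq E_{Q^\ast}[c]\leq \liminf_k \mathfrak{P}^W_{n_k}(c)$, concluding the proof.
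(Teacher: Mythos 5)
Your proof is correct and follows essentially the same route as the paper's: the upper bound by testing $\mathfrak{P}_n^W$ with measures in $\mathrm{Mart}(\widehat{Q}_0^\infty,\dots,\widehat{Q}_T^\infty)$ and using $G_t^n(W_t(\widehat{Q}_t^n,\widehat{Q}_t^\infty))\to 0$, and the lower bound by extracting a weakly convergent subsequence of optimizers (whose existence the paper draws from Proposition \ref{propwasserstein} together with Remark \ref{remnotmartinwasser}, since the $\widehat{Q}_t^n$ need not be marginals of martingale measures), then passing to the limit via lower semicontinuity, joint continuity of $W_t$, and the $\sup_N$ argument to force $W_t(Q^\ast_t,\widehat{Q}_t^\infty)=0$. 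The monotonicity $G_t^N\le G_t^n$ for $n\ge N$, which you correctly flag as only implicit in the hypothesis $\sup_n G_t^n=G_t^\infty$, is used in exactly the same way (and equally without comment) in the paper's own proof.
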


\begin{proof}
We first prove that $\mathfrak{P}_{\infty }^{W}(c)\leq \liminf_{n}\mathfrak{P%
}_{n}^{W}(c)$. Observe that 
there exists by assumption a martingale measure $Q^{\ast }$ with $c\in
L^{1}(Q^{\ast })$ and marginals $Q_{t}^{\ast }\equiv \widehat{Q}_{t}^{\infty
},t=0,\dots ,T$. By hypothesis, $G_{t}^{n}(W_{t}(\widehat{Q}_{t}^{n},{Q}%
_{t}^{\ast }))=G_{t}^{n}(W_{t}(\widehat{Q}_{t}^{n},\widehat{Q}_{t}^{\infty
})\rightarrow _{n}0$ for every $t=0,\dots ,T$, thus $G_{t}^{n}(W_{t}(%
\widehat{Q}_{t}^{n},{Q}_{t}^{\ast }))<+\infty $ for every $t=0,\dots ,T$, $%
n\in \mathbb{N}$ big enough and $n=+\infty $. Since $\Omega $ is compact and 
$G_{t}^{n},n\in \mathbb{N}\cup \{\infty \},t=0,\dots ,T$ are nonnegative on $%
[0,+\infty )$ we have 
\begin{equation*}
-\infty <\inf_{x\in \Omega }c(x)\leq \inf_{Q\in \mathrm{Mart}(\Omega
)}\left( E_{Q}\left[ c\right] +\sum_{t=0}^{T}G_{t}^{n}(W_{t}(Q_{t},\widehat{Q%
}_{t}^{n}))\right) \leq E_{Q^{\ast }}\left[ c\right] +%
\sum_{t=0}^{T}G_{t}^{n}(W_{t}(Q_{t}^{\ast },\widehat{Q}_{t}^{n}))<+\infty 
\end{equation*}%
for $n\in \mathbb{N}$ big enough and for $n=+\infty $. Hence $\mathfrak{P}%
_{n}(c)<+\infty $ for $n\in \mathbb{N}$ big enough and $n=+\infty $, and
w.l.o.g. we assume that $\mathfrak{P}_{n}^{W}(c)$ is finite for each $n\in 
\mathbb{N}\cup \{\infty \}$. Given then optima $Q^{n}$ (which exist by
Proposition \ref{propwasserstein} applied together with Remark \ref%
{remnotmartinwasser}) we can take a subsequence such that $E_{Q^{n_{k}}}%
\left[ c\right] +\sum_{t=0}^{T}G_{t}^{n_{k}}(W_{t}(Q_{t}^{n_{k}},\widehat{Q}%
_{t}^{n_{k}}))$ converges to $\liminf_{n}\mathfrak{P}_{n}^{W}(c)$ and such
that $W_{t}(Q^{n_{k}},\widetilde{Q})\rightarrow _{k}0$ for some $\widetilde{Q%
}\in \mathrm{Mart}(\Omega )$ (since $\mathrm{Mart}(\Omega )$ is compact,
recall the proof of Lemma \ref{lemmamargmartcptgen}). Since $Q\mapsto
\int_{\Omega }c\mathrm{d}Q$ is lower semicontinuous w.r.t. the weak
convergence, $G_{t}^{n}$ is lower semicontinuous and $W_{t}(\widetilde{Q}%
_{t},\widehat{Q}_{t}^{\infty })=\lim_{k}W_{t}(Q_{t}^{n_{k}},\widehat{Q}%
_{t}^{n_{k}})$, we have

\begin{align*}
& \left( E_{\widetilde{Q}}\left[ c\right] +\sum_{t=0}^{T}G_{t}^{N}(W_{t}(%
\widetilde{Q}_{t},\widehat{Q}_{t}^{\infty }))\right) =\lim_{k}\left(
E_{Q^{n_{k}}}\left[ c\right] +\sum_{t=0}^{T}G_{t}^{N}(W_{t}(Q_{t}^{n_{k}},%
\widehat{Q}_{t}^{n_{k}}))\right)  \\
& =\liminf_{n}\left( E_{Q^{n}}\left[ c\right] +%
\sum_{t=0}^{T}G_{t}^{N}(W_{t}(Q_{t}^{n},\widehat{Q}_{t}^{n}))\right)  \\
& \leq \liminf_{n\geq N}\inf_{Q\in \mathrm{Mart}(\Omega )}\left( E_{Q}\left[
c\right] +\sum_{t=0}^{T}G_{t}^{N}(W_{t}(Q_{t},\widehat{Q}_{t}^{n}))\right) 
\\
& \leq \liminf_{n\geq N}\inf_{Q\in \mathrm{Mart}(\Omega )}\left( E_{Q}\left[
c\right] +\sum_{t=0}^{T}G_{t}^{n}(W_{t}(Q_{t},\widehat{Q}_{t}^{n}))\right)
=\liminf_{n}\mathfrak{P}_{n}^{W}(c)\,.
\end{align*}%
Taking then a supremum over $N$ and an infimum over martingale measures, and
using the particular form of $G_{t}^{\infty },t=0,\dots ,T$, 
\begin{align*}
\mathfrak{P}_{\infty }^{W}(c)& =\inf_{Q\in \mathrm{Mart}(\widehat{Q}%
_{0}^{\infty },\dots ,\widehat{Q}_{T}^{\infty })}E_{Q}\left[ c\right]
=\inf_{Q\in \mathrm{Mart}(\Omega )}\sup_{N}\left( E_{Q}\left[ c\right]
+\sum_{t=0}^{T}G_{t}^{N}(W_{t}(Q_{t},\widehat{Q}_{t}^{n}))\right)  \\
& \leq \liminf_{n}\inf_{Q\in \mathrm{Mart}(\Omega )}\left( E_{Q}\left[ c%
\right] +\sum_{t=0}^{T}G_{t}^{n}(W_{t}(Q_{t},\widehat{Q}_{t}^{n}))\right)
=\liminf_{n}\mathfrak{P}_{n}^{W}(c)\,.
\end{align*}%
We now move on proving that $\mathfrak{P}_{\infty }^{W}(c)\geq \limsup_{n}%
\mathfrak{P}_{n}^{W}(c)$. Since $\mathfrak{P}_{\infty }^{W}(c)<+\infty $, we
have an optimum $Q^{\infty }$ and its marginals satisfy: $Q_{t}^{\infty }=%
\widehat{Q}_{t}^{\infty },t=0,\dots ,T$. Then 
\begin{align*}
\mathfrak{P}_{\infty }^{W}(c)& =E_{Q^{\infty }}\left[ c\right]
=\lim_{n}\left( E_{Q^{\infty }}\left[ c\right] +%
\sum_{t=0}^{T}G_{t}^{n}(W_{t}(\widehat{Q}_{t}^{\infty },\widehat{Q}%
_{t}^{n}))\right)  \\
& =\lim_{n}\left( E_{Q^{\infty }}\left[ c\right] +%
\sum_{t=0}^{T}G_{t}^{n}(W_{t}({Q}_{t}^{\infty },\widehat{Q}_{t}^{n}))\right) 
\\
& \geq \limsup_{n}\inf_{Q\in \mathrm{Mart}(\Omega )}\left( E_{Q}\left[ c%
\right] +\sum_{t=0}^{T}G_{t}^{n}(W_{t}(Q_{t},\widehat{Q}_{t}^{n}))\right)
=\limsup_{n}\mathfrak{P}_{n}^{W}(c)\,.
\end{align*}

Since now we have $\limsup_n\mathfrak{P}_n^W(c)\leq \mathfrak{P}%
_\infty^W(c)\leq \limsup_n\mathfrak{P}_n^W(c)$, \eqref{stabwasser} follows.
\end{proof}
\section{Applications in the noncompact case}

\label{sectionexamplesfornoncpaddition} 

In the noncompact case, Corollary \ref{corolldivinduced} takes the following
form:

\begin{corollary}
\label{corolldivinducednoncpt} Take $d=1$, $K_{0}=\{x_0\}$ for some $x_0\in{%
\mathbb{R}}$ and let $K_{1},\dots ,K_{T}\subseteq {\mathbb{R}}$ be closed subsets of $\mathbb{R}$. Consider nondecreasing $u_{0},\dots ,u_{T}$ satisfying Assumption %
\ref{aU1}, and suppose $\mathrm{dom}(u_{0})=\dots =\mathrm{dom}(u_{T})={%
\mathbb{R}}$. Take for each $t=0,\dots,T$ the vector space $\mathcal{E}_t\subseteq C_t$
of functions in the form \eqref{mixcalls}, let $\mathcal{E}=\mathcal{E}_{0}\times \dots \times \mathcal{E}_{T}$ and fix a $\widehat{Q}\in\mathrm{Mart}(\Omega)$ such that 
\begin{equation}  \label{condizOrlicz}
\int_{K_{t}}v_{t}\left(\alpha (1+\left|x_{t}\right|)\right)\,\mathrm{d}%
\widehat{Q }_{t}(x_{t})<+\infty\,\,\,\,\,\forall\alpha>0, t=0,\dots,T\,.
\end{equation}
Suppose that $c:\Omega\rightarrow(-\infty,+\infty]$ is lower semicontinuous
and satisfies \eqref{controlfrombelow}. 
Then 
\begin{equation}  \label{subnonaddnoncpt}
\begin{split}
\mathfrak{P}(c) &=\sup \left\{ \sum_{t=0}^{T}U_{\widehat{Q}_{t}}(\varphi
_{t})\mid \varphi\in\mathcal{E} \text{ is s.t. } \exists \Delta \in \mathcal{%
H}\text{ with }\sum_{t=0}^{T}\varphi _{t}(x_{t})+I^{\Delta }(x)\leq c(x)%
\text{ }\forall x\in \Omega \right\} \\
& =\inf_{Q\in \mathrm{Mart}(\Omega )}\left( E_{Q}\left[ c\right]
+\sum_{t=0}^{T}\mathcal{D}_{v_{t}^{\ast },\widehat{Q}_{t}}(Q_{t})\right)
\end{split}%
\end{equation}

%
where $U_{\widehat{Q}_{t}}(\varphi _{t})$ is defined in \eqref{defUQ} for
general $\varphi _{t}\in C_{t}$ and $\mathcal{D}_{v_{t}^{\ast },\widehat{Q}%
_{t}}$ is given in \eqref{explicitdual}. Moreover, the infimum in %
\eqref{subnonaddnoncpt} is a minimum provided that $\mathfrak{P}(c)<+\infty $%
.

%
%
%
%
%
%
\end{corollary}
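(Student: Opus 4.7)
The plan is to reduce this noncompact case to Corollary \ref{maincor} by setting $U(\varphi) := \sum_{t=0}^T U_{\widehat{Q}_t}(\varphi_t)$ on $\mathcal{E} = \mathcal{E}_0 \times \dots \times \mathcal{E}_T$ and then identifying the resulting abstract penalty $\mathcal{D}(Q)$ with the divergence expression $\sum_t \mathcal{D}_{v_t^*,\widehat{Q}_t}(Q_t)$. First I would check Assumption \ref{asscallstozero}. Part (i) is immediate (each $\mathcal{E}_t$ contains $\mathbb{R}$, is inside $C_t \subseteq C_{0:t}$, and $U$ is concave with $U(0)=0$ since each $U_{\widehat{Q}_t}(0)=0$ by Lemma \ref{remfrominftytofiniteness}). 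For part (ii), I would apply Example \ref{examplecalls} with $f^\alpha_{j,t}(x)=(x-\alpha)^+ + (-x-\alpha)^+ = (|x|-\alpha)^+ \in \mathcal{E}_t$ (a call plus a put, so still of the form \eqref{mixcalls}). The key convergence $V_t(\Gamma f^{n/\beta}_{j,t}) \to 0$ is obtained by evaluating the $(\alpha,\lambda)=(0,0)$ term in the infimum defining $V_t$ to get $0 \leq V_t(\Gamma f^{n/\beta}_{j,t}) \leq \int v_t(\Gamma (|x|-n/\beta)^+)\,\mathrm{d}\widehat{Q}_t$, where the integrand tends pointwise to $v_t(0)=0$ and is dominated by the $\widehat{Q}_t$-integrable function $v_t(\Gamma(1+|x|))$ thanks to \eqref{condizOrlicz}; dominated convergence finishes the verification.

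Next, since Assumption \ref{asscallstozero} holds (and $c$ is lsc with growth \eqref{controlfrombelow}), Corollary \ref{maincor} directly gives the subhedging duality $\inf_{Q\in \mathrm{Mart}(\Omega)}(E_Q[c]+\mathcal{D}(Q)) = \sup_{\varphi \in \mathcal{S}_{\mathrm{sub}}(c)} S^U(\varphi)$ together with existence of the minimizer whenever the LHS is finite. The cash additivity of each $U_{\widehat{Q}_t}$ on $\mathcal{E}_t$ (Lemma \ref{remfrominftytofiniteness}, item~3 with $\alpha=0$) together with Lemma \ref{conjugateaddU} yields $S^U(\varphi)=U(\varphi)=\sum_t U_{\widehat{Q}_t}(\varphi_t)$ on $\mathrm{dom}(S^U)$, which already delivers the first equality in \eqref{subnonaddnoncpt}.

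The remaining task, and the heart of the proof, is the identification $\mathcal{D}(Q)=\sum_{t=0}^T \mathcal{D}_{v_t^*,\widehat{Q}_t}(Q_t)$ for $Q\in \mathrm{Mart}(\Omega)$. Via the additive Setup \ref{SetupsigmaU} and Lemma \ref{conjugateaddU}, $\mathcal{D}(Q)=\sum_t \mathcal{D}_t^U(Q_t)$ with $\mathcal{D}_t^U(Q_t):=\sup_{\varphi_t \in \mathcal{E}_t}(U_{\widehat{Q}_t}(\varphi_t)-\int \varphi_t\,\mathrm{d}Q_t)$. Expanding the definition of $U_{\widehat{Q}_t}$ and using the martingale property of $Q$ together with stock additivity (so that the $(\alpha x_0+\lambda)$ terms cancel against $\int(\alpha x_t+\lambda)\,\mathrm{d}Q_t$, exactly as in the compact Corollary \ref{corolldivinduced}) collapses this into
\begin{equation*}
\mathcal{D}_t^U(Q_t) \;=\; \sup_{\widetilde{\varphi}\in \mathcal{E}_t}\left(\int_{K_t} u_t(\widetilde\varphi)\,\mathrm{d}\widehat{Q}_t - \int_{K_t}\widetilde\varphi\,\mathrm{d}Q_t\right).
\end{equation*}
On the other hand, Proposition \ref{propvstar} expresses $\mathcal{D}_{v_t^*,\widehat{Q}_t}(Q_t)$ as the same supremum but taken over $\mathcal{C}_b(K_t)$ (after the substitution $\widetilde\varphi = -\psi$).

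The hard step, then, is showing that these two suprema coincide even though in the noncompact case $\mathcal{E}_t \not\subseteq \mathcal{C}_b(K_t)$ and $\mathcal{C}_b(K_t) \not\subseteq \mathcal{E}_t$. I expect this to require a two-sided approximation argument controlled by \eqref{condizOrlicz}. For the inequality $\mathcal{D}^U_t(Q_t)\le \mathcal{D}_{v_t^*,\widehat{Q}_t}(Q_t)$, truncate $\widetilde\varphi \in \mathcal{E}_t$ as $\widetilde\varphi^n := (-n)\vee \widetilde\varphi \wedge n \in \mathcal{C}_b(K_t)$, noting $|\widetilde\varphi^n|\leq |\widetilde\varphi|\leq C(1+|x|)\in L^1(Q_t)$ (which is in $\mathrm{Prob}^1(K_t)$), while $u_t(\widetilde\varphi^n)^+\leq (\widetilde\varphi^n)^+\leq C(1+|x|)\in L^1(\widehat{Q}_t)$ and $u_t(\widetilde\varphi^n)^- \leq v_t(C(1+|x|))\in L^1(\widehat{Q}_t)$ via \eqref{condizOrlicz}, so dominated convergence passes both integrals to their limit. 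For the reverse inequality, given $\psi\in \mathcal{C}_b(K_t)$ build $\psi_n\in \mathcal{E}_t$ piecewise linear on a growing grid on $K_t\cap[-n,n]$ and continued flat outside by adding a compensating call (which is realizable in \eqref{mixcalls} since $a+bx+c_1(x-\alpha_1)^+ - c_1(x-\alpha_2)^+$ can be made constant for $x>\alpha_2$), with $\|\psi_n\|_\infty\leq \|\psi\|_\infty+1$ and $\psi_n\to \psi$ pointwise; then dominated convergence finishes, as both $\psi_n$ and $u_t(\psi_n)$ stay uniformly bounded. Together, these approximations yield the desired identification, and plugging into the duality obtained from Corollary \ref{maincor} concludes the proof, with existence of a minimizer when $\mathfrak{P}(c)<+\infty$ inherited from Theorem \ref{mainEMOTtheoremgeneral}.
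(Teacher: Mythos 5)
Your proposal is correct and follows essentially the same route as the paper: verify Assumption \ref{asscallstozero} via Example \ref{examplecalls} and dominated convergence under \eqref{condizOrlicz}, invoke the general duality (the paper routes this through Theorem \ref{mainthm3noncpt}, which itself rests on Corollary \ref{maincor}), and identify the penalty with $\sum_t\mathcal{D}_{v_t^\ast,\widehat{Q}_t}$ via Proposition \ref{propvstar} and the martingale cancellation of the $(\alpha x_0+\lambda)$ terms. The only (cosmetic) difference is in the identification step: you use two explicit approximations (truncation of $\mathcal{E}_t$-elements into $\mathcal{C}_b$, and piecewise-linear approximation of $\mathcal{C}_b$-elements by \eqref{mixcalls}-type functions), whereas the paper sandwiches the supremum over $C_t$ between the $\mathcal{C}_b$-supremum and the divergence via the Fenchel inequality and then uses only the second density argument.
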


\begin{proof}
All the claims follow from Theorem \ref{mainthm3noncpt}, provided that we
show all its hypotheses are satisfied. To do so, we check that: (i) $%
S_t(\varphi_t):=U_{\widehat{Q}_t}(\varphi_t)$ is real valued on ${C}_t$, concave and cash additive;
(ii) $\mathcal{D}_t(Q_t)=\mathcal{D}_{v_t^*,\widehat{Q}_t}(Q_t)$ for every $%
Q\in\mathrm{Mart}(\Omega)$; (iii)  $S_t$ is null in $0$ for every $t=0,\dots,T$ and the conditions \eqref{ineqoutsidecpt} and %
\eqref{condconvergencecalls} hold setting $f^n_t(x_t):=(\left|x_t%
\right|-n)^+,t=0,\dots,T$ as in Example \ref{examplecalls}. To check (i)
observe that for every $t=0,\dots,T$ and $\varphi_t\in C_t$

\begin{equation}
\begin{split}
-\infty&<-\int_{K_{t}}v_{t}\left(\left\|
\varphi_t\right\|(1+\left|x_{t}\right|)\right)\,\mathrm{d}\widehat{Q }%
_{t}(x_{t})=\int_{K_{t}}u_{t}\left(-\left\| \varphi_t\right\|
(1+\left|x_{t}\right|)\right)\,\mathrm{d}\widehat{Q }_{t}(x_{t}) \\
&\leq \int_{K_{t}}u_{t}\left(\varphi_t\right)\,\mathrm{d}\widehat{Q }%
_{t}(x_{t}) \leq U_{\widehat{Q}_t}(\varphi_t)\overset{(\circledast )}{\leq }%
\int_{K_t}\varphi_t\mathrm{d}\widehat{Q}_t\leq\int_{K_{t}}\left\|
\varphi_t\right\|_t (1+\left|x_{t}\right|)\,\mathrm{d}\widehat{Q }%
_{t}(x_{t}) <+\infty
\end{split}
\label{Uisfinitebelow}
\end{equation}
where in $(\circledast )$ we argued as in \eqref{usegradientu} and the
finiteness of the last term comes from the fact that $\widehat{Q}\in\mathrm{%
Prob}^1(\Omega)$. Notice that we just showed also that $\widehat{Q}%
_t\in\partial U_{\widehat{Q}_t}(0)$. Concavity and cash additivity can be checked along the lines of the proof of Lemma \ref{remfrominftytofiniteness} Item (2)-(3). Coming to (ii), from Proposition \ref{propvstar} and Fenchel inequality for every $Q\in\mathrm{Prob}^1(\Omega)$
and $t=0,\dots,T$ 
\begin{equation}
\begin{split}
\mathcal{D}_{v_{t}^{\ast },\widehat{Q }_{t}}(Q_t )&\overset{%
\eqref{explicitdual}}{=}\sup_{\varphi _{t}\in \mathcal{C}_{b}(K_{t})}\left(
\int_{K_{t}}\varphi _{t}(x_{t})\,\mathrm{d}Q_t
(x_{t})-\int_{K_{t}}v_{t}(\varphi _{t}(x_{t}))\,\mathrm{d}\widehat{Q }%
_{t}(x_{t})\right) \\
&\leq\sup_{\varphi _{t}\in {C}_t}\left( \int_{K_{t}}\varphi _{t}(x_{t})\,%
\mathrm{d}Q_t (x_{t})-\int_{K_{t}}v_{t}(\varphi _{t}(x_{t}))\,\mathrm{d}%
\widehat{Q }_{t}(x_{t})\right) \leq \mathcal{D}_{v_{t}^{\ast },\widehat{Q }%
_{t}}(Q_t )\,.
\end{split}
\label{conjugatetocontsublinear}
\end{equation}

Since $\mathcal{E}_t\subseteq C_t$ the proof of (ii) is then concluded
observing that 
for every $\varphi_t\in C_t$ there
exists a sequence of $(\varphi_t^n)_n\subseteq C_t$, with each $\varphi_t^n$
in the form \eqref{mixcalls}, such that $\varphi_t^n \rightarrow \varphi_t$
pointwise on $K_t$ and $\sup_n\left\| \varphi_t^n\right\|_t<+\infty$.
Indeed, this in turns implies, using $\int_{K_{t}}v_{t}\left(\alpha
(1+\left|x_{t}\right|)\right)\,\mathrm{d}\widehat{Q }_{t}(x_{t})<+\infty\,%
\forall\alpha>0$, that one has 
\begin{equation*}
\int_{K_{t}}\varphi _{t}(x_{t})\,\mathrm{d}Q
(x_{t})-\int_{K_{t}}v_{t}(\varphi _{t}(x_{t}))\,\mathrm{d}\widehat{Q }%
_{t}(x_{t})=\lim_n\left(\int_{K_{t}}\varphi^n _{t}(x_{t})\,\mathrm{d}Q
(x_{t})-\int_{K_{t}}v_{t}(\varphi^n _{t}(x_{t}))\,\mathrm{d}\widehat{Q }%
_{t}(x_{t})\right)
\end{equation*}
for every $Q\in \mathrm{Prob}^1(\Omega)$, by Dominated Convergence Theorem.
Finally, (iii) is verified observing that 
\begin{equation*}
0=\int_\Omega0\mathrm{d}\widehat{Q}\geq S_t(0) \geq S_t(-f_t^n)\geq
\int_{\Omega}u_t\left(-f^n_t\right)\mathrm{d}\widehat{Q}\rightarrow_n 0\,
\end{equation*}
where the limit in RHS is motivated by Dominated Convergence Theorem
together with \eqref{condizOrlicz}
\end{proof}

\begin{remark}
\label{marginalsnotofmart} Observe that Corollary \ref%
{corolldivinducednoncpt} would remain valid replacing $U_{\widehat{Q}%
_{t}}(\cdot )$ with $S_{\widehat{Q}_{t}}^{{u}_{t}}(\cdot ):=\sup_{\alpha \in 
{\mathbb{R}}}\left( \int_{K_{t}}u_{t}(\cdot +\alpha )\mathrm{d}\widehat{Q}%
_{t}-\alpha \right) $ for general $\widehat{Q}_{t}\in \mathrm{Prob}%
^{1}(K_{t})$ without requesting these are marginals of a martingale measure.
Indeed, the only point where the martingale property played a role was in
showing $(\circledast )$, which remains valid if one replaces $U_{\widehat{Q}%
_{t}}(\cdot )$ with $S_{\widehat{Q}_{t}}^{{u}_{t}}(\cdot )$ (the argument is
essentially identical). A prominent role was given to $U_{\widehat{Q}_{t}}$
since it serves as an example of stock additive functional (see Example \ref%
{exutstockadd}). 
\end{remark}

Just as we obtained Corollary \ref{CorBeig} from Corollary 
\ref{corolldivinduced}, using the linear utility functions $u_t (x_t)=x_t$, we now deduce the following result from Corollary \ref{corolldivinducednoncpt} \

\begin{corollary}[\protect\cite{Beig} Theorem 1.1 and Corollary 1.2]
\label{corbeiggeneralnoncpt}
Take $d=1$, $K_{0}=\{x_0\}$ for some $x_0\in{%
\mathbb{R}}$ and let $K_{1},\dots ,K_{T}\subseteq {\mathbb{R}}$ be closed subsets of $\mathbb{R}$.
Take for each $t=0,\dots,T$ the vector space $\mathcal{E}_t\subseteq C_t$
of functions in the form \eqref{mixcalls}, let $\mathcal{E}=\mathcal{E}_{0}\times \dots \times \mathcal{E}_{T}$ and fix a $\widehat{Q}\in\mathrm{Mart}(\Omega)$. 
Then for any $c:\Omega\rightarrow(-\infty,+\infty]$ lower semicontinuous
and satisfying \eqref{controlfrombelow}
\begin{equation}
\begin{split}
\mathfrak{P}(c)& =\sup \left\{ \sum_{t=0}^{T}{E}_{\widehat{Q}_{t}}(\varphi
_{t})\mid \varphi \in \mathcal{E}\text{ is s.t. }\exists \Delta \in \mathcal{%
H}\text{ with }\sum_{t=0}^{T}\varphi _{t}(x_{t})+I^{\Delta }(x)\leq c(x)%
\text{ }\forall x\in \Omega \right\} \\
& =\inf_{Q\in \mathrm{Mart}(\widehat{Q}_{0},\dots ,\widehat{Q}%
_{T})}E_{Q}[c]\,
\end{split}
\label{dualrepreBeing}
\end{equation}
and if $\mathfrak{P}(c)<+\infty$, a minimum point exists for the infimum in \eqref{dualrepreBeing}.
\label{CorNew}
\end{corollary}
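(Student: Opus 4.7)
The plan is to obtain this as a direct specialization of Corollary \ref{corolldivinducednoncpt} by taking the linear utility functions $u_{t}(x)=x$ for every $t=0,\dots,T$. With this choice, Assumption \ref{aU1} is trivially verified ($u_t$ is concave, nondecreasing, upper semicontinuous, vanishes at $0$, and satisfies $1\in\partial u_t(0)$), and $\mathrm{dom}(u_t)=\mathbb{R}$. Setting $v_t(x)=-u_t(-x)=x$, the conjugate becomes
\begin{equation*}
v_t^{\ast}(y)=\sup_{x\in\mathbb{R}}(x(y-1))=
\begin{cases}
0 & y=1,\\
+\infty & y\neq 1,
\end{cases}
\end{equation*}
so that $\mathcal{D}_{v_t^{\ast},\widehat{Q}_t}(Q_t)=0$ when $Q_t\equiv\widehat{Q}_t$ and $+\infty$ otherwise, by the definition \eqref{dvstar}.

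Next I would verify the only nontrivial hypothesis of Corollary \ref{corolldivinducednoncpt}, namely the integrability condition \eqref{condizOrlicz}. Since $v_t$ is linear, the condition reduces to $\int_{K_t}\alpha(1+|x_t|)\,\mathrm{d}\widehat{Q}_t<+\infty$ for every $\alpha>0$, which holds automatically because $\widehat{Q}\in\mathrm{Mart}(\Omega)\subseteq\mathrm{Prob}^1(\Omega)$ (cf.\ Example \ref{exampleamtg}), i.e.\ the canonical projections $X_t$ are integrable under $\widehat{Q}$.

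I then need to identify the primal objective. Using the martingale property $\int_{K_t}x_t\,\mathrm{d}\widehat{Q}_t=x_0$ (consequence of $\widehat{Q}\in\mathrm{Mart}(\Omega)$ and $K_0=\{x_0\}$), the functional $U_{\widehat{Q}_t}$ from \eqref{defUQ} collapses to
\begin{equation*}
U_{\widehat{Q}_t}(\varphi_t)=\sup_{\alpha,\lambda\in\mathbb{R}}\left(\int_{K_t}\varphi_t\,\mathrm{d}\widehat{Q}_t+\alpha x_0+\lambda-(\alpha x_0+\lambda)\right)=E_{\widehat{Q}_t}[\varphi_t],
\end{equation*}
so the sup in \eqref{subnonaddnoncpt} becomes exactly the RHS of \eqref{dualrepreBeing}. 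Simultaneously, the infimum in \eqref{subnonaddnoncpt} reduces, via the explicit form of $\mathcal{D}_{v_t^{\ast},\widehat{Q}_t}$, to
\begin{equation*}
\inf_{\substack{Q\in\mathrm{Mart}(\Omega)\\ Q_t\equiv\widehat{Q}_t\,\forall t}}E_Q[c]=\inf_{Q\in\mathrm{Mart}(\widehat{Q}_0,\dots,\widehat{Q}_T)}E_Q[c],
\end{equation*}
matching the LHS of \eqref{dualrepreBeing}. Existence of an optimizer when $\mathfrak{P}(c)<+\infty$ is inherited verbatim from Corollary \ref{corolldivinducednoncpt}.

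There is essentially no technical obstacle beyond bookkeeping: all the analytical work (order-continuity, Daniell--Stone representation, tightness of the sublevel sets of $\mathcal{D}$, approximation of general lower semicontinuous $c$ by continuous ones) has already been discharged in Theorem \ref{mainEMOTtheoremgeneral} and propagated through Corollary \ref{corolldivinducednoncpt}. The only minor subtlety worth flagging is that the indicator-type divergence $v_t^{\ast}=\delta_{\{1\}}$ is an extreme (nonsmooth) case, but this causes no trouble since it appears only in the dual problem as a hard marginal constraint, which is precisely the content of the classical MOT formulation recovered here.
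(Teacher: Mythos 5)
Your proposal is correct and is exactly the paper's route: the paper deduces this corollary from Corollary \ref{corolldivinducednoncpt} by taking the linear utilities $u_{t}(x)=x$, so that $v_{t}^{\ast}=\delta_{\{1\}}$ turns the penalization into the hard marginal constraint $Q_{t}\equiv\widehat{Q}_{t}$ and $U_{\widehat{Q}_{t}}$ collapses to $E_{\widehat{Q}_{t}}$ via the martingale property. Your additional checks (Assumption \ref{aU1}, condition \eqref{condizOrlicz} via $\widehat{Q}\in\mathrm{Prob}^{1}(\Omega)$) are the right bookkeeping and match what the paper leaves implicit.
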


\begin{example}
\label{exampleutilitiestoMOT} We now come to the study of the convergence to the MOT problem. Take $u_{0},\dots ,u_{T}:{\mathbb{R}}%
\rightarrow {\mathbb{R}}$ satisfying Assumption \ref{aU1}, and assume
additionally that these are all differentiable in $0$ (which implies that $%
\{1\}=\partial u_{0}(0)=\dots =\partial u_{T}(0)$). 
Observe that setting $u_{t}^{n}(x):=nu_{t}\left( \frac{x}{n}\right) $, $x\in 
{\mathbb{R}},t=0,\dots ,T$ the functions $u_{0}^{n},\dots ,u_{T}^{n}$ still
satisfy Assumption \ref{aU1}. Moreover $(v_{t}^{n})^{\ast }(y)=\sup_{x\in {%
\mathbb{R}}}(u_{t}^{n}(x)-xy))=nv_{t}^{\ast }(y),\,\,\,y\in {\mathbb{R}}$.
Since $u_{t}(0)=0$, we have $v_{t}^{\ast }\geq 0$ and as a consequence 
\begin{equation*}
\sup_{n}(v_{t}^{n})^{\ast }(y)=%
\begin{cases}
0\,\, & \text{if }v_{t}^{\ast }(y)=0 \\ 
+\infty \,\, & \text{otherwise}%
\end{cases}%
\,.
\end{equation*}%
Moreover $v_{t}^{\ast }(y)=0\Rightarrow y\in \partial u_{t}(0)=\{1\}$.
Consider the set $\mathcal{A}^{\varepsilon }$ of $\varepsilon $-martingale
measures defined in equation \eqref{polarepsilonmarts}, take $\widehat{Q}\in 
\mathrm{Mart}(\Omega )$ and a sequence $\varepsilon _{n}\downarrow _{n}0$. Using 
\eqref{convergenceepsmarttomart}, for every $Q\in \mathrm{Prob}^{1}(\Omega )$
\begin{equation*}
\sum_{t=0}^{T}\mathcal{D}_{(v_{t}^{n})^{\ast },\widehat{Q}%
_{t}}(Q_{t})+\sigma _{\mathcal{A}^{\varepsilon _{n}}}(Q_{t})\uparrow _{n}%
\mathcal{D}_{\infty }(Q)+\sigma _{\mathcal{A}_{\infty }}(Q)
\end{equation*}%
where 
\begin{equation*}
\mathcal{D}_{\infty }(Q)+\sigma _{\mathcal{A}_{\infty }}(Q)=%
\begin{cases}
0\,\,\, & \text{if }Q\in \mathrm{Mart}(\Omega )\text{ and }Q_{t}\equiv 
\widehat{Q}_{t}\,\forall t=0,\dots ,T \\ 
+\infty \,\,\, & \text{otherwise}%
\end{cases}%
\,.
\end{equation*}%
As a consequence, by Proposition \ref{propstability} $$\inf_{Q\in \mathrm{Mart}(\Omega )}\left( E_{Q}\left[ c\right]
+\sum_{t=0}^{T}\mathcal{D}_{(v^n_{t})^{\ast },\widehat{Q}_{t}}(Q_{t})\right)\rightarrow_n \inf_{Q\in \mathrm{Mart}(\widehat{Q}_{0},\dots ,\widehat{Q}%
_{T})}E_{Q}[c]\,.$$
\end{example}
\subsection{Dual representation for generalized OCE associated to the indirect utility function}
\label{GOCE}
In the following we will treat a slightly different problem, which however helps
understanding how also the extreme case $\mathcal{E}_{t}={C}_{0:t},t=0,\dots
,T$ is of interest. Corollary \ref{maincor} yields the following dual robust
representation of the generalized Optimized Certainty Equivalent associated
to the indirect utility function. We stress here the fact that, again, $%
\widehat{Q}\in \mathrm{Mart}(\Omega )$ is a fixed martingale measure, but we
will not focus anymore on its marginals only, as will become clear in the
following.

\begin{proposition}
Take $u:{\mathbb{R}}\rightarrow{\mathbb{R}}$ such that $u_0=\dots,u_T:=u$
satisfy Assumption \ref{aU1} and let $v^{\ast }$ be defined in %
\eqref{cnvxconjvstar} with $u$ in place of $u_t$. Take $\widehat{Q}\in%
\mathrm{Mart}(\Omega)$, such that \eqref{condizOrlicz} holds.

Let $U_{\widehat{Q}}^{\mathcal{H}}:C_{0:T}\rightarrow {\mathbb{R}}$ be the
associated indirect utility 
\begin{equation*}
U_{\widehat{Q}}^{\mathcal{H}}(\varphi ):=\sup_{\Delta \in \mathcal{H}%
}\mathbb{E}_{\widehat{\probq} }[u(\varphi +I^{\Delta })]\,.
\end{equation*}%
and $S^{U_{\widehat{Q}}^{\mathcal{H}}}$ be the associated Optimized
Certainty Equivalent defined according to \eqref{defSU}, namely 
\begin{equation*}
S^{U_{\widehat{Q}}^{\mathcal{H}}}(\varphi):=\sup_{\beta\in{\mathbb{R}}%
}\left(U_{\widehat{Q}}^{\mathcal{H}}(\varphi+\beta)-\beta\right)\,\,\,\,\varphi%
\in C_{0:T}\,.
\end{equation*}
Then for every $c\in C_{0:T}$  the dual representation holds:
\begin{equation*}
S^{U_{\widehat{Q}}^{\mathcal{H}}}(c)=\inf_{Q\in \mathrm{Mart}(\Omega
)}\left( \mathbb{E}_\probq[c]+\mathcal{D}_{\widehat{Q}}(Q)\right)
\end{equation*}

where for $Q \in\mathrm{Prob}^1(\Omega)$ 
\begin{equation*}
\mathcal{D}_{\widehat{Q}}(Q):=%
\begin{cases}
\int_{\Omega }v^{\ast }\left( \frac{\mathrm{d}Q}{\mathrm{d}\widehat{Q}}%
\right) \,\mathrm{d}\widehat{Q} & \,\,\text{ if }Q\ll \widehat{Q} \\ 
+\infty & \,\,\text{ otherwise}%
\end{cases}%
.  \label{SSS}
\end{equation*}
\end{proposition}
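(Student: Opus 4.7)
The strategy is to apply Theorem \ref{mainEMOTtheoremgeneral} (equivalently Corollary \ref{maincor}) in the specialized setting where $\mathcal{E}_0=\cdots=\mathcal{E}_{T-1}=\mathbb{R}$, $\mathcal{E}_T=C_{0:T}$, $\mathcal{A}=\mathcal{I}$, and the valuation functional is
\[
U(\varphi_0,\dots,\varphi_T):=\sum_{t=0}^{T-1}\varphi_t+U^{\mathcal{H}}_{\widehat{Q}}(\varphi_T).
\]
This $U$ is concave with $U(0)=\sup_{\Delta\in\mathcal{H}}E_{\widehat{Q}}[u(I^\Delta)]=0$, since $u(x)\leq x$, $u(0)=0$ and $\widehat{Q}\in\mathrm{Mart}(\Omega)$. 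To check Assumption \ref{asscallstozero}.(ii), I would take $\mathfrak{K}_t(n)=K_t\cap[-n,n]$, $f_t^n\equiv 0$ for $t<T$, and $f_T^n(x):=1+\sum_{t=0}^T(|x_t|-n)^+\in C_{0:T}=\mathcal{E}_T$; the dominance \eqref{ineqoutsidecpt} is immediate, while $V(\Gamma f_0^n,\dots,\Gamma f_T^n)=-U^{\mathcal{H}}_{\widehat{Q}}(-\Gamma f_T^n)$, and the convergence to $0$ follows from the integrability condition \eqref{condizOrlicz}, the bound $u(x)\leq x$, and dominated convergence applied to $u(-\Gamma f_T^n+I^\Delta)$ as in the proof of Corollary \ref{corolldivinducednoncpt}.

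Next, I would identify the two sides of \eqref{dualrepreabstract}. By cash-additivity of $S^{U^{\mathcal{H}}_{\widehat{Q}}}$ one has $S^U(\varphi)=\sum_{t<T}\varphi_t+S^{U^{\mathcal{H}}_{\widehat{Q}}}(\varphi_T)=S^{U^{\mathcal{H}}_{\widehat{Q}}}\bigl(\sum_{t<T}\varphi_t+\varphi_T\bigr)$. Since $\mathcal{H}$ is a vector space, $U^{\mathcal{H}}_{\widehat{Q}}$ absorbs stochastic integrals, and hence so does $S^{U^{\mathcal{H}}_{\widehat{Q}}}$; combining this with monotonicity of $S^{U^{\mathcal{H}}_{\widehat{Q}}}$, for any $\varphi\in\mathcal{S}_{sub}(c)$ with associated $\Delta$ one gets $S^U(\varphi)\leq S^{U^{\mathcal{H}}_{\widehat{Q}}}(c-I^\Delta)=S^{U^{\mathcal{H}}_{\widehat{Q}}}(c)$. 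The converse inequality is witnessed by $\varphi_0=\cdots=\varphi_{T-1}=0$, $\varphi_T=c$, $\Delta=0$, so $\sup_{\varphi\in\mathcal{S}_{sub}(c)}S^U(\varphi)=S^{U^{\mathcal{H}}_{\widehat{Q}}}(c)$.

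For the penalization, the constants $\varphi_0,\dots,\varphi_{T-1}$ cancel in the Fenchel conjugate because $Q_t$ is a probability, yielding
\[
\mathcal{D}(Q)=\sup_{\varphi_T\in C_{0:T}}\bigl(U^{\mathcal{H}}_{\widehat{Q}}(\varphi_T)-E_Q[\varphi_T]\bigr).
\]
For $Q\in\mathrm{Mart}(\Omega)$, using $E_Q[I^\Delta]=0$ and the substitution $\psi=\varphi_T+I^\Delta$ (which ranges over $C_{0:T}$), this rewrites as
\[
\mathcal{D}(Q)=\sup_{\psi\in C_{0:T}}\bigl(E_{\widehat{Q}}[u(\psi)]-E_Q[\psi]\bigr)=\sup_{\psi\in C_{0:T}}\bigl(E_Q[-\psi]-E_{\widehat{Q}}[v(-\psi)]\bigr),
\]
which is exactly the dual representation of $\int_\Omega v^{\ast}(\mathrm{d}Q/\mathrm{d}\widehat{Q})\,\mathrm{d}\widehat{Q}$ given by Proposition \ref{propvstar} applied on the whole space $\Omega$ with reference measure $\widehat{Q}$. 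Since $\mathcal{C}_b(\Omega)\subseteq C_{0:T}$, the inequality $\mathcal{D}(Q)\geq\mathcal{D}_{\widehat{Q}}(Q)$ is immediate from \eqref{explicitdual}; the reverse inequality requires extending the range of $\psi$ from $\mathcal{C}_b(\Omega)$ up to $C_{0:T}$, which is done exactly as in the approximation argument in \eqref{conjugatetocontsublinear} in the proof of Corollary \ref{corolldivinducednoncpt}, using \eqref{condizOrlicz} and dominated convergence on $v(\psi)$.

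The main obstacle is precisely this extension step: verifying that the variational formula for $\mathcal{D}_{\widehat{Q}}$ still holds when the test functions $\psi$ are allowed to grow sublinearly (rather than merely being bounded continuous), which is where the Orlicz-type integrability \eqref{condizOrlicz} intervenes in an essential way. Once this is settled, Corollary \ref{maincor} directly delivers the announced duality $S^{U^{\mathcal{H}}_{\widehat{Q}}}(c)=\inf_{Q\in\mathrm{Mart}(\Omega)}(E_Q[c]+\mathcal{D}_{\widehat{Q}}(Q))$.
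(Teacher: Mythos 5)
Your proposal is correct and follows essentially the same route as the paper's proof: specialize Theorem \ref{mainEMOTtheoremgeneral} with $U$ built from $U^{\mathcal{H}}_{\widehat{Q}}$, identify the primal side with $S^{U^{\mathcal{H}}_{\widehat{Q}}}(c)$ via integral additivity and monotonicity, and identify $\mathcal{D}$ with $\mathcal{D}_{\widehat{Q}}$ on $\mathrm{Mart}(\Omega)$ via the substitution $\psi=\varphi_T+I^{\Delta}$, Proposition \ref{propvstar}, and the approximation argument of \eqref{conjugatetocontsublinear} (the paper takes $\mathcal{E}_t=C_{0:t}$ for all $t$ rather than your $\mathbb{R}\times\cdots\times\mathbb{R}\times C_{0:T}$, but this is immaterial). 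One small slip: with $f^n_T(x)=1+\sum_t(\left\vert x_t\right\vert-n)^+$ the inequality \eqref{ineqoutsidecpt} is \emph{not} immediate — since $(\left\vert x_t\right\vert-n)^+\leq\left\vert x_t\right\vert$ the claimed dominance fails already at $x=(n+1,0,\dots,0)$ — so you need the multiplicative constant $a(d,T)$ and the rescaled strikes $n/\beta$ from Proposition \ref{propcontrolcones}, exactly as in Example \ref{examplecalls}, after which your dominated-convergence verification of \eqref{condconvergencecalls} goes through unchanged.
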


\begin{proof}
Take $\mathcal{E}_{t}=C_{0:t}$ for $t=0,\dots ,T$. Define  $U(\psi ):=U_{%
\widehat{Q}}^{\mathcal{H}}\left( \sum_{t=0}^{T}\psi _{t}\right) $ for  $\psi \in 
\mathcal{E}=\mathcal{E}_{0}\times ...\times \mathcal{E}_{T}$. From %
\eqref{condizOrlicz}, it follows that $U(\psi )>-\infty $ for any $\psi \in 
\mathcal{E}$, similarly to the argument in \eqref{Uisfinitebelow}. Since $%
\widehat{Q}\in \mathrm{Mart}(\Omega )$ and $u(x)\leq x$ for all $x\in {%
\mathbb{R}}$ we also have $U(\psi )\leq \int_\Omega \sum_{t=0}^T\psi_t%
\mathrm{d}\widehat{Q}<+\infty $ (since $\mathrm{Mart}(\Omega)\subseteq 
\mathrm{Probq}^1(\Omega)$). Moreover it is easy to verify that for any $Q\in 
\mathrm{Mart}(\Omega )$ we have 
\begin{equation*}
\mathcal{D}(Q):=\sup_{\psi \in \mathcal{E}}\left( U(\psi )-\int_{\Omega
}\left( \sum_{t=0}^{T}\psi _{t}\right) \mathrm{d}Q\right) =\sup_{\varphi \in
C_{0:T}}\left( \int_{\Omega }u(\varphi )\,\mathrm{d}\widehat{Q}-\int_{\Omega
}\varphi \,\mathrm{d}Q\right) \,
\end{equation*}%
and arguing as in \eqref{conjugatetocontsublinear} we get $\mathcal{D}(Q)=%
\mathcal{D}_{\widehat{Q}}(Q)$. The assumptions of Corollary \ref{maincor}
are satisfied: take indeed $f^n_t(x_0,\dots,x_t):= -(\left|x_t\right|-n)^+$
as in Example \ref{examplecalls}, so that \eqref{ineqoutsidecpt} holds.
Then, using the fact that $U(\psi )\leq \int_\Omega \sum_{t=0}^T\psi_t%
\mathrm{d}\widehat{Q}$, we have 
\begin{equation*}
0=\int_\Omega0\mathrm{d}\widehat{Q}\geq U(0,\dots,0) \geq
U(-f^n_0,\dots,-f^n_T)\geq \int_{\Omega}u\left(-\sum_{t=0}^Tf^n_t\right)%
\mathrm{d}\widehat{Q}\rightarrow_n 0\,
\end{equation*}
where the limit in RHS is motivated by Dominated Convergence Theorem
together with \eqref{condizOrlicz}. This imples that also %
\eqref{condconvergencecalls} is satisfied. $\mathcal{A}=\mathcal{I}$ is
clearly a convex cone with $0\in\mathcal{A}$, and $U(0,\dots,0)=0$ (as can
be easily verified). Hence, Assumption \ref{asscallstozero}.(i)-(ii) are
satisfied.

By Theorem \ref{mainEMOTtheoremgeneral}, as a consequence, we have 
\begin{equation*}
\inf_{Q\in \mathrm{Mart}(\Omega )}\left( E_{Q}\left[ c(X)\right] +\mathcal{D}%
_{\widehat{Q}}(Q)\right) =\inf_{Q\in \mathrm{Mart}(\Omega )}\left( E_{Q}%
\left[ c(X)\right] +\mathcal{D}(Q)\right) =\sup_{\Delta \in \mathcal{H}%
}\sup_{\psi \in \mathbf{\Phi }_{I^\Delta }(c)}S^{U}\left( \psi \right) \,.
\end{equation*}%
Observe now that $S^{U}$ satisfies 
\begin{equation*}
S^{U}(\psi ):=\sup_{\lambda \in {\mathbb{R}}^{T+1}}\left( U(\psi +\lambda
)-\sum_{t=0}^{T}\lambda _{t}\right) =\sup_{\lambda \in {\mathbb{R}}%
^{T+1}}\left( U_{\widehat{Q}}^{\mathcal{H}}\left( \sum_{t=0}^{T}\psi
_{t}+\sum_{t=0}^{T}\lambda _{t}\right) -\sum_{t=0}^{T}\lambda _{t}\right)
\end{equation*}

\begin{equation*}
=\sup_{\beta\in{\mathbb{R}}}\left(U_{\widehat{Q}}^\mathcal{H}%
\left(\sum_{t=0}^T\psi_t+\beta\right)-\beta\right)=:S^{U_{\widehat{Q}}^\mathcal{H%
}}\left(\sum_{t=0}^T\psi_t\right)\,.
\end{equation*}
It can be asily verified that $S^{U_{\widehat{Q}}^\mathcal{H}}$ is real
valued on $C_{0:T}$, since $\int_\Omega\varphi\mathrm{d}\widehat{Q}\geq
S^{U_{\widehat{Q}}^\mathcal{H}}(\varphi)\geq U_{\widehat{Q}}^\mathcal{H}%
(\varphi) $, and nondecreasing. Furthermore, $S^{U_{\widehat{Q}}^\mathcal{H}%
}:C_{0:T}\rightarrow {\mathbb{R}}$ is (IA). Hence 
\begin{equation*}
\sup_{\Delta \in \mathcal{H}}\sup_{\psi \in \mathbf{\Phi }_{I^\Delta
}(c)}S^{U_{\widehat{Q}}^\mathcal{H}}\left(\sum_{t=0}^T\psi_t\right)
=\sup_{\Delta \in \mathcal{H}}\sup_{\psi \in \mathbf{\Phi }_{I^\Delta
}(c)}S^{U_{\widehat{Q}}^\mathcal{H}}\left(\sum_{t=0}^T\psi_t+I^\Delta\right)
=S^{U_{\widehat{Q}}^\mathcal{H}}(c)
\end{equation*}%
by definition of $\mathbf{\Phi }_{I^\Delta }(c)$ and since $c\in C_{0:T}$.
\end{proof}

\appendix

\section{Appendix}
\label{appndx}
Let $\mathbb{X}$ be a metric space and $C(\mathbb{X})$ be the space of
continuous functions on $\mathbb{X}$. Let $\mathcal{E}_{t}\subseteq C(%
\mathbb{X}),$ $t=0,\dots ,T$, be a vector space and $\mathcal{E}=\mathcal{E}%
_{0}\times \dots \times \mathcal{E}_{T}\,$, let $U:\mathcal{E}\rightarrow
\lbrack -\infty ,+\infty )$ be proper and concave and define $S^{U}:\mathcal{%
E}\rightarrow \lbrack -\infty ,+\infty ]$ as in \eqref{defSU}. 

\begin{lemma}
\label{remalmostconcave} Assume the usual convention $\infty \cdot 0=0\cdot
\infty =0$. Then $S^{U}$ is concave on the convex set 
\begin{equation*}
\mathrm{dom}(S^{U}):=\{\varphi \in \mathcal{E}\mid S^{U}(\varphi )>-\infty
\},
\end{equation*}%
that is: if $\varphi ^{1},\varphi ^{2}\in \mathrm{dom}(S^{U})$ then for
every $0\leq \alpha \leq 1$ we have $\alpha \varphi ^{1}+(1-\alpha )\varphi
^{2}\in \mathrm{dom}(S^{U})$ and 
\begin{equation}  \label{almostconcave}
S^{U}(\alpha \varphi ^{1}+(1-\alpha )\varphi ^{2})\geq \alpha S^{U}(\varphi
^{1})+(1-\alpha )S^{U}(\varphi ^{2})\,\,\,\,\,\forall \,\varphi ^{1},\varphi
^{2}\in \mathrm{dom}(S^{U})
\end{equation}%
where the expression makes sense and holds even if either $S^{U}(\varphi
^{1})$ or $S^{U}(\varphi ^{2})$ or both are equal to $+\infty $. Finally, $%
S^U$ satisfies:

\begin{equation*}
S^U(\varphi+\alpha)=S^U(\varphi)+\sum_{t=0}^T\alpha_t\,\,\,\forall\,\varphi%
\in\mathcal{E},\,\forall\, \alpha\in{\mathbb{R}}^{T+1}\,.
\end{equation*}
\end{lemma}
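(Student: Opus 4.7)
The plan is to prove all three claims directly from the definition $S^U(\varphi)=\sup_{\beta\in\mathbb{R}^{T+1}}\bigl(U(\varphi+\beta)-\sum_{t=0}^T\beta_t\bigr)$, exploiting only the concavity of $U$ and the translation structure of the supremum. The cash additivity relation is the easiest ingredient and I would dispose of it first: for fixed $\alpha\in\mathbb{R}^{T+1}$, the change of variable $\gamma:=\alpha+\beta$ is a bijection of $\mathbb{R}^{T+1}$, and
\begin{equation*}
S^U(\varphi+\alpha)=\sup_{\gamma}\Bigl(U(\varphi+\gamma)-\sum_{t}(\gamma_t-\alpha_t)\Bigr)=S^U(\varphi)+\sum_{t=0}^T\alpha_t.
\end{equation*}

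For the concavity inequality, I would fix $\varphi^1,\varphi^2\in\mathrm{dom}(S^U)$, $\alpha\in[0,1]$, and arbitrary $\beta^1,\beta^2\in\mathbb{R}^{T+1}$; set $\varphi^\alpha:=\alpha\varphi^1+(1-\alpha)\varphi^2$ and $\beta^\alpha:=\alpha\beta^1+(1-\alpha)\beta^2$. Concavity of $U$ applied to $\varphi^i+\beta^i$ yields
\begin{equation*}
U(\varphi^\alpha+\beta^\alpha)-\sum_t\beta^\alpha_t\ge\alpha\Bigl(U(\varphi^1+\beta^1)-\sum_t\beta^1_t\Bigr)+(1-\alpha)\Bigl(U(\varphi^2+\beta^2)-\sum_t\beta^2_t\Bigr),
\end{equation*}
since the linear term $\sum_t\beta^\alpha_t$ splits additively. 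The left-hand side is bounded above by $S^U(\varphi^\alpha)$; then taking the supremum over $\beta^1$ and $\beta^2$ independently on the right gives $S^U(\varphi^\alpha)\ge\alpha S^U(\varphi^1)+(1-\alpha)S^U(\varphi^2)$. Finally, the convexity of $\mathrm{dom}(S^U)$ is a direct consequence: if both $S^U(\varphi^i)>-\infty$, the inequality forces $S^U(\varphi^\alpha)>-\infty$.

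The only subtle point, and the one I would treat most carefully, is the behavior of the inequality when one or both $S^U(\varphi^i)=+\infty$ under the stated convention $\infty\cdot 0=0$. If $\alpha\in(0,1)$ and, say, $S^U(\varphi^1)=+\infty$ while $S^U(\varphi^2)>-\infty$, one can pick a sequence $\beta^{1,n}$ with $U(\varphi^1+\beta^{1,n})-\sum_t\beta^{1,n}_t\uparrow+\infty$ and any single $\beta^2$ with $U(\varphi^2+\beta^2)-\sum_t\beta^2_t$ finite; then the inequality displayed above forces $S^U(\varphi^\alpha)=+\infty$, so both sides equal $+\infty$. The extreme cases $\alpha=0$ or $\alpha=1$ are handled by the convention $\infty\cdot 0=0$ together with the trivial identity $S^U(\varphi^i)=S^U(\varphi^i)$. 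I do not anticipate any genuinely hard obstacle here; the content of the lemma is essentially the standard passage of concavity and cash-invariance through the infimal/supremal convolution defining an Optimized Certainty Equivalent.
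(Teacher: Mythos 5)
Your proof is correct and follows essentially the same route as the paper's: concavity of $U$ applied to the shifted arguments, followed by independent suprema over $\beta^1,\beta^2$ (with the $+\infty$ cases handled exactly as you describe), and translation of the supremum for cash additivity. The only cosmetic difference is that the paper replaces your one-line change of variables $\gamma=\beta+\alpha$ by a three-case discussion of where $U(\varphi+\beta)$ sits in $[-\infty,+\infty]$; since $U$ maps into $[-\infty,+\infty)$ and $\sum_t\gamma_t$ is finite, no $\infty-\infty$ ambiguity can arise and your shorter argument is equally valid.
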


\begin{proof}
Since the claim is trivially true for $\alpha =0,1$ (assuming the usual
convention $\infty \cdot 0=0\cdot \infty =0$), we can assume $0<\alpha <1$.
Observe that since $\varphi ^{1},\varphi ^{2}\in \mathrm{dom}(S^{U})$ for
some $\beta ^{1},\beta ^{2}\in {\mathbb{R}}^{T+1}$ we have 
\begin{align*}
-\infty & <\alpha U(\varphi ^{1}+\beta ^{1})+(1-\alpha )U(\varphi ^{2}+\beta
^{2})-\left[ \alpha \sum_{t=0}^{T}\beta _{t}^{1}+(1-\alpha )\sum_{t=0}^{T}\beta
_{t}^{2}\right]  \\
& \leq U(\alpha \varphi ^{1}+(1-\alpha )\varphi ^{2}+\left[ \alpha \beta
^{1}+(1-\alpha )\beta ^{2}\right] )-\left[ \alpha \sum_{t=0}^{T}\beta
_{t}^{1}+(1-\alpha )\sum_{t=0}^{T}\beta _{t}^{2}\right]  \\
& \leq \sup_{\beta \in {\mathbb{R}}^{T+1}}\left( U(\alpha \varphi
^{1}+(1-\alpha )\varphi ^{2}+\beta )-\sum_{t=0}^{T}\beta _{t}\right)
=:S^{U}(\alpha \varphi ^{1}+(1-\alpha )\varphi ^{2})\,.
\end{align*}%
Now if both $S^{U}(\varphi ^{1}),S^{U}(\varphi ^{2})<+\infty $ it is enough
to take suprema over $\beta ^{1},\beta ^{2}\in {\mathbb{R}}^{T+1}$ to get %
\eqref{almostconcave}. If otherwise either $S^{U}(\varphi ^{1})=+\infty $ or 
$S^{U}(\varphi ^{2})=+\infty $, taking the supremum over $\beta ^{1},\beta
^{2}\in {\mathbb{R}}^{T+1}$ yields $S^{U}(\alpha \varphi ^{1}+(1-\alpha
)\varphi ^{2})=+\infty =\alpha S^{U}(\varphi ^{1})+(1-\alpha )S^{U}(\varphi
^{2})$, and again we get \eqref{almostconcave}. Also, \eqref{almostconcave}
clearly implies that if $\varphi ^{1},\varphi ^{2}\in \mathrm{dom}(S^{U})$
then for every $0\leq \alpha \leq 1$ we have $\alpha \varphi ^{1}+(1-\alpha
)\varphi ^{2}\in \mathrm{dom}(S^{U})$. As to the last property, fix $\varphi
\in \mathcal{E}$. There are three cases cases: either (i) $U(\varphi +\beta
)=-\infty \,\forall \beta \in {\mathbb{R}}^{T+1}$, so that $S^{U}(\varphi
+\alpha )=S^{U}(\varphi )=S^{U}(\varphi )+\sum_{t=0}^{T}\alpha _{t}=-\infty $%
, or (ii) $U(\varphi +\beta )=+\infty $ for some $\beta \in {\mathbb{R}}^{T+1}$
in which case $S^{U}(\varphi +\alpha )=S^{U}(\varphi )=S^{U}(\varphi
)+\sum_{t=0}^{T}\alpha _{t}=+\infty $, or (iii) we can write 
\begin{align*}
S^{U}(\varphi +\alpha )& =\sup \left\{ U\left( \varphi +\alpha +\beta \right)
-\sum_{t=0}^{T}\beta _{t}\mid \beta \in {\mathbb{R}}^{T+1},U(\varphi +\beta
+\alpha )>-\infty \right\}  \\
& =\sup \left\{ U\left( \varphi +(\alpha +\beta )\right) -\sum_{t=0}^{T}(\beta
_{t}+\alpha _{t})+\sum_{t=0}^{T}\alpha _{t}\mid \beta \in {\mathbb{R}}%
^{T+1},U(\varphi +(\beta +\alpha ))>-\infty \right\}  \\
& =\sup \left\{ U\left( \varphi +\eta \right) -\sum_{t=0}^{T}\eta
_{t}+\sum_{t=0}^{T}\alpha _{t}\mid \eta \in {\mathbb{R}}^{T+1},U(\varphi
+\eta )>-\infty \right\}  \\
& \overset{(\star )}{=}\sup \left\{ U\left( \varphi +\eta \right)
-\sum_{t=0}^{T}\eta _{t}\mid \eta \in {\mathbb{R}}^{T+1},U(\varphi +\eta
)>-\infty \right\} +\sum_{t=0}^{T}\alpha _{t} \\
& =\sup_{\eta \in {\mathbb{R}}^{T+1}}\left( U(\varphi +\eta
)-\sum_{t=0}^{T}\eta _{t}\right) +\sum_{t=0}^{T}\alpha _{t}=S^{U}(\varphi
)+\sum_{t=0}^{T}\alpha _{t}
\end{align*}%
where in $(\star )$ we are exploiting the fact that we are in the case of $%
U(\varphi +\beta )<+\infty \,\forall \beta \in {\mathbb{R}}^{T+1}$.
\end{proof}

\begin{lemma}
\label{lemmaintrostockaddfunctions} Take closed sets $K_{1},\dots
,K_{T}\subseteq {\mathbb{R}}$, and suppose that $K_{0}=\{x_{0}\}$ and $\text{%
card}(K_{t+1})\geq \text{card}(K_{t})$ for every $t=0,\dots ,T-1$. Take $%
\mathcal{E}=\mathcal{E}_{0}\times \dots \times \mathcal{E}_{T}$ for vector
subspaces $\mathcal{E}_{t}\subseteq C_{t}(K_{t})$ such that $%
X_{t}\in \mathcal{E}_{t}$ and $\mathcal{E}_{t}+{\mathbb{R}}=\mathcal{E}_{t}$%
, for $t=0,\dots ,T$. Suppose there exist $\varphi ,\psi \in \mathcal{E}$
and $\Delta \in \mathcal{H}$, where $\mathcal{H}$ is defined in %
\eqref{strategies}, such that $\sum_{t=0}^{T}\varphi _{t}=\sum_{t=0}^{T}\psi
_{t}+I^{\Delta }$. Then there exist constants $k_{0},\dots
,k_{T},h_{0},\dots ,h_{T}\in {\mathbb{R}}$ such that for each $t=0,\dots ,T$ 
$\psi _{t}(x_{t})=\varphi _{t}(x_{t})+k_{t}x_{t}+h_{t},\,\,\forall x_{t}\in
K_{t}$. In particular for $S_{t}:\mathcal{E}_{t}\rightarrow {\mathbb{R}}%
,\,t=0,\dots ,T$ stock additive functionals we have 
\begin{equation*}
\sum_{t=0}^{T}S_{t}(\varphi _{t})=\sum_{t=0}^{T}S_{t}(\psi _{t})\,.
\end{equation*}%
and for $\mathcal{V}:=\sum_{t=0}^{T}\mathcal{E}_{t}+\mathcal{I}$ (see %
\eqref{StoIntegra}) the map 
\begin{equation*}
v=\sum_{t=0}^{T}\varphi _{t}+I^{\Delta }\mapsto
S(v):=\sum_{t=0}^{T}S_{t}(\varphi _{t})
\end{equation*}%
is well defined on $\mathcal{V}$, (CA) and (IA).
\end{lemma}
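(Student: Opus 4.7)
The plan is to decompose the proof into three parts: establishing the affine structure of $g_t:=\psi_t-\varphi_t$ and the constancy of the integrands $\Delta_t$; verifying the identity $x_0\sum_t k_t+\sum_t h_t=0$; and then deducing well-definedness of $S$ together with the (CA) and (IA) properties.

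For the first part, rewrite the hypothesis as
$$\sum_{t=0}^{T}g_t(x_t)=\sum_{t=0}^{T-1}\Delta_t(x_0,\dots,x_t)(x_{t+1}-x_t)\quad\forall x\in\Omega,$$
and proceed by induction on the horizon $T$. The base case $T=0$ is immediate since $K_0=\{x_0\}$. For the induction step, fix $(x_0,\dots,x_{T-1})$ and view the equation as a function of $x_T\in K_T$: the LHS depends on $x_T$ only through $g_T(x_T)$, while the RHS is affine in $x_T$ with slope $\Delta_{T-1}(x_0,\dots,x_{T-1})$. If $|K_T|\ge 2$, choosing two distinct points $a,b\in K_T$ forces
$$\Delta_{T-1}(x_0,\dots,x_{T-1})=\frac{g_T(a)-g_T(b)}{a-b}=:c_{T-1},$$
a constant, and then $g_T(x_T)=c_{T-1}x_T+h_T'$ for a constant $h_T'$. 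If $|K_T|=1$ the conclusion is vacuous after setting $c_{T-1}=0$. Substituting back, absorb $c_{T-1}x_{T-1}$ into $g_{T-1}$ (defining $\tilde g_{T-1}:=g_{T-1}+c_{T-1}X_{T-1}$) and $h_T'$ into $g_0$, so that we obtain an equation of the same form with horizon $T-1$; the inductive hypothesis then yields affinity of the remaining $g_s$'s and constancy of the remaining $\Delta_s$'s. Undoing the substitutions produces constants $k_t,h_t$ with $g_t(x_t)=k_t x_t+h_t$ on $K_t$.

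For the second part, note that the identity $\sum_t(k_t X_t+h_t)=\sum_t c_t(X_{t+1}-X_t)$ must hold pointwise on $\Omega$. Writing the RHS as $c_{T-1}X_T+\sum_{t=1}^{T-1}(c_{t-1}-c_t)X_t-c_0 x_0$ and using the cardinality assumption (which makes $t\mapsto |K_t|$ nondecreasing so that $|K_t|=1$ can only occur on an initial segment $t=0,\dots,t^*-1$, and moreover forces $K_t=\{x_0\}$ on that segment, since otherwise stock additivity on $\mathcal{E}_t$ would be inconsistent with cash additivity), we can match coefficients of $X_t$ for $t\ge t^*$. This yields $k_T=c_{T-1}$, $k_t=c_{t-1}-c_t$ for $t^*\le t<T$, and $k_t=0$ for $t<t^*$. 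A telescoping sum gives $\sum_t k_t=c_{t^*-1}$, while the constant term provides $\sum_t h_t=-c_{t^*-1}x_0$; adding yields $x_0\sum_t k_t+\sum_t h_t=0$. The main obstacle is precisely this bookkeeping, because the decomposition $g_t=k_t X_t+h_t$ is non-unique when $|K_t|=1$; the cardinality assumption and the forced identity $K_t=\{x_0\}$ in that regime are what make the telescoping robust against ambiguity.

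For the third part, stock additivity directly yields
$$S_t(\psi_t)=S_t(\varphi_t+k_t X_t+h_t)=S_t(\varphi_t)+k_t x_0+h_t,$$
and summing together with Part~2 gives $\sum_t S_t(\psi_t)=\sum_t S_t(\varphi_t)$. For well-definedness of $S$ on $\mathcal{V}$, if $v=\sum_t\varphi_t+I^{\Delta}=\sum_t\psi_t+I^{\Delta'}$ are two representations, apply the previous step to the pair $(\varphi,\psi)$ with integrand $\Delta-\Delta'\in\mathcal{H}$, which ensures $\sum_t S_t(\varphi_t)=\sum_t S_t(\psi_t)$. The (IA) property follows immediately since adding $I^{\Delta}$ to $v$ only changes the integrand component in the representation; (CA) follows by absorbing the constant $k$ into, say, $\varphi_0$ and invoking cash additivity of $S_0$ (which is part of stock additivity). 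No further delicate analysis is needed at this stage.
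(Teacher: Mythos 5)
Your proof is correct and follows essentially the same strategy as the paper's: show that the integrands $\Delta_t$ must be deterministic, deduce the affine relation $\psi_t-\varphi_t=k_tX_t+h_t$, and then invoke stock additivity. Your induction-with-substitution in Part 1 is just a repackaging of the paper's backward iteration over $t=T,T-1,\dots$, and Part 3 matches the paper's Step 3. The one place where you diverge is Part 2, and there you make life harder than necessary: by choosing $k_t=0$ on the singleton factors you create the ambiguity you then have to resolve via the auxiliary claim that $|K_t|=1$ forces $K_t=\{x_0\}$ (which is indeed implied by the mere existence of a stock additive $S_t$ on $\mathcal{E}_t\ni X_t$, since $\alpha X_t$ is then the constant $\alpha y$ and stock additivity versus cash additivity gives $\alpha x_0=\alpha y$). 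The paper avoids all of this by taking the canonical telescoping coefficients from $I^{\Delta}=\sum_t k_tX_t$, namely $k_0=-\Delta_0$, $k_t=\Delta_{t-1}-\Delta_t$, $k_T=\Delta_{T-1}$, for which $\sum_t k_t=0$ automatically, and then $\sum_t h_t=0$ by evaluating the identity at any single point; the desired cancellation $x_0\sum_tk_t+\sum_th_t=0$ is then immediate and no case analysis on $\mathrm{card}(K_t)$ is needed. Your bookkeeping does close correctly (I checked the telescoping $\sum_{t\ge t^*}k_t=c_{t^*-1}$ and the constant term $-c_{t^*-1}x_0$), so there is no gap, but the detour is avoidable.
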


\begin{proof}
$\,$

\textbf{Step 1}: we prove that if $\sum_{t=0}^{T}\varphi
_{t}=\sum_{t=0}^{T}\psi _{t}+I^{\Delta }$ then $\Delta =[\Delta _{0},\dots
,\Delta _{T-1}]\in \mathcal{H}$ is a deterministic vector $\Delta \in {%
\mathbb{R}}^{T}$. If $\text{card}(K_{T})=1$ this is trivial. We can then
suppose $\text{card}(K_{T})\geq 2$ We see that 
\begin{equation*}
\varphi _{T}(x_{T})-\psi _{T}(x_{T})=\sum_{t=0}^{T-1}(\psi (x_{t})-\varphi
_{t}(x_{t}))+\sum_{t=0}^{T-2}\Delta _{t}(x_{0},\dots ,x_{t})(x_{t+1}-x_{t})+
\end{equation*}%
\begin{equation*}
+\Delta _{T-1}(x_{0},\dots ,x_{T-1})(x_{T}-x_{T-1})=f(x_{0},\dots
,x_{T-1})+\Delta _{T-1}(x_{0},\dots ,x_{T-1})x_{T}
\end{equation*}%
for some function $f$. If $\Delta _{T-1}$ were not constant, on two points
it would assume values $a\neq b$, with corresponding values of $f$ that we
call $f_{a},f_{b}$. Then $f_{a}+ax_{T}=f_{b}+bx_{T}$ has a unique solution,
contradicting the fact that all the equalities need to hold on the whole $%
K_{0},\dots ,K_{T}$ and in particular for two different values of $x_{T}$.
We proceed one step backward. If $\text{card}(K_{T-1})=1$, the claim
trivially follows, given our previous step. If $\text{card}(K_{T-1})\geq 2$,
similarly to the previous computation 
\begin{equation*}
\varphi _{T-1}(x_{T-1})-\psi _{T-1}(x_{T-1})=\sum_{s\neq T-1}(\psi
_{s}(x_{s})-\varphi _{s}(x_{s}))+\sum_{t=0}^{T-3}\Delta _{t}(x_{0},\dots
,x_{t})(x_{t+1}-x_{t})+
\end{equation*}%
\begin{equation*}
+\Delta _{T-2}(x_{0},\dots ,x_{T-2})(x_{T-1}-x_{T-2})+\Delta
_{T-1}(x_{T}-x_{T-1})
\end{equation*}%
\begin{equation*}
=f(x_{s},s\neq T-1)+(\Delta _{T-2}(x_{0},\dots ,x_{T-2})-\Delta
_{T-1})x_{T-1}\,.
\end{equation*}%
An argument similar to the one we used in the previous time step shows that $%
\Delta _{T-2}(x_{0},\dots ,x_{T-2})-\Delta _{T-1}$ is constant, hence so is $%
\Delta _{T-2}$. Our argument can be clearly be iterated up to $\Delta _{0}$.

\textbf{Step 2}: we prove existence of the vectors $k,h\in {\mathbb{R}}%
^{T+1} $, as stated in the Lemma. From Step 1 it is clear that there exist
constants $k_{0},\dots ,k_{T}$ such that $I^{\Delta
}(x)=\sum_{t=0}^{T}k_{t}x_{t}$. Hence $\sum_{t=0}^{T}\varphi
_{t}(x_{t})=\sum_{t=0}^{T}(\psi _{t}(x_t)+k_{t}x_{t})$ for all $x\in \Omega $%
, which yields for each $t=0,\dots ,T$ that $\varphi _{t}(x_{t})-(\psi
_{t}(x_t)+k_{t}x_{t})$ does not depend on $x_{t}$, hence is constant, call
it $-h_{t}$. Then $k_{0},\dots ,k_{T},h_{0},\dots ,h_{T}\in {\mathbb{R}}$
satisfy our requirements. The last claim $\sum_{t=0}^{T}S_{t}^{U}(\varphi
_{t})=\sum_{t=0}^{T}S_{t}^{U}(\psi _{t})$ is then an easy consequence of
stock additivity.

\textbf{Step 3}: well posedness and properties of $S$. Observe that whenever 
$\varphi ,\psi \in \mathcal{E},\,\Delta ,H\in \mathcal{H}$ are given with $%
\sum_{t=0}^{T}\varphi _{t}+I^{\Delta }=\sum_{t=0}^{T}\psi _{t}+I^{H}$ we
have by Steps 1-2 that $\sum_{t=0}^{T}S_{t}^{U}(\varphi
_{t})=\sum_{t=0}^{T}S_{t}^{U}(\psi _{t})\,.$ As a consequence, $S$ is well
defined. Cash Additivity is inherited from $S_{0},\dots ,S_{T}$ while
Integral Additivity is trivial from the definition.
\end{proof}

\begin{proposition}
\label{propcontrolcones} There exist $a=a(d,T), \beta=\beta(d,T)>0$ such
that for every $A>1$ 
\begin{equation*}
1+\sum_{s=0}^T\sum_{j=1}^d\left|x^j_s\right|\leq a\sum_{s=0}^T
\sum_{j=1}^d f^{\frac{A}{\beta}}_{j,s}(x^j_s)\,\,\,\,\forall
[x_0,\dots,x_T]\in ({\mathbb{R}}^d)^{T+1}\setminus ([-A,A]^d)^{T+1}
\end{equation*}
where $f^\alpha_{j,s}$ is defined in \eqref{defeqcalls}.
\end{proposition}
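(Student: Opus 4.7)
Proof plan:

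My strategy is elementary and quantitative: use the pointwise bound $|y| \le (|y|-\alpha)^{+} + \alpha$ to control each coordinate, combined with the fact that at least one coordinate is large enough to make the sum $\sum_{s,j}(|x^j_s|-\alpha)^+$ strictly positive and in fact comparable to $\alpha$.

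First I will set $N = d(T+1)$ and $\alpha = A/\beta$, and split the estimate into two pieces. For each pair $(j,s)$ the identity $|x^j_s| \le (|x^j_s|-\alpha)^+ + \alpha = f^\alpha_{j,s}(x^j_s) + \alpha$ holds trivially, so summing yields
\begin{equation*}
1 + \sum_{s=0}^T\sum_{j=1}^d |x^j_s| \;\le\; 1 + N\alpha + \sum_{s=0}^T\sum_{j=1}^d f^\alpha_{j,s}(x^j_s)\,.
\end{equation*}
Hence it suffices to show $1 + N\alpha \le (a-1)\sum_{s,j} f^\alpha_{j,s}(x^j_s)$ to conclude.

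Next I exploit the hypothesis that $[x_0,\dots,x_T] \notin ([-A,A]^d)^{T+1}$: there exist indices $(j^*,s^*)$ with $|x^{j^*}_{s^*}| > A = \beta\alpha$, so $f^\alpha_{j^*,s^*}(x^{j^*}_{s^*}) > (\beta-1)\alpha$, and therefore
\begin{equation*}
\sum_{s,j} f^\alpha_{j,s}(x^j_s) \;\ge\; (\beta-1)\alpha\,.
\end{equation*}
It is then enough to pick $\beta > 1$ and $a>1$ such that $1 + N\alpha \le (a-1)(\beta-1)\alpha$ for every admissible $\alpha$. Since $A>1$ gives $\alpha > 1/\beta$, the inequality holds as soon as $(a-1)(\beta-1) \ge N + \beta$. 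The concrete choice $\beta := N+2 = d(T+1)+2$ and $a := 3$ works: $(a-1)(\beta-1) = 2(N+1) = 2N+2 = N + \beta$, and the argument goes through.

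Since the computation is entirely elementary, there is no genuinely hard step; the only thing to be careful about is keeping track of the dependence $A \mapsto \alpha = A/\beta$ and verifying that the constants $a,\beta$ chosen depend only on $d,T$ (they do, through $N$). No minimax, compactness, or continuity input is needed.
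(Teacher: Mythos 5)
Your proof is correct and is in essence the same elementary quantitative argument as the paper's: both exploit only the pointwise relation between $\left\vert y\right\vert$ and $(\left\vert y\right\vert-\alpha)^{+}$ together with the existence of at least one coordinate exceeding $A$, the paper via a split of the index set into large and small coordinates, you via the uniform bound $\left\vert y\right\vert\leq f^{\alpha}(y)+\alpha$ plus a single witness index. Your bookkeeping is slightly cleaner and even yields smaller admissible constants ($a=3$, $\beta=d(T+1)+2$, versus the paper's $\beta=2d(T+1)$ and $a$ of order $d(T+1)$), and every step checks out, including the final reduction of $1+N\alpha\leq(a-1)(\beta-1)\alpha$ to $1\leq A$.
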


\begin{proof}
Observe that $f^\frac{A}{\beta}_{j,s}(x^j_s)=\left(\left|x^j_s\right|-\frac{A%
}{\beta}\right)^+$. Fix $x\in ({\mathbb{R}}^d)^{T+1}\setminus
([-A,A]^d)^{T+1}$ Define

\begin{align*}
I(x)&:=\left\{{j,s}\in\{1,\dots,d\}\times\{0,\dots,T\}\mid
\left|x_s^j\right|>A\right\} \\
I^c(x)&:=\left\{{j,s}\in\{1,\dots,d\}\times\{0,\dots,T\}\mid
\left|x_s^j\right|\leq A\right\}\,.
\end{align*}

Then $\{1,\dots ,d\}\times \{0,\dots ,T\}=I(x)\cup I^{c}(x)$ and $I(x)\neq
\emptyset $. Moreover for $\beta>1$
\begin{align*}
& a\sum_{s=0}^T\sum_{j=1}^{d}f_{j,s}^{\frac{A}{\beta }}(x_{s}^{j})-\left(
1+\sum_{s=0}^T\sum_{j=1}^{d}\left\vert x_{s}^{j}\right\vert \right)  \\
& =a\sum_{s=0}^T\sum_{j=1}^{d}\left( \left\vert x_{s}^{j}\right\vert -%
\frac{A}{\beta }\right) ^{+}-\left( 1+\sum_{s=0}^T\sum_{j=1}^{d}\left\vert
x_{s}^{j}\right\vert \right)  \\
& =\sum_{(j,s)\in I(x)}\left( a\left( \left\vert x_{s}^{j}\right\vert -\frac{%
A}{\beta }\right) -\left\vert x_{s}^{j}\right\vert \right) +\sum_{(j,s)\in
I^{c}(x)}\left( \left( \left\vert x_{s}^{j}\right\vert -\frac{A}{\beta }%
\right) ^{+}-\left\vert x_{s}^{j}\right\vert \right) -1 \\
& \geq (a-1)\sum_{(j,s)\in I(x)}\left\vert x_{s}^{j}\right\vert -\frac{aA}{%
\beta }|I(x)|-\sum_{(j,s)\in I^{c}(x)}\left\vert x_{s}^{j}\right\vert -1 \\
& \geq (a-1)A-\frac{aA}{\beta }d(T+1)-d(T+1)A-1 \\
& =aA\left( 1-d\frac{T+1}{\beta }\right) -\left[ A+d(T+1)A+1\right] 
\end{align*}%
Then, selecting e.g. $\beta =2d(T+1)$, we can solve the inequality 
\begin{equation*}
aA\left( 1-\frac{1}{2}\right) -\left[ A+d(T+1)A+1\right] \geq 0
\end{equation*}%
getting $a\geq 2d(T+2)+\frac{2}{A}$. This yields a possible selection $%
a=2d(T+2)+2$ too, depending only on the dimensions $d,T$.
\end{proof}

\begin{lemma}
\label{lemmamargmartcptgen} The set $\mathrm{Mart}_t(K_t)$ defined in \eqref{marginalmartgen} is $\sigma(%
\mathrm{ca}(K_t),\mathcal{C}_b(K_t))$-compact.
\end{lemma}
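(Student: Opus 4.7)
The plan is to realize $\mathrm{Mart}_t(K_t)$ as the continuous image of the compact set $\mathrm{Mart}(\Omega)$ under the marginal projection map, so compactness reduces to two facts: (a) $\mathrm{Mart}(\Omega)$ is weak$^*$-compact in $\mathrm{ca}(\Omega)$, and (b) the marginalization $\pi_t : \mathrm{ca}(\Omega) \to \mathrm{ca}(K_t)$, $Q \mapsto Q_t$, is continuous for the relevant weak$^*$ topologies.

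First I would establish (a). Since $K_0, \dots, K_T$ are compact, so is $\Omega$, and $\mathrm{ca}(\Omega)$ is isometric to the dual of $\mathcal{C}_b(\Omega) = \mathcal{C}(\Omega)$ by the Riesz representation theorem. By Banach--Alaoglu, the unit ball of $\mathrm{ca}(\Omega)$ is weak$^*$-compact, so it suffices to show that $\mathrm{Mart}(\Omega)$ is weak$^*$-closed and contained in this ball. Containment is immediate since elements of $\mathrm{Mart}(\Omega)$ are probability measures, so have total variation norm $1$. For closedness, write
\begin{equation*}
\mathrm{Mart}(\Omega) = \Bigl\{Q \in \mathrm{ca}(\Omega) \mid Q \geq 0,\ Q(\Omega)=1,\ \textstyle\int_\Omega I^{\Delta}\, \mathrm{d}Q = 0\ \forall \Delta \in \mathcal{H}\Bigr\},
\end{equation*}
and note that each defining condition is the vanishing (or normalization, or positivity against) of a functional $Q \mapsto \int f\, \mathrm{d}Q$ with $f \in \mathcal{C}_b(\Omega)$: indeed, since $K_0,\dots,K_T$ are compact we have $X_s \in \mathcal{C}_b(\Omega_{0:s})$, hence each $I^{\Delta} \in \mathcal{C}_b(\Omega)$ for $\Delta \in \mathcal{H}$ (cf.\ \eqref{strategies}--\eqref{StoIntegra}). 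Thus $\mathrm{Mart}(\Omega)$ is a weak$^*$-closed subset of a weak$^*$-compact set, hence weak$^*$-compact.

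Next I would verify (b). For any $\varphi \in \mathcal{C}_b(K_t)$, the map $\widetilde{\varphi} := \varphi \circ X_t$ lies in $\mathcal{C}_b(\Omega)$ since $X_t$ is continuous; and by definition of the marginal, $\int_{K_t} \varphi\, \mathrm{d}(\pi_t Q) = \int_{\Omega} \widetilde{\varphi}\, \mathrm{d}Q$. Hence $Q \mapsto \int \varphi\, \mathrm{d}(\pi_t Q)$ is $\sigma(\mathrm{ca}(\Omega), \mathcal{C}_b(\Omega))$-continuous for every $\varphi \in \mathcal{C}_b(K_t)$, so $\pi_t$ is continuous as a map between the two weak$^*$ topologies.

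Combining, $\mathrm{Mart}_t(K_t) = \pi_t(\mathrm{Mart}(\Omega))$ is the continuous image of a compact set and hence $\sigma(\mathrm{ca}(K_t), \mathcal{C}_b(K_t))$-compact. I do not foresee a real obstacle: the only subtle point is making sure the candidate integrands $I^{\Delta}$ defining the martingale constraint actually lie in $\mathcal{C}_b(\Omega)$, which is where compactness of the $K_t$ is used essentially.
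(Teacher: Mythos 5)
Your proposal is correct and follows essentially the same route as the paper: the paper also observes that $\mathrm{Mart}(\Omega)$ is a weak$^*$-closed subset of the weak$^*$-compact set $\mathrm{Prob}(\Omega)$ (compact since $\Omega$ is a compact Polish space) and then pushes it forward through the continuous marginal map. Your version merely spells out the Banach--Alaoglu and closedness details that the paper delegates to a citation.
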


\begin{proof}
We see that $\mathrm{Mart}(\Omega )$ is a $\sigma (\mathrm{ca}(\Omega ),%
\mathcal{C}_{b}(\Omega ))$-closed subset of the $\sigma (\mathrm{ca}(\Omega
),\mathcal{C}_{b}(\Omega ))$-compact set $\mathrm{Prob}(\Omega )$ (which is
compact since $\Omega $ is a compact Polish Space, see \cite{Aliprantis}
Theorem 15.11), hence it is compact himself. $\mathrm{Mart}_{t}(K_{t})$ is
then the image of a compact set via the marginal map $\gamma \mapsto \gamma
_{t}$ which is $\sigma (\mathrm{ca}(\Omega ),\mathcal{C}_{b}(\Omega
))-\sigma (\mathrm{ca}(K_{t}),\mathcal{C}_{b}(K_{t}))$ continuous, hence it
is $\sigma (\mathrm{ca}(K_{t}),\mathcal{C}_{b}(K_{t}))$ compact.
\end{proof}

\begin{proposition}
\label{propdaniellstone} Let $L\in (C_\psi)^*$ be continuous, linear and
positive. Suppose that for every $\varepsilon>0$ there exists a compact $%
K_\varepsilon\subseteq \mathbb{X}$ such that 
\begin{equation*}
\varphi\in C_\psi, \varphi|_{K_\varepsilon}\equiv 0\Longrightarrow
\left|\langle \varphi,L\rangle\right|\leq \varepsilon \left\|
\varphi\right\|_\psi\,.
\end{equation*}
Then for every sequence $(c_n)_n\in C_\psi$ with $c_n\downarrow_n 0$
pointwise on $\mathbb{X}$ it holds that $L(c_n)\downarrow 0$.
\end{proposition}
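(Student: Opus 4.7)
The plan is to argue directly via a Dini-type reduction and a small cutoff. First I would observe that since $L$ is positive and $c_n - c_{n+1}\ge 0$, the sequence $L(c_n)$ is nonincreasing and nonnegative, so it converges to some $\ell\ge 0$; the goal reduces to showing $\ell\le\varepsilon\,C$ for every $\varepsilon>0$ and some fixed constant $C$ not depending on $\varepsilon$. Also, $1\in C_\psi$ with $\|1\|_\psi\le 1$, hence by continuity of $L$ the value $L(1)$ is finite (and nonnegative by positivity).

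Now fix $\varepsilon>0$ and let $K_\varepsilon\subseteq\mathbb{X}$ be the compact set guaranteed by the hypothesis. The key observation is that each $c_n$ is continuous on the compact $K_\varepsilon$ and the sequence decreases pointwise to $0$ on $K_\varepsilon$, so by Dini's theorem the convergence is uniform there, i.e.\ $M_n:=\sup_{x\in K_\varepsilon}c_n(x)\downarrow 0$. I would then introduce the truncated functions
\begin{equation*}
d_n(x):=(c_n(x)-M_n)^+,\qquad x\in\mathbb{X}.
\end{equation*}
Each $d_n$ is continuous, satisfies $0\le d_n\le c_n$ (so $d_n\in C_\psi$ with $\|d_n\|_\psi\le \|c_1\|_\psi$), and by construction $d_n\equiv 0$ on $K_\varepsilon$. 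Moreover one checks pointwise that $c_n\le d_n+M_n$ on all of $\mathbb{X}$.

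Applying $L$, using its positivity, linearity and the hypothesis on $K_\varepsilon$ gives
\begin{equation*}
0\le L(c_n)\le L(d_n)+M_n\,L(1)\le \varepsilon\,\|d_n\|_\psi+M_n\,L(1)\le \varepsilon\,\|c_1\|_\psi+M_n\,L(1).
\end{equation*}
Letting $n\to\infty$, the term $M_n\,L(1)\to 0$ by Dini, so $\ell=\lim_n L(c_n)\le\varepsilon\,\|c_1\|_\psi$. Since $\varepsilon>0$ was arbitrary, $\ell=0$, which yields $L(c_n)\downarrow 0$.

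The only step that requires a little care is verifying that $d_n\in C_\psi$ with $\|d_n\|_\psi$ uniformly bounded and that $c_n\le d_n+M_n$ pointwise; both are immediate from $0\le(c_n-M_n)^+\le c_n$ and splitting into the regions $\{c_n\le M_n\}$ and $\{c_n>M_n\}$. No minimax, measure-theoretic, or topological compactness argument beyond Dini is needed, so I do not foresee any essential obstacle.
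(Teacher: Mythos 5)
Your proof is correct and follows essentially the same strategy as the paper's: Dini's theorem gives uniform convergence of $c_n$ to $0$ on $K_\varepsilon$, and one splits $c_n$ into a piece vanishing on $K_\varepsilon$ (to which the hypothesis applies) plus a uniformly small remainder. The only cosmetic difference is the splitting: the paper uses $g_n^\varepsilon=\min(c_n,\varepsilon)$ and bounds $L(g_n^\varepsilon)$ by $\|L\|\varepsilon$ via the operator norm, whereas you use $d_n+M_n$ with $d_n=(c_n-M_n)^+$ and bound the remainder by $M_nL(1)$; both are valid.
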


\begin{proof}
Fix $\varepsilon>0$ and take the associated compact $K_\varepsilon$. By
Dini's Lemma, $\sup_{x\in K_\varepsilon} c_n(x)\downarrow_n 0$. Take $n$ big
enough in such a way that $\sup_{x\in K_\varepsilon}c_n(x) <\varepsilon$.
Define $0\leq g^\varepsilon_n:=\min(c_n,\varepsilon)$. Then clearly 
\begin{equation}  \label{firstineq}
g_n^\varepsilon(x)=\left|g_n^\varepsilon(x)\right|\leq \varepsilon\leq
\varepsilon(1+\left|\psi(x)\right|)\,\,\,\forall x\in\mathbb{X}%
\Rightarrow\left\| g_n^\varepsilon\right\|_\psi\leq \varepsilon\Rightarrow
\left|\langle g_n^\varepsilon,L\rangle\right|\leq \left\| L\right\|\left\|
g_n^\varepsilon\right\|_\psi\leq \left\| L\right\|\varepsilon
\end{equation}
where $\left\| L\right\|$ is the operator norm ($\left\| L\right\|<+\infty$
since $L$ is continuous). Also, since $\sup_{x\in
K_\varepsilon}c_n(x)<\varepsilon$, $c_n$ and $g_n^\varepsilon$ coincide on $%
K_\varepsilon$, namely $(c_n-g_n^\varepsilon)|_{K_\varepsilon}\equiv 0$. But
using the hypothesis on $L$ we then get 
\begin{equation}  \label{secondineq}
(c_n-g_n^\varepsilon)|_{K_\varepsilon}\equiv 0 \Rightarrow \left|\langle
c_n-g_n^\varepsilon,L\rangle\right|\leq \varepsilon \left\|
c_n-g_n^\varepsilon\right\|_\psi\leq \varepsilon(\left\|
c_n\right\|_\psi+\left\| g_n^\varepsilon\right\|_\psi)\leq \varepsilon(
\left\| c_1\right\|_\psi+\varepsilon)
\end{equation}
where in the last step we used the Banach lattice property of $\left\|
\cdot\right\|_\psi$ and the fact that as shown before $\left\|
g_n^\varepsilon\right\|_\psi\leq\varepsilon$. We now combine %
\eqref{firstineq} and \eqref{secondineq} to get 
\begin{equation*}
0\leq \langle c_n,L\rangle=\langle c_n-g_n^\varepsilon,L\rangle+\langle
g_n^\varepsilon,L\rangle\leq \left|\langle
c_n-g_n^\varepsilon,L\rangle\right|+\langle g_n^\varepsilon,L\rangle\leq
\varepsilon( \left\| c_1\right\|_\psi+\varepsilon)+\left\|
L\right\|\varepsilon\,.
\end{equation*}
Since $\varepsilon>0$ is arbitrary, $L(c_n)\downarrow 0$.
\end{proof}

\begin{theorem}[Daniell-Stone]
\label{thmdaniellstone} Let $\mathfrak{F}$ be a vector lattice of functions
(i.e. $f,g\in\mathfrak{F}\Rightarrow \max(f,g)\in\mathfrak{F}$) on a set $%
\mathbb{X}$ such that $1\in\mathfrak{F}$. Let $L$ be a linear functional on $%
\mathfrak{F}$ with the following properties: $L(f)\geq 0$ whenever $f\geq 0$%
, $L(1) = 1$, and $L(f_n)\rightarrow_n 0$ for every sequence of functions $%
f_n$ in $\mathfrak{F}$ monotonically decreasing to zero. Then, there exists
a unique probability measure $\mu$ on the $\sigma$-algebra $\mathcal{F} =
\sigma(\mathfrak{F})$ generated by $\mathfrak{F}$ such that $\mathfrak{F}
\subseteq L^1(\mu)$ and 
\begin{equation*}
L(f)=\int_{\mathbb{X}}f\,\mathrm{d}\mu\,,\,\,\,\forall\,f\in\,\mathfrak{F}\,.
\end{equation*}
\end{theorem}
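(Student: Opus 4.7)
The plan is to follow the classical Daniell--Stone construction, which extends the positive linear functional $L$ from the vector lattice $\mathfrak{F}$ to an outer integral on an enlarged class of functions, then builds the measure $\mu$ via an outer-measure step \`a la Carath\'eodory. The hypothesis of monotone continuity at zero (i.e.\ $L(f_n)\to 0$ whenever $f_n\downarrow 0$) is what converts the merely linear functional $L$ into a \emph{countably} additive object.

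First, I would introduce the cone $\mathfrak{F}^{\uparrow}$ of functions $f:\mathbb{X}\to(-\infty,+\infty]$ that arise as pointwise suprema of nondecreasing sequences $(f_n)_n\subseteq\mathfrak{F}$, and define $L(f):=\sup_n L(f_n)\in(-\infty,+\infty]$. Using the lattice property of $\mathfrak{F}$ together with the continuity-at-zero hypothesis (applied to differences of approximating sequences, in the Dini fashion of Proposition \ref{propdaniellstone}), one checks that $L$ is well-defined on $\mathfrak{F}^{\uparrow}$, positively homogeneous, additive, monotone, and still continuous from below along nondecreasing sequences. Analogously one defines $\mathfrak{F}^{\downarrow}:=-\mathfrak{F}^{\uparrow}$, and shows $L(f)\leq L(g)$ whenever $f\leq g$ with $f\in\mathfrak{F}^{\downarrow}$ and $g\in\mathfrak{F}^{\uparrow}$.

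Second, for any function $h:\mathbb{X}\to[0,+\infty]$ I set
\[
L^{*}(h):=\inf\{L(f)\mid f\in\mathfrak{F}^{\uparrow},\,f\geq h\},\qquad L_{*}(h):=\sup\{L(g)\mid g\in\mathfrak{F}^{\downarrow},\,0\leq g\leq h\},
\]
and call $h$ integrable if $L_{*}(h)=L^{*}(h)\in\mathbb{R}$. Standard arguments show $L^{*}$ is a monotone, countably subadditive outer set function when applied to indicators: $\mu^{*}(A):=L^{*}(\mathbf{1}_A)$. Declaring $A$ measurable whenever $\mu^{*}(B)=\mu^{*}(B\cap A)+\mu^{*}(B\cap A^c)$ for every $B$, Carath\'eodory's construction produces a $\sigma$-algebra $\mathcal{M}$ on which $\mu:=\mu^{*}|_{\mathcal{M}}$ is a countably additive measure with $\mu(\mathbb{X})=L(1)=1$.

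Third, I would check that $\sigma(\mathfrak{F})\subseteq\mathcal{M}$. It suffices to verify that for every $f\in\mathfrak{F}$ and every $\alpha\in\mathbb{R}$ the set $\{f>\alpha\}$ is measurable; this follows from writing $\mathbf{1}_{\{f>\alpha\}}$ as a monotone limit of $\min(n(f-\alpha)^{+},1)\in\mathfrak{F}$ (using the lattice and unit structure) and invoking the approximation of outer/inner integrals just established. Finally, for $f\in\mathfrak{F}$ I would split $f=f^{+}-f^{-}$ and, for the nonnegative piece, approximate by the usual dyadic simple functions on the measurable level sets $\{f>k/2^n\}$; the continuity property of $L$ and monotone convergence for $\mu$ then give $L(f)=\int f\,\mathrm{d}\mu$, with $\mathfrak{F}\subseteq L^{1}(\mu)$ as a byproduct. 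Uniqueness is standard: two probability measures on $\sigma(\mathfrak{F})$ that agree on all integrals $\int f\,\mathrm{d}\mu$ for $f$ in the lattice $\mathfrak{F}$ must coincide, by a $\pi$-$\lambda$ / monotone class argument applied to sets $\{f>\alpha\}$.

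The main obstacle is the second step: showing that $L$ extends consistently to $\mathfrak{F}^{\uparrow}$ (well-definedness and additivity independent of the approximating sequence) and that the induced outer set function is countably subadditive. Both rely crucially on the continuity hypothesis $L(f_n)\downarrow 0$ along $f_n\downarrow 0$, and the delicate point is controlling the infimum of two nondecreasing approximations via $\min(f_n,g_m)\in\mathfrak{F}$ and letting $m\to\infty$ first, then $n\to\infty$, interchanging limits legitimately thanks to the lattice closedness and monotone continuity of $L$.
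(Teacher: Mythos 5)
Your outline is correct in strategy, but note that the paper does not actually prove this statement: it is a classical result, and the paper's entire ``proof'' is the citation to Bogachev, \emph{Measure Theory}, Theorem 7.8.1. What you have written is essentially a compressed version of the proof found in that reference (and in any standard treatment of the Daniell integral): extend $L$ to the cone $\mathfrak{F}^{\uparrow}$ of increasing limits, define upper and lower functionals, pass to an outer measure and Carath\'eodory's construction, verify $\sigma(\mathfrak{F})$-measurability of the level sets $\{f>\alpha\}$ via $\min\bigl(n(f-\alpha)^{+},1\bigr)\in\mathfrak{F}$ (legitimate here because $\mathfrak{F}$ is a vector lattice containing the constants), and conclude with dyadic approximation and a monotone class argument for uniqueness. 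The one genuinely delicate point --- consistency of the extension to $\mathfrak{F}^{\uparrow}$, i.e.\ that $f_n\uparrow f\leq g\downarrow\!\!\uparrow g_m$ implies $\lim_n L(f_n)\leq\lim_m L(g_m)$ --- you correctly identify and resolve by the $\min(f_n,g_m)$ device together with the monotone continuity hypothesis. As submitted this is a proof sketch rather than a complete proof (several steps are deferred to ``standard arguments''), which is acceptable for a textbook theorem that the paper itself only quotes; if a self-contained argument were required, the countable subadditivity of $L^{*}$ and the measurability of $\mathcal{M}\supseteq\sigma(\mathfrak{F})$ would need to be written out in full.
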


\begin{proof}
See \cite{Bogachev07} Theorem 7.8.1.
\end{proof}

\subsection{Proofs \label{Proofs}}

\begin{proof}[Proof of Lemma \protect\ref{propformaconiugata}]
Fix $\lambda\in (C_{0:T})^*$ such that $\lambda\geq 0$. Then 
\begin{align}
\mathfrak{P}^*(\lambda)&=\sup_{c\in C_{0:T}}\left(\mathfrak{P}(c)-\langle
c,\lambda\rangle\right)=\sup_{c\in C_{0:T}}\left(\sup_{z \in -\mathcal{A}%
}\sup_{\varphi \in \mathbf{\Phi }_{z }(c)}S^{U}\left( \varphi
\right)-\langle c,\lambda\rangle\right)  \notag \\
&\overset{\eqref{deftildefrakPgeneral}}{=}\sup_{c\in C_{0:T}}\left(\sup_{z
\in -\mathcal{A}}\sup_{\varphi \in \widetilde{\mathbf{\Phi }}_{z
}(c)}S^{U}\left( \varphi \right)-\langle c,\lambda\rangle\right)  \notag \\
&=\sup_{c\in C_{0:T}}\sup_{z \in -\mathcal{A}}\sup_{\varphi \in \widetilde{%
\mathbf{\Phi }}_{z }(c)}\left(S^{U}\left( \varphi \right)-\langle
c,\lambda\rangle\right)  \label{rewritepstar} \\
&\leq \sup_{c\in C_{0:T}}\sup_{z \in -\mathcal{A}}\sup_{\varphi \in 
\widetilde{\mathbf{\Phi }}_{z }(c)}\left(S^{U}\left( \varphi \right)-\langle
\sum_{t=0}^T\varphi_t+z,\lambda\rangle\right)  \notag \\
&=\sup_{z \in -\mathcal{A}}\sup_{\varphi \in \mathcal{E}}\left(S^{U}\left(
\varphi \right)-\langle
\sum_{t=0}^T\varphi_t+z,\lambda\rangle\right)=\sup_{z\in \mathcal{A}}\langle
z,\lambda\rangle+\sup_{\varphi \in \mathcal{E}}\left(S^{U}\left( \varphi
\right)-\sum_{t=0}^T\langle \varphi_t,\lambda_t\rangle\right)  \notag \\
&=\sigma_\mathcal{A}(\lambda)+(S^U)^*(\lambda_0,\dots,\lambda_T)\,.  \notag
\end{align}
Consequently 
\begin{equation}  \label{ineqpsv1}
\mathfrak{P}^*(\lambda)\leq \sigma_\mathcal{A}(\lambda)+(S^U)^*(\lambda_0,%
\dots,\lambda_T)\,.
\end{equation}
At the same time, for every $\varphi\in\mathcal{E}, z\in -%
\mathcal{A}$ and for $\widehat{c}=\sum_{t=0}^T\varphi_t+z\in C_{0:T}$ we have that $%
\varphi\in \Phi_z(\widehat{c})$. Thus 
\begin{equation*}
S^{U}\left( \varphi \right)-\langle
\sum_{t=0}^T\varphi_t+z,\lambda\rangle\leq \sup_{\varphi \in \widetilde{%
\mathbf{\Phi }}_{z}(\widehat{c})}\left(S^{U}\left( \varphi \right)-\langle 
\widehat{c},\lambda\rangle\right)\overset{\eqref{rewritepstar}}{\leq}%
\mathfrak{P}^*(\lambda)
\end{equation*}
and hence

\begin{equation}  \label{ineqpsv2}
\sup_{z \in \mathcal{A}}\,\langle z,\lambda\rangle+\sup_{\varphi \in 
\mathcal{E}}\left(S^{U}\left( \varphi \right)-\langle
\sum_{t=0}^T\varphi_t,\lambda\rangle\right)=\sup_{z \in -\mathcal{A}%
}\sup_{\varphi \in \mathcal{E}}\left(S^{U}\left( \varphi \right)-\langle
\sum_{t=0}^T\varphi_t+z,\lambda\rangle\right)\leq \mathfrak{P}^*(\lambda)\,.
\end{equation}

Combining \eqref{ineqpsv1} and \eqref{ineqpsv2} we get that $\mathfrak{P}%
^*(\lambda)=\sigma_\mathcal{A}(\lambda)+(S^U)^*(\lambda_0,\dots,\lambda_T)$.

If additionally $
\lambda(1)=1$ then we have 
\begin{align*}
(S^U)^*(\lambda_0,\dots,\lambda_T)&=\sup_{\varphi \in \mathcal{E}%
}\left(S^{U}(\varphi)-\langle \sum_{t=0}^T\varphi_t,\lambda_t\rangle\right)
\\
&=\sup_{\varphi \in \mathcal{E}}\left(\sup_{\alpha\in{\mathbb{R}}^{T+1}}
\left( U(\varphi+\alpha)-\sum_{t=0}^T\alpha_t\right)-\langle
\sum_{t=0}^T\varphi_t,\lambda\rangle\right)\\
&=\sup_{\varphi \in \mathcal{E}}\sup_{\alpha\in{\mathbb{R}}%
^{T+1}}\left(\left( U(\varphi+\alpha)-\sum_{t=0}^T\alpha_t\right)-\langle
\sum_{t=0}^T\varphi_t,\lambda\rangle\right) \\
&=\sup_{\varphi \in \mathcal{E}}\sup_{\alpha\in{\mathbb{R}}%
^{T+1}}\left(U(\varphi+\alpha)-\sum_{t=0}^T\langle
\varphi_t+\alpha_t,\lambda\rangle\right) \\
&=\sup_{\varphi \in \mathcal{E}}\left(U(\varphi)-\langle
\sum_{t=0}^T\varphi_t,\lambda\rangle\right)=\mathcal{D}(\lambda_0,\dots,%
\lambda_T)=\mathcal{D}(\lambda)\,.
\end{align*}
\end{proof}

\begin{proof}[Proof of Proposition \protect\ref{propvstar}]
We will use \cite{Sava} Theorem 2.7 and \cite{Sava} Remark 2.8. To do so,
let us rename $F:=v_{t}^{\ast }$ ((see \eqref{cnvxconjvstar} for the
definition of $v^{\ast }$), which implies that $F^{\circ }(y):=-F^{\ast
}(-y) $ of \cite{Sava} Equation (2.45) satisfies $F^{\circ }(y):=-F^{\ast
}(-y)=-v_{t}^{\ast \ast }(-y)=-v_{t}(-y)=u_{t}(y)$, by Fenchel-Moreau
Theorem. All the assumptions of \cite{Sava} Section 2.3 on $F$ are
satisfied, since for every $y\geq 0$ $F(y)\geq u_{t}(0)-0y=0$ and $%
F(1)=\sup_{x\in {\mathbb{R}}}(u_{t}(x)-x)\leq 0$ (recall $u_{t}(x)\leq
x,\,\forall \,x\in {\mathbb{R}}$). Also, since $\mathrm{dom}(u_{t})={\mathbb{%
R}}$, $\lim_{y\rightarrow +\infty }\frac{F(y)}{y}=F_{\infty }^{\prime
}=+\infty $. We can then apply \cite{Sava} Theorem 2.7 and \cite{Sava}
Remark 2.8, obtaining \eqref{explicitdual}. We stress the fact that since $%
u_{t}$ is finite valued on the whole ${\mathbb{R}}$, it is continuous there
and for every $\varphi _{t}\in \mathcal{C}_{b}(K_{t})$, $F^{\circ }(\varphi
_{t})=u_{t}(\varphi _{t})\in \mathcal{C}_{b}(K_{t})$, hence the additional
constraint $F^{\circ }(\varphi _{t})\in \mathcal{C}_{b}(K_{t})$ (below \cite%
{Sava} (2.49)) would be redundant in our setup.
\end{proof}

\begin{proof}[Proof of Proposition \protect\ref{PPP}]
We will exploit again \cite{Sava} Theorem 2.7 and \cite{Sava} Remark 2.8
(with $u_{t}$ in place of $F^{\circ }$) , as we explain now. Since $u_{t}$
is nondecreasing, either its domain is in the form $[M,+\infty )$ or $%
(M,+\infty )$, with $M\leq 0$. Given a $\varphi _{t}\in \mathcal{C}%
_{b}(K_{t})$ and a $\mu \in \mathrm{Meas}(K_{t})$

\begin{itemize}
\item Either $\inf (\varphi _{t}({\mathbb{R}}))>M$, in which case $%
u_{t}(\varphi _{t})\in \mathcal{C}_{b}(K_{t})$ since $u_{t}$ is continuous
on the interior of its domain.

\item Or $\inf (\varphi _{t}({\mathbb{R}}))<M$, in which case $\{\varphi
_{t}<M\}$ is open nonempty and hence has positive $\widehat{Q}_{t}$ measure,
as $\widehat{Q}_{t}$ has full support. Thus $\int_{K_{t}}u_{t}(\varphi
_{t})\,\mathrm{d}\widehat{Q}_{t}=-\infty $.

\item Or $\inf (\varphi _{t}({\mathbb{R}}))=M$ in which case $u_{t}(\varphi
_{t})=\lim_{\varepsilon \downarrow 0}u_{t}(\max (\varphi _{t},M+\varepsilon
))$ (since $u_{t}$ is nondecreasing and upper semicontinuous) $u_{t}(\max
(\varphi _{t},M+\varepsilon ))\in \mathcal{C}_{b}(K_{t})$ (see first bullet)
and by Monotone Convergence Theorem 
\begin{equation*}
\int_{K_{t}}u_{t}(\varphi _{t})\,\mathrm{d}\widehat{Q}_{t}-\int_{K_{t}}%
\varphi _{t}\,\mathrm{d}\mu =\lim_{\varepsilon \downarrow 0}\left(
\int_{K_{t}}u_{t}(\max (\varphi _{t},M+\varepsilon ))\,\mathrm{d}\widehat{Q}%
_{t}-\int_{K_{t}}\max (\varphi _{t},M+\varepsilon )\,\mathrm{d}\mu \right)
\,.
\end{equation*}%
Then we infer that 
\begin{align}
&\sup_{\varphi _{t}\in \mathcal{C}_{b}(K_{t})}\left( \int_{K_{t}}\varphi
_{t}\,\mathrm{d}\mu -\int_{K_{t}}v_{t}(\varphi _{t})\,\mathrm{d}\widehat{Q}%
_{t}\right) =\sup_{\varphi_{t} \in \mathcal{C}_{b}(K_{t})}\left(
\int_{K_{t}}u_{t}(\varphi_{t} )\,\mathrm{d}\widehat{Q}_{t}-\int_{K_{t}}%
\varphi_{t} \,\mathrm{d}\mu \right)  \notag \\
&= \sup \left\{ \int_{K_{t}}u_{t}(\varphi _{t})\,\mathrm{d}\widehat{Q}%
_{t}-\int_{K_{t}}\varphi _{t}\,\mathrm{d}\mu \mid \varphi _{t},u_{t}(\varphi
_{t})\in \mathcal{C}_{b}(K_{t})\right\} ,  \label{eqfromconttogeneralphi}
\end{align}


and from \cite{Sava} Theorem 2.7, \cite{Sava} Remark 2.8 and from %
\eqref{eqfromconttogeneralphi} we conclude the thesis.
\end{itemize}
\end{proof}

\label{appnoncpt}

\bibliographystyle{abbrv}
\bibliography{BibAll}

\end{document}